\tikzset{cross/.style={cross out, draw=black, fill=none, minimum size=2*(#1-\pgflinewidth), inner sep=0pt, outer sep=0pt}, cross/.default={10pt}}
\newtheorem{theorem}{Theorem}[section]
\newtheorem{lemma}{Lemma}[section]
\newtheorem{definition}{Definition}[section]
\newtheorem{corollary}{Corollary}[section]
\crefname{corollary}{corollary}{corollaries}
\Crefname{corollary}{Corollary}{Corollaries}
\newtheorem{claim}{Claim}[section]
\crefname{claim}{claim}{claims}
\Crefname{claim}{Claim}{Claims}
\newtheorem{question}{Question}[section]
\newtheorem{invariant}{Invariant}[section]
\crefname{invariant}{invariant}{invariants}
\Crefname{invariant}{Invariant}{Invariants}
\newtheorem{observation}{Observation}[section]
\crefname{observation}{observation}{observations}
\Crefname{bservation}{Observation}{Observations}
\newtheorem{property}{Property}[section]
\crefname{property}{property}{properties}
\Crefname{property}{Property}{Properties}
\newcommand{\eps}{\epsilon}
\newcommand{\opt}{\mathsf{OPT}}
\newcommand{\ceil}[1]{\left\lceil #1 \right\rceil}
\newcommand{\floor}[1]{\left\lfloor #1 \right\rfloor}
\newcommand{\set}{\mathcal{S}}
\newcommand{\univ}{\mathcal{U}}
\newcommand{\wts}{\omega}
\newcommand{\lev}{\mathsf{lev}}
\newcommand{\ilev}{\mathsf{ilev}}
\newcommand{\zlev}{\mathsf{zlev}}
\newcommand{\bs}{\mathsf{base}}
\newcommand{\del}{\mathsf{Delete}}
\newcommand{\ins}{\mathsf{Insert}}
\newcommand{\fix}{\mathsf{FixLevel}}
\newcommand{\reset}{\mathsf{Rebuild}}
\newcommand{\water}{\mathsf{WaterFilling}}
\newcommand{\trunc}{\mathsf{Truncate}}
\newcommand{\procdet}{\mathsf{HandleDet}}
\newcommand{\procrand}{\mathsf{HandleRand}}
\newcommand{\decrease}{\mathsf{DecILev}}
\newcommand{\up}{\text{up}}
\newcommand{\down}{\text{down}}
\newcommand{\lift}{\text{lift}}
\newcommand{\clean}{\text{clean}}
\newcommand{\old}{\text{old}}
\newcommand{\nw}{\text{new}}
\newcommand{\randtime}{\lambda_\text{rand}}
\newcommand{\dettime}{\lambda_\text{det}}
\newcommand{\aux}{\mathsf{aux}}
\newcommand{\tm}{\mathsf{tm}}
\newcommand{\clog}{(5\log)}
\newcommand{\brac}[1]{\left(#1\right)}
\begin{document}
\begin{titlepage}
	\title{Nearly Optimal Dynamic Set Cover: \\Breaking the Quadratic-in-$f$ Time Barrier}
	
		\author{Anton Bukov \thanks{Tel Aviv University, \href{}{bukov.anton@gmail.com}}\and Shay Solomon \thanks{Tel Aviv University, \href{}{shayso@tauex.tau.ac.il}}\and Tianyi Zhang \thanks{Tel Aviv University, \href{}{tianyiz21@tauex.tau.ac.il}}}
	
	\date{}
	
	\maketitle
	\thispagestyle{empty}
	
	\begin{abstract}
	The dynamic set cover problem has been subject to extensive research since the pioneering works of [BHI, ICALP'15] and [GKKP17, STOC'17]. The input is a set system $(\univ, \set)$ on a fixed collection $\set$ of sets and a dynamic universe of elements, where each element appears in a most $f$ sets and the cost of each set lies in the range $[1/C, 1]$;
 the ultimate goal is to maintain a set cover under insertions and deletions of elements, with optimal bounds on both the approximation factor and the update time. 

	Most previous works considers the low-frequency regime, namely $f = O(\log n)$, and this line of work has culminated with a deterministic $(1+\epsilon)f$-approximation algorithm with amortized update time $O(\frac{f^2}{\epsilon^3} + \frac{f}{\epsilon^2}\log C)$ [BHNW, SODA'21] and a randomized $f$-approximation algorithm against an oblivious adversary with expected amortized update time $O(f^2)$ for the unweighted case [AS, ESA'21]. In the high-frequency regime of $f = \Omega(\log n)$, an $O(\log n)$-approximation algorithm with amortized update time $O(f\log n)$ was given by [GKKP17, STOC'17], and recently [SU, STOC'23] showed that the same update time of $O(f \log n)$ suffices for achieving approximation $(1+\epsilon)\ln n$.

	Interestingly, at the intersection of the two regimes, i.e., $f = \Theta(\log n)$, the state-of-the-art results coincide (ignoring the dependencies on $\eps$ and $C$): approximation  $\Theta(f) = \Theta(\log n)$ with amortized update time $O(f^2) = O(f \log n) = O(\log^2 n)$. Up to this date, no previous work achieved update time of $o(f^2)$, even allowing randomization against an oblivious adversary and even for a worse approximation guarantee.

	In this paper we break the $\Omega(f^2)$ update time barrier via the following results:
		\begin{itemize}[leftmargin=*]
			\item $(1+\epsilon)f$-approximation can be maintained in $O\left(\frac{f}{\epsilon^2}\log^*f  + \frac{f}{\epsilon^3} + \frac{f}{\epsilon^2}\log C\right)
			= O_{\epsilon,C}(f \log^* f)$ expected amortized update time \footnote{$\log^*$ is the iterated logarithm;
			we use the notation $O_{\epsilon,C}(\cdot)$ to suppress factors that depend on $\eps$ and $C$.}; our algorithm works against an adaptive adversary.
			\item $(1+\epsilon)f$-approximation can be maintained deterministically in $O\left(\frac{1}{\epsilon}f\log f + \frac{f}{\epsilon^3} + \frac{f}{\epsilon^2}\log C\right) = O_{\epsilon,C}(f \log f)$  amortized update time.
	\end{itemize}
	Assuming element updates are specified {\em explicitly}, our randomized algorithm is near-optimal: $(1+\epsilon)f$ approximation is optimal up to the $\epsilon$-dependence and the update time $O_{\epsilon,C}(f \log^*f)$ exceeds the time needed to specify an update by a $\log^*f$ factor.
	We view this slack of $\log^* f$ factor as interesting in its own right --- we are not aware of any problem for which the state-of-the-art dynamic algorithm admits a slack of  $\log^*f = O(\log^* n)$ from optimality.
	\end{abstract}
\end{titlepage}

\begin{spacing}{1.3}
	\tableofcontents
\end{spacing}

\thispagestyle{empty}
\clearpage
\pagenumbering{arabic}
\setcounter{page}{1}

\newpage

\section{Introduction}
In the static set cover problem, we are given a set system $(\univ, \set)$, where
$\univ$ is a universe of $n$ elements and $\set$ is a collection of $m$ sets 
$s \in \set$ of elements in $\univ$, each of which associated with a cost $c_s\in [\frac{1}{C}, 1]$.
The {\em frequency} of the set system $(\univ, \set)$, denoted by $f = f(\univ, \set)$, is the maximum number of sets in $\set$ any element in $\univ$ belongs to. 
A collection of sets $\set^\prime\subseteq \set$ is called a {\em set cover} of $\univ$ if any element in $\univ$ belongs to at least one set in $\set^\prime$. The basic goal is to compute a {\em minimum set cover}, i.e., a set cover 
$\set^*\subseteq \set$ whose cost $c(\set^*) = \sum_{s\in \set^*}c_s$ is minimum.

The set cover problem is a central NP-hard problem, which admits two classic algorithms: a {\em greedy} $\ln n$-approximation and a primal-dual $f$-approximation.
Importantly, one cannot achieve approximation $(1-\eps) \ln n$ unless P = NP \cite{williamson2011design,dinur2014analytical}
as well as approximation $f-\eps$ for any fixed $f$ under the unique games conjecture \cite{khot2008vertex}.
The greedy and primal-dual approximation algorithms for set cover have been extremely well-studied in the static setting and are well-understood by now, and an extensive body of work from recent years aims at efficiently ``dynamizing'' these algorithms. 
In the {\em dynamic} setting of set cover, the goal is to maintain a set cover of low cost, while the universe $\univ$ evolves over time. 
More specifically, the goal is to maintain a set cover $T\subseteq \set$ of low cost while supporting two types of element updates:
\begin{itemize}[leftmargin=*]
	\item \textbf{Insertion.} A new element $e$ enters $\univ$, and the input specifies the sets in $\set$ that it belongs to.
	
	\item \textbf{Deletion.} An existing element in $\univ$ is deleted from all sets in $\set$ that it belonged to.
\end{itemize}

The two main quality measures of a dynamic algorithm are its approximation ratio $\frac{c(T)}{c(S^*)}$ and update time,
where the holy grail is to achieve approximation approaching that of the best {\em static} algorithm with as small as possible update time. 
In the context of set cover: (1) for approximation, given the aforementioned lower bounds, the goal would be either an 
$O(\log n)$ or $O(f)$ approximation, and (2) for update time, since it takes $\Theta(f)$ time to explicitly represent an element update 
(by specifying all the sets to which it belongs), the natural goal would be update time $O(f)$.

The dynamic set cover problem was first studied in \cite{bhattacharya2015design}, where a deterministic primal-dual algorithm with $O(f^2)$ approximation and $O(f \log(m+n))$ (amortized) update time was presented.\footnote{For brevity, in what follows we shall not make the distinction between amortized and worst-case update time.} 
Later on, a deterministic $O(\log n)$-approximation algorithm with $O(f\log n)$ update time
was given in \cite{gupta2017online}. 
This work of \cite{gupta2017online} essentially ``dynamizes'' the greedy algorithm in the high-frequency regime, namely $f = \Omega(\log n)$. In a recent work \cite{solomon2023dynamic}, the authors improved the approximation to $(1+\epsilon)\ln n$ with $O\brac{\frac{f\log n}{\epsilon^5}}$ amortized update time.

All other previous works, which we survey next, focus on the low-frequency regime of $f = O(\log n)$, and they all essentially dynamize  the primal-dual algorithm. A deterministic $O(f^3)$-approximation algorithm with $O(f^2)$   update time was achieved in \cite{gupta2017online,bhattacharya2017deterministic}.
The first $O(f)$ approximation was achieved in \cite{abboud2019dynamic}, where the authors proposed a randomized
 $(1+\epsilon)f$-approximation algorithm with   update time $O(\frac{f^2\log n}{\epsilon})$; this algorithm works for unweighted instances only (where $c_s \equiv 1$ for all $s$) and it assumes an oblivious adversary. 
This result was subsumed by \cite{bhattacharya2019new}, where a deterministic $(1+\epsilon)f$-approximation algorithm for weighted instances was presented, with update time of $O(\frac{f}{\epsilon^2}\log(Cn))$.

The works of \cite{abboud2019dynamic,bhattacharya2019new} with $(1+\epsilon)f$-approximation incur a slack of $\log n$ on the update time.
Two subsequent works remove the dependency on $\log n$: \cite{bhattacharya2021dynamic} gave a deterministic
$(1+\epsilon)f$-approximation algorithm with update time $O(\frac{f^2}{\epsilon^3} + \frac{f}{\epsilon^2}\log C)$, 
while \cite{assadi2021fully} gave a randomized $f$-approximation algorithm with update time $O(f^2)$, but it assumes an oblivious adversary, and it only applies to unweighted instances.

To summarize, in the low frequency regime of $f=O(\log n)$, no  previous work achieved update time of $o(f^2)$, even allowing randomization against an oblivious adversary and even for approximation larger than $O(f)$.
For the high frequency regime of $f = \Omega(\log n)$, the only previous work achieves update time $O(f \log n)$ \cite{gupta2017online};
interestingly, at the intersection of the two regimes, i.e., $f = \Theta(\log n)$, the state-of-the-art results coincide (ignoring the dependencies on $\eps$ and $C$): approximation  $\Theta(f) = \Theta(\log n)$ with amortized update time $O(f^2) = O(f \log n) = O(\log^2 n)$.

A fundamental question left open by previous works is whether one can break the quadratic-in-$f$ update time barrier, ideally to achieve an update time of $O_{\eps,C}(f)$ (ignoring the dependencies on $\eps$ and $C$), i.e., linear in the time needed to explicitly specify an update. 
\begin{question} \label{q1}
Is there $O(f)$-approximation (or $O(\log n)$-approximation) algorithm for set cover with update time $o(f^2)$?
Further, it is possible to achieve approximation approaching $f$ (or $\ln n$) with update time approaching $O(f)$?
\end{question}

\paragraph{Perspective: The Quest Towards Optimal Update Time.~}
The quest towards {\em constant update time} algorithms for basic graph problems is an important research agenda in the field of dynamic graph algorithms, which has attracted a lot of research attention over the past decade 
~\cite{PelegS16,Solomon16,bhattacharya2017deterministic,BhattacharyaGM17,gupta2017online,
SolomonW18,
bhattacharya2019deterministically, Henzinger020,
bhattacharya2021dynamic,assadi2021fully,BhattacharyaGKL22,BCPS23}.
This research agenda coincides with the quest towards linear-time graph algorithms in the static sequential setting, since any constant update time algorithm (that uses at most linear time during preprocessing) gives rise to a linear-time static algorithm.
Of course, not every dynamic graph problem admits a constant update time solution, even if the respective static problem admits a linear running time, and graph connectivity is a prime example \cite{patrascu2006logarithmic,puatracscu2011don}. 

The set cover problem is equivalent to the vertex cover problem in hypergraphs, where the {\em rank} of the hypergraph is the
frequency $f$ of the set-system. One can generalize any graph problem for hypergraphs, with one significant caveat:
In dynamic hypergraphs, the time needed to explicitly specify an edge update is no longer constant, but rather $O(f)$. One may consider implicit updates instead (switching an edge ``on'' and ``off''), which can be carried out in constant time, but even for implicit updates there are conditional lower bounds on the update time that are not far from $\Omega(f)$, albeit only for sufficiently high frequency; refer to \cite{abboud2019dynamic} for details. 
Nonetheless, even ignoring such conditional lower bounds,
the $O(f)$ time bound to explicitly specify an update in rank-$f$ hypergraphs seems the natural generalization of constant update time in simple graphs for hypergraphs, and is thus a  natural time barrier.

To the best of our knowledge, the previous work on the dynamic set cover problem provides the {\em first systematic study on any dynamic hypergraph problem}. 
Moreover, we are not aware of any nontrivial hypergraph problem that is solved within update time $O(f)$.
Consequently, whether  it is possible to fully resolve \Cref{q1} --- and obtain the {\em first update time of $O(f)$ for any rank-$f$ hypergraph problem} (with a reasonably good approximation) --- seems to be of major importance.

\subsection{Our result} \label{sec:our}
Our main result, which resolves \Cref{q1} in the affirmative, is summarized in the following theorem;
\Cref{comp} provides a concise comparison between our and previous results.

\begin{theorem}\label{main-result}
	For any set system $(\univ, \set)$ (with $\univ = \emptyset$ initially) that undergoes a sequence of element insertions and deletions, where the frequency is always bounded by $f$, and for any $\epsilon \in (0, 0.1)$, there are dynamic algorithms that maintain a $(1+\epsilon)f$-approximate minimum set cover with the following   amortized update time bounds.
	\begin{itemize}[leftmargin=*]
		\item Expected $O\brac{\frac{f}{\epsilon^2}\log^*f  + \frac{f}{\epsilon^3} + \frac{f}{\epsilon^2}\log C}$, via a randomized algorithm {\em against an adaptive adversary}.
		\item Deterministic $O\brac{\frac{1}{\epsilon}f\log f + \frac{f}{\epsilon^3} + \frac{f}{\epsilon^2}\log C}$.
	\end{itemize}
\end{theorem}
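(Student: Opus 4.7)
The plan is to enhance the dynamic primal-dual framework developed in \cite{bhattacharya2019new, bhattacharya2021dynamic}, in which each set $s\in\set$ and each element $e\in\univ$ carries an integer level $\lev(\cdot)$ ranging over $O(\frac{1}{\epsilon}\log(Cn))$ values, the dual weights are $y_e = (1+\epsilon)^{-\lev(e)}$, and the maintained cover $T$ consists of sets that are (nearly) tight with respect to $y$. Under standard level invariants, complementary slackness would deliver the $(1+\epsilon)f$-approximation guarantee, and because levels take only $O(\frac{1}{\epsilon}\log(Cn))$ distinct values, $C$ enters the update time only additively through the $\frac{f}{\epsilon^2}\log C$ term. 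The bottleneck in prior work is that, upon processing an element update via a $\fix$-style subroutine, the $\leq f$ sets containing the updated element may be inspected across $O(1/\epsilon^2)$ successive level changes, producing the $\Omega(f^2)$ barrier.

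To break this barrier, my plan is to attach to every element $e$ a hierarchical summary of the levels of its $\leq f$ sets, supporting ``find the highest/lowest-level set'' and ``report all sets at a given level'' in sublinear time. In the deterministic algorithm, this summary is a balanced search tree of depth $O(\log f)$, so each level query costs $O(\log f)$; this yields the deterministic bound $O_{\epsilon,C}(f \log f)$. In the randomized algorithm, the summary is a doubly-exponentially thinned hierarchy in which layer $k$ retains only a $1/2^{2^{k}}$-fraction of the sets, whose expected depth is $O(\log^* f)$; this gives the $\frac{f \log^* f}{\epsilon^2}$ term.

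The additive $\frac{f}{\epsilon^3}$ contribution would come from the standard amortized argument via a potential function $\Phi$ that tracks, summed across sets, their distance from tightness: each element $\ins$/$\del$ raises $\Phi$ by $O(f/\epsilon^2)$, each unit of work done by $\procdet$ or $\procrand$ reduces $\Phi$ by $\Omega(\epsilon)$ in expectation, and $\reset$ is invoked only when the invariants drift beyond repair, paid for by the same potential. Each of $O(1/\epsilon^2)$ level transitions touches the summaries of $O(f)$ sets at the hierarchy cost per set ($\log f$ or $\log^* f$), giving the time bounds claimed in the theorem.

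The main obstacle I expect is the \emph{adaptive adversary} in the randomized algorithm. Because the thinned hierarchy reveals its random choices through the sets it reports, naive arguments only yield oblivious-adversary guarantees as in \cite{assadi2021fully}. The key lemma to prove would be that each random choice of the hierarchy is ``outed'' only after $\Omega(f \log^* f)$ further updates, so that a $\reset$-style re-randomization every $\Theta(f \log^* f)$ updates costs $O(\log^* f)$ amortized per update while keeping the hierarchy independent of the adversary's view. Proving this lemma --- together with showing that the global tightness invariants survive the periodic rebuilds without degrading the $(1+\epsilon)f$-approximation --- is where I expect the bulk of the technical effort to lie.
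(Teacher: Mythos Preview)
Your proposal misidentifies the source of the $\Omega(f^2)$ barrier and consequently proposes a fix that does not address it.

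The quadratic-in-$f$ cost in prior work (specifically \cite{bhattacharya2021dynamic}) is \emph{not} a query-time issue that a per-element search tree over set levels would resolve. It arises in the amortized potential analysis: when an element $e$ is inserted at its natural weight $\wts(e)=(1+\epsilon)^{-\lev(e)}$, this weight is added to each of the up to $f$ sets $s\ni e$, and the up-potential $\Phi_{\up}(s)\approx \max\{\wts(s)-c_s,0\}\cdot f(1+\epsilon)^{\lev(s)}/\epsilon^3$ may jump by $\Theta(f/\epsilon^3)$ per set, i.e., $\Theta(f^2/\epsilon^3)$ in total. The factor of $f$ inside $\alpha_{\lev(s)}$ is not an artifact: it reflects real cascading work, namely that when a set $s$ is later raised one level, every active element $e'$ at that level must have all $\le f$ sets $s'\ni e'$ touched to deposit dead weight. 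A faster lookup structure on the levels of $\{s:s\ni e\}$ does not reduce this cascade; the operations that dominate are updates across \emph{other} elements' sets, not queries on $e$'s own sets. Your diagnosis that the $f^2$ is ``$\leq f$ sets inspected across $O(1/\epsilon^2)$ level changes'' does not match the actual accounting.

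The paper's mechanism is orthogonal to data structures. It inserts $e$ as a \emph{passive} element at an inflated intrinsic level $\ilev(e)=\lev(e)+\lceil\log_{1+\epsilon}f\rceil$, so $\wts(e)$ shrinks by a factor of $f$ and the up-potential increase drops to $O(f/\epsilon^3)$. The cost is that passive elements must be gradually activated over subsequent $\reset$ calls, each of which shrinks the gap $\ilev(e)-\zlev(e)$ by at least one; the initial gap being $O(\log_{1+\epsilon}f)$ gives the deterministic $O(\frac{1}{\epsilon}f\log f)$ term. The randomized $O(\frac{f}{\epsilon^2}\log^* f)$ bound comes from a sampling trick inside $\reset$: to test whether a passive $e$ can be pushed down, sample $O(f/\clog_{1+\epsilon}^{(\eta)}f)$ of its sets rather than all $f$; if many sets witness failure the sample hits one cheaply, and if few do, one can call $\fix$ with a small enough $|F|$ that the gap collapses from $d$ to roughly $\log d$ at cost $O(f)$. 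Iterating gives $\log^* f$.

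On the adaptive adversary: no re-randomization scheme is needed, and your proposed lemma that ``each random choice is outed only after $\Omega(f\log^* f)$ updates'' is both unnecessary and unsupported. The only randomness is in which sets are sampled during $\procrand(e)$, and the expected-time bound holds for every fixed update sequence because the success probability of sampling depends only on $|\widehat F|$ in the current state, not on future adversarial choices.
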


\noindent{\bf Remark.} For our deterministic algorithm, we shall assume for simplicity that the length of the update sequence is at least  
$\frac{1}{\epsilon}m\log(Cn)$. In this way, during preprocessing (before the first element is inserted to $\univ$), we   prepare a data structure of size $O(\frac{1}{\epsilon}m\log(Cn))$. (The same is done implicitly in previous works 
whose amortized update time is independent of $n$ \cite{bhattacharya2019deterministically,bhattacharya2021dynamic,assadi2021fully}.) 
In these algorithms (including ours), all elements $e\in \univ$ are assigned a level value $0\leq \lev(e)\leq \ceil{\log_{1+\epsilon}(Cn)}+1$, and for each set $s\in \set$, we maintain a list of all elements $E_i(s) = \{e\in s, \lev(e) = i \}$ (in our algorithm, sets $A_i(s)$ and $P_i(s)$, which are defined a bit differently). 
Since the pointer to each set $E_i(s)$ needs to be retrieved in $O(1)$ time given the index $i$, 
we maintain an array of length $O(\log_{1+\epsilon}(Cn))$ storing all the pointers, even if some sets $E_i(s)$ might be empty. 
(For our randomized algorithm, we can simply use dynamic hash tables \cite{dietzfelbinger1994dynamic}.)

We emphasize two points regarding our randomized algorithm. 
\begin{itemize}[leftmargin=*]
\item It works against an adaptive adversary; this is the first randomized algorithm 
for dynamic set cover that does not make the assumption of an oblivious adversary. 
\item Assuming element updates are specified {\em explicitly}, the update time $O_{\epsilon,C}(f \log^*f)$ exceeds the time needed to specify an update by a $\log^*f$ factor.
This slack of $\log^* f$ factor is interesting in its own right --- {\bf we are not aware of any problem for which the
state-of-the-art dynamic algorithm admits a slack of}   ${\log^*f = O(\log^* n)}$ {\bf from optimality}.
(A notable example where such a slack was studied is for the Disjoint-set data structure,
where a highly influential line of work improved the $O(\log^* n)$   bound to an inverse-Ackermann bound, later shown to be tight.) 
\end{itemize}

\begin{table}
	\begin{tabular}{|c|c|c|c|c|}
		\hline
		reference	&	approximation	&	update time	&	deterministic?	&	weighted?\\\hline
		\cite{gupta2017online}	&	$O(\log n)$	&	$O(f\log n)$	&	yes		&	yes	\\\hline
		\cite{solomon2023dynamic}	&	$(1+\epsilon)\ln n$	&	$O\brac{\frac{f\log n}{\epsilon^5}}$	&	yes	&	yes\\\hline\hline
		\cite{bhattacharya2015design}	&	$O(f^2)$	&	$O(f\log(m+n))$	&	yes	&	yes	\\\hline
		\cite{gupta2017online,bhattacharya2017deterministic}	&	$O(f^3)$	&	$O(f^2)$	&	yes	&	yes\\\hline
		\cite{abboud2019dynamic}	&	$(1+\epsilon)f$	&	$O\brac{\frac{f^2}{\epsilon}\log n}$	&	oblivious	&	no\\\hline
		\cite{bhattacharya2019new}	&	$(1+\epsilon)f$	&	$O\brac{\frac{f}{\epsilon^2}\log(Cn)}$	&	yes	&	yes	\\\hline
		\cite{bhattacharya2021dynamic}	&	$(1+\epsilon)f$	&	$O\brac{\frac{f^2}{\epsilon^3} + \frac{f}{\epsilon^2}\log C}$	&	yes	&	yes	\\\hline
		\cite{bhattacharya2021dynamic}	&	$(1+\epsilon)f$	&	$O\brac{f\log^2(Cn)/\epsilon^3}$ (wc)	&	yes	&	yes\\\hline
		\cite{assadi2021fully}	&	$f$	&	$O\brac{f^2}$	&	oblivious	&	no\\\hline \hline
		\textbf{new}	&	$(1+\epsilon)f$	&	$O\left(\frac{f}{\epsilon^2}\log^*f  + \frac{f}{\epsilon^3} + \frac{f}{\epsilon^2}\log C\right)$	&	adaptive	&	yes\\\hline
		\textbf{new}	&	$(1+\epsilon)f$	&	$O\left(\frac{1}{\epsilon}f\log f + \frac{f}{\epsilon^3} + \frac{f}{\epsilon^2}\log C\right)$	&	yes	&	yes	\\\hline
	\end{tabular}
	\caption{Summary of results on dynamic set cover. ``wc'' stands for ``worst case''.}\label{comp}
\end{table}

\subsection{Technical and Conceptual Contribution}
Our algorithm builds upon the {primal-dual} framework from \cite{bhattacharya2015design, bhattacharya2019deterministically,
bhattacharya2019new,bhattacharya2021dynamic}. In the primal-dual framework, all sets in $s\in\set$ are assigned to levels $\lev(s)$ numbered from $0$ to $L = \ceil{\log_{1+\epsilon}(Cn)}+1$. For each element $e\in\univ$, its level $\lev(e)$ is defined as the maximum level of any set it belongs to, namely $\lev(e) = \max_{s\ni e}\{\lev(s) \}$. This hierarchical partition of sets and elements into levels defines weights for elements and sets:
Each element $e$ is assigned a weight $\wts(e) = (1+\epsilon)^{-\lev(e)}$, and the weight $\wts(s)$ of each set $s$ is given as the total weight of elements in it, namely $\wts(s) = \sum_{e\in s}\wts(e)$. A set $s$ is called \emph{tight} if $\wts(s)\geq c_s / (1+\epsilon)$. 
The primal-dual framework maintains a hierarchical partition into levels as above, aiming to satisfy the following invariants.
\begin{itemize}
	\item $\wts(s) \le c_s, \forall s\in\set$. 
	\item All sets on level $>0$ are tight.
\end{itemize}

If both invariants are met, then weak duality implies that the set $T\subseteq \set$ of all tight sets provides a $(1+\epsilon)f$-approximate set cover, i.e., $c(T) \le (1+\eps)f
\cdot c(\set^*)$.

\paragraph{Local approach.}  To dynamically maintain an approximate minimum set cover via the primal-dual framework, it is 
perhaps most natural to employ the so-called {\em local approach}: 
Each time an element is inserted or deleted, the algorithm will perform some {\em local} ``fixing'' steps ``around the update'' to recover both invariants, to restore a valid and up-to-date hierarchical partition (including up-to-date induced weights). 
This local approach, which was implemented in \cite{bhattacharya2015design}, has two drawbacks: (1) The update time is $O(f\log(m+n))$, which in particular depends on $m, n$, and (2) the approximation ratio is $O(f^2)$ rather than $O(f)$. To shave the $\log n$ factor in the update time, \cite{bhattacharya2019deterministically} studied the special case of   vertex cover, and introduced a new analysis of the local approach that improves the update time to $O(1)$. Although this new analysis of the local approach   generalizes for set cover, it does not fix the second drawback of approximation   $O(f^2)$.

\paragraph{Global approach.}   To obtain a $(1+\epsilon)f$-approximation, the subsequent works \cite{bhattacharya2019new,abboud2019dynamic} adopted a {\em global approach} to maintain the primal-dual hierarchical partition. Basically, instead of recovering the invariants persistently after every element update, the global approach only handles the updates in the following lazy manner.

For each insertion of some element $e$, if we insist that $\wts(e) = (1+\epsilon)^{-\lev(e)}$, then $\wts(s)$ for some sets $s \ni e$ might exceed $c_s$; to satisfy the first invariant, we would have to raise the level of such sets, which might set off a long cascade of level changes of elements and sets. The lazy approach would be to simply assign the largest possible weight $\wts(e) = (1+\epsilon)^{-l}$ without violating any constraints $\wts(s)\leq c_s,  s \ni e$. In this way, we have relaxed the requirement that $\wts(e)$ is equal to $(1+\epsilon)^{-\lev(e)}$ by assigning it a smaller weight $(1+\epsilon)^{-\ilev(e)}$ for some {\em intrinsic} level $\ilev(e)$. This relaxation naturally partitions all existing elements into two categories: (1) \emph{active} elements $e$ where $\wts(e) = (1+\epsilon)^{-\lev(e)}$, and (2) \emph{passive} elements $e$ where $\wts(e) = (1+\epsilon)^{-\ilev(e)} < (1+\epsilon)^{-\lev(e)}$.

For each deletion of some element $e$, we simply ignore it, and when deletions have accumulated to a large extent,
a rebuild procedure is invoked, which rebuilds a carefully chosen ``prefix'' of the primal-dual hierarchical partition. 
Roughly speaking, when the approximation of the current set cover might exceed $(1+\epsilon)f$, the algorithm of
\cite{bhattacharya2019new} looks for the lowest level $k$ such that the fraction of deleted elements on levels $\leq k$ is large. 
Then the entire primal-dual hierarchy from levels $0$ to $k$ is rebuilt by first moving all existing elements on levels $\leq k$ to level $k+1$ and then pushing them downward using a discretized water-filling procedure. This ensures that for any element $e$ that remains passive,
the gap $\ilev(e) - \lev(e)$ decreases.
It can be shown that the runtime of the rebuild procedure is $O(f|A_{\leq k}| + f|P_{\leq k}|)$, where $A_{\leq k}, P_{\leq k}$ are the sets of active and passive elements that lied on levels $\leq k$ before the rebuild, respectively. For the amortized analysis,   the term $f|A_{\leq k}|$ can be charged to the deletions that have accumulated, and the term $f|P_{\leq k}|$ can be charged (via a potential function analysis) to the decrease of gaps $\ilev(e) - \lev(e), e\in P_{\leq k}$. Using the fact that the gap $\ilev(e) - \lev(e)$ is bounded by $O(\log n)$, an amortized update time of $O(f\log n)$ is   derived.

\paragraph{Combining local and global approaches.} 
To shave the $\log n$ factor while preserving a $(1+\epsilon)f$ approximation, \cite{bhattacharya2021dynamic} combines the local approach with the global approach in the following way. For insertion $e$, they assign the true weight $\wts(e) = (1+\epsilon)^{-\lev(e)}$, and apply the local approach from \cite{bhattacharya2019deterministically} to fix the violated constraints of the first invariant, if any. For deletion $e$, they follow the same rebuild procedure from \cite{bhattacharya2019new}. Now there is no dependency on $\log n$, since every element is always active (and the gap $\ilev(e) - \lev(e)$ does not exist at all).

Alas, this approach incurs a quadratic dependency on $f$. Indeed, in the analysis of \cite{bhattacharya2021dynamic}, which uses a potential function $\Phi(\cdot)$, each newly inserted element $e$ adds roughly $\wts(e)\cdot f(1+\epsilon)^{\lev(s)}$ units to the potential $\Phi(s)$ of element $s\ni e$, and summing over all up to $f$ sets $s\ni e$, the total potential increase could be as large as $f^2$.

\subsubsection{Our Approach}
\paragraph{A careful balance between local and global approaches.} 
To improve over previous works, and in particular to bypass the quadratic-in-$f$ time barrier in \cite{bhattacharya2021dynamic}, 
we seek a better balance between the local and global approaches. On the one hand, to avoid the quadratic-in-$f$ potential increase due to an element insertion, we will still allow $e$ to be passive, so that we can avoid the heavy cost that is incurred by the local approach to fix the violated constraints. On the other hand, we do not want $e$ to be {\em too passive}, so that $e$ does not participate in too many instances of rebuilding before it becomes active, as this might blow up the update time by a factor of $\log n$.
To express this idea in terms of levels, we would like to balance two contradictory requirements:
the first is that the gap $\ilev(e) -\lev(e)$ would be large, while the second is that the gap $\ilev(e) - \lev(e)$ would be small.

To optimize the balance,
we need to overcome several highly nontrivial technical hurdles. Our resulting algorithm is inherently different than the previous ones, and so is our analysis. 
We next sketch the core idea of the argument (ignoring most of the technical issues that arise).
When an element $e$ is inserted, we will assign $\ilev(e) = \lev(e) + \log_{1+\epsilon}f$, which bounds the gap $\ilev(e) - \lev(e)$ by $O(\log f)$.
On the one hand, we can show that the total potential increase due to fixing the violated constraints would be smaller by a factor of $f$,
as compared to \cite{bhattacharya2019deterministically};
to fix the violated constraints, we basically follow the same local approach as in previous works (with several important modifications, which we skip here). On the other hand, if there are no violated constraints with respect to the intrinsic level $\ilev(e) = \lev(e) + \log_{1+\epsilon}f$ assigned to $e$, we can make sure that the total time spent on $e$ would be $O(f\log f)$.
More specifically, the algorithm will carefully make sure that the gap $\ilev(e) - \lev(e)$ never increases, which is a key technical challenge that the algorithm and analysis must face. Moreover, each time the passive element $e$ participates in a call to the rebuild procedure, the gap $\ilev(e) - \lev(e)$ will decrease by at least one. Therefore $e$ can participate in at most $\log_{1+\epsilon}f$ calls to the rebuild procedure, which we show ultimately implies that the total time spent on $e$ is $O(f\log f)$.

\paragraph{Going below $O(f \log f)$ update time: sampling and bootstrapping.}
To go below $O(f\log f)$ update time, let us take a closer look at the rebuild procedure. For each passive element $e\in P_{\leq k}$, in previous works, one had to scan all the sets $s\ni e$ to test whether $e$ can be activated on level $k+1$
(whether $\wts(s) -\wts(e) +  (1+\epsilon)^{-k-1} \leq c_s$ is not violated for all $s \in e$), which takes time $O(f)$. The worst-case performance of the algorithm occurs when such tests always fail, so that one always pays $O(f)$ time to decrease the gap $\ilev(e) - \lev(e)$ by one. 
To improve the runtime, we would like to be able to decrease this gap {\em exponentially}, i.e., from $d = \ilev(e) - \lev(e)$ to $\log d$;
Alas, this is not always possible.
To overcome this hurdle,
our key insight is to only sample $O(f / \log f)$ sets $s\ni e$ and test whether $e$ can be activated
with respect to all sampled sets (whether $\wts(s) -\wts(e) +  (1+\epsilon)^{-k-1} \leq c_s$ is not violated for all sampled sets). 
If there are at least $10\log^2f$ {\em witness} sets $s$ for which the test is violated, then one of them will be sampled with good probability, and in that case we have shaved off a $\log f$ factor from the time needed to process $e$ due to the rebuild procedure. 
Otherwise, we will push down the intrinsic level of $e$ from level $k+1+\log_{1+\epsilon} f$ to level $k+1+2\log_{1+\epsilon}\log_{1+\epsilon}f$, which increases $\wts(e)$ to $\frac{1}{\log^2_{1+\epsilon}f}(1+\epsilon)^{-k-1}$, and then apply the local approach to fix the violated constraints. A crucial observation is that we know that the total number of violations is bounded by $\log_{1+\epsilon}^2f$, which is exponentially smaller than the trivial bound $f$, and so we can bound the potential increase by $O(f)$ instead of $O(f^2)$. We demonstrate that by a careful repetition of this observation, the gap $\ilev(e) - \lev(e)$ can be reduced {\em exponentially} in $O(f)$ time, which ultimately leads to the desired update time of $O(f\log^*f)$.

\paragraph{Summary.~}
The starting point of our work is the aforementioned dynamic primal-dual algorithms for set cover.
However, to break the quadratic-in-$f$ time barrier, and further to achieve the near-optimal  (up to the $\log^* f$ slack factor) update time, 
we had to deviate significantly from previous works.  
The facts that our approach provides (1) the only randomized set cover algorithm that works against an adaptive adversary,
and (2) a rare example of achieving optimal time to within a $\log^* n$ factor  --- 
may serve as some ``evidence'' for the novelty of our algorithm and its analysis. 

\section{Preliminaries}
\begin{definition}
    For any real values $\epsilon \in (0, 1), y \geq 1$ and integer $\eta\geq 1$, inductively define: $$\clog_{1 + \epsilon}^{(\eta)}y = 5\cdot\log_{1 + \epsilon}\brac{\clog_{1 + \epsilon}^{(\eta-1)}y}$$
    where $\clog_{1 + \epsilon}^{(0)}y = y$, and define $\clog_{1 + \epsilon}^*(y)$ to be the minimum value of index $\eta$ such that $\clog_{1 + \epsilon}^{(\eta)}(y)\leq \frac{200}{\epsilon^2}$.    
\end{definition}
The following lemma shows that $\clog_{1 + \epsilon}^*(y)$ is well-defined.
\begin{lemma}\label{lm:iterated-log}
    $5\cdot\log_{1 + \epsilon} y \le \sqrt{\frac{\epsilon}{5}} \cdot  y$ for any $y \ge \frac{200}{\epsilon^2}$.
\end{lemma}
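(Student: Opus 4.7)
The plan is to reduce the inequality to a single-variable check by exploiting the monotonicity of $\ln y / y$. First I would rewrite the target inequality in terms of natural logarithms: $5 \log_{1+\epsilon} y = 5 \ln y / \ln(1+\epsilon)$, so the claim becomes
\[
\frac{\ln y}{y} \;\le\; \frac{\ln(1+\epsilon)\,\sqrt{\epsilon/5}}{5}.
\]
Since the function $y \mapsto \ln y / y$ is decreasing for $y \ge e$ and the threshold $200/\epsilon^2 \ge 200 \gg e$ for all $\epsilon \in (0,1)$, it suffices to verify this inequality at the single point $y = 200/\epsilon^2$. Substituting gives the one-variable inequality
\[
\frac{\epsilon^2 \ln(200/\epsilon^2)}{200} \;\le\; \frac{\ln(1+\epsilon)\,\sqrt{\epsilon/5}}{5}.
\]

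Next I would use a standard lower bound for $\ln(1+\epsilon)$ on $(0,1)$, namely $\ln(1+\epsilon) \ge \epsilon/2$ (equivalently $\ln(1+\epsilon)/\epsilon \ge \ln 2 > 1/2$ by monotonicity). Plugging this in and rearranging, it suffices to prove the purely numerical inequality
\[
\sqrt{\epsilon}\cdot\ln\!\brac{200/\epsilon^2} \;\le\; 4\sqrt{5}
\]
for all $\epsilon \in (0,1)$. Let $g(\epsilon) = \sqrt{\epsilon} \ln(200/\epsilon^2)$; a quick differentiation shows $g'(\epsilon) = \frac{1}{2\sqrt{\epsilon}}\bigl(\ln(200/\epsilon^2) - 4\bigr) > 0$ on $(0,1)$ (since $\ln(200/\epsilon^2) \ge \ln 200 > 4$ there), so the maximum of $g$ on $(0,1]$ is attained at $\epsilon = 1$, giving $g(\epsilon) \le \ln 200 < 6 < 4\sqrt{5}$.

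The only potential obstacle is keeping the numerical constants clean, but because $200/\epsilon^2$ is generously large (the slack between $\ln 200 \approx 5.3$ and $4\sqrt{5} \approx 8.94$ is comfortable) and $\epsilon \le 1$ is small, every estimate has room to spare; no tight analysis is needed. Once the inequality of the lemma is established, the fact that $\clog_{1+\epsilon}^*(y)$ is well-defined follows immediately: the lemma implies $\clog_{1+\epsilon}^{(1)}(y) = 5\log_{1+\epsilon} y \le \sqrt{\epsilon/5}\cdot y$ whenever $y \ge 200/\epsilon^2$, so each iteration strictly contracts $y$ by a factor of at least $\sqrt{\epsilon/5} < 1$, and therefore the sequence $\clog_{1+\epsilon}^{(\eta)}(y)$ drops below $200/\epsilon^2$ in finitely many steps.
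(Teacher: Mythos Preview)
Your proof is correct and follows essentially the same route as the paper's: both convert to natural logarithms, apply the bound $\ln(1+\epsilon)\ge \epsilon/2$, and finish with an elementary numerical verification. The only cosmetic difference is that you reduce to the boundary point via the monotonicity of $\ln y/y$ and then differentiate in $\epsilon$, whereas the paper parametrizes $y=\tfrac{200}{\epsilon^2}(1+x)$ and invokes $\ln z\le z-1$; both arguments have the same structure and the same amount of slack.
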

\begin{proof}
    First, notice that $\ln (1 + \epsilon) \ge \frac{1}{2} \epsilon$ for $\epsilon \in (0, 1)$, so $5\cdot\log_{1 + \epsilon} y \le \frac{10}{\epsilon} \ln y$. Thus, it suffices to show that $\ln y \le \frac{\epsilon\sqrt{\epsilon}}{10\sqrt{5}} y$.
    Let $y = \frac{200}{\epsilon^2}(1 + x)$ for some $x \ge 0$. Then 
    \[\ln y = \ln \frac{200}{\epsilon^2}(1 + x) = 4 \ln \frac{\sqrt{5}(1 + x)^{1/4}}{\sqrt{\epsilon}} + \ln 8 \le \frac{4\sqrt{5}}{\sqrt{\epsilon}}(1 + x)^{1/4} \le \frac{4\sqrt{5}}{\sqrt{\epsilon}}(1 + x) = \frac{\epsilon\sqrt{\epsilon}}{10\sqrt{5}} y,\]
    where the first inequality is due to $\ln z \le z - 1$ for any $z > 0$ and $\ln 8 \le  4$.
\end{proof}
\Cref{lm:iterated-log} implies that $\clog_{1+\epsilon}(y) = O(\log^* y)$, since applying $\clog_{1+\epsilon}$ three times either results in something bounded by $\frac{200}{\epsilon^2}$, or decreases the argument exponentially, i.e. $\clog_{1 + \epsilon}^{(3)} (y) \le \epsilon / 5 \cdot 5\log_{1 + \epsilon} y \le 2 \ln y$.

\subsection{Primal-dual framework}

 We will always assume that $f > \frac{\log C}{\epsilon}$, since otherwise we will simply apply the algorithm from \cite{bhattacharya2021dynamic}. For each element $e\in \univ$, we assume all the sets $s$ containing $e$ are stored as an array, not a linked list, so that we can take uniformly random samples from all these sets in $O(1)$ time. This assumption is valid because only the elements are dynamic, while all sets are static.

We will follow the primal-dual framework from \cite{bhattacharya2021dynamic,bhattacharya2019new,bhattacharya2019deterministically}. However, there is a tiny difference: instead of aiming to satisfy $\wts(s) \le c_s, \, \forall s \in \set$, we aim to satisfy $\wts(s) < c_s, \, \forall s \in \set$. This is not crucial for the approximation guarantee, but this simplifies the algorithm and the analysis.

Let $\epsilon \in (0, 0.1)$ be a constant. Define $L = \ceil{\log_{1+\epsilon}(Cn)}+1$. Each set $s\in S$ is assigned a level $\lev(s)\in [L]$. The base level of a set is defined as $\bs(s) = \floor{\log_{1+\epsilon}1/c_s}$.

Each element $e$ will be assigned a level $\lev(e) = \max_{s\ni e}\{\lev(s)\}$ and weight $\wts(e)$, and $\wts(s) = \sum_{e\in s}\wts(e)$ denotes the total weight of $s\in \set$. In addition, for every set $s\in \set$, we also maintain a \emph{dead weight} $\phi(s)$, and let  $\wts^*(s) = \wts(s) + \phi(s)$ be the \emph{composite weight}.

\begin{definition}
	A set $s$ is called \emph{tight}, if $\wts^*(s) \geq \frac{c_s}{1+\epsilon}$, and \emph{slack} otherwise.
\end{definition}

\subsection{Basic data structures} \label{basicds}

During the dynamic algorithm, we will not keep track of the value of $\lev(e)$ for all elements. Instead, we will maintain a \emph{lazy level} $\zlev(e)$. In addition, we will maintain an \emph{intrinsic level} $\ilev(e)$, which defines the weight of an element: $\wts(e) = (1 + \epsilon)^{-\ilev(e)}$. Because of that, all elements have two categories: \emph{active} and \emph{passive}.
\begin{itemize}[leftmargin=*]
    \item \textbf{Active.} 
    If an element $e$ is {\em active}, then the value of $\lev(e)$ will be correctly maintained. For such elements we will have $\zlev(e) = \ilev(e) = \lev(e)$, and so  $\wts(e) = (1+\epsilon)^{-\lev(e)}$. Let $A_i\subseteq \univ$ be the set of active elements on level $i$. For each set $s$ and each level index $i$, our algorithm explicitly maintains a list $A_i(s)\subseteq A_i$ which is the set of active elements in $s$ on level $i$.

    \item \textbf{Passive.} 
    If an element $e$ is {\em passive}, due to runtime issues, we might not always keep track of the value $\lev(e)$ all the time. Instead, we can only maintain a \emph{lazy} level $\zlev(e)\leq \lev(e)$ which is refreshed to $\lev(e)$ once in a while. The \emph{intrinsic} level $\ilev(e)$ will satisfy $\lev(e)< \ilev(e)\leq \zlev(e) + \left\lceil\log_{1+\epsilon}\max\{f, \frac{2C}{\epsilon} \}\right\rceil$.
    
    Let $P_i\subseteq \univ$ be the set of all passive elements whose intrinsic levels are $i$. For each set $s$ and each intrinsic level $i$, our algorithm explicitly maintains a list $P_i(s)\subseteq P_i$ which is the set of passive elements in $s$ on intrinsic level $i$. In contrast, we will not maintain a list for the set of passive elements in set $s$ on lazy level $i$ (since that would be too time-consuming), hence we are unable to enumerate all the passive elements in $s$ on lazy level $i$.
\end{itemize}

For a set $s$ and an index $i\geq \lev(s)$, the weight of $s$ at level $i$ is defined as:
\begin{equation}
\label{eqlevel}
\begin{aligned}
	\wts(s, i) &= \sum_{\text{active }e\in s}(1+\epsilon)^{-\max\{i, \max_{t\mid e\in t\neq s} \lev(t) \}} + \sum_{\text{passive }e\in S}(1+\epsilon)^{-\max\{i, \ilev(e)\}}\\
	&= \sum_{e\in s}\min\left\{\wts(e), (1+\epsilon)^{-\max\{i, \max_{t\mid e\in t\neq s} \lev(t) \}}\right\}
\end{aligned}
\end{equation}
In other words, $\wts(s, i)$ is the weight of $s$ if it were raised to level $i$. So by definition, $\wts(s) = \wts(s, \lev(s))$, and 
\begin{equation}
\label{eqlevelup}
    \wts(s, \lev(s)+1) = \wts(s) - |A_{\lev(s)}(s)|\cdot \epsilon(1+\epsilon)^{-\lev(s)-1},
\end{equation}
which can be computed in $O(1)$ time once we know $\wts(s)$ and $|A_{\lev(s)}(s)|$. We assume that all powers of $1+\eps$ can be computed in constant time; one way of implementing this efficiently is to compute all these powers at the outset in $O(L)$ time.

Throughout the algorithm, let $T\subseteq \set$ be the set of all tight sets, and let $\phi = \sum_{s\in \set}\phi(s)$ be the total dead weight. For each $i$, let $S_i \subseteq \set$ be the set of sets at level $i$, let $T_i\subseteq T$ be the set of tight sets at level $i$, let $E_i$ be the set of elements such that $\zlev(e) = i$ (note that $E_i$ contains all active elements such that $\lev(e) = i$, since for them $\zlev(e) = \lev(e)$). Define $\phi_i = \sum_{s \in S_i} \phi(s)$, the total dead weight of sets on level $i$; similarly, we can define notations $\phi_{\leq i}, S_{\leq i}, T_{\leq i}, E_{\leq i}$.

Each set $E_i, S_i, T_i$ will be maintained as a linked list, and we store all pointers to lists $\{S_i\}_{0\leq i\leq L}$, $\{T_i\}_{0\leq i\leq L}$, $\{E_i\}_{0\leq i\leq L}$ as three arrays of length $L+1$. When the values of $\lev(s), \zlev(e), \wts(s), \wts(e)$ change for a set $s$ or an element $e$, we can update the lists and the values $\phi_i, \wts(E_i), \wts(S_i), \wts(T_i)$ accordingly in constant time.

\paragraph{Iterating over nonempty sets.} In the algorithm, we want to be able to access nonempty sets from $\{E_i\}_{0\leq i\leq L}$, $\{S_i\}_{0\leq i\leq L}$, $\{T_i\}_{0\leq i\leq L}$ efficiently in the increasing order by $i$. If we were maintaining them in doubly linked lists, we would not be able to update the lists in constant time whenever we update the values of $\zlev(e), \lev(s)$. To cope with that, we rely on the way our algorithm update these values. First, notice that we can update the lists in constant time if we increase these values by one. Another idea is to maintain the doubly linked lists only for levels above $\ceil{\log_{1+\epsilon}C}+1$, so we can update them in constant time whenever we set $\lev(s)$ to some $k \le \ceil{\log_{1+\epsilon}C}+1$. We use the same idea for $\{E_i\}_{0\leq i\leq L}$, but we also store $E_i$ whenever $T_i \ne \emptyset$, even if $E_i = \emptyset$. That way, we are able to update the list when we set $\zlev(e)$ to $\max_{s \ni e} \{ \lev(s) \}$.
This allows us to compute quantities $\phi_{\le i}$, $c(T_{\le i})$ and $\wts(E_{\le i})$ more efficiently.

\begin{observation} \label{linkedlist}
This linked list data structure allows us to compute quantities $\phi_{\leq i}, c(T_{\leq i})$ in $O\brac{\left|T_{\leq i}\setminus T_{\leq \ceil{\log_{1+\epsilon}C}+1}\right| + \frac{\log C}{\epsilon}}$ time, and enumerate elements from $E_{\le i}$ or compute $\wts(E_{\leq i})$ in $O\brac{\left|T_{\leq i}\setminus T_{\leq \ceil{\log_{1+\epsilon}C}+1}\right| + \frac{\log C}{\epsilon} + |E_{\le i}|}$ time.
\end{observation}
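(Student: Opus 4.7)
The plan is to split the range $[0,i]$ at the threshold $K := \ceil{\log_{1+\epsilon} C}+1$. For levels $j \le K$, I will iterate through the length-$(L+1)$ pointer array, accessing $T_j, S_j, E_j$ in $O(1)$ each; this visits every index regardless of emptiness, but costs only $O(K) = O(\log C / \epsilon)$ overall. For levels $K < j \le i$, I will traverse the corresponding doubly linked list, which skips empty levels entirely. Because the per-level aggregates $\phi_j, c(T_j), \wts(E_j)$ are maintained up to date under updates (by the data-structure bookkeeping described just above the observation), each visited level contributes only $O(1)$ work to assembling the running sum on the fly.

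For $\phi_{\leq i}$ and $c(T_{\leq i})$, I will walk the doubly linked list of nonempty $T_j$'s starting from the first $j > K$ and stop once $j > i$. The number of entries encountered is at most $\left|T_{\leq i}\setminus T_{\leq K}\right|$, since every visited level contributes at least one tight set outside $T_{\leq K}$. This traversal is complete because dead weight lives only on tight sets, so $\phi_j = 0$ whenever $T_j = \emptyset$ (and trivially $c(T_j) = 0$ there); hence no nonzero summand is ever skipped. Adding the $O(\log C/\epsilon)$ prefix cost yields the first bound.

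For enumerating $E_{\leq i}$ (or computing $\wts(E_{\leq i})$) I will instead walk the doubly linked list of $E_j$'s. By construction this list holds a pointer for every $j > K$ at which either $T_j \ne \emptyset$ or $E_j \ne \emptyset$, so the number of visited pointers is at most $\left|T_{\leq i}\setminus T_{\leq K}\right| + \left|\{j > K : E_j \ne \emptyset\}\right| \le \left|T_{\leq i}\setminus T_{\leq K}\right| + |E_{\leq i}|$, where the last inequality uses that each nonempty $E_j$ contributes at least one element. Enumerating, or summing weights of, the elements themselves adds an additional $O(|E_{\leq i}|)$ across all visited levels, and the low-level prefix again costs $O(\log C/\epsilon)$, giving the second bound.

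The only nontrivial ingredient is the invariant, used in the $T$-list traversal, that $\phi(s) > 0$ implies $s$ is tight, so that walking only nonempty $T_j$'s catches every level on which $\phi_j$ is nonzero. This has to be guaranteed by the maintenance routines of the algorithm; once that is in place, the rest of the argument is a direct runtime accounting from the linked-list design.
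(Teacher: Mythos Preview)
Your argument is essentially the intended one: the paper states this as an observation without proof, relying on exactly the split at $K=\lceil\log_{1+\epsilon}C\rceil+1$ and the doubly-linked-list design described just before the observation, and your write-up makes that reasoning explicit and correct.

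One small sharpening: the invariant you invoke at the end, ``$\phi(s)>0$ implies $s$ is tight,'' is not the one the paper maintains, and it is not obviously true at all times (e.g., during $\fix$ a slack set at level~$0$ can receive dead weight). What actually drives the conclusion is \Cref{inv}(2): every set at level $\ge 1$ is tight, so for $j>K\ge 1$ we have $S_j=T_j$, and hence $T_j=\emptyset\Rightarrow S_j=\emptyset\Rightarrow\phi_j=0$. This is the guarantee the maintenance routines provide at the moments the observation is applied (namely when searching for $k$ on \cref{find-k} of \Cref{alg:main}), and it yields the same contrapositive you use. With that substitution your proof is complete.
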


\paragraph{Implicit zeroing.}\label{implicit-zeroing}
We need a fast data structure for the following operation.
\begin{itemize}[leftmargin=*]
	\item Given a level index $0\leq i\leq  L$, we want to assign $\lev(s),\phi(s)\leftarrow 0$ for all $s \in S_i$, and we need to do this in constant time.
\end{itemize}

Updating the lists $T_i, S_i$ or sums $\phi_i, \phi_0, c(T_i), c(S_i)$ can be done in constant time. However, this task is impossible if we want to explicitly update all the values $\phi(s),\lev(s)\leftarrow 0$ for all $s \in S_i$. So, we have to zero out each individual value $\lev(s), \phi(s)$ in an implicit way. To do this, for each set $s\in\set$, we will associate it with a time stamp $\tm(s)$ which indicates the latest time point when the value of $\lev(s)$ or $\phi(s)$ is explicitly updated. Then, create an array $\aux$ of length $L + 1$, where each entry $\aux[i]$ stores the time point $t$ of the latest zeroing operation to level $i$. 
Each time we want to access the values of $\lev(s), \phi(s)$, compare $\tm(s)$ and $\aux[\lev(s)]$. If $\tm(s) > \aux[\lev(s)]$, we know that $s$ did not suffer the latest zeroing out on level $\lev(s)$, and hence $\lev(s), \phi(s)$ are referring to their current values; otherwise, $s$ must have undergone a zeroing operation implicitly. In this case, explicitly set $\lev(s), \phi(s)$ to $0$, and update $\tm(s)$ accordingly. Here we have implicitly assumed that the time values can be stored in a single word; otherwise, we would rebuild the entire dynamic set cover data structure and reset the time to zero.

Zeroing out the levels of $\lev(s)$ may also affect the levels of other elements. But in our algorithm, we will apply implicit zeroing in a careful manner, so that the levels of sets and elements are consistent.

\subsection{Approximation guarantees}

\begin{invariant}\label{inv}
	During the algorithm, we will maintain the following invariants.
	\begin{enumerate}[(1)]
		\item For any set $s$, $\wts(s, \lev(s)+1) < c_s$. \
		\item All sets at level at least $1$ are tight.
		\item It always holds that $\phi\leq \epsilon \left(c(T) + f\cdot \wts(\univ)\right)$.
	\end{enumerate}
\end{invariant}

\begin{corollary}\label{wts-bound}
    If \Cref{inv}(1) holds, then $\wts(s) < (1 + \epsilon)c_s$ for all sets $s \in \set$. 
\end{corollary}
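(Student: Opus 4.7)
The plan is to compare $\wts(s)$ with $\wts(s, \lev(s)+1)$ element-by-element, using the definition of $\wts(s, i)$ from equation~(\ref{eqlevel}), and show that each summand can shrink by at most a factor of $(1+\epsilon)$ when $i$ is raised from $\lev(s)$ to $\lev(s)+1$.

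Fix an element $e \in s$ and consider the two possible types. If $e$ is \emph{passive}, then since $e \in s$ we have $\lev(e) = \max_{t \ni e}\lev(t) \geq \lev(s)$, and by the definition of passive, $\ilev(e) > \lev(e) \geq \lev(s)$, so $\ilev(e) \geq \lev(s) + 1$. Thus $\max\{\lev(s), \ilev(e)\} = \max\{\lev(s)+1, \ilev(e)\} = \ilev(e)$, and the contribution of $e$ to $\wts(s, i)$ is the same at $i = \lev(s)$ and $i = \lev(s)+1$. If $e$ is \emph{active}, let $L' = \max_{t \mid e \in t \neq s}\lev(t)$. If $L' \geq \lev(s)+1$, the contribution $(1+\epsilon)^{-\max\{i, L'\}} = (1+\epsilon)^{-L'}$ is identical at both values of $i$; otherwise $L' \leq \lev(s)$, and the contribution drops from $(1+\epsilon)^{-\lev(s)}$ to $(1+\epsilon)^{-\lev(s)-1}$, i.e.\ by exactly a factor of $1+\epsilon$.

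Summing over all $e \in s$, each term in $\wts(s) = \wts(s, \lev(s))$ gets multiplied by either $1$ or $\tfrac{1}{1+\epsilon}$ when passing to $\wts(s, \lev(s)+1)$, so
\[
  \wts(s, \lev(s)+1) \;\geq\; \frac{1}{1+\epsilon}\, \wts(s).
\]
Combining this with Invariant~\ref{inv}(1), which gives $\wts(s, \lev(s)+1) < c_s$, yields $\wts(s) \leq (1+\epsilon)\wts(s, \lev(s)+1) < (1+\epsilon) c_s$, as desired.

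I do not expect a serious obstacle here; the one subtlety worth double-checking is that the case analysis correctly handles active elements whose maximum-level witness is the set $s$ itself (so that $L' < \lev(s) = \lev(e)$), because that is precisely the case where raising $\lev(s)$ genuinely decreases $e$'s contribution. Everything else is a mechanical verification from the definition.
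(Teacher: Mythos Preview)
Your proof is correct and follows essentially the same approach as the paper: both establish $\wts(s,\lev(s)+1)\ge \frac{1}{1+\epsilon}\wts(s)$ and then invoke Invariant~\ref{inv}(1). The only cosmetic difference is that the paper appeals to the pre-derived identity~\eqref{eqlevelup} (which already isolates $A_{\lev(s)}(s)$ as the sole source of weight loss) and bounds $|A_{\lev(s)}(s)|\cdot(1+\epsilon)^{-\lev(s)}\le \wts(s)$, whereas you redo this element-by-element directly from~\eqref{eqlevel}.
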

\begin{proof}
    Notice that $\wts(s) \ge |A_{\lev(s)}(s)| \cdot (1 + \epsilon)^{-\lev(s)}$, since every $e \in A_{\lev(s)}(s)$ has $\wts(e) = (1 + \epsilon)^{-\lev(s)}$. Thus, \[\wts(s) - |A_{\lev(s)}(s)| \cdot \epsilon(1 + \epsilon)^{-\lev(s) - 1} \ge \wts(s) - \frac{\epsilon}{1 + \epsilon}\wts(s) = \frac{\wts(s)}{1 + \epsilon}.\] Plugging this into \Cref{eqlevel}, we get $\wts(s) \le (1 + \epsilon) \cdot \wts(s, \lev(s) + 1) < (1 + \epsilon)c_s$.
\end{proof}

\begin{lemma}[\cite{bhattacharya2021dynamic}]\label{ele-wts}
	If \Cref{inv} holds, then $\wts(s) < (1+\epsilon) c_s$, and $\wts(\univ) \leq (1+\epsilon)\cdot \opt$, where $\opt = c(\set^*)$ is the total weight of an optimal set cover $\set^*$.
\end{lemma}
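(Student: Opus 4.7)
The first assertion $\wts(s) < (1+\epsilon) c_s$ for every $s \in \set$ is not really a new statement: under Invariant~\ref{inv}(1), namely $\wts(s, \lev(s)+1) < c_s$, this is exactly the content of \Cref{wts-bound}, which has already been established (and does not even require parts (2) and (3) of the invariant). So I would simply cite \Cref{wts-bound} for that half of the claim.

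For the second assertion $\wts(\univ) \le (1+\epsilon)\cdot \opt$, the plan is a standard weak-duality argument viewing $\wts(e)/(1+\epsilon)$ as a feasible LP-dual solution. Fix an optimal set cover $\set^*$ of cost $\opt = c(\set^*)$. For each element $e\in\univ$, choose an arbitrary set $\sigma(e)\in\set^*$ with $e\in\sigma(e)$; such a set exists because $\set^*$ covers $\univ$. Then
\[
\wts(\univ) \;=\; \sum_{e\in\univ} \wts(e) \;=\; \sum_{s^*\in\set^*} \sum_{\substack{e\in\univ \\ \sigma(e)=s^*}} \wts(e) \;\le\; \sum_{s^*\in\set^*} \sum_{e\in s^*} \wts(e) \;=\; \sum_{s^*\in\set^*} \wts(s^*).
\]
Applying the first assertion to each $s^*\in\set^*$ gives $\wts(s^*) < (1+\epsilon)c_{s^*}$, so the right-hand side is at most $(1+\epsilon)\sum_{s^*\in\set^*} c_{s^*} = (1+\epsilon)\cdot\opt$, which is the desired bound.

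No real obstacle arises: invariant parts (2) and (3) and the dead weights $\phi(s)$ are not needed here, since the only nontrivial input to the argument is the per-set upper bound $\wts(s)<(1+\epsilon)c_s$ together with the covering property of $\set^*$. The one small point to be careful about is that elements can appear in multiple sets of $\set^*$, which is handled cleanly by the assignment $\sigma(e)$ used above so that each $\wts(e)$ is counted exactly once on the left-hand side and at most once on the right-hand side.
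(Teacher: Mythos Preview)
Your proposal is correct and follows essentially the same argument as the paper: the paper also cites \Cref{wts-bound} for the first part and then bounds $\wts(\univ)\le \sum_{s\in\set^*}\wts(s)\le (1+\epsilon)c(\set^*)$ using that $\set^*$ is a cover. Your explicit assignment $\sigma(e)$ is just a slightly more careful write-up of the same inequality the paper states in one line.
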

\begin{proof}
	By \Cref{wts-bound}, $\wts(s) < (1+\epsilon) c_s$. Furthermore, we have
	$$\wts(\univ) = \sum_{e\in \univ}\wts(e)\leq \sum_{s\in \set^*}\sum_{e\in s}\wts(e)\leq (1+\epsilon)\cdot c(\set^*) = (1+\epsilon)\cdot \opt$$
	This first inequality relies on the fact that $\set^*$ is a valid set cover.
\end{proof}

\begin{lemma}[\cite{bhattacharya2021dynamic}]
	If \Cref{inv} holds and the collection of tight sets $T$ is a set cover, then $T$ is a $(1+5\epsilon)f$-approximate set cover.
\end{lemma}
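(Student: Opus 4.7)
The plan is to combine the weak LP-duality argument from the primal–dual framework with the slack term coming from dead weight (Invariant (3)) and the element-weight bound from \Cref{ele-wts}. Since $T$ consists of tight sets, each $s\in T$ satisfies $c_s\leq (1+\epsilon)\wts^*(s) = (1+\epsilon)(\wts(s)+\phi(s))$; summing over $T$ gives
\[
c(T)\leq (1+\epsilon)\sum_{s\in T}\wts(s)+(1+\epsilon)\sum_{s\in T}\phi(s).
\]
So the task reduces to bounding these two sums in terms of $f\cdot \opt$ and a small multiple of $c(T)$ itself.

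The first step is to bound $\sum_{s\in T}\wts(s)$ by swapping summation: $\sum_{s\in T}\wts(s)\leq \sum_{e\in\univ}\wts(e)\cdot|\{s\in \set : e\in s\}|\leq f\cdot \wts(\univ)$, using the frequency bound $f$. The second step is to bound $\sum_{s\in T}\phi(s)\leq \phi$ and then invoke Invariant (3) to replace $\phi$ by $\epsilon(c(T)+f\cdot\wts(\univ))$.

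Plugging both bounds into the inequality above yields
\[
c(T)\leq (1+\epsilon)f\cdot\wts(\univ)+(1+\epsilon)\epsilon\bigl(c(T)+f\cdot\wts(\univ)\bigr),
\]
and rearranging gives $c(T)\cdot\bigl(1-\epsilon(1+\epsilon)\bigr)\leq (1+\epsilon)^2 f\cdot \wts(\univ)$. Applying \Cref{ele-wts} to substitute $\wts(\univ)\leq (1+\epsilon)\opt$ produces
\[
c(T)\leq \frac{(1+\epsilon)^3}{1-\epsilon-\epsilon^2}\cdot f\cdot\opt.
\]

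The only remaining step is a routine numerical check: for $\epsilon\in(0,0.1)$, the factor $(1+\epsilon)^3/(1-\epsilon-\epsilon^2)$ is at most $1+5\epsilon$ (expand the numerator as $1+3\epsilon+O(\epsilon^2)$, expand $1/(1-\epsilon-\epsilon^2)$ as $1+\epsilon+O(\epsilon^2)$, multiply and bound the lower-order terms using $\epsilon\leq 0.1$). There is no real obstacle here; the main subtlety is simply making sure Invariant (3) is applied to the \emph{total} dead weight $\phi$ (rather than just $\sum_{s\in T}\phi(s)$), which is what links the $\phi$ term back to $c(T)$ and forces the rearrangement.
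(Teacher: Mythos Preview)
Your proof is correct and follows essentially the same route as the paper: use tightness to bound $c(T)$ by $(1+\epsilon)\sum_{s\in T}(\wts(s)+\phi(s))$, bound $\sum_{s}\wts(s)\le f\cdot\wts(\univ)$ via the frequency, bound $\phi$ via Invariant~(3), rearrange to get $c(T)\le \frac{(1+\epsilon)^2 f}{1-\epsilon(1+\epsilon)}\wts(\univ)$, and then apply \Cref{ele-wts}. The only cosmetic difference is that the paper records the intermediate bound $(1+\epsilon)^2 f\cdot\wts(\univ)/(1-\epsilon(1+\epsilon))$ before invoking \Cref{ele-wts}, whereas you fold the extra $(1+\epsilon)$ factor in immediately.
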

\begin{proof}
    By the definition of a tight set, we have $\wts^*(s) = \wts(s) + \phi(s) \geq \frac{c_s}{1+\epsilon}$. Then, the cost of $T$ is bounded by
	 $$\begin{aligned}
		c(T) \leq (1+\epsilon)\cdot \sum_{s\in T}(\wts(s) + \phi(s)) &\leq (1+\epsilon)\cdot \wts(\set) + (1+\epsilon)\cdot\phi\\
		&\leq (1+\epsilon)f\cdot \wts(\univ) + \epsilon(1+\epsilon)\cdot c(T) + \epsilon(1+\epsilon)f\cdot \wts(\univ)\\
		&\leq (1+\epsilon)^2f\cdot \wts(\univ) + \epsilon(1+\epsilon)\cdot c(T)
	\end{aligned}$$
	As $\epsilon\in(0, 0.1)$ and by \Cref{ele-wts}, we have
	$$c(T) \leq \frac{(1+\epsilon)^2f}{1 - \epsilon(1+\epsilon)}\cdot \wts(\univ)\leq (1+5\epsilon)f\cdot \opt$$
\end{proof}

\subsection{Glossary}
Some of the notations used are summarized in \Cref{table:notation-summary} (placed in the last page for convenience). 

\section{Algorithm description}
We will describe subroutines $\del(e)$, $\ins(e)$, $\fix(e, l)$, and $\reset(k)$ which constitute the main update algorithm, whose pseudocode is given in \Cref{alg:main}. At the beginning of the algorithm, we assume $\univ$ is empty, and so all sets in $\set$ are initialized on level $0$. When an element $e$ is deleted from $\univ$, we will call subroutine $\del(e)$ to deal with it; if an element $e$ is inserted, then we will call $\ins(e)$. 

After that, we check if \Cref{inv}(3) is violated. If so, we find the smallest index $k$ such that $\phi_{\leq k} > \epsilon\cdot \left(c(T_{\leq k}) + f\cdot \wts(E_{\leq k})\right)$ and then invoke subroutine $\reset(k)$;
this is repeated until \Cref{inv}(3) holds.
\begin{algorithm}
    \caption{\textsf{DynamicSetCover}}\label{alg:main}
    initialize $\lev(s) = 0, \forall s\in \set$\;
    \ForEach{element update $e$}{
        \If{$e$ is deleted}{
            $\del(e)$\;
        }\Else{
            $\ins(e)$\;
        }
        \While{\Cref{inv}(3) is violated}{\label{main-while}
            find the smallest $k$ such that $\phi_{\leq k} > \epsilon\left(c(T_{\leq k}) + f\cdot \wts(E_{\leq k})\right)$\;\label{find-k}
            $\reset(k)$\;
        }
    }
\end{algorithm}

To implement \cref{find-k} which finds the smallest index $k$ such that $\phi_{\leq k} > \epsilon\cdot \left(c(T_{\leq k}) + f\cdot \wts(E_{\leq k})\right)$, start with $k = 0$ and each time increase $k$ to the next index $k\leftarrow k^\prime$ where $T_{k^\prime}\neq \emptyset$ or $E_{k^\prime} \ne \emptyset$, using the doubly linked list data structure, and check if $\phi_{\leq k} > \epsilon\cdot \left(c(T_{\leq k}) + f\cdot \wts(E_{\leq k})\right)$. In this way, the runtime of locating the smallest $k$ would be 
$O\brac{|T_{\leq k}\setminus T_{\leq \ceil{\log_{1+\epsilon}C}+1}| + \frac{\log C}{\epsilon} + |E_{\le k}|}$ (similarly to  \Cref{linkedlist}); note that the amount of time spent per level is constant, since we have maintained the quantities per each level separately, and we just need to sum the quantities for prefixes of levels.

We will make sure that each of the $\ins$, $\del$ and $\reset$ subroutines does not increase the gap $\ilev(e) - \zlev(e)$ for any passive element $e$. That property will allow us to bound the total time spent on element $e$.

\subsection{Deletion}
We handle deletions in the same way as \cite{bhattacharya2021dynamic}; refer to \Cref{alg:del} for the pseudocode. When an element $e$ is deleted, the algorithm subtracts, for each set $s \ni e$, the value of $\wts(e)$ from its weight $\wts(s)$, and compensates for the loss by increasing the dead weight $\phi(s)$ by $\wts(e)$, if $s$ was tight. 
This $\del(e)$ subroutine takes $O(f)$ time.

Besides, we also need to specify how to maintain the underlying data structures after an element deletion. If $e$ is active, then we go over all sets $s\ni e$ and remove $e$ from the linked list $A_{\ilev(e)}(s)$; if $e$ is passive, then we go over all sets $s\ni e$ and remove $e$ from the linked list $P_{\ilev(e)}(s)$. This operation takes time $O(f)$.

\begin{algorithm}
    \caption{$\del(e)$}\label{alg:del}
    \ForEach{set $s\ni e$}{
        $\wts(s)\leftarrow \wts(s) - \wts(e)$\;
        \If{$s$ was tight}{
            $\phi(s)\leftarrow \phi(s) + \wts(e)$\;
        }
    }
\end{algorithm}

As for the invariants, since $\del(e)$ does not increase any weight $\wts(s)$, \Cref{inv}(1) is preserved. \Cref{inv}(2) is also preserved due to the way we modify the dead weights. \Cref{inv}(3) might have been violated due to the increases of dead weights, but it will be restored by the while loop on \cref{main-while} of \Cref{alg:main}.

\subsection{Insertion}\label{subsec:insert}
\paragraph{High-level idea.} When inserting an element $e$, we aim to satisfy the constraint $\forall s \ni e,\,\wts(s) < c_s$. We try to make the newly inserted element $e$ active at level $\lev(e) = \max_{s\ni e}\{\lev(s) \}$ if possible. If not, to make sure $e$ is covered by a tight set, we try to make it passive at the lowest possible intrinsic level up to $l = \max_{s\ni e}\{\lev(s) \} + \left\lceil\log_{1+\epsilon}\max\{f, \frac{2C}{\epsilon} \}\right\rceil$, since we want $\ilev(e) \le \zlev(e) + \left\lceil\log_{1+\epsilon}\max\{f, \frac{2C}{\epsilon} \}\right\rceil$. This may still be impossible; in that case, we invoke $\fix(e, l)$. 

Thus, we invoke $\fix(e, l)$  when we can't make the element passive on intrinsic level $l \le \max_{s\ni e}\{\lev(s) \} + \ceil{\log_{1+\eps} f}$. Our amortized analysis employs a potential function. As we will show later (the full analysis is given in \Cref{potfun}, this is just for intuition), the \emph{potential increase} and thus the amortized cost of $\fix(e, l)$ is bounded by roughly $f^2 \cdot (1 + \epsilon)^{-d}$, where $d = \ilev(e) - \lev(e)$. Hence by placing the element $\ceil{\log_{1 + \epsilon} f}$ levels higher than $\lev(e)$, we can guarantee that the {\em potential increase} is roughly $f$. We note that we call to $\fix(e, l)$ only when the gap $\ilev(e) - \lev(e)$ is at least $\log_{1 + \epsilon} \frac{2C}{\epsilon}$ (hence the reason for taking the maximum of $f$ and $\frac{2C}{\epsilon}$); that restriction, as we will show later, guarantees that the invariants are preserved.

\subsubsection{Description of the $\ins$ subroutine} Upon an insertion $e$, assign $\zlev(e) \gets \max_{s\ni e}\{\lev(s) \}$ (thus making $\zlev(e) = \lev(e)$). Define $l = \max_{s\ni e}\{\lev(s) \} + \left\lceil\log_{1+\epsilon}\max\{f, \frac{2C}{\epsilon} \}\right\rceil$ and $F = \{s\ni e\mid \wts(s) + (1+\epsilon)^{-l} \ge c_s \}$. Next, we branch into two cases:

\begin{enumerate}[(1),leftmargin=*]
    \item If $F = \emptyset$, then it is possible to insert $e$ at an intrinsic level $\ilev(e) \le l$ without violating $\wts(s) < c_s$ for any $s \ni e$. In that case, we compute the smallest index $h\geq \zlev(e)$ such that $\forall s \ni e,\, \wts(s) + (1+\epsilon)^{-h} < c_s$ and set $\ilev(e) \gets h$. After that, we update the weights of sets by going over each set $s \in e$ and setting $\wts(s) \gets \wts(s) + \wts(e)$. We also add $e$ to $A_h(s)$ if $e$ is active (i.e., $\ilev(e) = \zlev(e)$), or $P_h(s)$ if $e$ is passive, for each $s \ni e$ (which is omitted in the pseudocode). 
    
    \textbf{Computing $h$ in $O(f)$ time.} To compute $h$ in $O(f)$ time, we can first compute the minimum value of the gap $c_s - \wts(s)$, and then use binary search over the interval $[\zlev(e), l]$ to find $h$, which takes time $O\left(\log \left(\log_{1+\epsilon}\max\{f, \frac{2C}{\epsilon} \}\right)\right)$. Note that $\log_{1+\epsilon}\frac{2C}{\epsilon} = O\left(\frac{\log C}{\epsilon} + \frac{1}{\epsilon^2}\right)$, since $\log_{1+\epsilon} x = O\left(\frac{\log x}{\epsilon}\right)$ for any $x > 0, \epsilon \in (0, 1)$. Recall that we assume $f > \frac{\log C}{\epsilon}$. Then the time to find $h$ is $O\left(\log \left(\frac{\log f}{\epsilon} + \frac{\log C}{\epsilon} + \frac{1}{\epsilon^2}\right)\right) = O(f)$, where the last transition holds due to $\frac{1}{\epsilon} \le \frac{\log C}{\epsilon} < f$. This operation will appear again in the $\reset$ subroutine.
    
    \item If $F \neq \emptyset$, then it is impossible to insert $e$ at an intrinsic level $\ilev(e) \le l$ without violating $\wts(s) < c_s$ for some $s \ni e$. Inserting $e$ at intrinsic level $\ilev(e) = l$ may violate \Cref{inv}(1). Hence we apply subroutine $\fix(e, l)$, which will make $e$ passive at intrinsic level $l$ or higher, but will keep the gap $\ilev(e)-\zlev(e)$ (and the gap $\ilev(e) - \lev(e)$ as well, since it also keeps $\zlev(e) = \lev(e)$) equal to $\left\lceil\log_{1+\epsilon}\max\{f, \frac{2C}{\epsilon} \}\right\rceil$, and make sure that all the invariants are satisfied.
\end{enumerate}

\begin{algorithm}
    \caption{$\ins(e)$}\label{alg:ins}
    \SetKw{Let}{let}
    assign $\zlev(e) \gets \max_{s\ni e}\{\lev(s) \}$\;
    \Let{$l = \max_{s\ni e}\{\lev(s) \} + \left\lceil\log_{1+\epsilon}\max\{f, \frac{2C}{\epsilon} \}\right\rceil$}\;
    \Let{$F = \{s\ni e\mid \wts(s) + (1+\epsilon)^{-l} \ge c_s \}$}\;
    \If{$F = \emptyset$}{
        compute the smallest index $h\geq \zlev(e)$ such that $\wts(s) + (1+\epsilon)^{-h} < c_s, \forall s\ni e$\;
        $\ilev(e) \gets h$\;
        \ForEach{$s \ni e$}{
            $\wts(s) \gets \wts(s) + \wts(e)$\;
        }
    }\Else{
        $\fix(e, l)$\;
    }
\end{algorithm}

It is left to show that $\ins$ maintains a valid set cover and preserves the invariants. We defer the proof of the following theorem to \Cref{sec:fix-properties}, since our argument relies on the properties of the $\fix$ subroutine, which we have not stated yet.
\begin{restatable}{theorem}{insinv}
    \label{ins-inv}
    After the call to $\ins(e)$, $T$ is a set cover and \Cref{inv}(1)(2) are maintained.
\end{restatable}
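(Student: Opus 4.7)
The plan is to split on the two branches of $\ins$. When $F \neq \emptyset$, control is passed to $\fix(e, l)$, so the theorem will follow by quoting the correctness properties of $\fix$ from \Cref{sec:fix-properties}; this is why the proof is deferred. The substantive case is $F = \emptyset$, where the algorithm directly picks the smallest $h \in [\zlev(e), l]$ with $\wts(s) + (1+\epsilon)^{-h} < c_s$ for every $s \ni e$, sets $\ilev(e) \gets h$, and adds $\wts(e) = (1+\epsilon)^{-h}$ to $\wts(s)$ for each such $s$. I will verify Invariant (1), Invariant (2), and the set-cover condition in this branch.

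Invariants (1) and (2) are immediate. For (1), the choice of $h$ gives $\wts(s)_{\text{new}} < c_s$ for every $s \ni e$, and \Cref{eqlevel} shows term-by-term that $\wts(s, \lev(s)+1) \le \wts(s)$; sets disjoint from $e$ are unchanged. For (2), $\ins$ modifies neither $\lev(s)$ nor $\phi(s)$ and only increases $\wts(s)$, so every set that was tight before the update remains tight.

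The delicate property is that $T$ still covers $e$ (all previously covered elements remain covered since no set loses tightness). I split by the value of $h$. If $h = \zlev(e)$, then $\zlev(e) = 0$ is impossible, as it would force $\wts(s) + 1 < c_s \le 1$; thus $\zlev(e) \ge 1$, and the set $s^* \in \arg\max_{s \ni e}\lev(s)$ sits at level $\ge 1$ and is tight by Invariant (2), covering $e$. If $h > \zlev(e)$, the minimality of $h$ supplies a witness $s^* \ni e$ with $\wts(s^*) + (1+\epsilon)^{-(h-1)} \ge c_{s^*}$. Dividing through by $1+\epsilon$ yields $(1+\epsilon)^{-h} \ge (c_{s^*} - \wts(s^*))/(1+\epsilon)$ (trivially so when $\wts(s^*) > c_{s^*}$), so
\[
\wts(s^*)_{\text{new}} \;=\; \wts(s^*) + (1+\epsilon)^{-h} \;\ge\; \frac{\epsilon\,\wts(s^*)}{1+\epsilon} + \frac{c_{s^*}}{1+\epsilon} \;\ge\; \frac{c_{s^*}}{1+\epsilon},
\]
which makes $s^*$ tight and forces $e$ to be covered.

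I expect the main obstacle of the full proof to lie in the $F \neq \emptyset$ branch, whose correctness rests on the (yet-to-be-proved) properties of $\fix(e, l)$: $\fix$ must eliminate every $\wts(s) \ge c_s$ violation flagged by $F$ while keeping the gap $\ilev(e) - \zlev(e)$ pinned at $\lceil\log_{1+\epsilon}\max\{f, 2C/\epsilon\}\rceil$, and this is exactly what the choice of $l$ (taking the max with $2C/\epsilon$ rather than just $f$) is engineered to make possible. Once those properties are established in \Cref{sec:fix-properties}, the theorem follows by combining them with the brief witness calculation above.
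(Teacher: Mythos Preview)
Your treatment of the $F=\emptyset$ branch is correct and matches the paper's proof essentially line for line: the minimality-of-$h$ witness argument for $h>\zlev(e)$ is the same computation, and your observation that $h=\zlev(e)=0$ is impossible is a slightly cleaner variant of the paper's handling of that subcase (the paper instead argues the $\lev(e)=0$ case directly, which is in fact vacuous for the same reason you give).

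Your sketch of the $F\neq\emptyset$ branch, however, misidentifies what the $\max\{f,2C/\epsilon\}$ in the definition of $l$ is for. Eliminating the violations $\wts(s)\ge c_s$ (i.e.\ restoring \Cref{inv}(1)) is done by the raising loop inside $\fix$ and holds regardless of the size of $d$; this is \Cref{fix-invs}. The $2C/\epsilon$ term is needed for two \emph{other} points that your proposal does not address. First, to show $e$ is covered: since $l\ge\log_{1+\epsilon}\tfrac{2C}{\epsilon}$, every $s\in F$ satisfies $\wts^{\old}(s)\ge c_s-(1+\epsilon)^{-l}\ge c_s-\tfrac{\epsilon}{2C}\ge c_s/(1+\epsilon)$, so all sets in $F$ are already tight \emph{before} the call, and \Cref{fix-tight} then keeps them tight afterward. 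Second, to maintain \Cref{inv}(2): any slack set had $\lev(s)=0$, and \Cref{slack-no-rise} (which explicitly requires $d\ge\log_{1+\epsilon}\tfrac{2C}{\epsilon}$) guarantees $\fix$ does not raise it; tight sets remain tight by \Cref{fix-tight}. Your ``brief witness calculation above'' does not carry over to this branch, since $h$ is not defined here; the cover argument is the tightness-of-$F$ calculation just described.
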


\subsection{Fixing levels}\label{subsec:fixlevel}
\paragraph{High-level idea.} When the subroutine $\fix(e, l)$ is called, we assign the intrinsic level of a passive or new element $e$ to $l$. However, this update can increase $\wts(s)$ for $s \ni e$, and hence could violate \Cref{inv}(1). To restore it, the subroutine then goes over each set $s \ni e$ and repeatedly raises $s$; that is, increases $\lev(s)$ and increases the levels of elements in $s$ if needed, until the invariant is satisfied. 
Additionally, it makes sure that the gap $d = \ilev(e) - \lev(e)$ remains the same, by increasing $\ilev(e)$ whenever $\lev(e)$ increases. The fact that the gap does not decrease is crucial for the amortized runtime analysis.

The exact definition of the potential functions used in the amortized runtime analysis is provided later in \Cref{potfun}. Intuitively, each unit of ``excess weight'', that is $\max\{\wts(s) - c_s, 0\}$, has a potential cost, which depends $\lev(s)$. Increasing $\lev(s)$ by one increases the potential cost of one unit of weight by a factor of $1 + \epsilon$. If we had kept $\ilev(e)$ unchanged, $\lev(s)$ could become close to $\ilev(e)$, which would make the potential increase too large. So our rule here is to keep the gap $d=\ilev(e) - \lev(e)$ the same by increasing $\ilev(e)$ whenever $\lev(e)$ increases.

Let $F$ be the set of set $s \ni e$, for which $\wts(s) \ge c_s$ after making the intrinsic level of $e$ to be $l$. Note that only sets from $F$ need to be raised to restore \Cref{inv}(1). As we will show later in \Cref{sec:runtime-analysis}, the amortized runtime of $\fix$ depends on the size of $F$ and the size of the gap $d = \ilev(e) - \zlev(e)$, and is roughly $|F| \cdot f \cdot (1 + \epsilon)^{-d}$. Therefore, the smaller $F$ is, the smaller we can make the gap to achieve the same runtime. Later we will use this property to improve the runtime of the $\reset$ subroutine.

Whenever we raise $s$ one level up, levels of some elements in $s$ may also increase. Since we maintain the gap $\ilev(e) - \lev(e)$, we may also need to increase $\ilev(e)$, which decreases $\wts(e)$ by a factor of $1 + \epsilon$. The tightness is maintained for $s$, since we raise $s$ only when $\wts(s, \lev(s) + 1) \ge c_s$, so $\wts(s) \ge c_s / (1 + \epsilon)$ as a result. However, this could also decrease $\wts(s^\prime)$ for some other sets $s^\prime \ne s$, and hence $s'$ may become slack. To avoid that, for each element $e^\prime \ne e$ that decreased its weight due to the raise of $s$, we compensate the loss of $\wts(s^\prime)$ incurred by $e^\prime$ by increasing the dead weight $\phi(s^\prime)$. As for the losses due to increases of $\ilev(e)$, we need to address them more carefully. Before describing how to implement this approach, there are some technical challenges we would like to explain.

\paragraph{Technical challenges.} 

If we enumerate all sets $s\ni e$ in an arbitrary order, and increase $\ilev(e)$ in each round, then the tightness of some previously visited sets $s^\prime\ni e$ might be violated. To avoid this issue, in the original algorithm of \cite{bhattacharya2021dynamic}, they enumerated the sets $s\ni e$ by increasing order of difference $c_s - \wts(s)$, which already takes $O(f\log f)$ time; we note that one cannot use a linear-time approximate sorting algorithm for this task. For our deterministic algorithm we can afford to spend the sorting time of $O(f\log f)$, but the update time of our randomized algorithm is asymptotically smaller than that; to circumvent the $\Omega(f \log f)$ sorting time overhead, we will make do with an arbitrary ordering of these sets, and restore tightness of sets $s' \ni e$ by increasing their dead weights.

Each time the intrinsic level $\ilev(e)$ of $e$ changes, we may need to update the weights of sets $s^\prime\ni e$, which already takes $O(f)$ time. However, in some cases, the decrease in potential would not be enough to cover that runtime cost. So in the worst case, the runtime cost could be as high as $O(f^2)$. To avoid this, we will update the contribution of $\wts(e)$ to the weights $\wts(s^\prime)$ of other sets $s^\prime\ni e$ in a lazy manner. More specifically, we will do so only when a set $s^\prime\ni e$ is being enumerated~--- which is when we actually need the updated value $\wts(s^\prime)$.
	
The decrease of $\wts(e)$ might also violate the tightness of some yet unvisited sets $s\ni e$. However, compensating these losses by increasing $\phi(s)$ might be too costly in terms of the potential increase. In practice, the algorithm will make sure that $\ilev(e)$ is not higher than it was before the call (if $e$ existed before the call). We will prove later that this allows us to maintain tightness for all sets.

\subsubsection{Description of the $\fix$ subroutine} 
Next, let us describe the algorithm more formally; see \Cref{alg:fix} for the pseudocode of the $\fix$ subroutine. Let $\ilev^\old(e)$ and $\wts^\old(e)$ be the intrinsic level and the weight of element $e$ right before the execution of $\fix(e, l)$; if $e$ is a newly inserted element, then define $\ilev^\old(e) = \infty$ (and hence $\wts^\old(e) = 0$). 

The $\fix(e, l)$ subroutine assumes that $\zlev(e) = \lev(e)$ at the beginning of the call and $\lev^\old(e) < l \le \ilev^\old(e)$.
The algorithm aims to maintain \Cref{inv}(1). We also note that the algorithm does not explicitly maintain \Cref{inv}(2). The reason for that is that the subroutine is called from the $\reset$ subroutine, where \Cref{inv}(2) may be (temporarily) violated. However, the algorithm makes sure that tight sets remain tight, and we will show later that if some conditions hold, no slack set get raised. For the same reasons, we allow passive elements to violate $\zlev(e') \le \lev(e')$. However, for active elements we still have $\ilev(e') = \zlev(e') = \lev(e')$, and for passive elements we still have $\lev(e')< \ilev(e')\leq \zlev(e') + \left\lceil\log_{1+\epsilon}\max\{f, \frac{2C}{\epsilon} \}\right\rceil$.

The algorithm makes sure that $\ilev(e) \le \ilev^\old(e)$ during its execution; we will rely on this property in the proof of correctness. Intuitively, every time we raise a set during $\fix$, some active elements can also increase their level, and hence lose weight. We will compensate each such loss by the increase of dead weights. However, we will not compensate the loss due to increases of $\ilev(e)$, but the inequality $\ilev(e) \le \ilev^\old(e)$ will guarantee that $\wts(e)$ is no smaller than it was at the beginning of the call, so the tightness will be preserved.

Next, let us describe the steps the algorithm makes.

\paragraph{Changing the intrinsic level of $e$ to $l$.} First, make $e$ passive at intrinsic level $l$ by setting $\ilev(e) \gets l$ and then setting $\wts(s) \gets \wts(s) - \wts^\old(e) + \wts(e)$ for each set $s \ni e$. If $e$ is not a freshly inserted element, then remove $e$ from $P_{l}(s)$ for each $s \ni e$; we will add it to the relevant sets in the end, when we finalize the changes, where we have calculated the final value of $\ilev(e)$. 
Keep a record $d = l - \zlev(e)$ and compute $F = \{s\ni e\mid \wts(s)\ge c_s \}$.

During the algorithm, we will gradually increase $\ilev(e)$ (but never above 
$\ilev^\old(e)$). However, updating all weights and the relevant data structures takes $O(f)$ time. To avoid spending $O(f)$ time for such updates each time $\ilev(e)$ increases, we will update the data structures only once, during the finalization step.

\paragraph{Raising sets $s \ni e$.} During the algorithm, we will make sure that the lazy level $\zlev(e)$ is always equal to the actual level $\lev(e)$, and the gap $\ilev(e) - \zlev(e)$ is always equal to $d$. The algorithm then goes over each set $s\ni e$ and processes it the following way: First, it updates the weight of $s$ according to the current value of $\wts(e)$, since
throughout the algorithm execution we may have changed $\wts(e)$ due to increases of $\ilev(e)$, from $(1+\epsilon)^{-l}$ to some possibly lower weight. To update $\wts(s)$ accordingly, we set  $\wts(s)\leftarrow \wts(s) - (1+\epsilon)^{-l} + \wts(e)$.

Next, we repair \Cref{inv}(1) for $s$ by increasing $\lev(s)$. Recall that $\wts(s, \lev(s)+1)$ can be computed in constant time, given $\wts(s)$ and $|A_{\lev(s)}(s)|$, by \Cref{eqlevelup}. 

\subparagraph{Raising $s$ to level $\min\{\bs(s),\zlev(e)\}$.} 
If $\lev(s)$ is below $\min\{\bs(s),\zlev(e)\}$ and we still have $\wts(s, \lev(s)+1) \ge c_s$, then we leverage the fact that there are no elements at intrinsic levels below $\bs(s)$ (the proof of that is given later in \Cref{fix-empty}). Because of that, we can raise $s$ to $\min\{\bs(s),\zlev(e)\}$ in a single shot.
Since $\wts(s, \lev(s)+1) \ge c_s$, we have $\wts(s) \ge c_s$, so $s$ is tight regardless of the value of $\phi(s)$. Thus, we zero out $\phi(s)$. Then, we set $\lev(s) \gets \min\{\bs(s), \lev(s)\}$. Now there can be passive elements $e' \in s$ such that $\ilev(e') = \lev(s)$, so we need to make them active (note that there are no elements with $\ilev(e') < \lev(s)$, since $\ilev(e')$ must be at least $\bs(s)$). To activate elements in $P_{\lev(s)}(s)$, go over each $e'\in P_{\lev(s)}(s)$ and set $\zlev(e') \gets \lev(s)$, and for every set $s' \ni e'$, move $e'$ from $P_{\lev(s)}(s')$ to $A_{\lev(s)}(s')$.

\subparagraph{Raising $s$ by one level.} At this point, \Cref{inv}(1) may be still violated for $s$. Thus, we repeatedly increase the level of $s$ by one in a while loop, until $\wts(s, \lev(s)+1) < c_s$ is satisfied. During each iteration, the algorithm does the following steps:

\begin{enumerate}[(1),leftmargin=*]
    \item Set $\phi(s) \gets 0$; again, we have $\wts(s) \ge c_s$, so we can safely zero out $\phi(s)$ without breaking the tightness.
    Define $k = \lev(s)$ and increase $\lev(s)$ by one. 

    \item \textbf{Raising elements from $A_k(s)$:} The increase of $\lev(s)$ leads to the increase of levels of elements from $A_k(s)$. Thus, we go over all elements $e' \in A_k(s)$ and raise them to level $k + 1$. To do so, set $\ilev(e'), \zlev(e') \leftarrow k + 1$ (making $\wts(e') = (1 + \epsilon)^{-k - 1}$), and for each $s' \ni e'$, move $e'$ from $A_k(s')$ to  $A_{k + 1}(s')$ and update $\wts(s')$ by decreasing it by $\epsilon(1 + \epsilon)^{-k-1}$.

    After that, for each $s^\prime\ni e^\prime, s^\prime\neq s$, to restore tightness on set $s^\prime\ni e^\prime$, we increase its dead weight $\phi(s^\prime)$ by $\epsilon(1+\epsilon)^{-k-1}$. 

    \item \textbf{Raising $e$:} If $\zlev(e) = k$, then $\lev(e)$ becomes $k + 1$ due to the increase of $\lev(s)$, so we need to update the levels of $e$. To do it, increase both $\zlev(e)$ and $\ilev(e)$ by one and update $\wts(s)$ accordingly, by setting $\wts(s) \leftarrow \wts(s) - \epsilon(1 + \epsilon)^{-k-1-d}$. 
    
    \item \textbf{Activating passive elements:} Since we have increased $\lev(s)$ by one, for elements $e' \in P_{\lev(s)}(s)$ we now have $\ilev(e) = \lev(e)$, so we need to activate them. We do it the same way we did when raising $s$ to level $\min\{\bs(s),\zlev(e)\}$.
\end{enumerate}
After the while loop terminates, the algorithm keeps a record of the current value of $l_s\leftarrow \ilev(e)$ for the current set $s$ to update $\wts(s)$ to the actual value in the end.

\paragraph{Finalizing the changes.} When all sets in $s\ni e$ have been enumerated, since we are being lazy on updating the contribution of $e$ whenever $\ilev(e)$ changes, we need to update the relevant data structures accordingly. To do so, go over all sets $s \ni e$ again to update the weight by setting $\wts(s)\leftarrow \wts(s) - (1+\epsilon)^{-l_s} + \wts(e)$. This update of $\wts(s)$ could decrease it, so $s$ might have become slack. Hence, to restore the tightness for $s$, we increase its dead weight $\phi(s)$ by $(1+\epsilon)^{-l_s} -\wts(e)$ if $s \in F$. We will show later that if $s \notin F$, then we did not enter the while loop for $s$, and so every loss of $\wts(s)$ due to raises of active elements was compensated by the increase of $\phi(s)$, and we never decreased $\phi(s)$. Since we are maintaining $\ilev(e) \le \ilev^\old(e)$, the change of $\wts(e)$ could only increase $\wts(s)$. Therefore, if $s$ was tight, it remains tight.
Finally, add $e$ to $P_{\ilev(e)}(s)$.

\begin{algorithm}
    \caption{$\fix(e, l)$}\label{alg:fix}
    \SetKw{Let}{let}
    \tcp{We assume that $\lev^\old(e) < l \le \ilev^\old(e)$ and $\zlev(e) = \lev(e)$}
    $\ilev(e) \gets h$\;
    \ForEach{$s \ni e$}{
        $\wts(s) \gets \wts(s) -\wts^\old(e) + \wts(e)$\;
    }
    remove $e$ from $P_{\ilev^\old(e)}(s)$ for each $s \ni e$ if $e$ is not a freshly inserted element\;
    \Let{$F = \{s\ni e\mid \wts(s) \ge c_s \}$}\;
    \Let{$d = \ilev(e) - \zlev(e)$}\;
    \ForEach(\tcp*[f]{raising sets $s \ni e$}){$s\ni e$}{\label{ln:fix-main-for}
        update the weight $\wts(s) \gets \wts(s) - (1+\epsilon)^{-l} + \wts(e)$\; \tcp{refresh $\wts(s)$ according to up-to-date $\wts(e)$}
        \If{$\wts(s, \lev(s)+1) \geq c_s$ {\bf and} $\lev(s) < \min\{\bs(s),\zlev(e)\}$}{\label{ln:fix-below}
            $\phi(s) \gets 0$\;
            $\lev(s) \gets \min\{\bs(s),\zlev(e)\}$\;
            activate all passive elements in $P_{\lev(s)}(s)$\; \label{ln:activate-below}
        }
        \While{$\wts(s, \lev(s)+1) \geq c_s$}{\label{ln:fix-while}
            $\phi(s) \gets 0$\;\label{ln:zero-phi}
            \Let{$k = \lev(s)$}\;
            $\lev(s) \gets k+1$\;
            \ForEach{$e^\prime \in A_k(s)$}{
                raise $e^\prime$ to level $k + 1$\; \label{ln:raise-active}
                \ForEach{$s^\prime \ni e^\prime$, $s^\prime \ne s$}{
                    $\phi(s^\prime) \gets \phi(s^\prime) + \epsilon(1+\epsilon)^{-k-1}$\;
                }
            }
            \If{$\zlev(e) = k$}{
                assign $\zlev(e) \gets k + 1$, $\ilev(e) \gets k + 1 + d$\; 
                $\wts(s) \gets \wts(s) - \epsilon(1+\epsilon)^{-k-1-d}$\;
            }
            activate all passive elements in $P_{k + 1}(s)$\; \label{ln:activate-k+1}
        }
        $l_s \gets \ilev(e)$\;
    }
    \ForEach(\tcp*[f]{finalizing changes}){$s \ni e$}{\label{ln:fix-final-for}
        $\wts(s) \gets \wts(s) - (1+\epsilon)^{-l_s} + \wts(e)$\; \tcp{refresh $\wts(s)$ according to up-to-date $\wts(e)$}
        \If{$s\in F$}{
            $\phi(s) \gets \phi(s) + (1+\epsilon)^{-l_s} - \wts(e)$\;
        }
        add $e$ to $P_{\ilev(e)}(s)$\;
    }
\end{algorithm}

\begin{figure}
    \centering
    \begin{tikzpicture}
  [thick,scale=0.7,
  active/.style={circle, draw, fill=black, inner sep=0pt, minimum width=6pt},
  passive/.style={circle, draw, fill=white, inner sep=0pt, minimum width=6pt},
  shadow/.style={rectangle, draw, pattern=north east lines, inner sep=0pt, minimum width=6pt, minimum height=6pt}],
	\begin{scope}
		\draw (2, 2) node(1)[active, label=30: {$e_1$}] {};
		\draw (3.5, 3) node(2)[passive, label=30: {$e_2$}] {};
		\draw (6.5, 4) node(3)[passive, label=30: {$e_3$}] {};
		\draw (6.5, 3) node(4)[passive, label=30: {$e_4$}] {};
		\draw (8, 2) node(5)[active, label=30: {$e_5$}] {};
    	\draw (9.5, 2) node(6)[active, label=30: {$e_6$}] {};
		\draw (5, 7) node(e)[passive, label=-30: {$e$}] {};
		\draw (5, 2) node(ze)[shadow, label=30: {$e$}] {};
	\end{scope}

	\begin{scope}[xshift=11cm]
		\draw (2, 3) node(1_2)[active, label=30: {$e_1$}] {};
        \node (1_2_shadow) at (2, 2) {};
		\draw (3.5, 3) node(2_2)[active, label=30: {$e_2$}] {};
		\draw (6.5, 4) node(3_2)[passive, label=30: {$e_3$}] {};
		\draw (6.5, 3) node(4_2)[active, label=30: {$e_4$}] {};
		\draw (8, 3) node(5_2)[active, label=30: {$e_5$}] {};
    	\draw (9.5, 2) node(6_2)[active, label=30: {$e_6$}] {};
		\draw (5, 8) node(e_2)[passive, label=-30: {$e$}] {};
		\draw (5, 3) node(ze_2)[shadow, label=30: {$e$}] {};
	\end{scope}
	
	\draw (0, 10) node {\textbf{level}};
	\draw (-1.5, 2) node {$k$};
	\draw (-1.5, 3) node {$k + 1$};
	\draw (-1.5, 4) node {$k + 2$};
	\draw (-1.5, 5.5) node {$\vdots$};
	\draw (-1.5, 7) node {$k + d$};
	\draw (-1.5, 8) node {$k + d + 1$};

	\draw (5, -0.5) node {Before};
	\draw (15, -0.5) node {After};
    
	\begin{scope}[on background layer]
		\draw [line width = 0.5mm] (-2, 0) to (22, 0);
		\draw [->, line width = 0.5mm] (0, -1) to (0, 9.5);
		\draw [line width=0.5mm, decorate, decoration=zigzag] (11, 0) to (11, 9.5);

		\draw [gray, dashed, thick] (0, 2) to (22, 2);
		\draw [gray, dashed, thick] (0, 3) to (22, 3);
		\draw [gray, dashed, thick] (0, 4) to (22, 4);
		\draw [gray, dashed, thick] (0, 7) to (22, 7);
		\draw [gray, dashed, thick] (0, 8) to (22, 8);

		\begin{scope}[label distance=1mm]
			\colorlet{darkgreen}{green!60!black};
			\node[draw=red, rounded corners, thick, fill=red, fill opacity=0.1, inner ysep=2.5mm, fit=(1) (2) (3) (4) (5) (e), label={[red]below:$s$}] {};
    		\node[draw=blue, rounded corners, thick, fill=blue, fill opacity=0.1, fit=(3) (4) (5) (6) (e), label={[blue]below:$s_2$}] {};
			\node[draw=darkgreen, rounded corners, thick, fill=green, fill opacity=0.1, inner xsep=2mm, fit=(1) (2), label={[darkgreen]below:$s_1$}] {};

			\node[draw=red, rounded corners, thick, fill=red, fill opacity=0.1, inner ysep=2.5mm, fit=(1_2) (2_2) (3_2) (4_2) (5_2) (e_2), label={[red]below:$s$}] {};
    		\node[draw=blue, rounded corners, thick, fill=blue, fill opacity=0.1, fit=(3_2) (4_2) (5_2) (6_2) (e_2), label={[blue]below:$s_2$}] {};
			\node[draw=darkgreen, rounded corners, thick, fill=green, fill opacity=0.1, inner xsep=2mm, fit=(1_2) (2_2) (1_2_shadow), label={[darkgreen]below:$s_1$}] {};
		\end{scope}

		\draw [line width = 0.2mm, style={decorate, decoration=snake}] (e) to node[right] {$d$} (ze);
		\draw [line width = 0.2mm, style={decorate, decoration=snake}] (e_2) to node[right] {$d$} (ze_2);
	\end{scope}
\end{tikzpicture}
    \caption{An illustration of a single iteration of the while loop of the $\fix(e, l)$ subroutine,
    which raises set $s$ by one level.
    Depicted by black circles are three active elements $e_1, e_5, e_6$, and  depicted by white circles are four passive elements $e, e_2, e_3, e_4$ (note that $e$ is passive).
    The small dashed rectangle represents $\zlev(e)$, which is equal to $\lev(e)$. There are three sets: $s = \{e, e_1, e_2, e_3, e_4, e_5\}$ colored red, $s_1 = \{e_1, e_2\}$ colored green, and $s_2 = \{e, e_3, e_4, e_5, e_6\}$ colored blue. The level of all three sets is $k$. The left and right parts of the figure illustrate the states right before and after the iteration, respectively. During the iteration, $e_1$ and $e_5$ were raised from level $k$ to level $k + 1$ and they remain active. Elements $e_2$ and $e_4$ became active, since they were passive elements at level $k + 1$, and since they belong to $s$. Since the lazy level of $e$ was $k$, both its lazy and intrinsic levels got raised by one, so the gap between them remains equal to $d$. The level of $s$ was raised to $k + 1$.}
    \label{fixlevel-fig}
\end{figure}

\subsubsection{Key properties}\label{sec:fix-properties}
Recall that for any call to $\fix(e, l)$, we assume \Cref{inv}(1) holds, for every passive element we have $\lev(e')< \ilev(e')\leq \zlev(e') + \left\lceil\log_{1+\epsilon}\max\{f, \frac{2C}{\epsilon} \}\right\rceil$, and that $\zlev(e) = \lev(e)$ and $\lev^\old(e) < l \le \ilev^\old(e)$. 
We use the super-script ``$\old$'' to denote the values of the variables right before the execution of $\fix(e, l)$ started (e.g., $\lev^\old(s), \wts^\old(s)$). Before we proceed to proving the properties after the call to $\fix(e, l)$, we state some auxiliary observations and claims about what happens during the execution of it.

\begin{observation}\label{obs:fix-monotonicity}
    For any set $s'$ and any element $e'$, the values of $\lev(s')$, $\ilev(e')$ and $\zlev(e')$ can only increase, and the values of $\wts(s')$ and $\wts(e')$ can only decrease during the raising sets $s \ni e$ and finalizing changes steps. 
\end{observation}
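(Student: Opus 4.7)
The plan is to verify \Cref{obs:fix-monotonicity} by inspecting every modification made during the ``raising sets $s\ni e$'' for loop (line~\ref{ln:fix-main-for}) and the ``finalizing changes'' for loop (line~\ref{ln:fix-final-for}), and checking monotonicity operation by operation. A useful reduction first: since $\wts(e')=(1+\epsilon)^{-\ilev(e')}$ by definition, the desired non-increase of $\wts(e')$ follows immediately from the non-decrease of $\ilev(e')$. Hence it suffices to track $\lev(s')$, $\ilev(e')$, $\zlev(e')$ and $\wts(s')$.

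For the level variables, every assignment in the two phases is visibly an increase. In the raising phase, $\lev(s)$ is either raised in one shot to $\min\{\bs(s),\zlev(e)\}$ under the guard $\lev(s)<\min\{\bs(s),\zlev(e)\}$ (line~\ref{ln:fix-below}), or incremented by one inside the while loop. For $e'\in A_k(s)$ (line~\ref{ln:raise-active}), both $\ilev(e')$ and $\zlev(e')$ go from $k$ to $k+1$. For a passive $e'\in P_{\lev(s)}(s)$ being activated (lines~\ref{ln:activate-below} and~\ref{ln:activate-k+1}), we set $\zlev(e')\gets \lev(s)$, which is a strict increase since the hypothesis on passive elements gives $\zlev(e')\le \lev(e')<\ilev(e')=\lev(s)$. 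Finally, when $\zlev(e)=k$ inside the while loop, both $\zlev(e)$ and $\ilev(e)$ are incremented by one. In particular $\ilev(e)$ is non-decreasing throughout both phases, which (together with the reduction above) yields the monotonicity of all element weights.

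The main obstacle, and the step most worth spelling out, is the non-increase of $\wts(s')$. The ``local'' updates $\wts(s')\mathrel{-}=\epsilon(1+\epsilon)^{-k-1}$ and $\wts(s)\mathrel{-}=\epsilon(1+\epsilon)^{-k-1-d}$ are plainly decreases. The subtle updates are the two refresh lines, namely $\wts(s)\gets\wts(s)-(1+\epsilon)^{-l}+\wts(e)$ at the top of each iteration of the raising for loop, and $\wts(s)\gets\wts(s)-(1+\epsilon)^{-l_s}+\wts(e)$ in the finalization for loop. For the first I would argue inductively that $\ilev(e)\ge l$ throughout the raising phase: it equals $l$ at the start of the phase, and the already-established monotonicity of $\ilev(e)$ preserves the bound; hence $\wts(e)\le(1+\epsilon)^{-l}$ and the refresh is a non-increase. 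For the second, $l_s$ is defined as $\ilev(e)$ immediately after $s$ is processed in the raising loop, and $\ilev(e)$ only increases afterwards (within the remaining iterations of the raising loop; it is not touched in the finalization loop), so when $s$ is processed in the finalization loop we still have $\ilev(e)\ge l_s$, i.e.\ $\wts(e)\le (1+\epsilon)^{-l_s}$, and the refresh is again a non-increase.

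Putting the three pieces together proves the observation. I expect no hidden difficulty beyond the inductive bookkeeping of $\ilev(e)$ relative to $l$ and $l_s$, which is the one place where the specific accounting for the ``lazy'' weight refreshes matters; all other cases are direct inspection of the pseudocode.
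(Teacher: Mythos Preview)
Your verification is essentially correct and considerably more detailed than the paper, which states the observation without proof. The reduction $\wts(e')=(1+\epsilon)^{-\ilev(e')}$, the case-by-case inspection of level assignments, and the bookkeeping showing $\ilev(e)\ge l$ and $\ilev(e)\ge l_s$ at the two refresh lines are all sound and are exactly the right things to check.

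There is one small gap. In the activation step you justify ``$\zlev(e')\gets\lev(s)$ is a strict increase'' via the chain $\zlev(e')\le\lev(e')<\ilev(e')=\lev(s)$. The middle inequality $\zlev(e')\le\lev(e')$ is part of the basic data-structure description, but the paper explicitly waives it when $\fix$ is invoked from within $\reset$: the only preconditions assumed for passive $e'\ne e$ are $\lev(e')<\ilev(e')\le\zlev(e')+\lceil\log_{1+\epsilon}\max\{f,\tfrac{2C}{\epsilon}\}\rceil$, and these do not imply $\zlev(e')\le\lev(e')$. Your conclusion is still true, but it needs a different justification: one can argue (by inspecting all places where $\zlev$ is ever assigned, both in $\ins$ and in $\reset$) that the implicit invariant $\zlev(e')\le\ilev(e')$ holds for every passive element whenever $\fix$ is entered; since $\ilev(e')=\lev(s)$ for $e'\in P_{\lev(s)}(s)$, the assignment is then a (weak) increase. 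Alternatively, note that the uses of \Cref{obs:fix-monotonicity} elsewhere in the paper only rely on the monotonicity of $\lev(s')$, $\ilev(e')$, $\wts(e')$ and $\wts(s')$, so the $\zlev$ clause is not load-bearing.
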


\begin{observation}\label{obs:fix-no-changes}
    If $\ilev^{\old}(e) = l$, then the call to $\fix(e, l)$ does not make any changes.
\end{observation}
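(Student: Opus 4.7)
The plan is to trace \Cref{alg:fix} line by line and verify that under the hypothesis $\ilev^{\old}(e) = l$ every instruction reduces to an identity. The single driving fact is that $\wts(e) = (1+\epsilon)^{-l} = \wts^{\old}(e)$ throughout the call, from which all of the weight bookkeeping collapses.

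First I would dispatch the preamble. The assignment $\ilev(e) \gets l$ is vacuous, and the first loop $\wts(s) \gets \wts(s) - \wts^{\old}(e) + \wts(e)$ leaves every $\wts(s)$ unchanged. The initial removal of $e$ from $P_{\ilev^{\old}(e)}(s) = P_{l}(s)$ will be paired symmetrically with the re-addition of $e$ into $P_{\ilev(e)}(s) = P_{l}(s)$ in the finalization loop, so these two effects cancel. The freshly-inserted case does not arise, because the convention $\ilev^{\old}(e) = \infty$ for new elements is incompatible with $\ilev^{\old}(e) = l$.

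The main step is the outer for loop on \cref{ln:fix-main-for}. I would first note that the weight refresh $\wts(s) \gets \wts(s) - (1+\epsilon)^{-l} + \wts(e)$ is again an identity, and then argue that neither the one-shot raise on \cref{ln:fix-below} nor the while loop on \cref{ln:fix-while} ever fires. Both guards require $\wts(s, \lev(s)+1) \geq c_s$; however, the precondition that \Cref{inv}(1) holds at the start of the call, together with the fact that $\wts(s)$ has not actually been altered up to this point, gives $\wts(s, \lev(s)+1) < c_s$ for every $s \ni e$. Consequently $\ilev(e)$ never increments, $l_s = l$ is recorded for every enumerated $s$, and no set level, lazy level, dead weight, or active/passive flag of any other element is touched.

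Finally, I would check the finalization loop on \cref{ln:fix-final-for}: the refresh $\wts(s) \gets \wts(s) - (1+\epsilon)^{-l_s} + \wts(e)$ collapses to the identity, and the re-insertion of $e$ into $P_{l}(s)$ exactly undoes the removal from the preamble. The main (and only) point requiring any care is the conditional branch for $s \in F$, since $F$ may a priori be nonempty; but the dead-weight increment is $(1+\epsilon)^{-l_s} - \wts(e) = (1+\epsilon)^{-l} - (1+\epsilon)^{-l} = 0$, so even in that case nothing changes. This completes the verification that the entire call is a no-op.
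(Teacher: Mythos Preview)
Your proof is correct; the paper states this as an observation without proof, so your line-by-line trace through \Cref{alg:fix} is more detailed than what the paper provides. The key points---that $\wts(e)=\wts^{\old}(e)$ collapses all weight updates to identities, that \Cref{inv}(1) then prevents both the one-shot raise and the while loop from executing, and that even if $F\neq\emptyset$ the dead-weight increment is zero---are exactly what one needs to verify.
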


\begin{observation}\label{obs:fix-gap}
    After raising a set $s$ to level $\min\{\bs(s),\zlev(e)\}$ or raising $s$ by one level, we have $\ilev(e) - \zlev(e) = d$ and $\zlev(e) = \lev(e)$.
\end{observation}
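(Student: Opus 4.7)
My plan is to show by induction over the iterations of the for loop over $s \ni e$ that both equalities hold after each raise operation. The base case is right after the initial assignment $\ilev(e) \gets l$ and the definition $d = \ilev(e) - \zlev(e)$: the first equality holds by definition, and the second one follows from the precondition $\zlev(e) = \lev(e)$ of the $\fix$ call, since this initialization modifies neither $\zlev(e)$ nor $\lev(s')$ for any $s'$. (The weight refresh in the for loop also leaves all levels alone.)

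For the single-shot raise of $s$ to $\min\{\bs(s),\zlev(e)\}$, the new $\lev(s)$ is at most $\zlev(e) = \lev(e)$ by the inductive hypothesis, so the maximum $\lev(e) = \max_{t \ni e}\lev(t)$ is unchanged. This branch does not touch $\ilev(e)$ or $\zlev(e)$, so both equalities survive. The activated passive elements belong to $P_{\lev(s)}(s)$, which does not include $e$ because $e$ was removed from all $P$-lists at the top of $\fix$ and is only re-inserted during finalization; hence $e$ is not inadvertently modified.

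For the one-level raise inside the while loop, from $k = \lev(s)$ to $k+1$, the plan is to split on the relation between $\zlev(e)$ and $k$. If $\zlev(e) > k$, then the inductive hypothesis gives $\lev(e) \ge k+1$, so setting $\lev(s) \gets k+1$ keeps $\lev(e)$ unchanged (the maximum is attained by some $t \ne s$), and this branch does not modify $\ilev(e)$ or $\zlev(e)$. If $\zlev(e) = k$, then $\lev(e) = k$, so the old $\lev(s) = k$ was a witness of the maximum and no $t \ni e$ had $\lev(t) > k$; after raising $\lev(s)$ to $k+1$ the new maximum is exactly $k+1$, and the code explicitly sets $\zlev(e) \gets k+1$ and $\ilev(e) \gets k+1 + d$, preserving both equalities.

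The main subtlety I expect to address is showing that the auxiliary updates inside a one-level raise --- raising the active elements from $A_k(s)$ and activating the passive elements from $P_{k+1}(s)$ --- do not touch $\zlev(e)$ or $\ilev(e)$. The former is safe because $e$ is passive throughout $\fix$ after the initial assignment: we have $\ilev(e) = l > \lev^\old(e)$ to start, and by \Cref{obs:fix-monotonicity} together with the inductively maintained gap $d \ge 1$, we keep $\ilev(e) > \zlev(e) = \lev(e)$, so $e \notin A_k(s)$. The latter is safe because $e$ is absent from every $P$-list during the raising phase. These auxiliary updates do alter $\lev(e')$ and $\wts(s')$ for other elements and sets, but they never change $\lev(t)$ for any set $t \ne s$, so the maximum defining $\lev(e)$ is affected only through the change to $\lev(s)$ already accounted for above.
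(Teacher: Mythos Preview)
Your argument is correct. In the paper this statement is recorded as an observation with no accompanying proof; the authors treat it as immediate from the pseudocode of \Cref{alg:fix}. Your inductive justification makes explicit precisely what the paper leaves tacit, including the checks that $e$ sits in neither $A_k(s)$ nor any $P$-list during the raising phase. One small point you leave implicit: in the one-level raise you split on $\zlev(e) > k$ versus $\zlev(e) = k$, and the case $\zlev(e) < k$ is indeed impossible because the inductive hypothesis gives $\zlev(e) = \lev(e) \ge \lev(s) = k$; stating this would make the case analysis visibly exhaustive.
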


\begin{claim}\label{fix-enter-F}
    For any set $s \notin F$, we have $\wts(s, \lev(s) + 1) < c_s$ during the raising sets $s \ni e$ step.
\end{claim}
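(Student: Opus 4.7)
The plan is to track $\wts(s)$ for a fixed $s\notin F$ from the moment $F$ is computed until $s$ is processed in the main \texttt{for} loop (line~\ref{ln:fix-main-for}), and to show that $\wts(s)<c_s$ throughout. Since $\wts(s,\lev(s)+1)\le\wts(s)$ by \Cref{eqlevelup} (the correction term is non-negative), this will yield the claim. Right after the initial block computes $F$, the weight $\wts(s)$ equals $\wts^{\old}(s)-\wts^{\old}(e)+(1+\epsilon)^{-l}$, and the definition of $F$ immediately gives that this quantity is strictly less than $c_s$ precisely when $s\notin F$.

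The next step is to argue that $\wts(s)$ does not grow while earlier sets $s'\in F$ are being processed in the main \texttt{for} loop. Inspecting the body of the loop, the only places where $\wts(s)$ is modified during the processing of some other set $s'$ are: (i) the direct update decreasing $\wts(s)$ by $\epsilon(1+\epsilon)^{-k-1}$ for every active element $e'\in s\cap s'$ that is raised from level $k$ to $k+1$, and (ii) changes to $\wts(e)$ caused by raising $\ilev(e)$, which are handled lazily and therefore do \emph{not} touch $\wts(s)$ before $s$ is itself processed. Activating passive elements at $P_{\lev(s')}(s')$ does not change any element weight (only the active/passive tag), and the zeroing of $\phi(s')$ and the compensating increases of dead weights of other sets clearly do not affect $\wts(s)$. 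Thus $\wts(s)$ can only weakly decrease between the definition of $F$ and the moment $s$ is reached.

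When $s$'s turn arrives, the first action is the refresh $\wts(s)\gets \wts(s)-(1+\epsilon)^{-l}+\wts(e)$. By \Cref{obs:fix-monotonicity}, $\ilev(e)$ has only increased from its value $l$ set at the start of the subroutine, so $\wts(e)\le (1+\epsilon)^{-l}$ at the moment of the refresh. Hence the refresh cannot raise $\wts(s)$ above the value it had immediately after $F$ was computed, so $\wts(s)<c_s$ still holds. Combining this with $\wts(s,\lev(s)+1)=\wts(s)-|A_{\lev(s)}(s)|\cdot\epsilon(1+\epsilon)^{-\lev(s)-1}\le\wts(s)$ proves the strict inequality at the moment of the check. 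Because the check fails, neither the \texttt{if} on line~\ref{ln:fix-below} nor the \texttt{while} on line~\ref{ln:fix-while} is ever entered for $s$, so $\lev(s)$, $|A_{\lev(s)}(s)|$ and $\wts(s)$ are unchanged for the remainder of $s$'s processing, keeping $\wts(s,\lev(s)+1)<c_s$ throughout the raising step.

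The only delicate point, and the one I would be most careful about, is the interaction between the lazy maintenance of $\wts(s)$ with respect to $\wts(e)$ and the direct updates coming from raises of other elements during the processing of earlier $s'\in F$. In particular, one must rule out a scenario in which the refresh bumps $\wts(s)$ upward beyond its initial post-$F$ value; the monotonicity $\ilev(e)\ge l$ guaranteed by \Cref{obs:fix-monotonicity} is exactly what closes this gap.
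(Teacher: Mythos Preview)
Your proof is correct and follows essentially the same approach as the paper's. The paper's own proof is much terser: it simply notes that $s\notin F$ means $\wts(s)<c_s$ right after $\ilev(e)$ is set to $l$, and then invokes \Cref{obs:fix-monotonicity} (which already asserts that $\wts(s')$ can only decrease during the raising and finalizing steps) to conclude $\wts(s,\lev(s)+1)\le\wts(s)<c_s$ throughout. What you have done is unpack that monotonicity observation in detail for this specific context---tracing the lazy handling of $\wts(e)$, the effect of the refresh, and the raises of other active elements---which is exactly the content of \Cref{obs:fix-monotonicity}.
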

\begin{proof}
    If $s \notin F$, then $\wts(s) < c_s$ after setting the intrinsic level of $e$ to $l$. By \Cref{obs:fix-monotonicity}, $\wts(s)$ could only decrease after that, so the entering condition $\wts(s, \lev(s) + 1) \ge c_s$ is never satisfied.
\end{proof}

\begin{observation}\label{fix-lev-inc}
    $\lev(s)$ strictly increases after each iteration of the while loop.
\end{observation}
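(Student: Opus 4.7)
The plan is to prove this by direct inspection of \Cref{alg:fix}. Fix any iteration of the while loop on \cref{ln:fix-while}. At the beginning of that iteration, set $k = \lev(s)$, and then the line $\lev(s) \gets k+1$ is executed. It then suffices to check that none of the subsequent operations within the same iteration decrease $\lev(s)$ back; once this is verified, $\lev(s)$ ends the iteration at $k+1 > k$.

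The remaining operations in the loop body are of three kinds. First, for each $e' \in A_k(s)$, the algorithm raises $e'$ to level $k+1$ by updating $\ilev(e')$, $\zlev(e')$, moving $e'$ between the lists $A_k(s')$ and $A_{k+1}(s')$ for every $s' \ni e'$, and adjusting $\wts(s')$ and $\phi(s')$. Second, if $\zlev(e) = k$ the algorithm bumps $\zlev(e)$ and $\ilev(e)$ and adjusts $\wts(s)$. Third, on \cref{ln:activate-k+1} the algorithm activates each passive element $e' \in P_{k+1}(s)$ by updating $\zlev(e')$ and moving $e'$ between passive and active lists in every $s' \ni e'$. I would go through these three types in turn and observe that each of them modifies only element-level fields ($\ilev$, $\zlev$, $\wts$ of elements) or set-level fields of \emph{other} sets ($\wts(s'), \phi(s')$ and the $A_i(s'), P_i(s')$ lists for $s' \neq s$), together with $\wts(s)$ and $\phi(s)$ themselves, but at no point does the pseudocode reassign $\lev(s)$.

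Since $\lev(s)$ is set to $k+1$ at the start of the iteration and untouched thereafter, it ends the iteration strictly greater than at the start, which is exactly the claim. There is essentially no obstacle here; the only thing one has to be careful about is to avoid conflating the set-level quantity $\lev(s)$ with the element-level quantities $\lev(e'), \zlev(e'), \ilev(e')$ that do change inside the body, which is why I would present the verification by explicitly categorizing which variables are modified by each line of the loop body.
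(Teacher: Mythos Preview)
Your proof is correct: each iteration of the while loop explicitly sets $k=\lev(s)$ and then $\lev(s)\gets k+1$, and no subsequent line in the loop body reassigns $\lev(s)$. The paper states this as an observation without proof, so your direct inspection is exactly the (implicit) argument intended.
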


\begin{claim}\label{fix-empty}
    During the call to $\fix(e, l)$,
    \[\bigcup_{i=0}^{\bs(s)-1}A_i(s) ~=~\bigcup_{i=0}^{\bs(s)-1}P_i(s)\setminus \{e\} ~=~ \emptyset.\]
\end{claim}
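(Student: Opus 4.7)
The plan is to combine two ingredients: at the start of the call, \Cref{inv}(1) rules out any element from being at intrinsic level below $\bs(s)$; and during the call, intrinsic levels only rise and $e$ itself stays passive, so this property is preserved. Excluding $e$ is only needed on the passive side of the claim, since $e$ will never appear in any $A_i(s)$ list.

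First I would establish the starting configuration: for every set $s$ and every $e' \in s$, the inequality $\ilev(e') \ge \bs(s)$ holds at the moment $\fix(e, l)$ is invoked. Since \Cref{inv}(1) is assumed as a precondition, \Cref{wts-bound} yields $\wts(s) < (1+\epsilon) c_s$. The definition $\bs(s) = \floor{\log_{1+\epsilon}(1/c_s)}$ gives $c_s \le (1+\epsilon)^{-\bs(s)}$, so if some $e' \in s$ had $\ilev(e') \le \bs(s) - 1$, then
\[
\wts(e') = (1+\epsilon)^{-\ilev(e')} \ge (1+\epsilon)^{1-\bs(s)} \ge (1+\epsilon)\, c_s,
\]
contradicting $\wts(s) \ge \wts(e')$ together with the bound from \Cref{wts-bound}. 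This rules out any element (including $e$) from being in $A_i(s) \cup P_i(s)$ for $i < \bs(s)$ at the start of the call.

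Next I would propagate this to the entire call using \Cref{obs:fix-monotonicity}: intrinsic levels never decrease during the raising and finalization phases, so no element that was at intrinsic level at least $\bs(s)$ can drop below $\bs(s)$ later. Thus for every $e' \neq e$ in $s$, we have $\ilev(e') \ge \bs(s)$ throughout the call, settling the $P_i(s) \setminus \{e\}$ part. For the $A_i(s)$ part, I would separately verify that $e$ is always passive during $\fix$: the precondition $\lev^\old(e) < l$ together with $\zlev(e) = \lev(e) = \lev^\old(e)$ at the start gives $d = \ilev(e) - \zlev(e) = l - \lev^\old(e) \ge 1$; by \Cref{obs:fix-gap}, this gap is preserved after each raise, so $\ilev(e) > \zlev(e) = \lev(e)$ always, meaning $e \notin A_i(s)$ for any $i$.

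The main obstacle I anticipate is purely bookkeeping rather than conceptual: during the lazy weight-update phase, $\wts(s)$ is not kept up to date, so one must be careful to invoke the starting-configuration bound \emph{before} any in-call modification to weights, rather than trying to re-derive it at an arbitrary mid-execution state (where \Cref{inv}(1) need not hold and so \Cref{wts-bound} is unavailable). Once step one is pinned down at the very instant $\fix$ is entered, everything else reduces to quoting \Cref{obs:fix-monotonicity} and \Cref{obs:fix-gap}, with no further weight-based invariant needed.
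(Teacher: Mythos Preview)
Your proposal is correct and follows essentially the same approach as the paper's own proof: both argue that \Cref{inv}(1) at the start of the call forbids any $e' \in s$ from sitting at intrinsic level below $\bs(s)$, and then invoke monotonicity of intrinsic levels (\Cref{obs:fix-monotonicity}) to propagate this throughout the call. The only cosmetic differences are that you go through \Cref{wts-bound} (so the contradiction is with $\wts(s) < (1+\epsilon)c_s$) whereas the paper derives the contradiction directly from $\wts^\old(s,\lev^\old(s)+1) < c_s$, and that you are more explicit than the paper about why $e$ itself never lands in any $A_i(s)$ (via \Cref{obs:fix-gap}); the paper's proof handles only $e' \ne e$ and leaves this point implicit.
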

\begin{proof}
    Suppose for contradiction that there is an element $e' \in s, e' \ne e$ with $\ilev^\old(e') \le \bs(s) - 1$. Thus, $\wts^\old(e') \ge (1 + \epsilon)^{-\bs(s) + 1}$. But then $\wts^\old(s, \lev^\old(s) + 1) \ge \wts^\old(e') / (1 + \epsilon) \ge (1 + \epsilon)^{-\bs(s)} \ge c_s$, which violates \Cref{inv}(1)~--- a contradiction. Thus, \[\bigcup_{i=0}^{\bs(s)-1}A^\old_i(s) =\bigcup_{i=0}^{\bs(s)-1}P^\old_i(s)\setminus \{e\}= \emptyset.\]
    During the call, any element level is non-decreasing by Observation \ref{obs:fix-monotonicity}; therefore, 
    \[\bigcup_{i=0}^{\bs(s)-1}A_i(s) =\bigcup_{i=0}^{\bs(s)-1}P_i(s)\setminus \{e\}= \emptyset.\]
\end{proof}

Next, we show properties after the call to $\fix(e, l)$. First, we state the following observation.

\begin{observation}\label{obs:fix-main}
    After the call to $\fix(e, l)$, the following holds:
    \begin{enumerate}[(1)]
        \item Element $e$ is passive with $\zlev(e) = \lev(e)$, and the gap $\ilev(e) - \zlev(e)$ is equal to $d = l - \lev^\old(e)$.
        \item For any passive element $e^\prime \ne e$, either $\ilev(e^\prime)$ and $\zlev(e^\prime)$ remain the same (and hence the gap $\ilev(e^\prime) - \zlev(e^\prime)$ remains the same), or $e^\prime$ becomes active with $\ilev(e^\prime) \ge \ilev^\old(e^\prime)$.
        \item Levels of sets $s \in e$ could have only increased; levels of other sets remain the same.
        \item For any set $s^\prime$ such that $e \notin s^\prime$, the value of $\wts(s^\prime)$ could have only decreased.
    \end{enumerate}
\end{observation}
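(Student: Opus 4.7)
The plan is to establish the four assertions in order of increasing subtlety, relying on monotonicity (\Cref{obs:fix-monotonicity}) and the gap-preservation fact (\Cref{obs:fix-gap}), together with a careful reading of \Cref{alg:fix}. Parts (3) and (4) will follow essentially by inspection, while parts (1) and (2) will require a running invariant about $\zlev(e)$ and $\ilev(e)$ that is carried across the outer for-loop.

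For (3), the only lines that touch levels of sets are inside the main loop on \cref{ln:fix-main-for} and the finalizing loop on \cref{ln:fix-final-for}; both iterate strictly over $s\ni e$, and combined with \Cref{obs:fix-monotonicity} this yields $\lev(s)$ non-decreasing for $s\ni e$ and unchanged otherwise. For (4), consider a set $s'\not\ni e$: none of the explicit $\wts(s)$ rewrites inside the two for-loops touch $\wts(s')$, since they modify only the currently processed $s\ni e$. The only way $\wts(s')$ can change is on \cref{ln:raise-active}, where raising an active $e'\in A_k(s)\cap s'$ to level $k+1$ decreases the contribution of $e'$ to $\wts(s')$ by $\epsilon(1+\epsilon)^{-k-1}$; activation on \cref{ln:activate-below} or \cref{ln:activate-k+1} preserves $\ilev(e')$ and hence $\wts(e')$, so it does not affect $\wts(s')$ either. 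Thus $\wts(s')$ is monotonically non-increasing, giving (4).

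Parts (1) and (2) both reduce to maintaining, throughout the raising-sets-$s\ni e$ step, the invariant
\[
\zlev(e) \;=\; \lev(e) \;=\; \max_{s'\ni e}\lev(s'), \qquad \ilev(e)-\zlev(e)=d.
\]
This holds right after the assignment $\ilev(e)\gets l$ (where $\zlev(e)=\lev^\old(e)$ and $d=l-\lev^\old(e)$), and by \Cref{obs:fix-gap} it is preserved by the base-level raise on \cref{ln:fix-below} and by each iteration of the while loop on \cref{ln:fix-while}: whenever $\lev(s)$ is lifted past $\zlev(e)$, the code lifts both $\zlev(e)$ and $\ilev(e)$ by the same amount. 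The finalizing loop only modifies weights and list memberships, so the invariant carries over to the end of the call, giving (1) (note that $d\ge 1$ since $l>\lev^\old(e)$, so $\ilev(e)>\lev(e)$ and $e$ is indeed passive). For (2), let $e'\ne e$ be passive at the start; no line changes $\ilev(e')$ or $\zlev(e')$ unless $e'$ is activated on \cref{ln:activate-below} or \cref{ln:activate-k+1} (in particular, \cref{ln:raise-active} applies only to already-active elements). If $e'$ is never activated, both values are unchanged, yielding the first alternative. Otherwise $e'$ is activated at the instant some processed $s$ has $\lev(s)=\ilev(e')$, at which point $\zlev(e')$ is set to $\ilev(e')$ and $e'$ becomes active with $\ilev(e')$ unchanged; \Cref{obs:fix-monotonicity} gives $\ilev(e')\ge \ilev^\old(e')$, hence the second alternative.

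The principal obstacle I anticipate is justifying that the invariant $\zlev(e)=\max_{s'\ni e}\lev(s')$ survives between outer iterations, i.e., that once the iteration for $s$ terminates, processing a later $t\ni e$ cannot leave $\zlev(e)$ out of sync with the true maximum. This reduces to \Cref{fix-lev-inc} together with \Cref{obs:fix-monotonicity}: $\lev(s)$ is frozen once the outer iteration for $s$ ends while $\zlev(e)$ only grows, so the maximum is tracked correctly at every moment. Everything else is local bookkeeping inside a single pseudocode line.
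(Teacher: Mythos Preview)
Your argument is correct. The paper states this as an \emph{observation} with no accompanying proof, treating all four items as evident from a line-by-line reading of \Cref{alg:fix}; your write-up supplies exactly that verification, organized around \Cref{obs:fix-monotonicity} and \Cref{obs:fix-gap}, which is the natural way to do it.

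Two minor remarks. First, in (3) you say level changes occur ``inside the main loop on \cref{ln:fix-main-for} and the finalizing loop on \cref{ln:fix-final-for}''; in fact the finalizing loop touches only weights and $\phi(\cdot)$, never $\lev(\cdot)$, so you can drop that clause. Second, in (2) your phrasing ``$e'$ becomes active with $\ilev(e')$ unchanged'' is true at the moment of activation, but afterwards $e'$ may be raised further via \cref{ln:raise-active}; your invocation of \Cref{obs:fix-monotonicity} is what actually gives the final inequality $\ilev(e')\ge\ilev^\old(e')$, so it would read more cleanly to say ``$e'$ is activated with $\ilev(e')=\ilev^\old(e')$, and thereafter $\ilev(e')$ can only increase.'' Neither point affects correctness.
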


Recall that for any passive element $e^\prime$ we must have $\zlev(e^\prime) \le \lev(e^\prime)$. Since $\fix$ can be invoked from the $\reset$ subroutine, where the inequality $\zlev(e^\prime) \le \lev(e^\prime)$ may be temporarily violated, we do not assume that it holds before the call to $\fix(e, l)$. However, we still want to maintain $\lev(e')< \ilev(e')\leq \zlev(e') + \left\lceil\log_{1+\epsilon}\max\{f, \frac{2C}{\epsilon} \}\right\rceil$, which we show we do in the following claim.

\begin{claim}\label{fix-lev<ilev}
    After the call to $\fix(e, l)$, for any passive element $e'$, we have $\lev(e')< \ilev(e')\leq \zlev(e') + \left\lceil\log_{1+\epsilon}\max\{f, \frac{2C}{\epsilon} \}\right\rceil$.
\end{claim}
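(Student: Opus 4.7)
The plan is to case-split on whether $e' = e$ or $e' \ne e$. If $e' = e$, I invoke \Cref{obs:fix-main}(1) to get $\zlev(e) = \lev(e)$ and $\ilev(e) - \lev(e) = d := l - \lev^\old(e)$. The precondition $\lev^\old(e) < l$ immediately yields $\lev(e) < \ilev(e)$. For the upper bound on $d$, I split on whether $e$ was freshly inserted or preexisting: in the former case, \Cref{alg:ins} picks $l - \lev^\old(e) = \lceil\log_{1+\epsilon}\max\{f, 2C/\epsilon\}\rceil$ by construction; in the latter case, the precondition $l \le \ilev^\old(e)$ combined with $\zlev^\old(e) = \lev^\old(e)$ and the preinvariant $\ilev^\old(e) \le \zlev^\old(e) + \lceil\log_{1+\epsilon}\max\{f, 2C/\epsilon\}\rceil$ supplies the same bound.

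For any passive $e' \ne e$ after the call, \Cref{obs:fix-main}(2) forces $\ilev(e') = \ilev^\old(e')$ and $\zlev(e') = \zlev^\old(e')$, so the upper bound transfers unchanged from the preinvariant. The substantive part is the lower bound $\lev(e') < \ilev(e')$, because $\lev(e') = \max_{t \ni e'}\lev(t)$ can grow when a set $t$ containing both $e$ and $e'$ is raised. For $t \not\ni e$, \Cref{obs:fix-main}(3) keeps $\lev(t) = \lev^\old(t)$, and the preinvariant $\lev^\old(t) \le \lev^\old(e') < \ilev^\old(e') = \ilev(e')$ is enough. For $t$ with $e, e' \in t$ processed in the loop on line~\ref{ln:fix-main-for}, I would maintain the invariant that after every update to $\lev(t)$, each element still passive in $t$ has intrinsic level strictly greater than $\lev(t)$: each assignment to $\lev(t)$ (the jump on line~\ref{ln:fix-below} or the increment inside the while loop) is immediately followed by an activation of $P_{\lev(t)}(t)$ on line~\ref{ln:activate-below} or line~\ref{ln:activate-k+1}, and by \Cref{fix-empty} no passive element of $t$ other than $e$ sits on a skipped level in the initial jump to $\min\{\bs(t), \zlev(e)\}$. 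Since \Cref{obs:fix-main}(3) forbids any later change to $\lev(t)$ once $t$ is finished being processed, this invariant survives until the end of $\fix$, so $\lev(t) < \ilev(e')$ for every such $t$, and thus $\lev(e') < \ilev(e')$.

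The main obstacle I anticipate is the bookkeeping in the previous paragraph: one has to verify that the subsequent processing of another set $t' \ni e'$, in which $e'$ might be activated or $\ilev(e)$ might still be adjusted by the $\zlev(e) = k$ branch, cannot retroactively invalidate the inequality $\lev(t) < \ilev(e')$ established for an earlier $t$. The monotonicity supplied by \Cref{obs:fix-monotonicity} is exactly what kills such regressions: $\ilev(e')$ for $e' \ne e$ is unchanged throughout, while $\lev(t)$ for an already-processed $t$ is not touched again (the loop on line~\ref{ln:fix-main-for} only ever modifies $\lev$ of the currently-processed set). A second delicate point is the claim that the initial ``jump'' on line~\ref{ln:fix-below} does not strand a passive $e' \in t$ with $\ilev(e') \in (\lev^\old(t), \min\{\bs(t),\zlev(e)\})$; this is exactly where \Cref{fix-empty} is decisive, since it rules out such elements below $\bs(t)$, and the range above $\bs(t)$ is void for this jump. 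With these two points in hand, both bullets combine to yield $\lev(e') < \ilev(e')$.
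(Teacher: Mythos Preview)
Your proposal is correct and follows essentially the same route as the paper's proof: for $e'=e$ you invoke \Cref{obs:fix-main}(1), and for $e'\neq e$ you combine \Cref{obs:fix-main}(2)(3) with \Cref{fix-empty} and the activation steps at \cref{ln:activate-below,ln:activate-k+1} to rule out $\lev(e')\ge \ilev(e')$. Your write-up is in fact more careful than the paper's on one point: the paper simply asserts that the case $e'=e$ ``follows from \Cref{obs:fix-main}(1)'', whereas you explicitly derive the upper bound $d\le\lceil\log_{1+\epsilon}\max\{f,2C/\epsilon\}\rceil$ by splitting on whether $e$ is freshly inserted (so \Cref{alg:ins} fixes $d$) or preexisting (so $l\le\ilev^\old(e)$ and the preinvariant on $e$ bound $d$).
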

\begin{proof}
    For element $e$ this follows from \Cref{obs:fix-main}(1). Recall that we have $\lev^\old(e')< \ilev^\old(e')\leq \zlev^\old(e') + \left\lceil\log_{1+\epsilon}\max\{f, \frac{2C}{\epsilon} \}\right\rceil$ for every passive element $e' \ne e$. Observe that $\lev(e')$ can change only when $\lev(s)$ changes for some $s \ni e$, which can happen only during raising $s$ to level $\min\{\bs(s),\zlev(e)\}$ or raising $s$ by one level. In the former case, we have $\ilev(e') \ge \bs(s)$ by \Cref{fix-empty}, and if $\ilev(e')$ becomes equal to $\bs(s)$, then $e'$ gets activated at \cref{ln:activate-below}. In the latter case, $\ilev(e')$ can become equal to $\lev(e')$ due to the increase of $\lev(s)$ by one; in that case, $e'$ gets activated as well at \cref{ln:activate-k+1}.
\end{proof}

The $\fix$ subroutine does not explicitly maintain \Cref{inv}(2). However, we stated that our goal is to make sure that tight sets remain tight and that slack sets are not raised. The following claims show that.

\begin{claim}\label{slack-no-rise}
    If $d\geq \log_{1+\epsilon}\frac{2C}{\epsilon}$, then $\lev(s)$ has not changed for any slack set $s$ after the call to $\fix(e, l)$.
\end{claim}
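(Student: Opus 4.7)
The plan is to show that under the assumption $d \geq \log_{1+\epsilon}\frac{2C}{\epsilon}$, any slack set $s$ (slack at the start of the call) satisfies $s \notin F$, and then to invoke \Cref{fix-enter-F} to conclude that $\lev(s)$ is never touched. First, I would dispose of the easy case $s \not\ni e$: by \Cref{obs:fix-main}(3), levels of sets not containing $e$ are unchanged. So the real work is to analyze slack sets $s \ni e$.

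Next, I would show $s \notin F$. Since $s$ is slack at the start, we have $\wts^\old(s) \leq \wts^\old(s) + \phi^\old(s) < \frac{c_s}{1+\epsilon}$. Immediately after the line $\ilev(e)\gets l$, the weight of $s$ satisfies
\[\wts(s) \;=\; \wts^\old(s) - \wts^\old(e) + (1+\epsilon)^{-l} \;\leq\; \wts^\old(s) + (1+\epsilon)^{-l}\]
since $\wts^\old(e) \geq 0$ (including the fresh-insertion case $\wts^\old(e) = 0$). Now $l = \zlev(e) + d \geq d \geq \log_{1+\epsilon}\frac{2C}{\epsilon}$, hence $(1+\epsilon)^{-l} \leq \frac{\epsilon}{2C}$. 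Using $c_s \geq \frac{1}{C}$ and $\epsilon < 1$, the "slack gap" of $s$ obeys $c_s - \frac{c_s}{1+\epsilon} = \frac{c_s \epsilon}{1+\epsilon} \geq \frac{\epsilon}{C(1+\epsilon)} > \frac{\epsilon}{2C}$. Combining these inequalities gives $\wts(s) < \frac{c_s}{1+\epsilon} + \frac{\epsilon}{2C} < c_s$, so $s \notin F$, as desired.

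Finally, since $s \notin F$, \Cref{fix-enter-F} guarantees $\wts(s,\lev(s)+1) < c_s$ throughout the ``raising sets $s\ni e$'' loop. Thus the guard at \cref{ln:fix-below} fails (it requires $\wts(s,\lev(s)+1) \geq c_s$), the entry condition of the while loop at \cref{ln:fix-while} fails, and $\lev(s)$ is untouched during the processing of $s$; no other iteration of the main for-loop can change $\lev(s)$ either. The finalization pass never modifies levels. Hence $\lev(s)$ is unchanged.

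The main technical point, rather than an obstacle, is that the threshold $\log_{1+\epsilon}\frac{2C}{\epsilon}$ is calibrated precisely so that $(1+\epsilon)^{-l}$ is strictly smaller than the minimum slack gap $\frac{c_s \epsilon}{1+\epsilon}$ over all sets. The factor of $2$ in the threshold is exactly what makes the inequality $\frac{\epsilon}{2C} < \frac{\epsilon}{C(1+\epsilon)}$ hold (given $\epsilon < 1$), which is what drives the whole argument; the rest is an application of already-proved structural facts (\Cref{obs:fix-main}(3) and \Cref{fix-enter-F}).
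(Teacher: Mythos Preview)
Your proof is correct and follows essentially the same approach as the paper: both arguments show that a slack set $s$ satisfies $s\notin F$ via the inequality $\wts^\old(s) + (1+\epsilon)^{-l} < \frac{c_s}{1+\epsilon} + \frac{\epsilon c_s}{1+\epsilon} = c_s$ (using $l\geq d\geq \log_{1+\epsilon}\frac{2C}{\epsilon}$ and $c_s\geq 1/C$), and then invoke \Cref{fix-enter-F} to conclude $s$ is never raised. Your explicit handling of the case $s\not\ni e$ via \Cref{obs:fix-main}(3) is a small addition that the paper leaves implicit.
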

\begin{proof}
    Since $s$ was slack, we have $\wts^{\old}(s) < c_s / (1+\epsilon)$. Note that $l \ge d$. Therefore, after changing the intrinsic level of $e$ to $l$, we have \[\wts(s) < \frac{c_s}{1+\epsilon} + (1+\epsilon)^{-l} \leq c_s.\]
    The last inequality holds since $l \ge d\geq \log_{1+\epsilon}\frac{2C}{\epsilon} \ge \log_{1+\epsilon}\frac{2}{\epsilon c_s} \ge \log_{1+\epsilon}\frac{1+\eps}{\epsilon c_s}$, for $\eps \le 1$.

    Therefore, $s \notin F$. By \Cref{fix-enter-F}, we have $\wts(s, \lev(s) + 1) < c_s$, so we do not raise $s$.
\end{proof}

\begin{claim}\label{fix-lev+1-e}
    During the execution of $\fix(e, l)$, for any set $s$, the value of $\wts(s, \lev(s) + 1) - \wts(e)$ can only decrease.
\end{claim}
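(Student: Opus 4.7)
The plan is to decompose $\wts(s, \lev(s)+1) - \wts(e)$ by separating the contribution of $e$ from that of the other elements, show that at every relevant observation point (between atomic operations of $\fix$) the contribution of $e$ to $\wts(s, \lev(s)+1)$ equals exactly $\wts(e)$, and then invoke the monotonicity statement \Cref{obs:fix-monotonicity} to conclude that the resulting sum over $s \setminus \{e\}$ is non-increasing.

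By \Cref{eqlevel}, the contribution of an element $e' \in s$ to $\wts(s, i)$ is $\min\{\wts(e'), (1+\epsilon)^{-\max\{i, \max_{t \ni e', t \neq s}\lev(t)\}}\}$. First I would show that throughout $\fix(e, l)$, at each observation point we have $\ilev(e) \geq \max\{\lev(s)+1, \max_{t \ni e, t \neq s}\lev(t)\}$, which makes $e$'s contribution to $\wts(s, \lev(s)+1)$ exactly $\wts(e)$. Right after the line $\ilev(e) \gets l$, this holds because $\ilev(e) = l > \lev^\old(e) = \max_{t \ni e}\lev^\old(t)$ (so in particular $\ilev(e) \geq \lev^\old(s)+1$ and $\ilev(e) > \max_{t \ni e, t \neq s}\lev^\old(t)$). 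Subsequently, by \Cref{obs:fix-gap}, after each atomic raising step $\ilev(e) - \zlev(e) = d \geq 1$ and $\zlev(e) = \lev(e) = \max_{t \ni e}\lev(t)$, so $\ilev(e) \geq \lev(e) + 1 \geq \lev(s)+1$ and $\ilev(e) > \lev(e) \geq \max_{t \ni e, t \neq s}\lev(t)$.

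With this in hand, at every observation point
\[\wts(s, \lev(s)+1) - \wts(e) \;=\; \sum_{e' \in s \setminus \{e\}} \min\!\left\{\wts(e'),\, (1+\epsilon)^{-\max\{\lev(s)+1,\, \max_{t \ni e',\, t \neq s}\lev(t)\}}\right\}.\]
Since the set $s \setminus \{e\}$ is fixed during $\fix(e, l)$ (no other element is inserted or deleted within the call), it suffices to show each summand is non-increasing in time. By \Cref{obs:fix-monotonicity}, $\lev(s)$ and $\lev(t)$ for $t \neq s$ only increase, while $\wts(e')$ for $e' \neq e$ only decreases; each summand is thus a minimum of two non-increasing quantities, hence non-increasing. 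Summing yields the claim.

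The main delicate point is verifying the gap condition $\ilev(e) \geq \lev(s)+1$ at every observation point, since within a single while-loop iteration $\lev(s)$ and $\ilev(e)$ are updated in sequence ($\lev(s)$ first, then $\ilev(e)$ only if $\zlev(e)$ matches the old $\lev(s)$), which could create a transient intermediate state where the gap shrinks. I would resolve this by treating each while-loop iteration --- as well as the raise-to-$\min\{\bs(s), \zlev(e)\}$ step --- as an atomic operation, so that by \Cref{obs:fix-gap} the gap $d$ is restored at the end of each such operation, which is precisely where the observation points lie.
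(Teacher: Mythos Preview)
Your proof is correct and follows essentially the same approach as the paper: both arguments observe that $e$'s contribution to $\wts(s,\lev(s)+1)$ equals exactly $\wts(e)$ throughout (since $\ilev(e) > \lev(e) \ge \lev(s)$ by \Cref{obs:fix-gap}), reduce the quantity to the sum $\sum_{e'\in s\setminus\{e\}}\min\{\wts(e'),(1+\epsilon)^{-\max\{\lev(s)+1,\max_{t\ni e',t\ne s}\lev(t)\}}\}$, and then invoke \Cref{obs:fix-monotonicity}. Your added discussion of atomicity and observation points is more explicit than the paper's treatment but not substantively different.
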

\begin{proof}
    By \Cref{obs:fix-gap}, we have $\ilev(e) > \lev(e)$ during the execution, so from \Cref{eqlevel}: \[\wts(s, \lev(s) + 1) - \wts(e) = \sum_{e' \in s, e' \ne e} \min\left\{\wts(e'), (1 + \epsilon)^{-\max\{\lev(s) + 1,  \max_{t\mid e'\in t\neq s} \lev(t)\}}\right\}.\]
    
    Observe that changing the intrinsic level of $e$ to $l$ does not affect $\wts(e')$ for $e' \ne e$, so $\wts(s, \lev(s) + 1) - \wts(e)$ remains the same. By \Cref{obs:fix-monotonicity}, $\lev(s)$ can only increase, and for every $e' \in s$, $\wts(e')$ can only decrease, which can only decrease $\wts(s, \lev(s) + 1) - \wts(e)$.
\end{proof}

\begin{claim}\label{fix-ilev-upper}
    During the execution of $\fix(e, l)$, we always have $\ilev(e)\leq \ilev^\old(e)$. 
\end{claim}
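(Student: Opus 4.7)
The plan is to argue by contradiction, isolating the single line of the pseudocode at which $\ilev(e)$ can grow and showing that its entry condition cannot in fact hold at the hypothesized first violation. First I would note that $\ilev(e)$ is initialized to $l$, which is at most $\ilev^\old(e)$ by the standing hypothesis, and that by \Cref{obs:fix-gap} the gap $d = \ilev(e) - \zlev(e)$ is preserved throughout the call. Consequently, the only way for $\ilev(e)$ to grow is via the joint bump $\zlev(e)\gets k+1,\ \ilev(e)\gets k+1+d$ inside the while loop, which is triggered only when $\zlev(e) = k = \lev(s)$ for the set $s$ currently being raised.

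Now suppose for contradiction that some execution of this bump first makes $\ilev(e)$ strictly exceed $\ilev^\old(e)$; if $e$ is freshly inserted the claim is vacuous since $\ilev^\old(e) = \infty$, so assume otherwise. Just before the offending bump we have $\zlev(e) = k$, $\ilev(e) = k + d = \ilev^\old(e)$, and hence $\wts(e) = \wts^\old(e)$. The bump occurs inside the while loop, which was entered because $\wts(s, k+1) \ge c_s$. I would derive a contradiction by upper-bounding $\wts(s, k+1)$: split it into the $e$-term and the rest. For the $e$-term, the identity $\zlev(e) = \lev(e) = k$ from \Cref{obs:fix-gap} implies that every other set $t\ni e,\ t\neq s$ satisfies $\lev(t)\le k$, so the $e$-contribution in the definition \Cref{eqlevel} collapses to $\wts(e)$ (using $d\ge 1$, which holds because $\lev^\old(e) < l$). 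For the rest, \Cref{fix-lev+1-e} gives $\wts(s, k+1) - \wts(e) \le \wts^\old(s, \lev^\old(s)+1) - \wts^\old(e)$, which is strictly less than $c_s - \wts^\old(e)$ by \Cref{inv}(1). Adding the two pieces and using $\wts(e) = \wts^\old(e)$ yields $\wts(s, k+1) < c_s$, contradicting the entry condition of the while loop.

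The main subtlety is the evaluation of the $e$-term in $\wts(s, k+1)$, which requires knowing at the offending moment that both $\zlev(e) = \lev(e)$ and $\zlev(e) = \lev(s) = k$ hold. The first identity is supplied by \Cref{obs:fix-gap}, while the second is exactly the guard of the branch in which the bump lives; together they force $\max_{t\neq s,\ e\in t}\lev(t)\le k$, which is what lets the outer $\min$ in \Cref{eqlevel} collapse cleanly to $\wts(e)$. Once that piece is pinned down, the contradiction follows immediately from \Cref{fix-lev+1-e} combined with the pre-call invariant $\wts^\old(s, \lev^\old(s)+1) < c_s$.
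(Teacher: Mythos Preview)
Your proof is correct and follows essentially the same route as the paper: suppose for contradiction that at the start of some while-loop iteration $\ilev(e) = \ilev^\old(e)$, then combine \Cref{fix-lev+1-e} with \Cref{inv}(1) (applied to the pre-call state) and $\wts(e) = \wts^\old(e)$ to conclude $\wts(s,\lev(s)+1) < c_s$, contradicting the loop guard. Your extra paragraph justifying that the $e$-contribution to $\wts(s,k+1)$ equals $\wts(e)$ is not wrong but is redundant: this is exactly what the proof of \Cref{fix-lev+1-e} already establishes, so once you invoke that claim you can add $\wts(e) = \wts^\old(e)$ to both sides directly, as the paper does.
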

\begin{proof}
    After changing the intrinsic level of $e$ to $l$, we have $\ilev(e) = l \le \ilev^\old(e)$. Observe that $\ilev(e)$ can increase only during an iteration of the while loop, and can only increase by one.
    Suppose for contradiction that at the beginning of some iteration of the while loop we had $\ilev(e) = \ilev^\old(e)$, and thus $\wts(e) = \wts^\old(e)$. By \Cref{fix-lev+1-e}, $\wts^\old(s, \lev^\old(s) + 1) - \wts^\old(e) \ge \wts(s, \lev(s) + 1) - \wts(e)$, so $\wts^\old(s, \lev^\old(s) + 1) \ge \wts(s, \lev(s) + 1)$. By \Cref{inv}(1), $\wts^\old(s, \lev^\old(s) + 1) < c_s$, thus $\wts(s, \lev(s) + 1) < c_s$, which contradicts the entering condition of the while loop.
\end{proof}

\begin{claim}\label{fix-tight}
    Any tight set remains tight after the call to $\fix(e, l)$.
\end{claim}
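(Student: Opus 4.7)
The plan is to track how $\wts^*(s) = \wts(s) + \phi(s)$ evolves during $\fix(e,l)$ for each set $s$ that started the call tight (i.e.\ $\wts^{*\old}(s) \ge c_s/(1+\epsilon)$), and to show $\wts^*(s) \ge c_s/(1+\epsilon)$ still holds after the call. I split the analysis into three cases depending on whether $s \ni e$ and, if so, whether $s \in F$. \emph{Case 1: $s \not\ni e$.} The only way $\wts(s)$ can drop during $\fix$ is when some shared active element $e^\prime \in s$ is raised from some level $k$ to $k+1$ while processing another set $s^* \ni e^\prime$; this removes $\epsilon(1+\epsilon)^{-k-1}$ from $\wts(s)$, but the pseudocode immediately performs $\phi(s) \gets \phi(s) + \epsilon(1+\epsilon)^{-k-1}$, so $\wts^*(s)$ is preserved exactly.

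\emph{Case 2: $s \ni e$ with $s \notin F$.} By \Cref{fix-enter-F}, neither line~\ref{ln:fix-below} nor the while loop fires for $s$, so $\lev(s)$ is never touched. Losses from raises of shared active elements $e^\prime \ne e$ triggered during processing of other sets are compensated by $\phi(s)$ exactly as in Case~1. The remaining source of change to $\wts(s)$ is $\wts(e)$; tracing the three explicit updates of $\wts(s)$ that involve $e$~--- the initial bulk adjustment $\wts(s) \gets \wts(s) - \wts^\old(e) + (1+\epsilon)^{-l}$, the refresh at the top of the main loop iteration for $s$, and the finalization $\wts(s) \gets \wts(s) - (1+\epsilon)^{-l_s} + \wts(e)$~--- everything telescopes to a net contribution of $\wts(e) - \wts^\old(e)$, which is $\ge 0$ by \Cref{fix-ilev-upper}. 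Hence $\wts^*(s)$ does not decrease.

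\emph{Case 3: $s \in F$.} The nontrivial event is a while-loop iteration for $s$ with pre-iteration $\lev(s) = k$. The entry condition $\wts(s, k+1) \ge c_s$ together with \Cref{eqlevelup} gives $\wts(s) \ge c_s + |A_k(s)|\,\epsilon(1+\epsilon)^{-k-1}$. After raising the $|A_k(s)|$ active elements at level $k$, $\wts(s)$ drops to at least $c_s$, and the subsequent bump of $\ilev(e)$ (when $\zlev(e) = k$) removes an additional $\epsilon(1+\epsilon)^{-k-1-d}$, where $d = \ilev(e) - \zlev(e)$ is the gap maintained throughout $\fix$. Since $\phi(s)$ was zeroed at the start of the iteration, tightness reduces to $c_s \ge (1+\epsilon)^{-k-d}$, i.e.\ to $c_s \ge \wts(e)$ at the moment just before the bump. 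But every call to $\fix$ satisfies $d \ge \log_{1+\epsilon}(2C/\epsilon)$, so by the same calculation as in the proof of \Cref{slack-no-rise}, $(1+\epsilon)^{-l} < c_s$, and by the monotonicity guaranteed by \Cref{obs:fix-monotonicity}, $\wts(e) \le (1+\epsilon)^{-l} < c_s$ throughout $\fix$. Hence $c_s > (1+\epsilon)^{-k-d}$ strictly, and $\wts(s) > c_s/(1+\epsilon)$ after the bump. The finalization for $s \in F$ then merely shifts $(1+\epsilon)^{-l_s} - \wts(e) \ge 0$ from $\wts(s)$ into $\phi(s)$, preserving $\wts^*(s)$.

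The main obstacle is the bookkeeping in Case~3: identifying the exact value of $k+d$ at the moment of the $\epsilon(1+\epsilon)^{-k-1-d}$ decrease, and combining the standing precondition $d \ge \log_{1+\epsilon}(2C/\epsilon)$ (guaranteed by every caller of $\fix$) with the monotonicity of $\ilev(e)$ to ensure that $\wts(e)$ stays strictly below $c_s$ throughout the call. The preparatory raise at line~\ref{ln:fix-below}, which invokes \Cref{fix-empty} to jump $\lev(s)$ up to $\min\{\bs(s),\zlev(e)\}$ in one shot without per-level compensation, plays a complementary role, since in that regime $\phi(s)$ is zeroed but $\wts(s) \ge c_s$ already guarantees tightness directly.
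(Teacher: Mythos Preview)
Your proof is correct and follows the same overall decomposition as the paper (split into $s \not\ni e$, $s \ni e$ with $s \notin F$, and $s \in F$). Cases~1 and~2 match the paper's argument closely.

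The difference is in Case~3. The paper handles a while-loop iteration more simply: at the start of the iteration, $\wts(s) \ge \wts(s,k+1) \ge c_s$; during the iteration, the only elements whose weight changes are those in $A_k(s)$ and possibly $e$ (if $\zlev(e)=k$), and each of these has its weight scaled down by exactly a factor $1+\epsilon$. Since every summand in $\wts(s)=\sum_{e'\in s}\wts(e')$ either stays the same or shrinks by a factor $1+\epsilon$, the sum drops by at most that factor, giving $\wts(s)\ge c_s/(1+\epsilon)$ directly. No information about $d$ is needed.

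Your route instead separates the loss from $A_k(s)$ (exactly absorbed by the slack $\wts(s)-c_s$ coming from \Cref{eqlevelup}) from the additional loss $\epsilon(1+\epsilon)^{-k-1-d}$ due to bumping $\ilev(e)$, and then invokes the standing precondition $d\ge\log_{1+\epsilon}(2C/\epsilon)$ to bound the latter. This works, but note that the claim as stated carries no hypothesis on $d$; the paper establishes the unconditional version. Your argument proves the conditional version, which suffices for the algorithm (both call sites of $\fix$ do guarantee the lower bound on $d$), but is strictly speaking a slightly weaker statement than what is being asserted. The paper's factor-of-$(1+\epsilon)$ observation is the cleaner way to close Case~3 and is worth internalizing.
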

\begin{proof}
    First, observe that during changing the intrinsic level of $e$ to $l$, weights of sets can only increase, since $l \le \ilev^\old(e)$. Thus, this does not break tightness for any set.

    Consider a tight set $s$. Observe that the tightness for it can be violated only during raising $s$ (the for loop at \cref{ln:fix-main-for}) or during the finalization step for $s$ (the for loop at \cref{ln:fix-final-for}). This is so, because during raising some other set $s'$, $\wts(s)$ could change only due to raising elements from $A_k(s')$. However, in that case, the decrease of $\wts(s)$ is immediately compensated by the increase of $\phi(s)$. Therefore, if $e \notin s$, then $s$ remains tight after the call to $\fix(e, l)$.

    Next, consider the case $e \in s$. If  we do not raise $s$ to level $\min\{\bs(s),\zlev(e)\}$ or enter the while loop, then $\wts^*(s)$ could decrease only due to changes of $\wts(e)$. However, we have $\wts(e) \ge \wts^\old(e)$ by \Cref{fix-ilev-upper}, so $\wts^*(s)$ is no smaller than it was at the beginning of the call. Therefore, $s$ remains tight.
    
    If we do not enter the while loop, but raise $s$ to level $\min\{\bs(s),\zlev(e)\}$, then observe that we have $\wts(s, \lev(s) + 1) \ge c_s$, and hence $\wts(s) \ge c_s$. If we enter the while loop for $s$, consider the last iteration of it. In that case, we also have $\wts(s) \ge c_s$ at the beginning of that iteration. During it, elements from $A_k(s)$ are raised one level up, and $e$ may also raise one level up, thus decreasing their weights by a factor of $1 + \epsilon$. Then, $\wts(s)$ decreases by at most a factor of $1 + \epsilon$, so $\wts(s) \ge c_s / (1 + \epsilon)$ as a result of this iteration. As we have shown before, $\wts^*(s)$ remains the same during the subsequent iterations. Since we entered the while loop, $s$ must be in $F$, according to \Cref{fix-enter-F}. Observe that this implies that at the finalization step, the decrease of $\wts(s)$ is compensated by the increase of $\phi(s)$, so $s$ remains tight.
\end{proof}

Finally, recall that $\fix$ aims to maintain \Cref{inv}(1). The following theorem shows that this indeed holds.

\begin{theorem}\label{fix-invs}
    \Cref{inv}(1) holds after the call to $\fix(e, l)$.
\end{theorem}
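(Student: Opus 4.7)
The plan is to split the analysis into two cases: sets $s$ with $e \notin s$ and sets $s$ with $e \in s$. In the first case, I would invoke Observation \ref{obs:fix-main}(3) to conclude that $\lev(s)$ is unchanged during the call, and then argue that each term $\min\bigl\{\wts(e'), (1+\epsilon)^{-\max\{\lev(s)+1,\, \max_{t \ni e', t \ne s} \lev(t)\}}\bigr\}$ in the defining sum of $\wts(s, \lev(s)+1)$ is monotonically non-increasing throughout $\fix$: by Observation \ref{obs:fix-monotonicity}, $\wts(e')$ can only decrease and every set level $\lev(t)$ can only increase, so $(1+\epsilon)^{-\max\{\ldots\}}$ can only decrease too. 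Since Invariant \ref{inv}(1) held before the call by assumption, it still holds at the end for such $s$.

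For the second case, I would further split based on whether $s\in F$ or not. If $s \notin F$, Claim \ref{fix-enter-F} directly yields $\wts(s, \lev(s)+1) < c_s$ throughout the raising-sets step, and the same monotonicity argument extends this through the finalization loop at \cref{ln:fix-final-for}. If $s \in F$, the while loop at \cref{ln:fix-while} exits precisely when $\wts(s, \lev(s)+1) < c_s$; since $\lev(s)$ does not change once its while loop exits (only sets $s' \ni e$ processed later can shrink $\wts(e')$'s and raise other $\lev(t)$'s, both of which, by the same monotonicity argument, can only decrease $\wts(s, \lev(s)+1)$), the invariant is preserved to the end of the call. Termination of the while loop is a small side verification: each iteration strictly increases $\lev(s)$ by one (Observation \ref{fix-lev-inc}), while $\wts(s, i)$ decays to zero as $i$ grows (each element contributes at most $(1+\epsilon)^{-i}$), and $c_s \ge 1/C > 0$.

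The main obstacle, I expect, is the bookkeeping gap between the \emph{stored} value of $\wts(s)$ (maintained lazily with respect to $\wts(e)$) and its \emph{true} current value. Concretely: at the start of $s$'s iteration in the for loop at \cref{ln:fix-main-for}, the algorithm refreshes $\wts(s)$ by $\wts(s) \gets \wts(s) - (1+\epsilon)^{-l} + \wts(e)$, and every change to $\wts(e)$ inside the while loop for $s$ (when $\zlev(e) = k$ is raised) is propagated immediately, so the exit-condition check on $\wts(s,\lev(s)+1)$ really does reflect the true value. Once we move to the next set $s'$, the stored $\wts(s)$ is no longer refreshed, but Invariant \ref{inv}(1) is about the true value, which only decreases by the monotonicity argument above; the finalization loop later synchronizes stored with true via $\wts(s) \gets \wts(s) - (1+\epsilon)^{-l_s} + \wts(e)$, without altering the true $\wts(s,\lev(s)+1)$ or $\lev(s)$. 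Putting the three cases together establishes the theorem.
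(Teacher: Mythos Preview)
Your proposal is correct and follows essentially the same argument as the paper's proof: both hinge on the monotonicity observation (weights only decrease, levels only increase), which ensures that once the while loop establishes $\wts(s,\lev(s)+1)<c_s$ for a set $s\ni e$, the inequality persists through the remainder of the call, while sets not containing $e$ never lose the invariant in the first place. Your version is simply more explicit---you spell out the case split $s\notin F$ versus $s\in F$, verify termination of the while loop, and address the lazy-versus-true $\wts(s)$ bookkeeping---whereas the paper compresses all of this into a two-line appeal to \Cref{obs:fix-monotonicity}.
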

\begin{proof}
    After changing the intrinsic level of $e$ to $l$, \Cref{inv}(1) could be violated only for sets containing $e$. The algorithm raises each set $s \ni e$ until the condition of \Cref{inv}(1) is satisfied for $s$. By \Cref{obs:fix-monotonicity}, the weights of sets can only decrease. Therefore, if the condition of \Cref{inv}(1) was satisfied for a set at some point during the call, it is satisfied until the end of the call. It follows that \Cref{inv}(1) holds for every set after the call to $\fix(e, l)$. 
\end{proof}
We are ready to prove \Cref{ins-inv}, which shows the correctness of the $\ins$ subroutine, as was stated in \Cref{subsec:insert}.
\insinv*
\begin{proof}
    We start by analyzing the case $F = \emptyset$. The analysis splits into two subcases.
    \begin{itemize}[leftmargin=*]
        \item The first subcase is that $F = \emptyset$ and $h>\lev(e)$. Let $\wts^{\old}(s)$ be the weight of a set $s$ before the insertion of $e$. By the minimality of $h$, there exists $s \ni e$ such that $\wts^{\old}(s) + (1 + \epsilon)^{-h + 1} \ge c_s$. Dividing both sides by $1 + \epsilon$, we get $\frac{1}{1 + \epsilon}\wts^{\old}(s) + (1 + \epsilon)^{-h} \ge \frac{c_s}{1 + \epsilon}$. The left-hand side is at most $\wts(s) = \wts^{\old}(s) + (1 + \epsilon)^{-h}$. Therefore, $\wts(s) \ge c_s/(1 + \epsilon)$, so $s$ is tight and the new element $e$ is covered by $T$. 

        \item The second subcase is that $F = \emptyset$ and $h = \lev(e)$. If $\lev(e) > 0$, then $e$ is covered by a set $s$ with $\lev(s) > 0$. This set is tight by \Cref{inv}(2). Otherwise, $\wts(e) = 1$ and thus any set $s \ni e$ would have $\wts(s) \ge 1$, and so is tight, since $c_s \le 1$.
    \end{itemize}

    In both subcases, \Cref{inv}(1)(2) hold, since weights do not decrease, and $\wts(s) < c_s$ for all $s\ni e$.

    Next, let us consider the case where $F \neq \emptyset$. For any $s\in F$, before $e$ was inserted, we had $\wts(s) \ge c_s - (1+\epsilon)^{-l} \ge c_s - \frac{\epsilon}{2C} \geq c_s / (1+\epsilon)$. Hence, all sets in $F$ are tight before we invoke $\fix(e, l)$. By \Cref{fix-tight}, sets from $F$ are still tight after the call to $\fix(e, l)$, so $e$ is covered by a tight set. \Cref{inv}(1) holds by \Cref{fix-invs}. To show that \Cref{inv}(2) is also maintained, consider a set $s$. If it was slack before the insertion, we had $\lev(s) = 0$. After the call to $\fix(e, l)$, it remains at level 0 by \Cref{slack-no-rise}. If $s$ was tight before the insertion, then $s$ remains tight by \Cref{fix-tight}.
\end{proof}

\subsection{Rebuilding} \label{subsec:rebuild}
The $\reset$ subroutine is invoked whenever \Cref{inv}(3) is violated. Let $k$ be the smallest index such that $\phi_{\leq k} > \epsilon\left(c(T_{\leq k}) + f\cdot \wts(E_{\leq k})\right)$; by \Cref{linkedlist}, such an index $k$ can be found in time
\[O\left(\frac{\log C}{\epsilon} + \left|T_{\le k} \setminus T_{\leq \ceil{\log_{1+\epsilon}C}+1}\right| + \left|E_{\leq k}\right|\right).\]

\paragraph{High-level idea.} In the $\reset$ procedure we want to eliminate the dead weights of sets up to level $k$, and then fix the cover and violated invariants by changing levels of elements and sets. If all elements were active (which is the case in the rebuild subroutine of \cite{bhattacharya2021dynamic}), we could just raise every set and element below level $k + 1$ to level $k + 1$, and then invoke the subroutine $\water$ for them (refer to \Cref{waterfill}), which would restore the cover and the invariants by lowering their levels. 

However, passive elements pose additional challenges. First, there can be a passive element $e$ with $\zlev(e) \le k$ and $\ilev(e) > k + 1$, which we call a ``dirty'' element. If we just apply $\water$, without considering them, that will cause issues. Invoking $\water$ does not guarantee that every set becomes tight, so it might be the case that after the execution of $\water$, every set that contains $e$ is slack, so $e$ is not covered by $T$. Another problem is that we want to bound the time incurred by all the instances of $\reset$ on any passive element $e$. To do so, we would like to argue that the gap $\ilev(e) - \zlev(e)$ decreases after each call to $\reset$. But even if $e$ is contained in a tight set after the call to $\water$, its $\lev(e)$ could decrease as a result, which could force the decrease of $\zlev(e)$ (since we maintain $\lev(e) \ge \zlev(e)$), and hence increase the gap $\ilev(e) - \zlev(e)$.

To overcome these problems, we do the following. First, we raise every set with $\lev(s) \le k$ to level $k + 1$. After that, we raise every ``clean'' element to level $k + 1$; that is, active elements with $\lev(e) \le k$ and passive with $\ilev(e) \le k + 1$ (note that we need to make passive elements active). Then we raise $\zlev(e)$ to $k + 1$ for every ``dirty'' element, thus decreasing the gap $\ilev(e) - \zlev(e)$. Raising clean elements could decrease weights of some sets. As a result, some elements could stop being covered by a tight set. For clean elements, the cover and the invariants will be repaired by invoking $\water$ on them. 
To repair the cover for dirty elements, we
try to decrease $\ilev(e)$ as much as possible if $e$ is not already covered by a tight set, while maintaining $\wts(s) < c_s$ for each $s \ni e$ (and make $e$ active, if necessary). However, it may be still not enough to fix the cover for $e$. In that case, we activate $e$ on level $k + 1$, we add it to the set of ``clean'' elements.  
Finally, we apply $\water$ on the set of clean elements, to fix the cover and the invariants for sets that contain them.

However, this approach is not enough by itself, since we cannot afford to go over all sets below level $k+1$ to raise them to level $k+1$. To overcome this hurdle, we observe that many sets are ``unnecessary'' to restore a valid cover, and would be dropped to level $0$ after applying $\water$ anyway. Hence we can drop every set below level $k+1$ to level 0 efficiently using the aforementioned idea of \emph{implicit zeroing} (see \Cref{basicds}), and after that $\water$ will be applied only on a subset of ``necessary'' sets below level $k+1$, which we show how to efficiently find.

To bound the runtime, our argument relies on the decrease of the gap $\ilev(e) - \zlev(e)$ after each instance of $\reset$.
So far, processing a dirty element required scanning all sets $s \ni e$, and the gap $\ilev(e) - \zlev(e)$ decreased by at least one as a result.
To improve the runtime further, we would like to handle dirty elements more efficiently. 
Our idea here is to try to avoid scanning all the sets $s \ni e$, which takes time $O(f)$. This might be not always possible; in that case, we try to decrease the gap $\ilev(e) - \zlev(e)$ significantly. Suppose we want to increase $\wts(e)$ up to some $\delta$ without violating $\wts(s) < c_s$ for any $s \ni e$. There may be some witness set $s$ for which $\wts(s) - \wts(e) + \delta \ge c_s$. If $\delta$ is close enough to $\wts(e)$, namely, $\delta - \wts(e) \le \epsilon \cdot c_s / (1 + \epsilon)$, then $s$ must be tight, since $\wts(s) \ge c_s + \wts(s) - \delta \ge c_s / (1 + \epsilon)$. Therefore, if there is such a set, then $e$ is covered by a tight set. If there are many witness sets, we can find one by sampling sets $s \ni e$ uniformly at random, which allows us to avoid scanning all sets $s \ni e$;
upon failure, we can basically proceed as in the deterministic algorithm in this case, since the sampling fails with small probability.

The other extreme case is when there is no witness set, and then we can increase $\wts(e)$ up to at least $\delta$, and hence decrease the gap $\ilev(e) - \zlev(e)$. The interesting case is when there are a few witness sets, and so the sampling is likely to fail; however, in this case we can apply $\fix$ to decrease the gap. Recall that the amortized runtime cost of invoking $\fix(e, l)$ is roughly $|F| \cdot f \cdot (1 + \epsilon)^{-d}$, where $d = l - \lev(e)$ and $F$ is the set of sets $s \ni e$, for which $\wts(s) \ge c_s$ after setting $\ilev(e)$ to $l$. Notice we take $l$ such that $(1 + \epsilon)^{-l} \le \delta$, then every set in $F$ must be a witness set. As we will argue later, if we define $\delta$ carefully, we can achieve a significant decrease of the gap $\ilev(e) - \zlev(e)$, while making sure that the amortized runtime cost stays linear in $f$.

\subsubsection{Description of the $\reset$ subroutine}
Next, let us describe the algorithm more formally; refer to \Cref{alg:reset} for the pseudocode of the $\reset$ subroutine, and to \Cref{alg:procdet,alg:procrand} for the pseudocodes of the auxiliary $\procdet$ and $\procrand$ subroutines, which are called from within the $\reset$ subroutine.

\begin{definition}
    Consider the moment right before the call to $\reset(k)$. An element $e\in E_{\leq k}$ (i.e., $\zlev(e) \le k$) is called \emph{clean}, if $\ilev(e) \le k  + 1$, and is called \emph{dirty} otherwise.
\end{definition}

First, go over each element in $E_{\le k}$ and compute the set $D$ of all dirty elements and the set $E$ of all clean elements among them. 
We initially define set $S \leftarrow \emptyset$. During the steps of our algorithm, \Cref{inv}(2) may get violated for some sets. 
In the end of the subroutine, during the post-processing step, we will call $\water$ (see \cref{ln:call-water} of \Cref{alg:reset}), which will restore the invariants for some of the sets, but not necessarily for all sets.
Hence we try to collect all possibly affected sets to $S$, so we are able to deal with them later. 

Next, apply implicit zeroing (refer to \Cref{implicit-zeroing} for the definition) on all levels in $[0, k]$. Note that this operation may decrease $\lev(e)$ for some elements in $E_{\le k}$, without modifying $\ilev(e)$ and $\zlev(e)$, thus possibly making them incorrect. However, this will be repaired when we process clean and dirty elements.

\paragraph{Processing clean elements.} Let us first process all clean elements the following way. We move each clean element to level $k + 1$, together with all sets containing it; note that all such sets must be below level $k + 1$. More specifically, for each clean element $e$, we set $\ilev(e), \zlev(e) \gets k + 1$ and go over sets $s\ni e$ to assign $\lev(s)\leftarrow k+1$, add $s$ to $S$ and update $\wts(s)$ for each $s \ni e$ accordingly. Besides, we also need to update the sets $A_i(s)$ and $P_i(s)$ (which is omitted in the pseudocode).

This repairs the levels of clean elements, but for dirty elements, their lazy levels can be still incorrect for now (i.e., $\zlev(e) > \lev(e)$). Also, by processing clean elements, we could have only decreased the weights of sets. We will make sure that (composite) weights only increase during the subsequent steps, so sets that are tight at this moment will remain tight during the subsequent steps. 

Note that after processing clean elements, we have $\ilev(e) \ge k + 1$ for every element. During the subsequent steps, we may raise some sets to level $k + 1$, but because of that, we will not need to raise any element. 

\paragraph{Processing dirty elements.} Next, we describe how we handle dirty elements. The way we do it depends on whether our algorithm uses randomness or not, which is controlled by the global flag \textsf{deterministic}. 

First, initialize set $E' \gets \emptyset$. During processing a dirty element, it could become active at level $k + 1$. The algorithm will make sure that every set above level $k + 1$ is tight; however, sets at level $k + 1$ are not necessarily tight. Thus, we collect such elements to $E'$, so we are able to repair the cover for them later, by invoking $\water$ on them.

There are two different ways to process an element $e \in D$: one of them uses randomness, while the other does not. We define each of them in the respective subroutine. Refer to \Cref{alg:procdet} for the pseudocode of the deterministic subroutine $\procdet(e)$ and to \Cref{alg:procrand} for the randomized subroutine $\procrand(e)$. To simplify the notation, we assume that $k, E', S$ are passed to them as arguments implicitly.

To process dirty elements, go over each element $e\in D$. If the flag \textsf{deterministic} is set, then we simply call $\procdet(e)$. Otherwise, we call $\procrand(e)$ if $e$ is still passive; if not, continue to the next iteration. We need to do this, since in the randomized version, a dirty element could become active during a previous iteration. 
Refer to \Cref{fig:reset} for an illustration of processing dirty elements. 

Next, we describe the subroutines $\procdet$ and $\procrand$.

\begin{figure}
    \centering
    \begin{tikzpicture}[thick,scale=0.7]
	\draw (-8, 10.2) node[inner sep=0pt, minimum width=6pt, label=\textbf{level}] {};
	\draw (-9.2, 5.1) node[inner sep=0pt, minimum width=6pt, label=$k+1$] {};
	\draw (-9.2, 4) node[inner sep=0pt, minimum width=6pt, label=$k$] {};
	
	\draw (-6, 3) node(1)[circle, draw, fill=black, inner sep=0pt, minimum width=6pt, label=0: {$e_1$}] {};
	\draw (-6, 7) node(2)[circle, draw, fill=gray, inner sep=0pt, minimum width=6pt, label=0: {$e_1$}] {};
	
	\draw (-9.2, 2.5) node[inner sep=0pt, minimum width=6pt, label=$\zlev(e_1)$] {};
	\draw (-9.2, 6.5) node[inner sep=0pt, minimum width=6pt, label=$\ilev(e_1)$] {};
	
	\draw (-3, 2) node(3)[circle, draw, fill=black, inner sep=0pt, minimum width=6pt, label=0: {$e_2$}] {};
	\draw (-3, 8) node(4)[circle, draw, fill=gray, inner sep=0pt, minimum width=6pt, label=0: {$e_2$}] {};
	
	\draw (-9.2, 1.5) node[inner sep=0pt, minimum width=6pt, label=$\zlev(e_2)$] {};
	\draw (-9.2, 7.5) node[inner sep=0pt, minimum width=6pt, label=$\ilev(e_2)$] {};
	
	\draw (-0, 1) node(5)[circle, draw, fill=black, inner sep=0pt, minimum width=6pt, label=0: {$e_3$}] {};
	\draw (-0, 9) node(6)[circle, draw, fill=gray, inner sep=0pt, minimum width=6pt, label=0: {$e_3$}] {};
	
	\draw (-9.2, 0.5) node[inner sep=0pt, minimum width=6pt, label=$\zlev(e_3)$] {};
	\draw (-9.2, 8.5) node[inner sep=0pt, minimum width=6pt, label=$\ilev(e_3)$] {};
	
	\draw (2, 5.5) node[circle, draw, fill=black, inner sep=0pt, minimum width=6pt, label=0: {$e_4$}] {};
	\draw (3.5, 5.5) node[circle, draw, fill=black, inner sep=0pt, minimum width=6pt, label=0: {$e_5$}] {};
	\draw (5, 5.5) node[circle, draw, fill=black, inner sep=0pt, minimum width=6pt, label=0: {$e_6$}] {};
	\draw (6.5, 5.5) node[circle, draw, fill=black, inner sep=0pt, minimum width=6pt, label=0: {$e_7$}] {};
	\draw (8, 5.5) node[circle, draw, fill=black, inner sep=0pt, minimum width=6pt, label=0: {$e_8$}] {};

	\begin{scope}[on background layer]
		\draw [line width = 0.5mm] (-10, 0) to (10, 0);
		\draw [->, line width = 0.5mm] (-8, -1) to (-8, 10);
		\draw [dashed, line width = 0.5mm, color=red] (-10, 5) to (10, 5);
		
		\draw [dashed, line width = 0.4mm, color=orange] (1) to (-8, 3);
		\draw [dashed, line width = 0.4mm, color=orange] (2) to (-8, 7);
		\draw [dashed, line width = 0.4mm, color=orange] (3) to (-8, 2);
		\draw [dashed, line width = 0.4mm, color=orange] (4) to (-8, 8);
		\draw [dashed, line width = 0.4mm, color=orange] (5) to (-8, 1);
		\draw [dashed, line width = 0.4mm, color=orange] (6) to (-8, 9);
		
		\draw [line width = 0.2mm, style={decorate, decoration=snake}] (1) to (2);
		\draw [line width = 0.2mm, style={decorate, decoration=snake}] (3) to (4);
		\draw [line width = 0.2mm, style={decorate, decoration=snake}] (5) to (6);
		
		\draw [decorate,
		decoration = {brace}] (1.5, 6.2) -- (9, 6.2);
		\draw (5, 6.5) node[label={clean elements}]{};
	\end{scope}
\end{tikzpicture}
    \caption{In this illustration, the dirty elements are  $e_1, e_2, e_3$, and the clean elements, which have been raised to level $k+1$, are $e_4, e_5, e_6, e_7, e_8$. The $\reset(k)$ procedure reduces the gaps $\ilev(e_i) - \zlev(e_i), \forall i\in \{1, 2, 3\}$. In the deterministic algorithm, the gaps decrease by at least one; in the randomized algorithm, the gaps decrease exponentially (i.e., from $d$ to $\log d$).} 
    \label{fig:reset}
\end{figure}

 \begin{algorithm}
    \caption{$\reset(k)$}\label{alg:reset}
    \SetKw{Let}{let}
    \SetKw{Continue}{continue}
    \SetKwData{Deterministic}{deterministic}
    \Let{$D$ be the set of all dirty elements, and $E$ be the set of all clean elements}\;
    $S \gets \emptyset$\;
    apply implicit zeroing for all levels in $[0, k]$\;\label{ln:impl-zeroing}
    \ForEach(\tcp*[f]{processing clean elements}){$e \in E$}{\label{ln:process-clean}
        assign $\lev(s) \gets k + 1$ and add $s$ to $S$ for each $s \ni e$\;
        \ForEach{$s \ni e$}{
            $\wts(s) \gets \wts(s) - \wts(e) + (1 + \epsilon)^{-k - 1}$\;
        }
        $\ilev(e),\zlev(e) \gets k + 1$\;
    }
    $E' \gets \emptyset$\;
    \ForEach(\tcp*[f]{processing dirty elements}){$e\in D$}{\label{ln:process-dirty}
        \If{\Deterministic}{
            $\procdet(e)$\;\label{ln:proc-call-1}
        }\Else{
            \If{$e$ is still passive}{\label{ln:check-passive}
                $\procrand(e)$\;
            }
        }
    }
    \tcp{post-processing}
    \Let{$\widehat{E}$ be the elements from $E \cup E'$ that are not contained in a tight set}\;
    set $\lev(s) \leftarrow 0$ for every slack set $s \in S$\;\label{ln:zero-slack}
    \Let{$\widehat{S}$ be the collection of all the sets that contain elements from $\widehat{E}$}\;
    \Let{$\widehat{k} = \min\left\{k+1, \ceil{\log_{1+\epsilon}\frac{2C\cdot|\widehat{E}|}{\epsilon}}\right\}$}\;
    move all elements and sets in $\widehat{E}, \widehat{S}$ to level $\widehat{k}$\;\label{ln:move-down}
    $\water(\widehat{k}, \widehat{S}, \widehat{E})$\;\label{ln:call-water}
\end{algorithm}

\paragraph{The deterministic subroutine $\procdet(e)$ (for processing an element $e \in D$).} 
In the $\procdet(e)$ subroutine, we want to raise $\zlev(e)$ to at least $k + 1$ and make sure that $e$ will be covered by a tight set. Due to the way we process clean elements, some sets on positive levels might become slack, and hence $e$ could become not covered by $T$. In that case, we can try to decrease $\ilev(e)$ (but not below $k + 1$) until some set $s \ni e$ becomes tight. If this is not possible, then we can make $e$ active at level $k + 1$ and add it to $E'$ and all sets $s \ni e$ to $S$; for them, the cover will be repaired in the end, using the $\water$ subroutine. Later in the runtime analysis, we will rely on the fact that each dirty element strictly decreases its gap $\ilev(e) - \zlev(e)$. 

First, we check if there is a tight set $s \ni e$. If there is such a set $s$, then we assign $\lev(s) \gets \max\{k + 1,\lev(s)\}$ (thus, raising it to level $k + 1$ if it was below) and set $\zlev(e)$ to $\lev(s)$, making it at least $k + 1$ and restoring $\zlev(e) \le \lev(e)$.

Otherwise, if all sets $s \ni e$ are slack, we call the $\decrease(e)$ subroutine, which tries to restore the cover for $e$ by decreasing $\ilev(e)$ as much as possible, adding $e$ to $E'$ and all sets $s \ni e$ to $S$ upon failure.

\begin{algorithm}
    \caption{$\procdet(e)$}\label{alg:procdet}
    \SetKw{Let}{let}
    \If{there is a tight set $s \ni e$}{
        set $\lev(s) \gets \max\{k + 1, \lev(s)\}$\;
        update $\zlev(e) \gets \lev(s)$\;
    }\Else{
        $\decrease(e)$\;
    }
\end{algorithm}

\subparagraph{The $\decrease(e)$ subroutine.} The $\decrease(e)$ subroutine tries to decrease $\ilev(e)$ (and hence increase $\wts(e)$) as much as possible, while maintaining $\wts(s) < c_s$ for each $s \ni e$. In order to maintain the correctness of levels, it does not decrease $\ilev(e)$ below $\lev(e)$ and $k + 1$. Because of that, it is not necessary that at least one of the sets $s \ni e$ becomes tight as a result of decreasing $\ilev(e)$. However, this can only happen when $\ilev(e)$ reaches $k + 1$. In that case, we add $e$ to $E'$ and move each set $s \ni e$ to level $k + 1$ and add it to $S$.

To simplify the notation, we assume that $k, E', S$ are passed to the subroutine implicitly. Also, we assume that $\wts(s) < c_s$ for each $s \ni e$ at the beginning of the call. Refer to \Cref{alg:decrease} for the pseudocode of the $\decrease(e)$ subroutine. The subroutine makes the following steps.

First, compute $l = \max\left\{k + 1, \max_{s\ni e}\{\lev(s)\}\right\}$, which is the lower bound for $\ilev(e)$. Find the smallest index $h \in [l, \ilev(e)]$ such that $\forall s \ni e,\, \wts(s) - \wts(e) + (1+\epsilon)^{-h} < c_s$. Note that we can find $h$ using binary search in $O(f)$ time, if $\ilev(e) \le \zlev(e) + \left\lceil\log_{1+\epsilon}\max\{f, \frac{2C}{\epsilon} \}\right\rceil$, as we argued in the description of the $\ins$ subroutine. We will later show that this condition, i.e., $\ilev(e) \le \zlev(e) + \left\lceil\log_{1+\epsilon}\max\{f, \frac{2C}{\epsilon} \}\right\rceil$ is preserved during the execution of $\reset(k)$. After that, we update $\ilev(e)$. To do it, we first set $\wts(s) \gets \wts(s) - \wts(e) + (1 + \epsilon)^{-h}$ for each $s \ni e$, and then set $\ilev(e) \gets h$, making $\wts(e) = (1 + \epsilon)^{-h}$. 

After that, all sets $s \ni e$ may be still slack. Note that this can only happen if all sets $s \ni e$ are at level $k + 1$ or below, since we make sure that sets above level $k + 1$ are tight. In that case, we add $e$ to $E'$, assign $\lev(s) \gets k + 1$ and add $s$ to $S$ for each $s \ni e$, and update $\zlev(e)$ to $\lev(e)$ by setting $\zlev(e) \gets k + 1$.

Otherwise, if there is a tight set $s \ni e$, then set $\lev(s) \gets \max\{k + 1, \lev(s)\}$ and update $\zlev(e)$ to $\lev(e)$ by setting $\zlev(e) \gets \max_{s \ni e}\{\lev(s)\}$. As a result, $\zlev(e) \ge k + 1$ after the call.

Besides, we also need to update the sets $A_i(s)$ and $P_i(s)$ (which is omitted in the pseudocode). To do it, move $e$ from $P_{\ilev^\old(e)}(s)$ to $P_{\ilev(e)}(s)$ (or $A_{\ilev(e)}(s)$ if $e$ has become active) for each $s \ni e$, where $\ilev^\old(e)$ is the value of $\ilev(e)$ at the beginning of the call.

\begin{algorithm}\caption{$\decrease(e)$}\label{alg:decrease}
    \SetKw{Let}{let}
    \Let{$l = \max\left\{k + 1, \max_{s\ni e}\{\lev(s)\}\right\}$}\;
    find the smallest index $h \in [l, \ilev(e)]$ such that $\forall s \ni e,\, \wts(s) - \wts(e) + (1+\epsilon)^{-h} < c_s$\;
    \ForEach{$s\ni e$}{
        $\wts(s) \gets \wts(s) - \wts(e) + (1+\epsilon)^{-h}$\;
    }
    $\ilev(e) \gets h$\;
    \If{all $s \ni e$ are slack}{
        add $e$ to $E'$\;
        assign $\lev(s) \gets k + 1$ and add $s$ to $S$ for each $s \ni e$\;
        $\zlev(e) \gets k + 1$.
    }\Else{
        find a tight set $s \ni e$\;
        set $\lev(s) \gets \max\{k + 1, \lev(s)\}$\;
        update $\zlev(e) \gets \max_{s\ni e}\{\lev(s)\}$\;
    }
\end{algorithm}

\paragraph{The randomized subroutine $\procrand(e)$ (for processing an element $e \in D$).} In the $\procrand(e)$ subroutine, we try to avoid spending $O(f)$ time on scanning all the sets $s \ni e$, and upon failure, we aim to decrease the gap $\ilev(e) - \zlev(e)$ significantly. To do so, we consider the set $\widehat{F} = \{ s \ni e\,|\, \wts(s) - \wts(e) + \delta \ge c_s\}$ for some $\delta > \wts(e)$ which we define later. That is, the set of sets $s \ni e$ such that $\wts(s) \ge c_s$ if we increase $\wts(e)$ to $\delta$. 

Later we will show that every set in $\widehat{F}$ is tight (see \Cref{witness-tight}). Accordingly, if $\widehat{F} \ne \emptyset$, then $e$ is already covered by $T$. However, computing $\widehat{F}$ explicitly would take time $O(f)$. We can test whether $\widehat{F} \ne \emptyset$, without computing it explicitly, by sampling sets $s \ni e$ randomly. With some probability, the random sampling may fail to find a set $s \in \widehat{F}$; in that case, we compute $\widehat{F}$ explicitly. Notice that if $\widehat{F}$ is large enough, the probability that we will proceed to computing $\widehat{F}$ is low. Otherwise, if $\widehat{F}$ is small, the expected runtime can be as large as $\Omega(f)$. However, in that case, we argue that we can decrease the gap $\ilev(e) - \zlev(e)$ significantly, so there can be only a few such ``expensive'' calls. More specifically, if $\widehat{F} = \emptyset$, we can call $\decrease(e)$, which will decrease $\ilev(e)$, so as a result, $\wts(e)$ becomes at least $\delta$ (or $e$ becomes active). Otherwise, if $\widehat{F}$ is small but not empty, we can apply the $\fix$ subroutine to decrease the gap significantly, which we argue would be cheap enough in that case (we prove that later in \Cref{lm:fix-pot-increase}).

For the randomized version, we assume $f > \frac{2C}{\epsilon}$. Otherwise, if $f \le \frac{2C}{\epsilon}$, we can simply use the deterministic version, for which the runtime bound is not worse in that case.

Recall that we check if $e$ is still passive at \cref{ln:check-passive} of \Cref{alg:reset}, before processing it. This is necessary in the randomized version of $\reset$, since $e$ could have been activated when processing some previous element from $D$ (note that in that case we do not have the problem with incorrect $\zlev(e)$ anymore). 

First, we check if $\ilev(e) - k - 1 \le \max\{\frac{200}{\epsilon^2}, 1 + 2\log_{1 + \epsilon} \frac{2C}{\epsilon}\}$. If that is the case, we fall back to calling $\procdet(e)$.

Otherwise, the $\procrand(e)$ subroutine performs the following steps. First, find the index $\eta$ such that\footnote{Let $\clog^{(\eta)}$ denote the $\clog$ function, iterated for $\eta$ times.}
\[\clog_{1+\epsilon}^{(\eta+1)}f\leq \ilev(e) - k-1\leq \clog_{1+\epsilon}^{(\eta)}f\,.\]
We will argue later in \Cref{handle-eta} that such an index exists. Note that we can find the index in constant time by computing it at the outset for every value of $\ilev(e) - k - 1$.
Next, define
\[\delta = \min\left\{\left(\frac{1}{\clog^{(\eta)}_{1+\epsilon}f}\right)^4, \brac{\frac{\epsilon}{2C}}^2\right\}\cdot (1+\epsilon)^{-k-1}\,.\] 

\subparagraph{Random sampling a witness set.} Repeatedly take uniformly random samples of sets $s\ni e$ for $50\ceil{\frac{f}{\clog_{1+\epsilon}^{(\eta)}f}}$ times, and check if $\wts(s) -\wts(e) +\delta \ge c_s$. If there exists a set $s$ for which $\wts(s) -\wts(e) +\delta \ge c_s$, then $s$ is tight (which we prove later in \Cref{witness-tight}), and hence $e$ is covered by a tight set.
To update the lazy level of $e$, assign $\lev(s)\leftarrow \max\{k+1, \lev(s)\}$ and $\zlev(e)\leftarrow \lev(s)$. After that, exit the subroutine. 

\subparagraph{Handling the element when the random sampling step fails.} Reaching this step means we have failed to find a witness set. In that case, compute the set $\widehat{F}$ of all $s\ni e$ such that $\wts(s) - \wts(e) + \delta \ge c_s$, or in other words, the set of all witness sets.
We will prove later in \Cref{delta-increase} that $\delta > \wts(e)$.
After that, the execution splits into three branches, depending on the size of $\widehat{F}$:

\begin{enumerate}[(1),leftmargin=*]
    \item If $|\widehat{F}| = 0$, then it means that we can safely increase $\wts(e)$ to $\delta$ without breaking $\wts(s) < c_s$ for any set $s \ni e$. Hence we call $\decrease(e)$.
    
    \item If $0<|\widehat{F}| \leq \left(\clog_{1+\epsilon}^{(\eta)}f\right)^2$, then we use the $\fix$ subroutine to decrease the gap $\ilev(e) - \zlev(e)$ significantly. We will later show that the amortized cost of invoking it in this case is small enough. 

    Go over each $s \ni e$ to assign $\lev(s) \leftarrow \max \{ k + 1, \lev(s) \}$ if $s$ is tight. Note that there is at least one tight set $s \ni e$, since sets in $\widehat{F}$ are tight. After that, update $\zlev(e)\leftarrow \max_{s\ni e}\{\lev(s) \}$, since $\fix$ requires $\zlev(e) = \lev(e)$. Then, define $l = \min\{\ilev(e), \zlev(e) + \widehat{d}\}$, where:
    \[\widehat{d} = \left\lceil\log_{1+\epsilon}\max\left\{\left(\clog^{(\eta)}_{1+\epsilon}f\right)^4, \brac{\frac{2C}{\epsilon}}^2\right\}\right\rceil.\]
    Invoke the subroutine $\fix(e, l)$ and apply the implicit zeroing for sets on level 0. This is because some sets at level 0 can gain dead weight due to the call to $\fix$; however, as we will show later, these sets are not needed for the cover, so we can safely zero the dead weights for them.

    \item Otherwise, $|\widehat{F}| > \left(\clog_{1+\epsilon}^{(\eta)}f\right)^2$. Recall that every set in $\widehat{F}$ is tight. Thus, we pick an arbitrary set $s \in \widehat{F}$ and assign $\lev(s)\leftarrow \max\{k+1, \lev(s)\}$ and $\zlev(e)\leftarrow \lev(s)$.
\end{enumerate}

\begin{algorithm}
    \caption{$\procrand(e)$}\label{alg:procrand}
    \SetKw{Let}{let}
    \If{$\ilev(e) - k - 1 \le \max\{\frac{200}{\epsilon^2}, 1 + 2\log_{1 + \epsilon} \frac{2C}{\epsilon}\}$}{
        $\procdet(e)$\;
    }\Else{
        find index $\eta$ such that $\clog_{1+\epsilon}^{(\eta+1)}f\leq \ilev(e) - k-1\leq \clog_{1+\epsilon}^{(\eta)}f$\;
        \Let{$\delta = \min\left\{\left(\frac{1}{\clog^{(\eta)}_{1+\epsilon}f}\right)^4, \brac{\frac{\epsilon}{2C}}^2\right\}\cdot (1+\epsilon)^{-k-1}$}\;
        \For{$50\ceil{f / \clog_{1+\epsilon}^{(\eta)}f}$ times}{
            uniformly sample $s\ni e$\;
            \If{$\wts(s) - \wts(e) + \delta \ge c_s$}{
                set $\lev(s) \gets \max\{k + 1, \lev(s)\}$\;
                update $\zlev(e) \gets \lev(s)$\;
                \Return{}
            }
        }
        compute $\widehat{F} = \{s\ni e\mid \wts(s) - \wts(e) + \delta \ge c_s \}$\;
        \If{$|\widehat{F}| = 0$}{
            $\decrease(e)$\;
        }\ElseIf{$0 < |\widehat{F}| \leq \left(\clog_{1+\epsilon}^{(\eta)}f\right)^2$}{\label{small-F}
            \Let{$\widehat{d} = \left\lceil\log_{1+\epsilon}\max\left\{\left(\clog^{(\eta)}_{1+\epsilon}f\right)^4, \brac{\frac{2C}{\epsilon}}^2\right\}\right\rceil$}\;
            assign $\lev(s) \gets \max \{ k + 1, \lev(s) \}$ for each tight $s \ni e$\;
            update $\zlev(e)\leftarrow \max_{s\ni e}\{\lev(s)\}$\;
            \Let{$l = \min\{\ilev(e), \zlev(e) + \widehat{d}\})$}\;
            $\fix(e, l)$\;
            apply implicit zeroing for sets at level 0\;
        }\Else{
            take an arbitrary $s \in \widehat{F}$\;
            set $\lev(s) \gets \max\{k + 1, \lev(s)\}$\;
            update $\zlev(e) \gets \lev(s)$\;
        }
    }
\end{algorithm}

\paragraph{Post-processing.} After we have processed elements from $D$, some elements from $E \cup E'$ may have become covered by tight sets. Consequently, as a post-processing step, we define $\widehat{E}$ as the set of elements from $E \cup E'$ not covered by a tight set. We will argue later that every set at a level greater than $k + 1$ is tight, so every element in $\widehat{E}$ is at level $k + 1$. Note that there may be sets in $S$, which are slack and at positive levels, so the condition of \Cref{inv}(2) is violated for them. To restore it, we assign $\lev(s) \leftarrow 0$ for every such a set $s \in S$.
Next, we define $\widehat{S}$ to be  the collection of all sets that contain elements from $\widehat{E}$; note that $\widehat{S}$ can be computed in $O(f|\widehat{E}|)$ time, by going over each element from $\widehat{E}$. 

For the rest, we need to use the following subroutine $\water$ from \cite{bhattacharya2021dynamic,bhattacharya2019deterministically}. 

\begin{lemma}[\cite{bhattacharya2021dynamic,bhattacharya2019deterministically}]\label{waterfill}
    There is a deterministic subroutine $\water(\widehat{k}, \widehat{S}, \widehat{E})$ which takes as input a collection of sets $\widehat{S}$ and a collection of active elements $\widehat{E}$ on level $\widehat{k}$, such that for each $s\in \widehat{S}$, we have $\wts(s)<c_s, \phi(s) = 0$. In the end, all elements in $\widehat{E}$ are still active, and the subroutine places each set $s\in \widehat{S}$ at level $\lev(s)$ such that (1) $\wts(s) < c_s, \phi(s) = 0$, and (2) if $\lev(s) > 0$ then $\wts(s) \ge \frac{c_s}{1+\epsilon}$. The runtime is $O(f|\widehat{E}| +\widehat{k})$.
\end{lemma}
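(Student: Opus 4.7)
The plan is to implement $\water$ via the standard discretized water-filling procedure from the cited works~\cite{bhattacharya2021dynamic,bhattacharya2019deterministically}. Initialize by placing every set in $\widehat{S}$ at level $\widehat{k}$, consistent with the hypothesis that each element of $\widehat{E}$ is already active at level $\widehat{k}$ and each $s\in\widehat{S}$ satisfies $\wts(s)<c_s$ and $\phi(s)=0$. Then sweep the level counter $i$ from $\widehat{k}$ down to $1$, and at each value of $i$ repeatedly demote every set $s\in\widehat{S}$ still residing at level $i$ whose weight satisfies $\wts(s)<c_s/(1+\epsilon)$. A demotion reassigns $\lev(s)\gets i-1$ and, for every element $e\in s\cap\widehat{E}$ whose level was pinned to $i$ solely by $s$, drops $\lev(e)$ to $i-1$ and updates $\wts(e)$ accordingly.

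For the correctness of property~(1), observe that when $s$ descends from level $i$ to $i-1$, each descending element in $s\cap\widehat{E}$ contributes only an additional $\epsilon(1+\epsilon)^{-i}$ to $\wts(s)$, so $\wts(s)$ is multiplied by at most $1+\epsilon$; since the pre-demotion weight was strictly below $c_s/(1+\epsilon)$, the new weight is still strictly below $c_s$. A simple induction over the sweep extends this bound to every other set in $\widehat{S}$ that shares elements with $s$. Property~(2) is immediate from the demotion rule: any set still at a positive level at the end of the sweep must have failed the criterion $\wts(s)<c_s/(1+\epsilon)$ at that level, so it is tight. Dead weights are never touched by the procedure, so $\phi(s)\equiv 0$ throughout, and every $e\in\widehat{E}$ stays active because the invariant $\lev(e)=\max_{t\ni e}\lev(t)$ is preserved at every step (its level tracks the maximum level of a containing set in $\widehat{S}$, which never becomes empty).

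For the runtime, traversing nonempty levels via a sparse doubly linked list pays for the additive $O(\widehat{k})$ term. To pay for the $O(f|\widehat{E}|)$ term, maintain for each $s\in\widehat{S}$ per-level counters of its elements in $\widehat{E}$, so that $\wts(s)$ can be updated in $O(1)$ time per set-level demotion and in $O(f)$ time per element-level descent (updating all containing sets). The principal technical obstacle is the amortization: a naive bound gives $O(f|\widehat{E}|\widehat{k})$, which is too loose. The standard water-filling argument charges the $O(f)$ cost of each element-level descent against the synchronous one-step descent of a pinning set in $\widehat{S}$; because sets in $\widehat{S}$ monotonically descend and never re-ascend, and because $|\widehat{S}|\le f|\widehat{E}|$, the total charged work telescopes to $O(f|\widehat{E}|)$, which combined with the level-traversal cost yields the claimed bound.
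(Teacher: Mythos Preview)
The paper does not prove this lemma; it is invoked as a black box from the cited prior work. Your correctness argument for properties~(1) and~(2) follows the standard water-filling reasoning and is essentially fine.

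The runtime argument, however, has a real gap. You charge the $O(f)$ cost of each element descent to the synchronous descent of a pinning set, then invoke ``$|\widehat{S}|\le f|\widehat{E}|$'' and ``sets never re-ascend'' to conclude an $O(f|\widehat{E}|)$ bound. But monotone descent only bounds the number of descents \emph{per set} by $\widehat{k}$, so the total number of set descents can be $\Theta(|\widehat{S}|\cdot\widehat{k})$, not $O(|\widehat{S}|)$; there is nothing here that telescopes. Concretely, take a single element contained in $f$ unit-cost sets, all starting at level $\widehat{k}$: the level-by-level procedure you describe performs $\Theta(\widehat{k})$ element descents, each costing $\Theta(f)$ to update all containing sets, for $\Theta(f\widehat{k})$ total work---whereas the lemma asserts $O(f+\widehat{k})$. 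Achieving the stated additive bound requires the more careful implementation from~\cite{bhattacharya2021dynamic,bhattacharya2019deterministically} (roughly, one avoids re-examining a set at every intermediate level by computing directly, or in batches, the level at which it next gains weight or becomes tight), and your sketch does not supply this.
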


We cannot directly apply $\water(k+1, \widehat{S}, \widehat{E})$, since the runtime would depend on $k$. Instead, we follow the idea from \cite{bhattacharya2021dynamic} and move all elements and sets in $\widehat{E}, \widehat{S}$ to level $\widehat{k} = \min\left\{k+1, \ceil{\log_{1+\epsilon}\frac{2C\cdot|\widehat{E}|}{\epsilon}}\right\}$. The following lemma claims that directly moving elements and sets to level $\widehat{k}$ keeps $\wts(s) < c_s$ for every set $s \in \widehat{S}$, which is required in the conditions of \Cref{waterfill}. So in the final step, we can safely invoke $\water(\widehat{k}, \widehat{S}, \widehat{E})$. 

\begin{lemma}[\cite{bhattacharya2021dynamic}]
    After moving sets and elements in $\widehat{S},\widehat{E}$ to level $\widehat{k}$, each set $s\in \widehat{S}$ has weight less than $c_s$.
\end{lemma}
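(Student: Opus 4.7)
The plan is to observe that in the paper's bookkeeping, $\wts(e) = (1+\epsilon)^{-\ilev(e)}$ is determined by the stored intrinsic level $\ilev(e)$, and that the move at \cref{ln:move-down} explicitly updates $\ilev(e)$ only for $e \in \widehat{E}$ (from $k+1$ down to $\widehat{k}$) and $\lev(s)$ only for $s \in \widehat{S}$. Hence for any $e \in s \setminus \widehat{E}$ with $s \in \widehat{S}$, the weight $\wts(e)$ is left unchanged by the move. The change in $\wts(s) = \sum_{e\in s}\wts(e)$ therefore comes only from elements in $\widehat{E} \cap s$:
\[\wts(s)^{\mathrm{after}} - \wts(s)^{\mathrm{before}} \;=\; |\widehat{E} \cap s| \cdot \left[(1+\epsilon)^{-\widehat{k}} - (1+\epsilon)^{-(k+1)}\right] \;\geq\; 0.\]

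Since $s \in \widehat{S}$ contains some $e \in \widehat{E}$ which is not covered by any tight set, $s$ itself is not tight, hence slack, so $\wts(s)^{\mathrm{before}} < c_s/(1+\epsilon)$. I then split into the two cases defining $\widehat{k}$. When $\widehat{k} = k+1$, the change above is $0$ and we immediately get $\wts(s) < c_s/(1+\epsilon) < c_s$. When $\widehat{k} = \ceil{\log_{1+\epsilon}(2C|\widehat{E}|/\epsilon)} < k+1$, the definition of $\widehat{k}$ gives $(1+\epsilon)^{-\widehat{k}} \leq \epsilon/(2C|\widehat{E}|)$, so the change is at most $|\widehat{E}| \cdot \epsilon/(2C|\widehat{E}|) = \epsilon/(2C) \leq \epsilon c_s/2$, using $c_s \geq 1/C$. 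Consequently,
\[\wts(s) \;<\; \frac{c_s}{1+\epsilon} + \frac{\epsilon c_s}{2} \;=\; c_s\left(\frac{1}{1+\epsilon} + \frac{\epsilon}{2}\right) \;<\; c_s \qquad\text{for } \epsilon \in (0, 0.1),\]
since $\frac{1}{1+\epsilon} + \frac{\epsilon}{2} < 1$ whenever $\epsilon < 1$.

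The main subtlety is the bookkeeping convention that $\ilev(e)$ for $e \notin \widehat{E}$ is not implicitly recomputed when the levels of sets $s \in \widehat{S}$ change; under this convention the proof reduces to the direct calculation above combined with the slackness bound on $s$. If instead one insisted on keeping $\ilev(e) = \lev(e) = \max_{t\ni e}\lev(t)$ for active elements synchronously with the move, one would additionally have to show that for each active $e \in s \setminus \widehat{E}$ the maximum-level set containing $e$ stays outside $\widehat{S}$ (using that $e$ is either covered by a tight set $t \notin \widehat{S}$ or satisfies $\zlev(e) > k$ originally), which is a case analysis but not conceptually harder.
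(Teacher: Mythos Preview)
Your argument is correct. Note, however, that the paper does not supply its own proof of this lemma: it is quoted from \cite{bhattacharya2021dynamic} without proof, so there is nothing in the present paper to compare against. Your proof is precisely the standard one: every $s\in\widehat{S}$ contains some $e\in\widehat{E}$ not covered by a tight set, hence $s$ is slack and $\wts(s)<c_s/(1+\epsilon)$ before the move; the move increases $\wts(s)$ by at most $|\widehat{E}\cap s|\cdot(1+\epsilon)^{-\widehat{k}}\le |\widehat{E}|\cdot \epsilon/(2C|\widehat{E}|)\le \epsilon c_s/2$ from the definition of $\widehat{k}$ and $c_s\ge 1/C$; and $c_s/(1+\epsilon)+\epsilon c_s/2<c_s$ for $\epsilon<1$.

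Two minor remarks. First, your implicit claim that every $e\in\widehat{E}$ has $\ilev(e)=k+1$ just before the move is justified by the surrounding development (clean elements are explicitly placed at level $k+1$; elements added to $E'$ via $\decrease$ end with $\ilev(e)=k+1$, since otherwise the minimality of $h$ would force some containing set to be tight; and \Cref{hp-elem-ilev} rules out $\ilev(e)<k+1$), but it would be cleaner to state this rather than leave it embedded in the phrase ``from $k+1$ down to $\widehat{k}$.'' Second, your closing paragraph about bookkeeping conventions is unnecessary: in this framework $\wts(e)=(1+\epsilon)^{-\ilev(e)}$ depends only on the stored $\ilev(e)$, and the move does not touch $\ilev(e')$ for $e'\notin\widehat{E}$, so there is no ambiguity to resolve.
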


\subsubsection{Key properties}
Now we will prove some properties of $\reset$ which are common for both the deterministic and the randomized versions. As we did for the $\fix$ subroutine, we use the superscript ``old'' to denote the values of the variables right before the execution of $\reset(k)$. 
Recall that we assume that at the beginning of the call, \Cref{inv}(1)(2) hold and every element is in the correct state; i.e. every active element satisfies $\ilev(e) = \zlev(e) = \lev(e)$, and every passive element satisfies $\zlev(e) \le \lev(e)$ and $\lev(e) < \ilev(e) \le \zlev(e) + \left\lceil\log_{1+\epsilon}\max\{f, \frac{2C}{\epsilon} \}\right\rceil$.

\paragraph{Properties after processing clean elements.}
First, we show some properties after processing clean elements (see the for loop at \cref{ln:process-clean} of \Cref{alg:reset}).
Similarly to $T_{\le i}$ and $E_{\le i}$, let $T_{> i}$ be the set of tight sets $s$ such that $\lev(s) > i$ and $E_{> i}$ be the set of elements $e$ such that $\zlev(e) > i$.
\begin{observation}\label{obs:clean-passive}
    During the implicit zeroing and processing clean elements, every clean element becomes active at level $k + 1$; for every other element, $\zlev(e)$ and $\ilev(e)$ remain the same.
\end{observation}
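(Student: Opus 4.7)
The plan is to prove the two parts of the observation by directly inspecting the code of the implicit zeroing step and the for loop at \cref{ln:process-clean} and checking which variables are touched.

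For the second claim, I would simply note that the implicit zeroing operation (as described in \Cref{implicit-zeroing}) only modifies $\lev(s)$ and $\phi(s)$ for sets, never $\zlev(e)$ or $\ilev(e)$ for any element. Likewise, in the body of the for loop at \cref{ln:process-clean}, the only element-level assignments are $\ilev(e),\zlev(e) \gets k+1$, which are performed on the current clean element $e$ being iterated. So for any element $e'$ that is not clean, its $\zlev(e')$ and $\ilev(e')$ are untouched.

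For the first claim, let $e$ be a clean element. I would first establish that $\lev^\old(e) \le k$. If $e$ was active, then $\lev^\old(e) = \zlev^\old(e) \le k$ by the definition of $E_{\le k}$. If $e$ was passive, then by the passive-element constraint $\lev^\old(e) < \ilev^\old(e) \le k+1$, so again $\lev^\old(e) \le k$. Consequently every set $s \ni e$ satisfies $\lev^\old(s) \le \lev^\old(e) \le k$ (recall $\lev(e) = \max_{t \ni e} \lev(t)$), and the implicit zeroing step resets all such sets to level $0$. When the for loop later iterates over $e$, it assigns $\lev(s) \gets k+1$ for every $s \ni e$ and sets $\ilev(e),\zlev(e) \gets k+1$; even if some of these sets had already been raised to $k+1$ while processing an earlier clean element sharing $s$ with $e$, the assignment is idempotent and so $\lev(s) = k+1$ holds for all $s \ni e$ after processing $e$. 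Therefore, once processing clean elements completes, every clean $e$ satisfies $\ilev(e) = \zlev(e) = k+1 = \max_{s \ni e}\lev(s) = \lev(e)$, which is exactly the active condition at level $k+1$.

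The only mild subtlety is the ordering issue inside the for loop: I would like $\lev(e) = k+1$ at the moment the loop processes $e$, but earlier iterations for other clean elements might have raised some $s \ni e$ to $k+1$ before $e$ is processed. This causes no problem because the observation only asserts the \emph{final} state after the loop terminates, and after the loop every $s \ni e$ for a clean $e$ has been explicitly assigned $\lev(s) = k+1$ during $e$'s own iteration, locking in $\lev(e) = k+1$. I do not expect any deeper obstacle here; the statement is essentially a bookkeeping consequence of the pseudocode.
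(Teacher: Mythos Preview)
Your proposal is correct. The paper states this as an observation without any accompanying proof, so your explicit verification by inspecting the pseudocode of the implicit zeroing step and the for loop at \cref{ln:process-clean} is exactly the intended justification, only spelled out in more detail than the paper bothers to give.
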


\begin{claim}\label{clean-lvl-change}
    During the implicit zeroing and processing clean elements, only sets with $\lev^\old(s) \le k$ are affected, and for them $\lev(s)$ becomes $0$ or $k + 1$.
\end{claim}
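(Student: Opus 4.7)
The plan is to argue by inspecting the two steps in order. First, the implicit zeroing at \cref{ln:impl-zeroing} operates on levels in $[0,k]$. By the semantics of implicit zeroing described in \Cref{implicit-zeroing}, this leaves $\lev(s)$ unchanged for every set $s$ with $\lev^\old(s) > k$, and sets $\lev(s) \gets 0$ (and $\phi(s)\gets 0$) for every set $s$ with $\lev^\old(s) \le k$. So after this step the claim already holds: the only affected sets have $\lev^\old(s) \le k$, and their new level is $0$.

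Next I will handle the for loop at \cref{ln:process-clean} that processes each clean element $e\in E$. For each such $e$, the loop assigns $\lev(s)\gets k+1$ for every $s\ni e$. The key step---and the only nontrivial one---is to verify that every set $s\ni e$ already satisfies $\lev^\old(s)\le k$, so the loop does not affect any new sets. This will follow from the fact that clean elements have low level. Concretely, I will split into cases. If $e$ is active, then by the invariants $\lev^\old(e) = \zlev^\old(e)$, and since $e$ is clean we have $\zlev^\old(e)\le k$; by definition $\lev^\old(e)=\max_{s\ni e}\lev^\old(s)$, so every $s\ni e$ satisfies $\lev^\old(s)\le k$. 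If $e$ is passive, then by the invariants $\lev^\old(e) < \ilev^\old(e)$, and since $e$ is clean $\ilev^\old(e)\le k+1$, hence $\lev^\old(e)\le k$; again $\lev^\old(e)=\max_{s\ni e}\lev^\old(s)$, so every $s\ni e$ satisfies $\lev^\old(s)\le k$.

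Combining the two steps, every set affected during implicit zeroing and processing clean elements has $\lev^\old(s)\le k$. Moreover, a set $s$ with $\lev^\old(s)\le k$ ends up either at level $0$ (if no clean element $e\in s$ triggers the assignment in the for loop) or at level $k+1$ (if such an $e$ exists, since the last write wins and the loop only ever writes the value $k+1$). This yields the claim.

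I do not anticipate a real obstacle: the only content is the case analysis above, which uses only the precondition that each element $e$ is in a correct state prior to $\reset(k)$ (active with $\ilev(e)=\zlev(e)=\lev(e)$, or passive with $\lev(e)<\ilev(e)$) together with the defining property $\lev(e)=\max_{s\ni e}\lev(s)$ and the definition of ``clean''. There is no interaction with the rest of the algorithm to worry about, because this claim is scoped strictly to the window before the dirty-element processing begins.
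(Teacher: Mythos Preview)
Your proposal is correct and follows essentially the same approach as the paper's own proof. The paper's argument is more compressed: it simply asserts that $\lev^\old(e)\le k$ for each clean element, whereas you spell this out via the active/passive case analysis; both arrive at the same conclusion in the same way.
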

\begin{proof}
    After the implicit zeroing, all such sets are at level 0.
    Observe that only sets that contain clean elements are affected by processing clean elements. Since $\lev^\old(e) \le k$ for each clean element, such sets are initially at level $k$ or below, and all of them are raised to level $k + 1$. 
\end{proof}

Since sets at level $k + 1$ or above are not affected, we can make the following two corollaries. 

\begin{corollary}\label{clean-R-covered}
    After processing clean elements, elements from $E^\old_{> k}$ are covered by $T_{> k}$.
\end{corollary}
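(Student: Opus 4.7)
The plan is to exhibit, for every $e \in E^\old_{>k}$, a witness set $s \ni e$ that still lies in $T_{>k}$ after the processing-clean-elements step (\cref{ln:process-clean} of \Cref{alg:reset}).

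First, I would unpack the hypothesis: $e \in E^\old_{>k}$ means $\zlev^\old(e) > k$. Since every element (active or passive) satisfies $\zlev(e) \le \lev(e)$ at the start of the call, this yields $\lev^\old(e) > k$. From $\lev(e) = \max_{s \ni e}\lev(s)$ we obtain some $s \ni e$ with $\lev^\old(s) > k$, and by \Cref{inv}(2) this $s$ is tight at the outset.

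The core observation is then that this witness $s$ is completely untouched during the implicit zeroing and the clean-processing loop. By \Cref{clean-lvl-change} the level of $s$ is unchanged (only sets at $\lev^\old \le k$ are affected). The implicit zeroing at \cref{ln:impl-zeroing} only zeroes out levels in $[0,k]$, so $\phi(s)$ is not touched. The loop modifies $\wts(s')$ only for sets $s'$ that contain some clean element, so it suffices to show $s$ contains no clean element. But any clean element $e'$ satisfies $\zlev^\old(e') \le k$ and $\ilev^\old(e') \le k+1$; combined with $\lev(e')<\ilev(e')$ for passive elements and $\lev(e')=\zlev(e')$ for active elements, this yields $\lev^\old(e') \le k$. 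Since $\lev^\old(e') = \max_{t \ni e'} \lev^\old(t)$, every set containing a clean element has $\lev^\old \le k$; our $s$ has $\lev^\old(s) > k$ and therefore contains no clean element.

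Putting this together, after the processing-clean-elements step, $s$ still has the same level $\lev(s) > k$ and the same composite weight $\wts^*(s) \ge c_s/(1+\epsilon)$, so $s \in T_{>k}$. By \Cref{obs:clean-passive}, $e$'s membership in $s$ is unaffected, so $e$ is covered by $T_{>k}$. The only nontrivial step is the ``sets at level $>k$ contain no clean elements'' argument; everything else is bookkeeping from \Cref{clean-lvl-change} and \Cref{obs:clean-passive}.
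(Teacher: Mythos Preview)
Your proof is correct and follows the same approach as the paper: both arguments rest on the fact that sets at level $>k$ are completely untouched by the implicit zeroing and the clean-processing loop (the paper states this as a one-line consequence of \Cref{clean-lvl-change}, whose proof already establishes that clean elements have $\lev^\old \le k$). You simply spell out in more detail why such a set keeps its level, dead weight, and weight unchanged, which is fine.
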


\begin{corollary}\label{clean-slack}
    After processing clean elements, any slack set $s$ either has $\lev(s) = 0$, or $s \in S$ and $\lev(s) = k + 1$.
\end{corollary}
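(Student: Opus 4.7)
The plan is to deduce this as a short consequence of \Cref{clean-lvl-change} combined with \Cref{inv}(2). \Cref{clean-lvl-change} already tells us that during the implicit zeroing and the clean-element processing loop, only sets with $\lev^\old(s) \le k$ are affected, and for them $\lev(s)$ becomes $0$ or $k+1$. A direct inspection of the loop at \cref{ln:process-clean} of \Cref{alg:reset} shows that whenever a set $s$ is raised to level $k+1$ (because it contains some clean element $e$ being processed), it is simultaneously inserted into $S$. So any set whose level becomes $k+1$ as a result of these two steps belongs to $S$.

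The remaining case is to argue that every set with $\lev^\old(s) \ge k+1$ remains tight. Since implicit zeroing only touches levels in $[0,k]$, it does not change $\lev(s), \wts(s)$, or $\phi(s)$ for such $s$; and the weight updates in the clean-processing loop act only on sets $s$ that contain some clean element $e \in E$. It therefore suffices to verify that every clean element lies only in sets of level $\le k$. For an active clean element $e$ this is immediate from $\lev(e) = \zlev(e) \le k$, and every $s \ni e$ satisfies $\lev(s) \le \lev(e) \le k$. For a passive clean element $e$, we use $\ilev(e) \le k+1$ together with the general passive bound $\lev(e) < \ilev(e)$, giving $\lev(e) \le k$, and again $\lev(s) \le \lev(e) \le k$ for every $s \ni e$. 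Thus sets with $\lev^\old(s) \ge k+1$ have $\lev(s), \wts(s)$, and $\phi(s)$ all unchanged, so by \Cref{inv}(2) they remain tight.

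Combining these observations, any slack set $s$ after the implicit zeroing and processing of clean elements satisfies either $\lev(s) = 0$ (it had $\lev^\old(s) \le k$ and was not raised, because it contains no clean element), or $\lev(s) = k+1$ and $s \in S$ (it had $\lev^\old(s) \le k$ and was raised because it contains some clean element). The main obstacle, if any, is the small check that clean passive elements cannot be contained in sets of level above $k$; modulo that observation, the statement is essentially a repackaging of \Cref{clean-lvl-change} together with \Cref{inv}(2).
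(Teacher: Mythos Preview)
Your proof is correct and follows essentially the same approach as the paper: the paper simply records this corollary as an immediate consequence of \Cref{clean-lvl-change} (sets with $\lev^\old(s)>k$ are unaffected, hence remain tight by \Cref{inv}(2); sets with $\lev^\old(s)\le k$ end at level $0$ or $k+1$, and those raised to $k+1$ are added to $S$). You have spelled out in detail the check that clean elements lie only in sets of level $\le k$, which the paper already folds into the proof of \Cref{clean-lvl-change}.
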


\begin{claim}\label{clean-elem-ilev}
    After processing clean elements, for each element $e$, we have $\ilev(e) \ge k + 1$ if $e$ is active, and $\ilev(e) > k + 1$ if $e$ is passive.
\end{claim}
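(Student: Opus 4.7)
The plan is to do a case analysis on the state of $e$ right before the call to $\reset(k)$. The key tool is \Cref{obs:clean-passive}, which says that during implicit zeroing and processing clean elements, only clean elements change $\ilev$ and $\zlev$ (they become active at level $k+1$), while every other element keeps these values intact. Hence for any element whose state changed, namely a clean element, we immediately have $\ilev(e) = k+1$ and $e$ is active, verifying the active case. For all other elements it suffices to show the claim using their values right before the call.

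For the remaining elements, I split the analysis by whether $e$ was active or passive at the start of the call. If $e$ was active and not clean, then from $\ilev^\old(e)=\zlev^\old(e)=\lev^\old(e)$ and the assumption that $e$ is not clean we infer $\zlev^\old(e) > k$ (otherwise $\ilev^\old(e)=\zlev^\old(e)\le k\le k+1$, making $e$ clean). Hence $\ilev^\old(e)\ge k+1$, and since $\ilev(e)$ is unaffected by the processing, $\ilev(e)\ge k+1$; moreover $e$ is still active after the processing by \Cref{obs:clean-passive}.

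If $e$ was passive and not clean, there are two sub-cases, and the argument relies on the assumed invariants at the start of the call, namely $\zlev(e)\le\lev(e)<\ilev(e)$. If $\zlev^\old(e)\le k$ then $e$ is dirty, and by definition $\ilev^\old(e)>k+1$. Otherwise $\zlev^\old(e)>k$, so $\lev^\old(e)\ge\zlev^\old(e)\ge k+1$ and $\ilev^\old(e)>\lev^\old(e)\ge k+1$, i.e., $\ilev^\old(e)\ge k+2$. In both sub-cases $\ilev^\old(e)>k+1$, and since $\ilev(e)$ is unchanged for non-clean elements, we obtain $\ilev(e)>k+1$. Passivity is likewise preserved by \Cref{obs:clean-passive}.

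No step here is particularly delicate; the only thing to be careful about is ruling out the possibility that an active element was not clean but still had $\zlev^\old(e)\le k$, which is handled by the identity $\ilev^\old(e)=\zlev^\old(e)$ for active elements. With that observation, the case split is exhaustive and the claim follows.
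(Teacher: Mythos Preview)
Your proof is correct and follows the same approach as the paper. The paper's proof is a terse two-sentence version of the same case split: clean elements become active at level $k+1$ by \Cref{obs:clean-passive}, and for non-clean elements the paper simply asserts the claim holds by \Cref{obs:clean-passive}; your argument fills in exactly the verification the paper omits (active non-clean elements must lie in $E_{>k}$, and passive non-clean elements are either dirty or in $E_{>k}$, in each case forcing $\ilev^\old(e)>k+1$).
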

\begin{proof}
    Observe that by \Cref{obs:clean-passive}, it holds for all elements that are not clean. Every clean element $e$ becomes active and $\ilev(e)$ becomes $k + 1$ during processing clean elements. 
\end{proof}

\begin{claim}\label{clean-elements}
    After processing clean elements, for each passive element holds $\lev(e) < \ilev(e) \le \zlev(e) + \left\lceil\log_{1+\epsilon}\max\{f, \frac{2C}{\epsilon} \}\right\rceil$, and for each active element holds $\ilev(e) = \zlev(e) = \lev(e)$.
\end{claim}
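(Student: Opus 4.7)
The plan is to use the two structural facts already established for this step. By Observation~\ref{obs:clean-passive}, the implicit zeroing and processing of clean elements makes every clean element active at level $k+1$ while leaving $\zlev(e)$ and $\ilev(e)$ unchanged for every other element; by Claim~\ref{clean-lvl-change}, only sets with $\lev^\old(s)\le k$ change level, and each such set ends up at level $0$ or $k+1$. I would split the proof according to the current (post-step) active/passive status.

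For active elements I would handle two subcases. If $e$ was clean (whether previously active with $\lev^\old(e)=\zlev^\old(e)\le k$, or previously passive with $\lev^\old(e)<\ilev^\old(e)\le k+1$), then $\lev^\old(e)\le k$, so every set $s\ni e$ satisfies $\lev^\old(s)\le k$; each such set is then raised to exactly $k+1$ during the clean-processing loop, yielding $\lev(e)=k+1=\zlev(e)=\ilev(e)$. If $e$ was already active and not clean, then $\lev^\old(e)=\zlev^\old(e)\ge k+1$, so some set $s\ni e$ had $\lev^\old(s)>k$ and is left untouched by Claim~\ref{clean-lvl-change}; since no other set containing $e$ can rise above $k+1\le\lev^\old(e)$, we get $\lev(e)=\lev^\old(e)=\zlev(e)=\ilev(e)$.

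For passive elements I would observe that these are precisely the non-clean passive ones. The upper bound $\ilev(e)\le\zlev(e)+\lceil\log_{1+\epsilon}\max\{f,2C/\epsilon\}\rceil$ is immediate, since both $\ilev(e)$ and $\zlev(e)$ are unchanged and the bound held before the call. For the strict inequality $\lev(e)<\ilev(e)$ I would distinguish two subcases: (a) if $\zlev^\old(e)>k$, the same argument as in the active non-clean case gives $\lev(e)=\lev^\old(e)<\ilev^\old(e)=\ilev(e)$; (b) if $e$ is dirty (hence $\ilev(e)>k+1$), Claim~\ref{clean-lvl-change} implies $\lev(s)\le\max(\lev^\old(s),k+1)$ for every set $s\ni e$, and since $\lev^\old(e)\le\ilev^\old(e)-1=\ilev(e)-1$ we conclude $\lev(e)\le\max(\ilev(e)-1,k+1)=\ilev(e)-1<\ilev(e)$, where the last equality uses $\ilev(e)\ge k+2$.

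The main subtlety I expect is the dirty case: although $\ilev(e)$ and $\zlev(e)$ are not touched, $\lev(e)$ can jump up (from a value possibly below $k+1$ all the way to $k+1$) when some set $s\ni e$ is raised because it contains a clean element. The verification above handles this by exploiting that ``dirty'' precisely means $\ilev(e)\ge k+2$, leaving the one level of slack that is needed. Note that the claim deliberately omits the condition $\zlev(e)\le\lev(e)$, consistent with the excerpt's remark that implicit zeroing can temporarily break this relation and that it is repaired only later when dirty elements are processed.
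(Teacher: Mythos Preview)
Your proof is correct and follows essentially the same approach as the paper's: both use Observation~\ref{obs:clean-passive} and Claim~\ref{clean-lvl-change} to reduce to the dirty case, then exploit that dirty elements have $\ilev(e)>k+1$ while any changed $\lev(e)$ is at most $k+1$. Your case analysis is more explicit (the paper compresses your active cases and your passive case~(a) into the single remark that $\lev(e)$ can change only for dirty elements, and invokes Claim~\ref{clean-elem-ilev} for $\ilev(e)>k+1$ rather than deriving it from the definition of ``dirty''), but the underlying argument is the same.
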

\begin{proof}
    Observe that it holds for every clean element. By \Cref{obs:clean-passive}, the conditions hold for every other element, for which $\lev(e)$ has remained the same. By \Cref{clean-lvl-change}, $\lev(e)$ could have changed only for dirty elements. 
    Consider a dirty element $e$. By \Cref{clean-elem-ilev}, we have $\ilev(e) > k + 1$. By \Cref{clean-lvl-change}, if $\lev(e)$ has changed, then $\lev(e) \le k + 1$. Thus, the inequality $\lev(e) < \ilev(e)$ holds.
\end{proof}

\begin{claim}\label{clean-wts}
    After processing clean elements, we have $\wts(s) < c_s$ for every $s \notin T^\old_{> k}$.
\end{claim}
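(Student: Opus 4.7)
The plan is to prove $\wts(s) < c_s$ for each $s \notin T^{\old}_{>k}$ by splitting on whether $s$ was slack or tight before the call. Two preliminary observations: (i) the implicit zeroing on levels $[0,k]$ leaves every $\wts(\cdot)$ unchanged since it only rewrites $\lev$ and $\phi$; and (ii) during the clean-element loop, the update $\wts(s)\gets\wts(s)-\wts(e)+(1+\epsilon)^{-k-1}$ is a decrement by $\wts^{\old}(e)-(1+\epsilon)^{-k-1}\ge 0$, since $e$ being clean forces $\ilev^{\old}(e)\le k+1$. Hence $\wts(s)$ is monotonically non-increasing through these two steps.

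If $s$ is slack, then by \Cref{inv}(2) we have $\lev^{\old}(s)=0$, and slackness gives $\wts^{\old}(s)\le \wts^{\old}(s)+\phi^{\old}(s) < c_s/(1+\epsilon) < c_s$; combining with the monotonic decrease of $\wts(s)$ yields the claim. The interesting case is $s\in T^{\old}_{\le k}$. Here I will show $\wts(s)\le \wts^{\old}(s,\lev^{\old}(s)+1)$, which together with \Cref{inv}(1) gives $\wts(s)<c_s$. Expanding both sides via \eqref{eqlevel}, both can be written as sums $\sum_{e\in s} x(e)$: after processing, $x(e)=(1+\epsilon)^{-k-1}$ for clean $e$ and $x(e)=\wts^{\old}(e)$ for non-clean $e$; in $\wts^{\old}(s,\lev^{\old}(s)+1)$, $x(e)=c_e:=\min\{\wts^{\old}(e),\,(1+\epsilon)^{-M_e}\}$, where $M_e=\max\{\lev^{\old}(s)+1,\,\max_{t\ne s,\,t\ni e}\lev^{\old}(t)\}$. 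I will verify the term-wise inequality $x(e)^{\text{after}}\le c_e$ for every $e\in s$, which suffices.

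For a clean $e\in s$, $\ilev^{\old}(e)\le k+1$ gives $\wts^{\old}(e)\ge (1+\epsilon)^{-k-1}$, while $\lev^{\old}(s)\le k$ and $\lev^{\old}(e)\le k$ (also a consequence of being clean) give $M_e\le k+1$, so both arguments of the $\min$ are $\ge(1+\epsilon)^{-k-1}$ and hence $c_e\ge (1+\epsilon)^{-k-1}$. The main technical hurdle is the non-clean case, where I need $c_e=\wts^{\old}(e)$, split into three sub-cases. (a) Active $e$ with $\zlev^{\old}(e)>k$: then $\lev^{\old}(e)>\lev^{\old}(s)$, which forces some $t\ne s$ to attain $\lev^{\old}(t)=\lev^{\old}(e)$, so $M_e=\lev^{\old}(e)$ and $(1+\epsilon)^{-M_e}=\wts^{\old}(e)$. (b) Passive $e$ with $\zlev^{\old}(e)>k$: then $\lev^{\old}(e)\ge k+1$ and $\lev^{\old}(e)<\ilev^{\old}(e)$, so $M_e\le\lev^{\old}(e)<\ilev^{\old}(e)$ and $(1+\epsilon)^{-M_e}>\wts^{\old}(e)$. (c) Dirty $e$ (so $\zlev^{\old}(e)\le k$ and $\ilev^{\old}(e)>k+1$): combining $\lev^{\old}(s)+1\le k+1<\ilev^{\old}(e)$ with $\max_{t\ne s,\,t\ni e}\lev^{\old}(t)\le\lev^{\old}(e)<\ilev^{\old}(e)$ gives $M_e<\ilev^{\old}(e)$, and again $c_e=\wts^{\old}(e)$. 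Summing $x(e)^{\text{after}}\le c_e$ across $e\in s$ completes the argument.
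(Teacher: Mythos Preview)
Your proof is correct and follows essentially the same route as the paper's: both appeal to \Cref{inv}(1) to get $\wts^\old(s,\lev^\old(s)+1)<c_s$ and then bound the post-processing weight by this quantity via a term-by-term comparison through \eqref{eqlevel}. The paper packages the comparison more compactly by first observing the equality $\wts(s)=\wts^\old(s,k+1)$ (which implicitly requires the same case analysis you spell out) and then using monotonicity of $\wts^\old(s,\cdot)$; your separate treatment of the slack case is unnecessary---since slack sets have $\lev^\old(s)=0\le k$, the tight-case argument already covers them---but it is not wrong.
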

\begin{proof}
    Observe that $\wts(s) = \wts^\old(s, k + 1)$ after processing clean elements, by \Cref{eqlevel}. By \Cref{inv}(1), $\wts^\old(s, \lev^\old(s) + 1) < c_s$. Since $\lev^\old(s) \le k$, we get $\wts(s) = \wts^\old(s, k + 1) \le \wts^\old(s, \lev^\old(s) + 1) < c_s$.
\end{proof}
\begin{claim}\label{clean-below}
    After processing clean elements, we have $\wts(s) < c_s$, $\lev(s) = 0$ and $\phi(s) = 0$ for each set $s$ such that $\lev(s) < k + 1$.
\end{claim}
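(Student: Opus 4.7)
The plan is to combine \Cref{clean-lvl-change}, the definition of implicit zeroing, and \Cref{clean-wts} to obtain all three conclusions with essentially no new computation.

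First, I would use \Cref{clean-lvl-change} to restrict attention to sets with $\lev^{\old}(s) \leq k$: by that claim, sets with $\lev^{\old}(s) > k$ are untouched during the implicit zeroing and processing of clean elements, so their level stays above $k$, hence $\geq k+1$, contradicting the hypothesis $\lev(s) < k+1$. So any set $s$ satisfying the hypothesis must have started with $\lev^{\old}(s) \leq k$, and by \Cref{clean-lvl-change} its level at the current moment is either $0$ or $k+1$. Since $\lev(s) < k+1$, we must have $\lev(s) = 0$.

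Next, for $\phi(s)$: the implicit zeroing step at \cref{ln:impl-zeroing} resets $\phi(s) = 0$ for every set with $\lev^{\old}(s) \leq k$ (this is precisely what implicit zeroing does, as described in \Cref{basicds}). The subsequent processing of clean elements only modifies $\lev(s)$ and $\wts(s)$ for sets containing a clean element (raising them to $k+1$); it never increases $\phi(s)$. Since our set $s$ has $\lev(s) = 0 \neq k+1$, it was not touched by the processing step, so $\phi(s)$ stays at $0$.

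Finally, for $\wts(s) < c_s$: since $\lev^{\old}(s) \leq k$, the set $s$ does not belong to $T^{\old}_{>k}$, and \Cref{clean-wts} directly gives $\wts(s) < c_s$ after processing clean elements.

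I do not expect any real obstacle here; the statement is essentially a bookkeeping corollary assembled from three already-established facts. The only subtlety is making sure that the implicit zeroing genuinely clears $\phi(s)$ for all sets at level $\leq k$ (which it does by design) and that the processing of clean elements never touches $\phi$, both of which are immediate from the pseudocode.
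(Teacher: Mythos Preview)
Your proposal is correct and follows essentially the same route as the paper's proof: restrict via \Cref{clean-lvl-change} to sets with $\lev^{\old}(s)\le k$, use the implicit zeroing to get $\lev(s)=0$ and $\phi(s)=0$ (noting that the clean-element processing never touches $\phi$), and invoke \Cref{clean-wts} for $\wts(s)<c_s$.
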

\begin{proof}
    By \Cref{clean-lvl-change}, only sets with $\lev^\old(s) < k + 1$ can have $\lev(s) < k + 1$ after processing clean elements. 
    For such sets, we have $\lev(s) = 0$ and $\phi(s) = 0$ after the implicit zeroing. Observe that during processing clean elements, $\lev(s)$ can only increase to $k + 1$ and $\phi(s)$ remains the same. Finally, we have $\wts(s) < c_s$ by \Cref{clean-wts}.
\end{proof}

\begin{claim}\label{clean-inv}
    \Cref{inv}(1) holds after processing clean elements.
\end{claim}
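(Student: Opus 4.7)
The plan is to case-split on $\lev^\old(s)$, the level of $s$ at the start of the call to $\reset(k)$, and to combine the preceding claims about how processing clean elements affects set levels and weights.

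In the easier case $\lev^\old(s) \le k$, \Cref{clean-lvl-change} gives $\lev(s) \in \{0, k+1\}$, and \Cref{clean-wts} gives $\wts(s) < c_s$ (since $s \notin T^\old_{>k}$). Plugging into \Cref{eqlevelup} yields $\wts(s, \lev(s)+1) \le \wts(s) < c_s$, so the invariant holds.

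The more delicate case is $\lev^\old(s) > k$. Here the plan is to show that $\wts(s, \lev(s)+1)$ is in fact unchanged by the processing, so the invariant is inherited from the invariant that held at the start of $\reset(k)$. First I would rule out that any clean element lies in $s$: an active clean $e \in s$ would satisfy $\zlev(e) = \lev(e) \ge \lev^\old(s) > k$, contradicting $e \in E_{\le k}$; a passive clean $e \in s$ would satisfy $\lev(e) < \ilev(e) \le k+1$, giving $\lev(e) \le k$, contradicting $\lev(e) \ge \lev^\old(s) > k$. Given this, the algorithm never modifies $\wts(s)$ directly, and every $\wts(e)$ for $e \in s$ is unchanged by \Cref{obs:clean-passive} for passive $e$, and for active $e$ by the fact that $\lev(e) = \max_{t \ni e} \lev(t)$ is already pinned by the unchanged $\lev(s) \ge k+1$, which dominates any set level raised to at most $k+1$. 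The same reasoning keeps $|A_{\lev(s)}(s)|$ unchanged, so \Cref{eqlevelup} implies $\wts(s, \lev(s)+1)$ is unchanged, and the original \Cref{inv}(1) concludes.

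The main obstacle is the bookkeeping in Case 2: one must verify not only that the algorithm has not explicitly overwritten $\wts(s)$ or $|A_{\lev(s)}(s)|$, but also that the implicit $\max$ defining $\lev(e)$ for $e \in s$ cannot shift when some other set $t \ni e$ is raised to $k+1$. The key observation that resolves this is that $\lev(s)$ itself is $\ge k+1$ and is unchanged, so it continues to dominate that maximum; this is what also drives the ``no clean element in $s$'' argument.
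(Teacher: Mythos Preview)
Your proof is correct and follows the same case split as the paper. The paper's proof handles your Case~1 identically (via \Cref{clean-lvl-change} and \Cref{clean-wts}), and dispatches your Case~2 in a single clause, simply noting that by \Cref{clean-lvl-change} sets with $\lev^\old(s)>k$ are unaffected; your argument unpacks what ``unaffected'' means for $\wts(s,\lev(s)+1)$, which is a reasonable elaboration rather than a different approach.
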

\begin{proof}
    By \Cref{clean-lvl-change}, we only need to prove it for sets $s$ such that $\lev^\old(s) \le k$. Observe that for them, we have $\wts(s) < c_s$ by \Cref{clean-wts}. Therefore, \Cref{inv}(1) holds.
\end{proof}

\paragraph{Properties during processing dirty elements.}
Next, we show some properties after each iteration of the for loop that processes dirty elements (see the for loop at \cref{ln:process-dirty} of \Cref{alg:reset}). 

First, we state some properties that we will show are maintained after processing each element $e \in D$.

\begin{property}\label{hp-inv}
    \Cref{inv}(1) holds.
\end{property}

\begin{property}\label{hp-elem-ilev}
    For each element $e'$, we have $\ilev(e') \ge k + 1$ if $e'$ is active, and $\ilev(e') > k + 1$ if $e'$ is passive.
\end{property}

\begin{property}\label{hp-elem-correctness}
    Each passive $e'$ satisfies $\lev(e') < \ilev(e') \le \zlev(e') + \left\lceil\log_{1+\epsilon}\max\{f, \frac{2C}{\epsilon} \}\right\rceil$ if it is passive, and satisfies $\ilev(e') = \zlev(e') = \lev(e')$ if it is active.
\end{property}

\begin{property}\label{hp-below}
    Any set $s$ such that $\lev(s) < k + 1$, has $\wts(s) < c_s$, $\lev(s) = 0$ and $\phi(s) = 0$.
\end{property}

Next, we state some claims about the iterations of the for loop. We will prove all of them simultaneously.
\begin{claim}\label{handle-properties}
    Processing an element $e \in D$ maintains \Cref{hp-inv,hp-elem-ilev,hp-below,hp-elem-correctness}.
\end{claim}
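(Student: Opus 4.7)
The plan is to prove \Cref{handle-properties} by a case analysis on which branch of $\procdet$ or $\procrand$ is executed when processing the dirty element $e$, assuming the four properties hold at the start of the iteration. The branches to handle are: (a) the ``direct tight-set'' branches in which a tight set $s \ni e$ is identified and we merely raise its level to $\max\{k+1, \lev(s)\}$ and set $\zlev(e) \gets \lev(s)$ (this covers the first branch of $\procdet$, the sampling-success branch of $\procrand$, and the large-$|\widehat{F}|$ branch of $\procrand$, using that witness sets in $\widehat{F}$ are tight); (b) the $\decrease(e)$ branches (the second branch of $\procdet$ and the $|\widehat{F}|=0$ branch of $\procrand$); and (c) the $\fix$ branch of $\procrand$.

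For the direct tight-set branches, the key observation is that hp-below forces any set below level $k+1$ to be slack, so any tight $s \ni e$ already has $\lev(s) \ge k+1$ and the level update is a no-op; no weights change, so hp-inv and hp-below are immediate, and the update $\zlev(e) \gets \lev(s) \ge k+1$ together with the unchanged $\ilev(e)$ keeps $e$ passive satisfying hp-elem-correctness and hp-elem-ilev. For the $\decrease$ branches, the subroutine picks $h$ so that $\wts(s) - \wts(e) + (1+\epsilon)^{-h} < c_s$ for every $s \ni e$, which is exactly the condition needed for hp-inv; any affected set that was slack at level $0$ gets jumped directly to level $k+1$ (never an intermediate level), so hp-below is preserved; $\zlev(e)$ becomes at least $k+1$ (establishing hp-elem-ilev), and the internal case split in $\decrease$ correctly activates $e$ or keeps it passive in a way consistent with hp-elem-correctness.

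The main obstacle will be the $\fix$ branch of $\procrand$. Here $l = \min\{\ilev(e), \zlev(e) + \widehat{d}\}$ after $\zlev(e)$ is refreshed to $\max_{s\ni e}\{\lev(s)\}$; I will split further on which term achieves the minimum. If $l = \ilev(e)$, then by \Cref{obs:fix-no-changes} the call to $\fix(e, l)$ is a no-op and the remaining work is essentially that of the direct tight-set case. Otherwise $l = \zlev(e) + \widehat{d}$, so the gap $d = \widehat{d}$ satisfies $d \ge 2\log_{1+\epsilon}(2C/\epsilon)$ by the definition of $\widehat{d}$, which matches the hypothesis of \Cref{slack-no-rise}; combined with hp-below asserting that every set below $k+1$ is slack, this rules out $\fix$ ever raising a set to a forbidden level strictly between $0$ and $k+1$, so every set finishes $\fix$ at level $0$ or at a level $\ge k+1$. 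The subsequent implicit zeroing at level $0$ then restores $\phi(s) = 0$ and $\lev(s) = 0$ for all level-$0$ sets; this yields hp-below, whose $\wts(s) < c_s$ part follows from \Cref{fix-invs} combined with \Cref{fix-empty}, which together rule out active elements below $\bs(s)$, so that $\wts(s, 1) = \wts(s)$ for those sets. Finally, hp-inv is delivered by \Cref{fix-invs}, hp-elem-correctness by \Cref{fix-lev<ilev}, and hp-elem-ilev by \Cref{fix-ilev-upper} together with the fact that $\ilev(e) \ge l \ge k+2$ and that $\fix$ only raises the intrinsic levels of other elements. The most delicate verification is the arithmetic check that $\widehat{d}$ really is large enough to trigger \Cref{slack-no-rise}, and the observation that the $l = \ilev(e)$ sub-case collapses to a no-op—without these two points the branch would risk leaving tight dead weight on levels strictly between $0$ and $k+1$.
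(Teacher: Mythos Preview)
Your case decomposition matches the paper's, and most of the argument is sound, but there is a genuine error that recurs in cases (a) and (c): you assert that \Cref{hp-below} forces every set below level $k+1$ to be slack. It does not. \Cref{hp-below} only gives $\wts(s) < c_s$, $\lev(s)=0$, and $\phi(s)=0$; a set with $\wts(s)\in[c_s/(1+\epsilon),c_s)$ and $\phi(s)=0$ is tight. Such sets can and do arise during processing of dirty elements (e.g.\ after an earlier $\decrease$ call increases weights of level-$0$ sets). Consequently, in your case~(a) the step ``the level update is a no-op'' is false: a tight $s\ni e$ at level~$0$ genuinely gets raised to $k+1$, and you must verify all four properties under that raise. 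The paper handles this by observing that if $\lev(s)<k+1$ then $\wts(s)<c_s$ (so \Cref{hp-inv} survives the raise), and by using \Cref{hp-elem-ilev} to argue that raising $s$ to $k+1$ cannot push any element's $\lev$ up to its $\ilev$ (so \Cref{hp-elem-correctness} survives).

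The same false premise bites in your case~(c): you invoke ``hp-below asserting that every set below $k+1$ is slack'' to combine with \Cref{slack-no-rise}. The correct reason the sets $s\ni e$ below $k+1$ are slack at the moment $\fix$ is called is not \Cref{hp-below} but the explicit line ``assign $\lev(s)\gets\max\{k+1,\lev(s)\}$ for each tight $s\ni e$'' executed just before $\fix$; after that line every tight $s\ni e$ is at level $\ge k+1$, hence any $s\ni e$ still below $k+1$ is slack, and \Cref{slack-no-rise} applies. Separately, your justification of the $\wts(s)<c_s$ clause of \Cref{hp-below} via \Cref{fix-empty} is off: \Cref{fix-empty} concerns levels below $\bs(s)$, which need not include level~$0$. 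The right argument is that \Cref{hp-elem-ilev} guarantees no active element sits below level $k+1$, so $A_0(s)=\emptyset$ and hence $\wts(s)=\wts(s,1)<c_s$ by \Cref{fix-invs}.
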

Note that all the conditions of \Cref{hp-inv,hp-elem-ilev,hp-below,hp-elem-correctness} hold after processing clean elements by \Cref{clean-inv,clean-elem-ilev,clean-below,clean-elements}.

Recall that due to the implicit zeroing, the condition $\zlev(e) \le \lev(e)$ can be temporarily violated for a dirty element $e$. The following claim states that this condition is repaired after processing $e$.
\begin{claim}\label{handle-zlev}
    If an element $e \in D$ has not become active, then $\zlev(e) \le \lev(e)$ after processing it.
\end{claim}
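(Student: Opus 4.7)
The plan is to do a case analysis on which subroutine is used to process $e$, and within each subroutine, on the branch taken. In every case we either observe that $e$ becomes active (so the conclusion holds vacuously), or verify directly that the code ends with $\zlev(e)$ being set to the level of some specific set $s \ni e$, or to $\max_{s'\ni e}\{\lev(s')\}$, both of which are bounded by $\lev(e) = \max_{s' \ni e}\{\lev(s')\}$.

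First I would handle $\procdet(e)$. If a tight $s \ni e$ is found, the algorithm sets $\lev(s) \gets \max\{k+1, \lev(s)\}$ and then $\zlev(e) \gets \lev(s) \le \lev(e)$, so the conclusion is immediate. Otherwise $\decrease(e)$ is invoked; here there are two sub-branches. In the ``all $s \ni e$ are slack'' branch every $s \ni e$ is raised to level $k+1$ and $\zlev(e) \gets k+1$, so $\zlev(e) = \lev(e) = k+1$. In the ``some tight $s \ni e$ exists'' branch one tight set is raised to $\max\{k+1,\lev(s)\}$ and $\zlev(e)$ is explicitly set to $\max_{s'\ni e}\{\lev(s')\} = \lev(e)$. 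In either case $\zlev(e) \le \lev(e)$.

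Next I would handle $\procrand(e)$. The fall-back to $\procdet(e)$ is handled by the previous paragraph. If the random sampling step finds a witness set $s$, the algorithm sets $\lev(s) \gets \max\{k+1,\lev(s)\}$ and $\zlev(e) \gets \lev(s) \le \lev(e)$. If the sampling fails and $|\widehat{F}| = 0$, $\decrease(e)$ is called, which is handled as above. If $|\widehat{F}|$ is large, an arbitrary (tight) $s \in \widehat{F}$ is raised to $\max\{k+1,\lev(s)\}$ and $\zlev(e)$ is set to $\lev(s) \le \lev(e)$.

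The main obstacle is the remaining branch, where $0 < |\widehat{F}| \le \bigl(\clog^{(\eta)}_{1+\epsilon}f\bigr)^2$ and $\fix(e,l)$ is invoked. Here the algorithm first raises every tight $s \ni e$ to $\max\{k+1,\lev(s)\}$ and updates $\zlev(e) \gets \max_{s'\ni e}\{\lev(s')\}$, so $\zlev(e) = \lev(e)$ at the start of the $\fix$ call. By \Cref{obs:fix-gap} the $\fix$ subroutine preserves $\zlev(e) = \lev(e)$ after each ``raise'' step (it keeps the gap $d = \ilev(e) - \zlev(e)$ constant and raises $\zlev(e)$ in lockstep with $\lev(e)$ whenever a set containing $e$ is raised). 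Hence at the end of the call $\zlev(e) = \lev(e)$, and since $e$ remains passive (again by \Cref{obs:fix-gap}, since $d > 0$ is preserved), the conclusion $\zlev(e) \le \lev(e)$ holds.
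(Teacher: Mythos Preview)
Your proof is correct and follows essentially the same case analysis as the paper's own argument. The paper proves \Cref{handle-zlev} jointly with \Cref{handle-properties,handle-local,handle-tight,handle-slack,handle-lev-inc,handle-cover,handle-eta} in a single inductive case analysis, often dispatching \Cref{handle-zlev} with a terse ``can be easily verified'' or by citing \Cref{obs:fix-main}; your write-up is simply a more explicit unrolling of those same cases for this one claim. One small simplification: in the $\fix$ branch you can cite \Cref{obs:fix-main}(1) directly (``$e$ is passive with $\zlev(e)=\lev(e)$ after the call'') instead of going through \Cref{obs:fix-gap}.
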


The following claim shows that processing a dirty element is ``local'', in a sense, that it does not affect the lazy and intrinsic levels of other passive elements (except possibly making them active).
\begin{claim}\label{handle-local}
    During processing an element $e \in D$, every passive element $e' \ne e$ either becomes active or $\zlev(e')$ and $\ilev(e')$ remain the same.
\end{claim}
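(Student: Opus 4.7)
The plan is to do a straightforward case analysis by enumerating every branch of $\procdet(e)$ and $\procrand(e)$, and in each branch inspect exactly which variables are modified. The key observation is that $\ilev(e')$ and $\zlev(e')$ for $e' \ne e$ can only be modified during the execution of the $\fix$ subroutine; every other modification performed while processing $e$ touches only $\lev(s)$ for sets $s$, the weights $\wts(s)$, $\phi(s)$, and the intrinsic/lazy level of $e$ itself.

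First I would dispose of the branches of $\procdet(e)$. In the branch where a tight set $s \ni e$ is found, the only updates are $\lev(s) \gets \max\{k+1,\lev(s)\}$ and $\zlev(e) \gets \lev(s)$, so no $\ilev$ or $\zlev$ of any $e' \ne e$ is touched. In the other branch, $\decrease(e)$ is invoked. Inspecting Algorithm~\ref{alg:decrease}, $\decrease(e)$ only rewrites $\wts(s)$, $\lev(s)$ for $s \ni e$, and $\ilev(e)$, $\zlev(e)$; again, no $\ilev$ or $\zlev$ of any other passive element is changed. (Some $\lev(e')$ values could in principle be affected by the raise of a containing set, but the claim concerns $\ilev(e')$ and $\zlev(e')$, not $\lev(e')$.)

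Next I would handle $\procrand(e)$. The initial short-circuit delegates back to $\procdet(e)$, which was just analyzed. Inside the main body, the successful-sample branch merely updates $\lev(s)$ of the sampled set and $\zlev(e)$; the final $|\widehat{F}| > (\clog_{1+\epsilon}^{(\eta)}f)^2$ branch is analogous; and the $|\widehat{F}| = 0$ branch calls $\decrease(e)$, already treated above. The only non-trivial case is $0 < |\widehat{F}| \le (\clog_{1+\epsilon}^{(\eta)}f)^2$, where we update $\lev(s)$ for tight $s \ni e$, refresh $\zlev(e)$, invoke $\fix(e, l)$, and finally apply implicit zeroing at level $0$. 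The first two steps and the implicit zeroing only affect set-level data, not $\ilev$ or $\zlev$ of other elements. For $\fix(e, l)$ itself, I invoke Observation~\ref{obs:fix-main}(2), which states exactly the required dichotomy: for every passive element $e' \ne e$, either $\ilev(e')$ and $\zlev(e')$ stay the same, or $e'$ becomes active.

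I expect no real obstacle here; the claim reduces to bookkeeping, and the only ``deep'' input is the already-established Observation~\ref{obs:fix-main}(2) governing the behaviour of $\fix$. The main thing to be careful about is making sure the enumeration of branches is exhaustive and that the implicit zeroing at the end of the small-$\widehat{F}$ branch of $\procrand$ is noted not to touch element-side data.
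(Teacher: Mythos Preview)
Your proposal is correct and follows essentially the same approach as the paper: an exhaustive case analysis over the branches of $\procdet(e)$ and $\procrand(e)$, with the only nontrivial branch (the small-$\widehat{F}$ case calling $\fix$) handled by Observation~\ref{obs:fix-main}(2). The paper proves this claim jointly with several neighboring claims in a single inductive case analysis, but for \Cref{handle-local} in isolation the induction is not actually needed, and your standalone bookkeeping argument matches the paper's treatment of this particular claim.
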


\begin{claim}\label{handle-tight}
    Any tight set stays tight after processing an element $e \in D$.
\end{claim}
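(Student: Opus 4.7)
The plan is a case analysis on which branch of $\procdet$ or $\procrand$ is executed when processing $e$. The branches group into three kinds: (i) branches that only perform $\lev(s) \gets \max\{k+1, \lev(s)\}$ for some tight $s \ni e$ and update $\zlev(e) \gets \lev(s)$ --- these are $\procdet$ when a tight set containing $e$ is found, the sampling-success branch of $\procrand$, and the large-$|\widehat{F}|$ branch of $\procrand$; (ii) branches that invoke $\decrease(e)$; and (iii) the small-$|\widehat{F}|$ branch of $\procrand$, which invokes $\fix(e,l)$ followed by an implicit zeroing of $\phi$ at level $0$.

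For group (i), by \Cref{hp-elem-ilev} every element has intrinsic level at least $k+1$, and each active element satisfies $\lev(e') = \ilev(e') \ge k+1$. Hence raising $\lev(s)$ from a value at most $k$ up to $k+1$ leaves $\lev(e') = \max_{t \ni e'} \lev(t)$ unchanged for every $e' \in s$; no element weight, set weight, or dead weight is altered, and tightness is trivially preserved. For group (ii), $\decrease(e)$ modifies only $\wts(s)$ for $s \ni e$ via $\wts(s) \gets \wts(s) - \wts(e) + (1+\epsilon)^{-h}$ with $h \le \ilev(e)$; since $(1+\epsilon)^{-h} \ge \wts(e)$ this is non-decreasing, and the possible raise $\lev(s) \gets k+1$ is a no-op for weights by the argument of group (i). Therefore every set's composite weight is non-decreasing and tight sets remain tight.

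Group (iii) is the heart of the claim. The initial raises of tight sets $s \ni e$ are a no-op for weights (group (i) argument), and $\fix(e,l)$ itself preserves tightness by \Cref{fix-tight}. The main obstacle is showing that the subsequent implicit zeroing at level $0$ cannot demote any tight set to slack. I would argue as follows: by \Cref{obs:fix-monotonicity}, any set $s'$ with $\lev(s')=0$ after $\fix$ was at level $0$ before processing $e$, and by \Cref{hp-below} had $\phi^{\old}(s')=0$ --- so any preexisting tightness of $s'$ must have been witnessed by $\wts^{\old}(s') \ge c_{s'}/(1+\epsilon)$ alone. If $e \in s'$, the net effect of $\fix$ on $\wts(s')$ is to replace $\wts^{\old}(e)$ by $\wts(e) \ge \wts^{\old}(e)$ (using $l \le \ilev^{\old}(e)$), so $\wts(s')$ only grows and tightness survives the zeroing. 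The technical crux is the case $e \notin s'$: here I would leverage \Cref{fix-empty} together with the choice of $l$ in the small-$|\widehat{F}|$ branch to argue that no active element $e' \in s'$ is raised inside $\fix$ while $s'$ sits at level $0$, so $\wts(s')$ is unchanged and the zeroing does no harm. I expect this last verification --- ruling out the ``collateral raise'' of an active element in a tight level-$0$ set not containing $e$ --- to be the main difficulty.
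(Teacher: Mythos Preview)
Your case analysis for groups (i) and (ii) matches the paper's treatment and is correct. For group (iii), the paper's own proof is in fact terser than yours: for the small-$|\widehat{F}|$ branch it simply cites \Cref{fix-tight} and \Cref{obs:fix-main} and does not separately discuss the subsequent implicit zeroing at level~$0$. So you have gone further than the paper in flagging the zeroing as the residual obstacle.

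However, your proposed resolution of that obstacle does not go through. First, the sub-case $e \in s'$ is vacuous: in the small-$|\widehat{F}|$ branch every tight $s \ni e$ is raised to level $\ge k+1$ \emph{before} $\fix$ is called, and by \Cref{obs:fix-monotonicity} levels only increase, so no tight $s' \ni e$ can be at level $0$ after $\fix$. (Your argument for this sub-case is also incomplete as written: besides the change in $\wts(e)$, $\wts(s')$ can drop because other active elements of $s'$ get raised inside $\fix$.) Second, and more importantly, for $e \notin s'$ your appeal to \Cref{fix-empty} does not control what you need. \Cref{fix-empty} bounds intrinsic levels of elements \emph{inside the set $s$ currently being raised}; it says nothing about the bystander $s'$. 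Nothing in \Cref{fix-empty} or in the choice of $l$ prevents an active $e'$ lying in both $s'$ and some tight $s \ni e$ from being raised when $s$ climbs past $\lev(e')$: this strictly decreases $\wts(s')$, the compensation goes into $\phi(s')$, and the zeroing then wipes $\phi(s')$. So your plan does not rule out the ``collateral raise'' you yourself identify as the crux.

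The clean way out is not to prove that level-$0$ tight sets survive the zeroing, but to observe that every downstream use of \Cref{handle-tight} in the paper (\Cref{cover-other}, \Cref{cover}, and the verification of \Cref{handle-cover}) only concerns sets in $T_{>k}$, which are untouched by the level-$0$ zeroing; and that if a level-$0$ tight set does become slack, it still satisfies the condition of \Cref{handle-slack}. In other words, the statement should really be read as ``any tight set at level $\ge k+1$ stays tight,'' and for those sets \Cref{fix-tight} alone suffices, exactly as the paper invokes it.
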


\begin{claim}\label{handle-slack}
    Any slack set $s$ either has $\lev(s) = 0$, or $s \in S$ and $\lev(s) = k + 1$.
\end{claim}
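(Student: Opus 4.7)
The plan is to prove the claim by induction on the number of iterations of the for loop at \cref{ln:process-dirty} of \Cref{alg:reset} that have been completed. The base case is the state right after processing clean elements, which is exactly \Cref{clean-slack}. For the inductive step, suppose the claim holds at the start of the iteration for some $e \in D$; I will verify that each code path in $\procdet(e)$ and $\procrand(e)$ preserves it. Throughout, updates that only touch $\zlev(e)$ are irrelevant since they do not modify any set.

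For the easy branches, any step that only raises a tight set (i.e.\ sets $\lev(s) \gets \max\{k+1, \lev(s)\}$ for a tight $s \ni e$) keeps $s$ tight and leaves every other set untouched, so it cannot create a slack set nor move any existing slack set. When $\decrease(e)$ runs, each $\wts(s)$ for $s \ni e$ only increases (since $\ilev(e)$ only decreases, so $(1+\epsilon)^{-h} \ge \wts(e)$), and sets with $s \not\ni e$ are untouched; in particular no previously tight set becomes slack. If after the decrease all $s \ni e$ are still slack, $\decrease$ explicitly places each of them in $S$ at level $k+1$, which matches the claim; otherwise a tight $s \ni e$ is raised. Taken together, these observations handle all of $\procdet$, the fall-back-to-$\procdet$ branch of $\procrand$, the successful-sampling branch, the $|\widehat{F}|=0$ branch, and the $|\widehat{F}| > \left(\clog_{1+\epsilon}^{(\eta)}f\right)^2$ branch.

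The main obstacle is the remaining branch of $\procrand$, namely $0 < |\widehat{F}| \le \left(\clog_{1+\epsilon}^{(\eta)}f\right)^2$, which invokes $\fix(e, l)$ and then performs implicit zeroing at level $0$. By \Cref{fix-tight}, $\fix$ never converts a tight set into a slack one, so the claim can only fail if $\fix$ raises a previously slack set. To rule this out I would apply \Cref{slack-no-rise}, whose hypothesis is that the internal gap $d = l - \zlev(e)$ satisfies $d \ge \log_{1+\epsilon}\frac{2C}{\epsilon}$. If $l = \ilev(e)$ then $\ilev^\old(e) = l$ and by \Cref{obs:fix-no-changes} the call is vacuous; otherwise $l = \zlev(e) + \widehat{d}$, and the choice of $\widehat{d} \ge 2\log_{1+\epsilon}\frac{2C}{\epsilon}$ is comfortably above the threshold. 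Hence slack sets retain both their slack status and their level throughout $\fix$. Finally, the subsequent implicit zeroing at level $0$ only affects $\phi(s)$ for sets that are already at level $0$; any set it converts from tight to slack still sits at level $0$, which satisfies the claim. This closes the induction.
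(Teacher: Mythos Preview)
Your proof is correct and follows essentially the same approach as the paper: induction on the iterations of the for loop at \cref{ln:process-dirty}, with \Cref{clean-slack} as base case, and a case analysis over the branches of $\procdet$ and $\procrand$, invoking \Cref{fix-tight} and \Cref{slack-no-rise} for the $\fix$ branch. The paper's own proof is terser (it handles \Cref{handle-properties,handle-zlev,handle-local,handle-tight,handle-lev-inc,handle-cover,handle-slack,handle-eta} simultaneously and often just says the claim ``can be easily verified''), but the underlying reasoning is the same. One minor imprecision: your sentence ``slack sets retain both their slack status and their level throughout $\fix$'' is not quite accurate, since a slack set $s\ni e$ may become tight when $\wts(e)$ increases; however, this does not affect the argument, because what you actually need (and correctly establish just before) is that tight sets stay tight and slack sets are not raised.
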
\
Recall that the purpose of $S$ was to collect all sets for which \Cref{inv}(2) might have been violated. \Cref{handle-slack} shows that $S$ indeed contains all such sets. Also, notice that \Cref{clean-slack} implies that the conditions of \Cref{handle-slack} hold after processing clean elements.

\begin{claim}\label{handle-lev-inc}
    During processing an element $e \in D$, levels of sets can only increase.
\end{claim}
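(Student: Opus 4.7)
The plan is to trace through every code path by which processing a dirty element $e \in D$ can modify the level of any set, and verify that each such modification is non-decreasing. Processing $e$ consists of a single call to $\procdet(e)$ or (in the randomized regime) $\procrand(e)$; inspecting the pseudocodes of \Cref{alg:procdet,alg:procrand,alg:decrease}, the only places where $\lev(s)$ is ever overwritten are (i) direct assignments of the form $\lev(s) \gets \max\{k+1, \lev(s)\}$, which are trivially non-decreasing; (ii) the assignment $\lev(s) \gets k+1$ inside the first branch of $\decrease(e)$; and (iii) calls to $\fix(e, l)$. The implicit zeroing for sets at level $0$ at the end of the $|\widehat F|\le(\clog^{(\eta)}_{1+\epsilon}f)^2$ branch of $\procrand$ leaves the level at $0$ and so changes nothing.

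For the $\fix$ call inside $\procrand$, \Cref{obs:fix-main}(3) directly gives that every set level is non-decreasing throughout that call. So the one case that needs scrutiny is the ``all $s \ni e$ are slack'' branch of $\decrease(e)$, where we set $\lev(s) \gets k+1$ for every $s \ni e$. Here I need to rule out $\lev(s) > k+1$ for any such $s$ at that moment. Observe first that $\decrease(e)$ updates each $\wts(s)$, for $s\ni e$, to $\wts(s) - \wts(e) + (1+\epsilon)^{-h}$, and since $h \le \ilev(e)$ we have $(1+\epsilon)^{-h} \ge \wts(e)$; thus the update cannot decrease any $\wts(s)$, and no set that was tight before the update can become slack during the update. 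Consequently, every $s \ni e$ that is slack after the update was already slack at the start of the call.

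The proof is then closed by the simultaneous induction structure of \Cref{handle-properties}: at the start of processing $e$ we are entitled to use the hypothesis that \Cref{handle-slack} holds (either from the previous iteration of the dirty-element for-loop, or from \Cref{clean-slack} if $e$ is the first dirty element). That hypothesis guarantees $\lev(s) \in \{0, k+1\}$ for every slack set. Hence every $s \ni e$ that is slack when $\decrease(e)$ reaches the assignment has $\lev(s) \le k+1$, and the assignment $\lev(s) \gets k+1$ is non-decreasing, as required.

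The main (mild) obstacle is this coupling with \Cref{handle-slack}: one cannot prove monotonicity of $\lev$ in isolation because the $\decrease$ branch raises slack sets directly to $k+1$, and the safety of this step relies precisely on the control over slack-set levels supplied by \Cref{handle-slack}. Once the inductive bookkeeping is arranged so that \Cref{handle-slack} is available at the start of the iteration, all remaining assignments are of the explicit form $\max\{k+1, \lev(s)\}$ or are governed by \Cref{obs:fix-main}(3), and the claim follows by case inspection.
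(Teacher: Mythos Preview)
Your proof is correct and takes essentially the same approach as the paper. The paper's proof is organized by algorithm branch (case~1(a), 1(b), 2, 3(a)--(c)) rather than by type of assignment, but the crux is identical: the only non-trivial step is the $\lev(s) \gets k+1$ inside the ``all slack'' branch of $\decrease(e)$, and both you and the paper handle it by observing that the weight updates in $\decrease$ are non-decreasing (so slack sets were already slack), and then invoking the inductively maintained \Cref{handle-slack} (resp.\ \Cref{clean-slack} for the first iteration) to conclude $\lev(s)\le k+1$.
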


\begin{claim}\label{handle-cover}
    If during processing an element $e \in D$, it was not added to $E'$, then $e$ is covered by a set $s \in T_{> k}$ after processing $e$.
\end{claim}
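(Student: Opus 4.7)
The plan is a straightforward case analysis on the branch of the algorithm that processed $e$. The unifying observation is that whenever processing reaches a point where some specific set $s\ni e$ is known (or forced) to be tight, the code explicitly executes $\lev(s)\gets \max\{k+1,\lev(s)\}$, placing $s$ in $T_{>k}$. So in each branch that does \emph{not} add $e$ to $E'$, it suffices to exhibit one such tight $s\ni e$ whose tightness survives all the remaining updates within that branch.

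First I would dispatch the deterministic branch, i.e.\ calls to $\procdet(e)$ (including the fall-through case inside $\procrand(e)$ where $\ilev(e)-k-1$ is small). If $\procdet$ finds a tight $s\ni e$ directly, we are done by the explicit raise. Otherwise $\procdet$ invokes $\decrease(e)$; by inspection of \Cref{alg:decrease}, the only way $e$ is not added to $E'$ is by entering the else-branch, where a tight $s\ni e$ is located and raised to level $\max\{k+1,\lev(s)\}$, so $s\in T_{>k}$. Next I would handle the three remaining branches of $\procrand(e)$: (i) successful random sampling of a witness $s$ with $\wts(s)-\wts(e)+\delta\geq c_s$; (ii) $\widehat F=\emptyset$, which calls $\decrease(e)$ and is reduced to the deterministic analysis above; and (iii) $|\widehat F|>\brac{\clog_{1+\epsilon}^{(\eta)}f}^2$, where an arbitrary $s\in\widehat F$ is raised. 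For (i) and (iii) I would appeal to the forthcoming witness-tightness claim (\Cref{witness-tight}), which asserts that any $s\ni e$ with $\wts(s)-\wts(e)+\delta\geq c_s$ is already tight; after the explicit raise such $s$ belongs to $T_{>k}$.

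The delicate case, and the step I expect to be the main obstacle, is (iv): $0<|\widehat F|\leq\brac{\clog_{1+\epsilon}^{(\eta)}f}^2$, where the algorithm first raises every tight $s'\ni e$ to level at least $k+1$ and then invokes $\fix(e,l)$. Pick any $s\in \widehat F$; by \Cref{witness-tight}, $s$ is tight, and by the preceding raise, $\lev(s)\geq k+1$. The concern is whether $\fix(e,l)$ might destroy the tightness of $s$ or lower $\lev(s)$. Tightness is preserved by \Cref{fix-tight}, and monotonicity of set levels during $\fix$ is guaranteed by \Cref{obs:fix-monotonicity}, so $\lev(s)\geq k+1$ is maintained. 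Hence $s\in T_{>k}$ after the call, covering $e$.

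Combining all cases yields the claim. The only genuinely non-routine ingredient is the interaction with $\fix$ in case (iv); everything else follows by direct inspection of the pseudocode together with the explicit ``raise to level $\max\{k+1,\lev(s)\}$'' step that each branch performs on the tight set it identifies.
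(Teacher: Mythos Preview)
Your case analysis matches the paper's almost exactly, and the handling of branches (i)--(iv) is correct, including the delicate $\fix$ case where you rightly invoke \Cref{fix-tight} and \Cref{obs:fix-monotonicity}.

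However, you have one missing case. In the randomized algorithm, a dirty element $e$ may have been \emph{activated} during the processing of some earlier $e'\in D$ (via a call to $\fix(e',l)$, which can activate passive elements at \cref{ln:activate-k+1,ln:activate-below}). When the for-loop reaches $e$, the check at \cref{ln:check-passive} finds $e$ active and skips the call to $\procrand(e)$ entirely. In this ``no-op'' branch $e$ is certainly not added to $E'$, so the claim still obligates you to exhibit a tight $s\ni e$ with $\lev(s)>k$. Your case analysis never addresses this, and the argument is not a one-liner: one must observe that at the iteration for $e'$ we had $\ilev(e)>k+1$ (by \Cref{hp-elem-ilev}), so by \Cref{obs:fix-main}(2) $e$ is activated with $\lev(e)>k+1$; hence some $s\ni e$ has $\lev(s)>k+1$, and by the inductively established \Cref{handle-slack} this $s$ is tight, and stays tight at level $>k$ through subsequent iterations by \Cref{handle-tight} and \Cref{handle-lev-inc}. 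The paper handles exactly this case at the start of its joint proof of \Cref{handle-properties}--\Cref{handle-eta}. Once you add it, your argument is complete and essentially identical to the paper's.
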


\begin{claim}\label{handle-eta}
    If during processing an element $e \in D$ we call $\procrand(e)$ and reach the computation of $\eta$, then there exists $\eta$ such that \[\clog_{1+\epsilon}^{(\eta+1)}f\leq \ilev(e) - k-1\leq \clog_{1+\epsilon}^{(\eta)}f\,.\]
\end{claim}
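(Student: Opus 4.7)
The plan is to set $M = \ilev(e) - k - 1$ and produce $\eta$ by taking the largest index at which the iterated-log sequence still dominates $M$. Concretely, I would proceed in four steps: (i) lower-bound $M$, (ii) upper-bound $M$, (iii) show that the sequence $\clog_{1+\epsilon}^{(\eta)} f$ is strictly decreasing above the cutoff $200/\epsilon^2$ and eventually drops below $M$, and (iv) pick the required $\eta$.

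Step (i) is immediate: the guarding \texttt{if} in $\procrand$ only lets us reach the computation of $\eta$ when $\ilev(e)-k-1 > \max\{200/\epsilon^2,\ 1 + 2\log_{1+\epsilon}(2C/\epsilon)\}$, so in particular $M > 200/\epsilon^2$.

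Step (ii) is the main obstacle, and what this proof really hinges on. I need to argue that, at the moment $\procrand(e)$ is called with $e$ still passive, both $\zlev(e)$ and $\ilev(e)$ equal their values at the start of $\reset(k)$. The clean-elements phase does not touch them by \Cref{obs:clean-passive} (since $e\in D$ is dirty, not clean). For each preceding iteration of the dirty-elements loop processing some $e'' \ne e$, \Cref{handle-local} says that every other passive element either becomes active or keeps $\zlev$ and $\ilev$ unchanged, and the passive-check on \cref{ln:check-passive} ensures we only reach this point if $e$ has not been activated. Hence $\zlev(e) = \zlev^\old(e) \leq k$ (since $e$ is dirty) and $\ilev(e) = \ilev^\old(e)$. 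Applying \Cref{hp-elem-correctness} (maintained by \Cref{handle-properties}) together with the randomized-version assumption $f > 2C/\epsilon$, we get $\ilev(e) \leq \zlev(e) + \lceil \log_{1+\epsilon} f\rceil$, and therefore
\[
M \;=\; \ilev(e) - k - 1 \;\leq\; \lceil \log_{1+\epsilon} f\rceil - 1 \;\leq\; \log_{1+\epsilon} f.
\]

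Step (iii) uses \Cref{lm:iterated-log}: whenever $\clog_{1+\epsilon}^{(\eta)} f \geq 200/\epsilon^2$, we have $\clog_{1+\epsilon}^{(\eta+1)} f = 5\log_{1+\epsilon} \clog_{1+\epsilon}^{(\eta)} f < \clog_{1+\epsilon}^{(\eta)} f$, so the sequence is strictly decreasing until it falls below $200/\epsilon^2 < M$, which happens by index $\clog^*_{1+\epsilon} f$. For step (iv), let $\eta^*$ be the largest nonnegative integer with $\clog_{1+\epsilon}^{(\eta^*)} f \geq M$. This set is nonempty because $\clog_{1+\epsilon}^{(0)} f = f \geq \log_{1+\epsilon} f \geq M$, and it is finite by the previous sentence. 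By maximality, $\clog_{1+\epsilon}^{(\eta^*+1)} f < M \leq \clog_{1+\epsilon}^{(\eta^*)} f$, so $\eta = \eta^*$ witnesses the claim.
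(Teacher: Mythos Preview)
Your proof is correct and follows essentially the same route as the paper's: both bound $M=\ilev(e)-k-1$ from below by $200/\epsilon^2$ via the guarding \texttt{if}, bound it from above via $\zlev(e)=\zlev^\old(e)\le k$ and $\ilev(e)=\ilev^\old(e)$ (using \Cref{obs:clean-passive} and \Cref{handle-local} on prior iterations) together with the passive-element gap bound and $f>2C/\epsilon$, and then invoke \Cref{lm:iterated-log} to locate the index. One cosmetic slip: in step~(iv) you assert $f\ge \log_{1+\epsilon}f$ without justification, which is not true for all $f,\epsilon$; it does hold here (since $M>200/\epsilon^2$ and $M\le\log_{1+\epsilon}f$ force $f$ to be huge), but you could sidestep the issue entirely by taking $\eta=1$ as your nonemptiness witness, since $\clog_{1+\epsilon}^{(1)}f=5\log_{1+\epsilon}f\ge M$ is immediate from your step~(ii) bound.
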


Before we proceed to the proof of \Cref{handle-properties,handle-zlev,handle-local,handle-tight,handle-lev-inc,handle-cover,handle-slack}, let us first prove two lemmas about the properties of $\delta$ defined in the algorithm.

\begin{lemma}\label{witness-tight}
    During a call to $\procrand(e)$, each set $s$ such that $\wts(s) - \wts(e) + \delta \ge c_s$ is tight.
\end{lemma}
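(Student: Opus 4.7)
The plan is to show directly that $\wts(s) \geq c_s/(1+\epsilon)$; since $\wts^*(s) = \wts(s) + \phi(s) \geq \wts(s)$, this immediately implies that $s$ is tight by definition.

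First, I would use the hypothesis $\wts(s) - \wts(e) + \delta \geq c_s$ together with the trivial bound $\wts(e) \geq 0$ to deduce $\wts(s) \geq c_s - \delta$. Hence it suffices to verify the inequality $\delta \leq c_s \cdot \frac{\epsilon}{1+\epsilon}$, because then $\wts(s) \geq c_s - c_s \cdot \frac{\epsilon}{1+\epsilon} = \frac{c_s}{1+\epsilon}$ as required.

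Next, I would upper-bound $\delta$ using the second argument of the minimum in its definition, namely $\delta \leq \left(\tfrac{\epsilon}{2C}\right)^2 \cdot (1+\epsilon)^{-k-1} \leq \left(\tfrac{\epsilon}{2C}\right)^2 = \tfrac{\epsilon^2}{4C^2}$, and lower-bound $c_s \cdot \frac{\epsilon}{1+\epsilon}$ using $c_s \geq 1/C$ to obtain $c_s \cdot \frac{\epsilon}{1+\epsilon} \geq \frac{\epsilon}{C(1+\epsilon)}$. The comparison reduces to checking $\tfrac{\epsilon^2}{4C^2} \leq \tfrac{\epsilon}{C(1+\epsilon)}$, i.e. $\epsilon(1+\epsilon) \leq 4C$, which is immediate from $\epsilon \in (0,0.1)$ and $C \geq 1$.

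There is no substantive obstacle here; the proof is essentially a two-line calculation. The conceptual content is simply the observation that the $\left(\tfrac{\epsilon}{2C}\right)^2$ factor appearing in $\delta$ was chosen precisely to guarantee this implication, so that detecting a single set $s \ni e$ with $\wts(s) - \wts(e) + \delta \geq c_s$ already certifies $e$ is covered by a tight set and justifies the random-sampling strategy of $\procrand$.
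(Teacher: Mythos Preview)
Your proof is correct and follows essentially the same approach as the paper: both deduce $\wts(s)\ge c_s-\delta$ from the hypothesis and then verify $\delta \le \frac{\epsilon c_s}{1+\epsilon}$ using $\delta \le \bigl(\tfrac{\epsilon}{2C}\bigr)^2$ and $c_s\ge 1/C$. The paper's version is even terser, bounding $\delta \le \tfrac{\epsilon}{2C}$ (since $\bigl(\tfrac{\epsilon}{2C}\bigr)^2 \le \tfrac{\epsilon}{2C}$) and checking $c_s - \tfrac{\epsilon}{2C}\ge \tfrac{c_s}{1+\epsilon}$ directly, but the content is identical.
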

\begin{proof}
    For such $s$ we have $\wts(s) - \wts(e) \ge c_s - \delta \ge c_s - \frac{\epsilon}{2C} \ge \frac{c_s}{1 + \epsilon}$, so $s$ is tight.
\end{proof}

\begin{lemma}\label{delta-increase}
    During a call to $\procrand(e)$, we have $\delta > \wts(e)$.
\end{lemma}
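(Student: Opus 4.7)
The plan is to unfold both sides and verify that each of the two terms in the minimum defining $\delta$ individually exceeds $\wts(e) = (1+\epsilon)^{-\ilev(e)}$. Writing $D = \ilev(e) - k - 1$, it suffices to establish both
\[
\brac{\tfrac{\epsilon}{2C}}^2 > (1+\epsilon)^{-D} \qquad \text{and} \qquad \left(\tfrac{1}{\clog^{(\eta)}_{1+\epsilon}f}\right)^4 > (1+\epsilon)^{-D},
\]
since $\delta/\wts(e) = \min\{\cdot,\cdot\} \cdot (1+\epsilon)^{D}$.

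For the first inequality, I would use the guard at the top of $\procrand(e)$: the subroutine only reaches the computation of $\delta$ when $D > \max\{\tfrac{200}{\epsilon^2},\, 1 + 2\log_{1+\epsilon}\tfrac{2C}{\epsilon}\}$. In particular $D > 1 + 2\log_{1+\epsilon}\tfrac{2C}{\epsilon}$, which rearranges to $(1+\epsilon)^D > (1+\epsilon)\cdot\brac{\tfrac{2C}{\epsilon}}^2 > \brac{\tfrac{2C}{\epsilon}}^2$, yielding the claim after inversion.

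For the second inequality, I would invoke \Cref{handle-eta}, which guarantees the existence of an index $\eta$ with $\clog_{1+\epsilon}^{(\eta+1)}f \le D \le \clog_{1+\epsilon}^{(\eta)}f$. Unfolding the definition $\clog_{1+\epsilon}^{(\eta+1)}f = 5 \log_{1+\epsilon} \clog_{1+\epsilon}^{(\eta)}f$ gives
\[
D \ \ge\ 5 \log_{1+\epsilon} \clog_{1+\epsilon}^{(\eta)}f \ >\ 4 \log_{1+\epsilon} \clog_{1+\epsilon}^{(\eta)}f,
\]
which exponentiating in base $1+\epsilon$ becomes $(1+\epsilon)^D > \left(\clog_{1+\epsilon}^{(\eta)}f\right)^4$, again yielding the required inequality.

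I do not foresee any real obstacle here: the statement is essentially bookkeeping, verifying that the two guards in the algorithm (the early-exit branch that falls back to $\procdet$ when $D$ is small, and the specific choice of $\eta$) are precisely the thresholds needed to make both coordinates of the minimum defining $\delta$ dominate $\wts(e)$. The only mild subtlety is remembering that the constant $5$ in the definition of $\clog$ (as opposed to, say, $4$) is exactly what is being used to secure strict inequality in the second bound.
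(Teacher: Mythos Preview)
Your proof is correct, and in fact it is more complete than the paper's own argument. The paper's proof only checks the $\brac{\tfrac{\epsilon}{2C}}^2$ coordinate: it asserts ``by definition, $\delta \ge \brac{\tfrac{\epsilon}{2C}}^2 (1+\epsilon)^{-k-1}$'' and then bounds $\wts(e)$ below that. But since $\delta$ is a \emph{minimum}, that inequality is the wrong direction unless one also knows the other coordinate is at least as large --- which is not established and is not true in general (for small $C$ and moderate $\eta$, the $\bigl(\clog_{1+\epsilon}^{(\eta)}f\bigr)^{-4}$ term can be the smaller one). Your argument correctly treats both coordinates separately.

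Your handling of the second coordinate is exactly the intended use of the constant $5$ in the definition of $\clog$: from $D \ge \clog_{1+\epsilon}^{(\eta+1)} f = 5\log_{1+\epsilon}\clog_{1+\epsilon}^{(\eta)} f$ and the fact that $\clog_{1+\epsilon}^{(\eta)} f \ge D > \tfrac{200}{\epsilon^2} > 1$ (so the log is positive), one gets $(1+\epsilon)^D \ge \bigl(\clog_{1+\epsilon}^{(\eta)} f\bigr)^5 > \bigl(\clog_{1+\epsilon}^{(\eta)} f\bigr)^4$. So your proof fills the gap cleanly.
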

\begin{proof}
    Observe that if we enter the case where we compute $\delta$, we have $\ilev(e) - k - 1 > 1 + 2\log_{1 + \epsilon} \frac{2C}{\epsilon}$. By definition, $\delta \ge \brac{\frac{\epsilon}{2C}}^2 \cdot (1 + \epsilon)^{-k-1}$. Recall that $\wts(e) = (1 + \epsilon)^{-\ilev(e)}$. Thus, \[\wts(e) = (1 + \epsilon)^{-\ilev(e)} < (1 + \epsilon)^{-2\log_{1 + \epsilon}\frac{2C}{\epsilon} - k - 1} = \brac{\frac{\epsilon}{2C}}^2 \cdot (1 + \epsilon)^{-k-1} \le \delta.\]
\end{proof}

\begin{proof}[Proof of \Cref{handle-properties,handle-zlev,handle-local,handle-tight,handle-lev-inc,handle-cover,handle-slack,handle-eta}]
    Consider the iteration that processes an element $e \in D$ and assume the claims hold for each of the previous iterations. Observe that if $e$ is already active, then the algorithm makes no changes, so \Cref{handle-zlev,handle-local,handle-lev-inc,handle-tight,handle-slack,handle-properties} hold vacuously. Observe that $e$ could become active only due to the call to $\fix$ for some previously processed $e' \in D$. At the iteration for $e'$, we had $\ilev(e) > k + 1$ by \Cref{hp-elem-ilev}. By \Cref{obs:fix-main}(2), $e$ becomes active with $\lev(e) > k + 1$ after that call, and hence there is a set $s \ni e$ with $\lev(s) = \lev(e) > k + 1$, which must be tight by \Cref{handle-slack}. This set has been tight since then by \Cref{handle-tight}, and its level could have only increased by \Cref{handle-lev-inc}.
    
    Otherwise, $e$ is passive, and we call $\procdet(e)$ or $\procrand(e)$. First, we prove \Cref{handle-eta}. We need to show that $\ilev(e) - k - 1$ is not too large and not too small. According to \Cref{obs:clean-passive} and \Cref{handle-local}, $\ilev(e) = \ilev^\old(e)$ and $\zlev(e) = \zlev^\old(e) \le k$, since $e$ is dirty. Therefore, $\ilev(e) - k - 1$ is bounded by $\ilev^\old(e) - \zlev^\old(e) - 1 \le \left\lceil\log_{1+\epsilon}\max\{f, \frac{2C}{\epsilon} \}\right\rceil - 1$. Recall that in the randomized version we assume $f > \frac{2C}{\epsilon}$. Thus, $\ilev(e) - k - 1 \le \left\lceil\log_{1+\epsilon}f\right\rceil - 1 \le 5 \log_{1 + \epsilon} f$ and $\eta$ is at least one. Observe that we enter the branch where we compute $\eta$ only when $\ilev(e) - k - 1 > \frac{200}{\epsilon^2}$. By \Cref{lm:iterated-log}, $\clog_{1 + \epsilon}(y) \le \sqrt{\epsilon/5} \cdot y$ for any  $y \ge \frac{200}{\epsilon^2}$. Hence, such $\eta$ exists.
    
    To prove the remaining claims, consider the following three cases and their subcases:
    \begin{enumerate}[(1),leftmargin=*]
        \item We call $\procdet(e)$ or call $\procrand(e)$ and fall back to calling $\procdet(e)$. Then we branch into two cases:
        \begin{enumerate}[leftmargin=*]
            \item \label{case-a} There is a tight set $s \ni e$. Observe that if $\lev(s) \ge k + 1$, then $\lev(s)$ remains the same. If $\lev(s) < k + 1$, then we have $\wts(s) < c_s$. Therefore, \Cref{hp-inv,hp-below} are preserved. Observe that \Cref{hp-elem-ilev} is preserved as well. Notice that this does not break \Cref{hp-elem-correctness}, since for any active element $e'$, we have $\lev(e') \ge k + 1$, thus $\lev(e')$ is not affected; for any passive element $e'$, we have $\ilev(e') > k + 1$, and $\lev(e')$ can increase only to $k + 1$. By \Cref{handle-local} and \Cref{obs:clean-passive}, $\zlev(e) = \zlev^\old(e) \le k$ before the iteration, so $\zlev(e)$ can only increase. Therefore, \Cref{hp-elem-correctness} is preserved, and hence \Cref{handle-properties} holds.
            \Cref{handle-zlev,handle-local,handle-tight,handle-lev-inc,handle-cover,handle-slack} can be easily verified by the description of the algorithm.

            \item \label{case-b} All sets $s \ni e$ are slack. In that case, we call $\decrease(e)$. Observe that in the $\decrease(e)$ subroutine, steps before the branching can only increase $\wts(s)$ and do not break the tightness for any set. Thus, \Cref{handle-tight} holds. Observe that \Cref{handle-zlev,handle-local,handle-cover,handle-slack} hold as well.

            Suppose we enter the case where all sets $s \ni e$ are slack. Then it must be $\lev(s) \le k + 1$ for all $s \ni e$, since each slack set is either at level 0, or at level $k + 1$. Observe that this also implies that levels of sets can only increase, and thus \Cref{handle-lev-inc} holds.

            Since initially we had $\wts(s) < c_s$ for sets $s \ni e$, and due to the way we update the weights, we have $\wts(s) < c_s$ for all $s \ni e$ in the end, and hence \Cref{inv}(1) holds, so \Cref{hp-inv} is preserved. Since levels of sets can only increase, and they increase to $k + 1$, \Cref{hp-below} is preserved. Observe that $\zlev(e)$ becomes equal to $\lev(e)$ and $\ilev(e) \ge \zlev(e) \ge k + 1$, so \Cref{hp-elem-correctness,hp-elem-ilev} are preserved as well. 
        \end{enumerate}

        \item We call $\procrand(e)$ and do not fall back to $\procdet(e)$, and the random sampling step finds a witness set $s$.
        In that case, $s$ is tight by \Cref{witness-tight}, and so the proof is identical to case \ref{case-a}.
        
        \item We call $\procrand(e)$ and do not fall back to $\procdet(e)$, and the random sampling step fails. In that case, we compute $\widehat{F}$ and enter the branching. Consider each branch separately:
        \begin{enumerate}[leftmargin=*]
            \item If $|\widehat{F}| = 0$, then we call $\decrease(e)$. By \Cref{delta-increase}, $\delta > \wts(e)$, so we have $\wts(s) < c_s$ for all sets $s \ni e$. The rest of the proof is identical to case \ref{case-b} (with the exception that not all sets $s \ni e$ are necessary slack).
            
            \item If $0 < |\widehat{F}| \leq \left(\clog_{1+\epsilon}^{(\eta)}f\right)^2$, then we call $\fix(e, l)$. 
            Recall that $\fix(e, l)$ assumes the following: $\lev(e) < l \le \ilev(e)$, we have $\zlev(e) = \lev(e)$, \Cref{inv}(1) holds and for every passive element $e^\prime$ we have $\ilev(e^\prime) > \lev(e^\prime)$. 
            By the same argument as in case \ref{case-a}, all the properties hold after setting $\lev(s) \gets \max\{k + 1, \lev(s)\}$ for all $s \in e$ and updating $\zlev(e) \gets \max_{s \ni e}\{\lev(s)\}$. Hence the last three assumptions hold. By \Cref{hp-elem-correctness}, we have $\ilev(e) > \lev(e)$. Thus, by the definition of $l$, $\lev(e) < l \le \ilev(e)$, and hence the first assumption holds as well.

            The properties are preserved by the call to $\fix(e, l)$ and the subsequent implicit zeroing of sets at level 0. First, notice that if $l = \ilev(e)$, then the call makes no changes by \Cref{obs:fix-no-changes}. Otherwise, \Cref{hp-inv} is preserved by \Cref{fix-invs}. 
            \Cref{hp-elem-ilev} is preserved by \Cref{obs:fix-main}(2). 
            \Cref{hp-elem-correctness} is preserved by \Cref{obs:fix-main}(1)(2) and by \Cref{fix-lev<ilev}. Next, we show that \Cref{hp-below} is preserved. By \Cref{obs:fix-main}(3), only sets $s \ni e$ can change their level, and their levels can only increase. By \Cref{slack-no-rise}, only tight set can change their level. Observe that all tight sets $s \ni e$ are at level at least $k + 1$ before the call to $\fix(e, l)$. Due to the implicit zeroing, $\phi(s) = 0$ for all sets at level 0. Therefore, \Cref{hp-below} is preserved.

            \Cref{handle-zlev,handle-local,handle-tight,handle-lev-inc} hold by \Cref{obs:fix-main} and \Cref{fix-tight}.
            \Cref{handle-slack} holds, since $\fix$ does not raise slack sets by \Cref{slack-no-rise}. 
            Finally, we show that \Cref{handle-cover} holds. Observe that $\widehat{F} \ne \emptyset$ by the entering condition. By \Cref{witness-tight}, every set in $\widehat{F}$ is tight, and it remains tight after the call to $\fix(e, l)$ by \Cref{fix-tight}.
            
            \item If $|\widehat{F}| \geq \left(\clog_{1+\epsilon}^{(\eta)}f\right)^2$, then we take an arbitrary set $s \in \widehat{F}$. By \Cref{witness-tight}, $s$ is tight, so the proof is identical to case \ref{case-a}.
        \end{enumerate}
    \end{enumerate}
\end{proof}

\begin{corollary}\label{iter-zlev<=k}
    At the beginning of the iteration for $e \in D$, if $e$ is still passive, then $\zlev(e) = \zlev^\old(e) \le k$ and $\ilev(e) = \ilev^\old(e)$.
\end{corollary}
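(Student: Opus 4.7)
The plan is to derive this corollary as an immediate consequence of \Cref{obs:clean-passive} and \Cref{handle-local}, tracking what can happen to $e$ between the start of $\reset(k)$ and the beginning of its own iteration.

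First I would dispose of the inequality $\zlev^\old(e) \le k$. Since $e \in D$, by definition $e$ is a dirty element, which in particular means $e \in E_{\le k}$ at the moment right before the call to $\reset(k)$, i.e., $\zlev^\old(e) \le k$. This part requires no work beyond unpacking the definitions of $D$ and $E_{\le k}$.

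Next, I would argue $\zlev(e) = \zlev^\old(e)$ and $\ilev(e) = \ilev^\old(e)$ by considering each phase of $\reset(k)$ that takes place before $e$'s own iteration. The implicit zeroing on \cref{ln:impl-zeroing} and the processing of clean elements (the for loop at \cref{ln:process-clean}) are handled by \Cref{obs:clean-passive}: since $e$ is dirty and hence not clean, these phases leave $\zlev(e)$ and $\ilev(e)$ untouched. Then, for each prior iteration that processed some $e' \in D$ with $e' \ne e$, \Cref{handle-local} applies to $e$ as the ``other'' passive element: either $e$ became active during that iteration, or its values $\zlev(e)$ and $\ilev(e)$ were unchanged. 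By the hypothesis of the corollary, $e$ is still passive at the start of its own iteration, so the first alternative never occurred in any prior iteration; hence every prior iteration left $\zlev(e)$ and $\ilev(e)$ fixed.

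Chaining these observations across all phases preceding $e$'s iteration yields $\zlev(e) = \zlev^\old(e)$ and $\ilev(e) = \ilev^\old(e)$, completing the proof. No step here is a genuine obstacle, as all the real work has been done in establishing \Cref{obs:clean-passive} and \Cref{handle-local}; the only thing to be mildly careful about is using the hypothesis ``$e$ is still passive'' precisely to rule out the activation alternative in \Cref{handle-local} at every prior iteration.
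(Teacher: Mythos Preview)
Your proposal is correct and follows essentially the same approach as the paper: both invoke the definition of dirty elements for $\zlev^\old(e)\le k$, then use \Cref{obs:clean-passive} to handle the clean-processing phase and \Cref{handle-local} for the prior iterations over $D$, with the ``still passive'' hypothesis ruling out activation. The paper's proof is simply a terser version of yours.
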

\begin{proof}
    Since $e$ is dirty, $\zlev^\old(e) \le k$, and by \Cref{obs:clean-passive}, $\ilev(e)$ and $\zlev(e)$ remain the same after processing clean elements. By \Cref{handle-local}, they remain the same until this iteration.
\end{proof}

Next, let us show that $\widehat{E}$ collects all elements that are not covered by a tight set. We prove the following corollaries.

\begin{corollary}\label{cover-other}
    After processing dirty elements, every element $e \in E^\old_{>k}$ is covered by $T_{> k}$.
\end{corollary}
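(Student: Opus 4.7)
The plan is to combine \Cref{clean-R-covered} (which establishes the covering after the processing-clean-elements step) with \Cref{handle-tight} and \Cref{handle-lev-inc} (which together imply that this covering is preserved throughout the processing of dirty elements).

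More concretely, fix any $e \in E^{\old}_{>k}$. By \Cref{clean-R-covered}, immediately after processing clean elements there exists a set $s \ni e$ with $s \in T_{>k}$; that is, $s$ is tight and $\lev(s) > k$. The set-membership relation is static (only elements are inserted/deleted, sets are fixed), so $e \in s$ remains true throughout all subsequent steps.

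It remains to argue that $s$ stays in $T_{>k}$ throughout the for-loop processing dirty elements. This is done by induction on the iterations of that loop. By \Cref{handle-tight}, tightness of $s$ is preserved across each iteration, so $s$ remains tight after all dirty elements have been processed. By \Cref{handle-lev-inc}, levels of sets can only increase during each iteration, so from $\lev(s) > k$ at the start of the loop we get $\lev(s) > k$ after every iteration. Combining these two facts, $s \in T_{>k}$ at the end of processing dirty elements, which shows that $e$ is covered by $T_{>k}$. Since $e$ was arbitrary, this completes the proof.

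There is essentially no obstacle here: the statement is a straightforward bookkeeping consequence of claims already established earlier in the subsection. The only thing worth double-checking is that \Cref{handle-tight} and \Cref{handle-lev-inc} are stated uniformly for both the deterministic and randomized variants of $\procdet$/$\procrand$ (so that the induction works regardless of which branch is taken in each iteration), which indeed they are as proved jointly in the preceding block.
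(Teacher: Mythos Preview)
Your proof is correct and follows essentially the same approach as the paper: establish coverage by $T_{>k}$ after processing clean elements (you cite \Cref{clean-R-covered} directly, while the paper cites the underlying \Cref{clean-lvl-change}), then preserve it through the dirty-element loop via \Cref{handle-tight} and \Cref{handle-lev-inc}.
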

\begin{proof}
    By \Cref{clean-lvl-change}, $e$ is covered by $T^\old_{> k}$ after processing clean elements, and all these sets are tight. During processing dirty elements, sets from $T^\old_{>k}$ remain tight, and their level can only increase by \Cref{handle-tight,handle-lev-inc}.
\end{proof}

\begin{corollary}\label{cover}
    After processing dirty elements, every element $e\in D \setminus E'$ is covered by $T_{> k}$.
\end{corollary}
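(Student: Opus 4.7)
The plan is to fix an arbitrary $e \in D \setminus E'$ and exhibit, at the end of the for loop over $D$, a set $s \in T_{>k}$ that covers $e$. The argument has two parts: (i) show that at the moment $e$'s iteration ends, some $s\ni e$ lies in $T_{>k}$, and (ii) propagate that fact through all later iterations using the monotonicity lemmas already established.

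For part (i), I would split on whether $e$ is still passive when its iteration begins. If it is, then either \Cref{alg:procdet} or \Cref{alg:procrand} is actually executed on $e$, and since $e \notin E'$, \Cref{handle-cover} gives us directly a set $s \ni e$ with $s \in T_{>k}$ at the end of the iteration. If $e$ is already active at the start of its iteration (which can only happen in the randomized case, because deterministic processing never activates other dirty elements), the loop body is skipped. But then $e$ must have been activated during some earlier iteration processing $e' \in D$, and as observed inside the proof of \Cref{handle-properties}, the only mechanism that can do this is a call to $\fix$; at the moment of that call we had $\ilev(e) > k+1$ by \Cref{hp-elem-ilev}, so by \Cref{obs:fix-main}(2) $e$ emerges from $\fix$ active with $\lev(e) > k+1$. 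In particular, there is some $s \ni e$ with $\lev(s) = \lev(e) > k+1$, and such a set must be tight by \Cref{handle-slack}. Thus $s \in T_{>k}$ at the end of that earlier iteration.

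For part (ii), I would invoke \Cref{handle-tight} and \Cref{handle-lev-inc}: once $e$ is covered by some $s \in T_{>k}$ at the end of some iteration, every subsequent iteration keeps $s$ tight and can only raise $\lev(s)$. Hence $s$ remains in $T_{>k}$ until the dirty-processing loop terminates, and $e$ is still covered by $s$ at that moment, as claimed.

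I do not anticipate a real obstacle here; the statement is essentially a bookkeeping corollary that threads \Cref{handle-cover} through the remaining iterations. The only subtle point, which the plan addresses explicitly, is the randomized case where the iteration assigned to $e$ is a no-op because $\fix$ from a previous iteration already made $e$ active~--- there \Cref{handle-cover} does not apply literally, and one has to recover the covering set directly from the effect of $\fix$ together with \Cref{handle-slack}.
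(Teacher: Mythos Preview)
Your proposal is correct and follows essentially the same approach as the paper: invoke \Cref{handle-cover} to get a covering set $s\in T_{>k}$ at the end of $e$'s iteration, then use \Cref{handle-tight} and \Cref{handle-lev-inc} to propagate this through the remaining iterations. The only difference is that you treat the ``$e$ already active'' case separately, believing \Cref{handle-cover} does not apply there literally; in fact the paper's combined proof of \Cref{handle-cover} already handles that case (with exactly the argument you reconstruct), so the paper's proof of the corollary simply cites \Cref{handle-cover} once without a case split.
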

\begin{proof}
    By \Cref{handle-cover}, $e$ is covered by a set $s \in T_{>k}$ after the iteration that processes it. By \Cref{handle-tight,handle-lev-inc}, $s$ is tight and has $\lev(s) \ge k + 1$ after the subsequent iterations.
\end{proof}

\paragraph{Properties after $\reset(k)$.}
First, we make the following observation about the post-processing steps.
\begin{observation}\label{obs:post-processing}
    During the post-processing steps, $\ilev(e)$ and $\zlev(e)$ remain the same for each passive element $e$.
\end{observation}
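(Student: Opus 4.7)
The plan is to verify, by direct inspection of the four post-processing lines (set $\widehat{E}$; zero the levels of slack sets in $S$; set $\widehat{S}$ and $\widehat{k}$; move $\widehat{E}\cup\widehat{S}$ to level $\widehat{k}$; call $\water$), that no line reads or assigns $\ilev(e')$ or $\zlev(e')$ for any element $e'$ that is passive at the start of the post-processing step. Since $\ilev$ and $\zlev$ are explicitly maintained quantities that only change when the algorithm writes to them, this inspection suffices.

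First I would observe that constructing $\widehat{E}$ and $\widehat{S}$ and computing $\widehat{k}$ only read values and never assign to $\ilev$ or $\zlev$. Next, the line that sets $\lev(s)\leftarrow 0$ for every slack $s\in S$ only modifies set levels; although such a modification could in principle shift the derived quantity $\lev(e')=\max_{s\ni e'}\{\lev(s)\}$, it does not touch the stored $\ilev(e')$ or $\zlev(e')$, which is what the observation concerns.

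The crucial step is showing that the ``move to level $\widehat{k}$'' operation and the subsequent call to $\water(\widehat{k},\widehat{S},\widehat{E})$ only act on active elements. To this end, I would argue that $\widehat{E}\subseteq E\cup E'$, and every element of $E\cup E'$ is active at this moment: elements of $E$ are made active at level $k+1$ during the processing-clean-elements loop (by \Cref{obs:clean-passive}), and elements of $E'$ are precisely those that became active at level $k+1$ inside $\procdet$ or $\procrand$ (via the slack-branch of $\decrease$). By \Cref{handle-local,handle-properties}, these elements remain active for the rest of the processing-dirty-elements phase. Hence the move step only updates $\ilev$ and $\zlev$ of elements that are already active, and by \Cref{waterfill} the $\water$ subroutine operates on the (active) elements of $\widehat{E}$ and the sets of $\widehat{S}$, keeping $\widehat{E}$ active and never assigning $\ilev$ or $\zlev$ of any passive element.

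The only potential obstacle is ruling out that some element treated in post-processing was passive when post-processing began; this reduces to verifying the two sources of membership in $\widehat{E}$ (the set $E$ of clean elements and the set $E'$ populated during processing dirty elements) and invoking \Cref{obs:clean-passive} and \Cref{handle-local} to conclude. Once this is in place, the observation follows by combining the line-by-line inspection with \Cref{waterfill}.
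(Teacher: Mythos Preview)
Your direct-inspection strategy is the right one, and since the paper states this observation without proof, there is no ``paper's approach'' to compare against beyond the implicit reasoning you reconstruct. However, your argument has one step that is asserted rather than justified: you claim that every element placed into $E'$ ``became active at level $k+1$'' inside the slack branch of $\decrease$. This is true, but it is not immediate from the pseudocode, and the paper's own proof of \Cref{handle-properties} (case~(b)) only records $\ilev(e)\ge\zlev(e)=k+1$, leaving open the possibility $\ilev(e)>k+1$.

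To close the gap you need to argue that in the ``all $s\ni e$ slack'' branch of $\decrease(e)$ one necessarily has $h=k+1$. Since weights only increase during $\decrease$, every $s\ni e$ was already slack before the update, hence at level $\le k+1$ by \Cref{handle-slack}, giving $l=k+1$. If $h>l$ then by minimality of $h$ some $s^*\ni e$ satisfies $\wts^0(s^*)-\wts^0(e)+(1+\epsilon)^{-(h-1)}\ge c_{s^*}$; combined with $\wts^1(s^*)<c_{s^*}/(1+\epsilon)$ (slackness after the update) this forces $c_{s^*}<(1+\epsilon)^{-(h-1)}$, and then $\wts^1(s^*)\ge(1+\epsilon)^{-h}>c_{s^*}/(1+\epsilon)$ already makes $s^*$ tight---a contradiction. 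Hence $h=l=k+1$ and $e$ is indeed active. Alternatively, you can sidestep this entirely: since \Cref{waterfill} guarantees every element of $\widehat{E}$ is active at the end, any element that remains passive after post-processing was never in $\widehat{E}$, and the remaining post-processing lines touch only set levels, which suffices for the observation.
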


Next, we show that we apply $\water$ on all elements not covered by a tight set.
\begin{claim}\label{slack-level}
    Before elements from $\widehat{E}$ and sets from $\widehat{S}$ are moved to level $\widehat{k}$ and $\water(\widehat{k}, \widehat{S}, \widehat{E})$ is invoked, $\widehat{E}$ contains all the elements that are currently not covered by a tight set, and $\widehat{S}$ contains all the sets that contain elements from $\widehat{E}$. Every other element $e \notin \widehat{E}$ is covered by $T_{> k}$.
\end{claim}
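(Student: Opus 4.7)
The plan is to partition all elements into three disjoint categories based on their classification at the start of $\reset(k)$: (i) $E^\old_{>k}$, (ii) the clean elements $E$, and (iii) the dirty elements $D$, inside which $E' \subseteq D$ is determined during the processing of dirty elements.

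By \Cref{cover-other}, elements in $E^\old_{>k}$ are covered by sets in $T_{>k}$ immediately after processing dirty elements, and by \Cref{cover} the same holds for elements in $D \setminus E'$. The only post-processing step that runs before $\widehat{E}$ and $\widehat{S}$ are defined (other than declaring $\widehat{E}$) is the zeroing ``$\lev(s) \leftarrow 0$ for every slack set $s \in S$''; since this touches only slack sets, it neither removes any set from $T_{>k}$ nor changes the level of any tight set. Hence at the moment $\widehat{E}$ is defined (and afterwards) every element of $E^\old_{>k} \cup (D \setminus E')$ is still covered by $T_{>k}$. It follows that every element not covered by a tight set must lie in $E \cup E'$, and hence in $\widehat{E}$ by its definition, proving the first statement and also the third statement for these two categories. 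The second statement is immediate from the definition of $\widehat{S}$.

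The remaining case for the third statement is an element $e \in (E \cup E') \setminus \widehat{E}$; by assumption such $e$ is covered by some tight $s \ni e$, and we must show $\lev(s) \geq k+1$. The key sub-claim is that for every $e \in E \cup E'$, every set $s \ni e$ is contained in $S$ and has $\lev(s) \geq k+1$ at the moment $\widehat{E}$ is defined, unless $s$ is slack. For $e \in E$ this follows from the clean-elements loop at \cref{ln:process-clean} of \Cref{alg:reset}, which explicitly raises every $s \ni e$ to level $k+1$ and adds it to $S$. For $e \in E'$ this follows from the fact that elements are added to $E'$ only inside $\decrease(e)$, which first assigns $\lev(s) \gets k+1$ and adds $s$ to $S$ for every $s \ni e$. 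By \Cref{handle-lev-inc}, levels of sets only increase in subsequent iterations of the dirty-elements loop (and by \Cref{obs:fix-main}(3) the same holds inside any $\fix$ call), and by \Cref{fix-tight} tight sets never become slack. The only step that lowers a set's level thereafter is the zeroing of slack sets in $S$ during post-processing, which by definition cannot apply to a tight set. Hence any tight $s \ni e$ must remain at level $\geq k+1$ at the moment $\widehat{E}$ is defined, i.e., $s \in T_{>k}$, as required.

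The main obstacle will be establishing this last sub-claim carefully: one must track that for every element eventually in $E \cup E'$, all containing sets are raised to level $k+1$ at the precise moment the element joins (and added to $S$), and that no subsequent operation within $\reset(k)$---in particular the $\fix$ invocations and implicit-zeroing step inside $\procrand$, as well as the final post-processing---demotes a tight cover below level $k+1$.
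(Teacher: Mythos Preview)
Your proof is correct and follows essentially the same approach as the paper's, using \Cref{cover-other,cover} to handle $E^\old_{>k}\cup(D\setminus E')$ and the definition of $\widehat{E}$ to handle $E\cup E'$. Your treatment of the third statement for $e\in(E\cup E')\setminus\widehat{E}$ is in fact more thorough than the paper's: the paper only notes that all sets containing such $e$ were added to $S$ and that some tight $s\ni e$ exists, without explicitly concluding $s\in T_{>k}$; you supply that missing step by tracking that every $s\ni e$ was raised to level $k+1$ when $e$ entered $E$ or $E'$, invoking \Cref{handle-lev-inc} to ensure levels only rise thereafter, and observing that the subsequent zeroing touches only slack sets.
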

\begin{proof}
    Every element $e \in E^\old_{>k} \cup (D \setminus E')$ is covered by $T_{>k}$ by \Cref{cover-other,cover}. Notice that $T_{>k} \cap \widehat{S} = \emptyset$, since sets in $\widehat{S}$ are slack. 
    Therefore, if $e$ is not covered by a tight set, then $e \in (E \cup E')$. Recall that $\widehat{E}$ is the set of elements from $E \cup E'$ that are not covered by a tight set, and $\widehat{S}$ is the collection of all the sets that contain elements from $\widehat{E}$. Observe that whenever we process an element $e \in E$ or add $e$ to $E'$, we add all sets $s \ni e$ to $S$. Therefore, if $e \in (E \cup E') \setminus \widehat{E}$, then there is a tight set $s \ni e$. 
\end{proof}

Now we are ready to prove that the $\reset$ subroutine maintains \Cref{inv}(1)(2) and a valid cover. 

\begin{theorem}\label{reset-inv}
    \Cref{inv}(1)(2) hold after the call to $\reset(k)$.
\end{theorem}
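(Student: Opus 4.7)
The plan is to verify \Cref{inv}(1) and \Cref{inv}(2) separately by tracking how each phase of $\reset(k)$ affects them: the implicit zeroing at level $[0,k]$, the processing of clean elements, the loop over dirty elements, the zeroing of slack sets in $S$, and the final move of $\widehat{E},\widehat{S}$ to level $\widehat{k}$ together with the call to $\water$.

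For \Cref{inv}(1), I would first invoke \Cref{clean-inv} to get the invariant immediately after processing clean elements, then \Cref{handle-properties} (via \Cref{hp-inv}) to propagate it through every iteration of the dirty-element loop. For the post-processing, the step $\lev(s)\leftarrow 0$ applied to a slack set $s\in S$ preserves \Cref{inv}(1): such a set was slack at level $k+1$ with $\wts^*(s)<c_s/(1+\epsilon)$, and by \cref{eqlevel} the quantity $\wts(s,1)$ is determined by intrinsic levels of elements in $s$ (which are not modified by the zeroing), so $\wts(s,1)<c_s$ continues to hold. Finally, for sets in $\widehat{S}$, I would combine the move to level $\widehat{k}$ (which, by the \cite{bhattacharya2021dynamic} lemma cited right before \Cref{alg:reset}, keeps $\wts(s)<c_s$) with \Cref{waterfill} to place each $s\in\widehat{S}$ at its water-filling level, yielding \Cref{inv}(1) in the form $\wts(s,\lev(s)+1)<c_s$.

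For \Cref{inv}(2), the objective is to show every set at positive level is tight. By \Cref{handle-slack}, throughout the dirty-element loop every slack set sits either at level $0$ or in $S$ at level $k+1$. The post-processing step explicitly zeroes $\lev(s)\leftarrow 0$ for every slack set in $S$, so immediately before the call to $\water$ the only slack sets on positive levels are those that have been lifted into $\widehat{S}$ at level $\widehat{k}$. These are handled by $\water$: \Cref{waterfill} guarantees that any $s\in\widehat{S}$ with $\lev(s)>0$ after the call satisfies $\wts(s)\ge c_s/(1+\epsilon)$, i.e., is tight. Sets outside $S\cup\widehat{S}$ are untouched throughout and retain their tightness via \Cref{handle-tight}. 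The main subtlety I expect is twofold: (i)~confirming that $\water$ does not disturb weights of sets outside $\widehat{S}$, which follows from \Cref{slack-level} (every set containing an element of $\widehat{E}$ lies in $\widehat{S}$); and (ii)~checking that the preconditions of \Cref{waterfill}, namely $\wts(s)<c_s$ and $\phi(s)=0$ for $s\in\widehat{S}$, are met after the move to level $\widehat{k}$, which traces back to the implicit zeroing at \cref{ln:impl-zeroing} together with the observation that no dead-weight increment has been applied to these particular sets during the subroutine.
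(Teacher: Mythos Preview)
Your proposal is correct and follows essentially the same route as the paper's proof: both establish \Cref{inv}(1) via \Cref{clean-inv} and \Cref{handle-properties} through the dirty-element loop, then handle post-processing by noting that only slack sets are dropped to level~$0$ and that sets in $\widehat{S}$ are fixed by \Cref{waterfill}; and both establish \Cref{inv}(2) via \Cref{handle-slack}, the explicit zeroing at \cref{ln:zero-slack}, the containment $\widehat{S}\subseteq S$, and \Cref{waterfill}. Your additional remarks on the preconditions of $\water$ go slightly beyond what the paper spells out; the justification you sketch for $\phi(s)=0$ (``no dead-weight increment has been applied to these particular sets'') is not literally true in the randomized branch, where $\fix$ can increment dead weights of sets at level $k{+}1$, but the paper's own proof does not address this point either.
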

\begin{proof}
    After processing dirty elements, \Cref{inv}(1) holds by \Cref{handle-properties}. Observe that the post-processing steps do not change $\wts(s)$ for $s \notin \widehat{S}$. Since only slack sets are dropped to level 0 during the post-processing, \Cref{inv}(1) holds for $s \notin \widehat{S}$. For any $s \in \widehat{S}$, \Cref{inv}(1) will hold after the call to $\water(\widehat{k}, \widehat{S}, \widehat{E})$ by \Cref{waterfill}. 

    After processing dirty elements, all slack sets at levels above 0 are in $S$ by \Cref{handle-slack}. All slack sets in $S$ are dropped to level 0 at \cref{ln:zero-slack}, so \Cref{inv}(2) holds for them after that. By the definitions of $\widehat{E}$ and $\widehat{S}$, it must be $\widehat{S} \subseteq S$. For sets from $\widehat{S}$, \Cref{inv}(2) can be violated after moving them to level $\widehat{k}$, but it will be restored afterward, due to the call to $\water(\widehat{k}, \widehat{S}, \widehat{E})$, by \Cref{waterfill}.
\end{proof}

\begin{theorem}\label{reset-cover}
    $T$ is a valid set cover for $\univ$ after the call to $\reset(k)$.
\end{theorem}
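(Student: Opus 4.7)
The plan is to partition $\univ$ into elements in $\widehat{E}$ and elements outside $\widehat{E}$, and handle each group separately.

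First I would argue that every $e \in \univ \setminus \widehat{E}$ remains covered by a set in $T_{>k}$ throughout the post-processing. By \Cref{slack-level}, immediately before \cref{ln:zero-slack}, each such $e$ is already covered by some set in $T_{>k}$. It then suffices to verify that the subsequent operations---zeroing slack sets in $S$, relocating $\widehat{E} \cup \widehat{S}$ to level $\widehat{k}$, and calling $\water(\widehat{k}, \widehat{S}, \widehat{E})$---leave every set in $T_{>k}$ intact. The zeroing at \cref{ln:zero-slack} only touches slack sets. The crucial observation for the remaining two operations is that $\widehat{S} \cap T = \emptyset$: by definition $\widehat{E}$ contains exactly those elements of $E \cup E'$ that are not covered by any tight set, so every set containing an element of $\widehat{E}$ is slack, which means $\widehat{S}$ consists entirely of slack sets. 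Combined with the fact that no element of $\widehat{E}$ lies in any tight set, this ensures that the relocation and the call to $\water$, which modify only sets in $\widehat{S}$ and recompute $\wts(s)$ only for $s \in \widehat{S}$, do not perturb any set in $T_{>k}$. Hence tight sets covering elements outside $\widehat{E}$ remain tight, and those elements remain covered.

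Second, for $e \in \widehat{E}$ I would appeal directly to \Cref{waterfill}. That lemma guarantees that after $\water(\widehat{k}, \widehat{S}, \widehat{E})$, every $e \in \widehat{E}$ remains active; since being active means $\ilev(e) = \lev(e) = \max_{s \ni e} \lev(s)$, and $e$ was placed at intrinsic level $\widehat{k}$ just before the call, this forces at least one set $s \in \widehat{S}$ with $e \in s$ to have $\lev(s) = \widehat{k}$. By definition $\widehat{k} = \min\{k+1, \lceil\log_{1+\epsilon}(2C|\widehat{E}|/\epsilon)\rceil\} \ge 1$ whenever $\widehat{E} \ne \emptyset$ (and if $\widehat{E} = \emptyset$ the $\water$ call is a no-op), so property~(2) of \Cref{waterfill} gives $\wts(s) \ge c_s/(1+\epsilon)$, i.e., $s$ is tight, and thus $e$ is covered.

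The main subtle point is the disjointness argument $\widehat{S} \cap T = \emptyset$, together with the companion claim that the level-$\widehat{k}$ relocation and the subsequent $\water$ call propagate no changes outside $\widehat{S}$; both hinge on carefully parsing the definition of $\widehat{E}$ as ``elements \emph{not yet covered} by any tight set'' rather than merely ``elements currently processed''. Once those are pinned down, the remainder reduces to a direct application of the waterfilling guarantee, in the same spirit as the argument used to establish \Cref{reset-inv}.
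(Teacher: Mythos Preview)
Your overall approach matches the paper's: partition $\univ$ into $\widehat{E}$ and its complement, argue the complement remains covered by $T_{>k}$ (which is untouched by the post-processing), and appeal to \Cref{waterfill} for $\widehat{E}$. Your treatment of $\univ\setminus\widehat{E}$ is correct and in fact more explicit than the paper's; the disjointness $\widehat{S}\cap T=\emptyset$ is precisely the key point.

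There is, however, a gap in your argument for $e\in\widehat{E}$. You write that since $e$ was placed at intrinsic level $\widehat{k}$ just before the call and remains active afterward, some $s\ni e$ must satisfy $\lev(s)=\widehat{k}$. This does not follow: $\water$ may reassign the levels of both sets and elements (that is the entire purpose of water filling), so there is no reason $\ilev(e)$ stays equal to $\widehat{k}$ after the call. The correct argument is: after $\water$, $e$ is active, so $\lev(e)=\max_{s\ni e}\lev(s)$; if $\lev(e)>0$ then any $s\ni e$ achieving the maximum has $\lev(s)>0$ and is tight by property~(2) of \Cref{waterfill}; and the case $\lev(e)=0$ would give $\wts(e)=1\ge c_s$ for every $s\ni e$, contradicting property~(1) of \Cref{waterfill}, so it cannot occur. (The paper itself is terse here and simply asserts coverage ``by \Cref{waterfill}'', leaving this case analysis implicit.)
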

\begin{proof}
   By \Cref{slack-level}, $\widehat{E}$ contains all elements that are not in any tight set, and $\widehat{S}$ contains all the sets that contain such elements. Every element from $\univ \setminus \widehat{E}$ is covered by a tight set from $\set \setminus \widehat{S}$, since all the sets in $\widehat{S}$ are slack. Then, after the call to $\water(\widehat{k}, \widehat{S}, \widehat{E})$, all elements will be covered by tight sets by \Cref{waterfill}.  
\end{proof}

During the post-processing steps, we drop slack sets from $S$ to level 0, which could potentially decrease $\lev(e)$, and hence violate $\zlev(e) \le \lev(e)$ for passive elements, or $\zlev(e) = \lev(e)$ for active elements. In the following claim, we show that at the end of the execution of the $\reset(k)$ subroutine, no element will be in an incorrect state.

\begin{theorem}
    After the call to $\reset(k)$, for each passive element $e$, we have $\zlev(e)\leq \lev(e)$ and $\lev(e) < \ilev(e) \le \zlev(e) + \left\lceil\log_{1+\epsilon}\max\{f, \frac{2C}{\epsilon} \}\right\rceil$; for each active element $e$, we have $\ilev(e) = \zlev(e) = \lev(e)$.
\end{theorem}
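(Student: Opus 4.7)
The plan is to argue that the desired correctness conditions already hold right after the loop that processes dirty elements, and are preserved by each of the post-processing steps. By \Cref{handle-properties} (which gives \Cref{hp-elem-correctness}) together with \Cref{handle-zlev}, at that moment every passive element $e$ satisfies $\zlev(e) \le \lev(e) < \ilev(e) \le \zlev(e) + \lceil\log_{1+\eps}\max\{f, 2C/\eps\}\rceil$ and every active element $e$ satisfies $\ilev(e) = \zlev(e) = \lev(e)$. It remains to show that dropping slack sets of $S$ to level $0$, moving $\widehat{S}$ and $\widehat{E}$ to level $\widehat{k}$, and invoking $\water$ do not break any of these conditions.

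For a passive element $e$ (necessarily already passive after the dirty-element loop, since no post-processing step changes the activity status of a passive element), \Cref{obs:post-processing} yields $\ilev(e)$ and $\zlev(e)$ unchanged, so the upper bound $\ilev(e) \le \zlev(e) + \lceil\log_{1+\eps}\max\{f, 2C/\eps\}\rceil$ is automatic. For $\lev(e) < \ilev(e)$, every set whose level is modified in post-processing (either in $\widehat{S}$, or a slack set in $S$) ends up at level at most $\widehat{k} \le k+1$, which is strictly smaller than $\ilev(e) > k+1$ by \Cref{hp-elem-ilev}; sets that are not modified retain $\lev(s) < \ilev(e)$ from the pre-post-processing state. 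For $\zlev(e) \le \lev(e)$, I will produce a tight set $s \ni e$ with $\lev(s) \ge \zlev(e)$ that survives post-processing unchanged. Since the sets in $\widehat{S}$ are slack by construction (they contain elements of $\widehat{E}$, which are not covered by any tight set), and since tight sets are never dropped by the $S$-zeroing step, every tight set containing $e$ at a suitable level suffices. If $\zlev(e) > k + 1$, any set realizing $\lev(s) \ge \zlev(e)$ lies strictly above level $k+1$ and is therefore tight by \Cref{handle-slack}; if $\zlev(e) = k+1$, I split into the subcase that $e$ was originally dirty (and use \Cref{handle-cover} with \Cref{cover} to find a set of $T_{>k}$ containing $e$) or the subcase $\zlev^\old(e) > k$ with $e$ untouched (invoking \Cref{inv}(2) pre-call together with \Cref{handle-tight} and \Cref{handle-lev-inc} to track a tight set covering $e$ from the very start).

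For an active element $e$, the argument splits according to whether $e$ lies in $\widehat{E}$. If $e \notin \widehat{E}$, then $e$ is by definition covered by some tight set $s$ with $\lev(s) = \lev(e) = \zlev(e) = \ilev(e)$ after dirty processing; since $s$ is tight it is not in $\widehat{S}$ and is not affected by the $S$-zeroing, while no other set containing $e$ is raised beyond $\widehat{k} \le \lev(e)$ in post-processing, so all three values remain equal. If $e \in \widehat{E}$, the algorithm explicitly resets $\ilev(e) = \zlev(e) = \widehat{k}$ in the move step, and \Cref{waterfill} guarantees that $e$ remains active in the end, forcing some $s \in \widehat{S}$ with $e \in s$ to satisfy $\lev(s) = \widehat{k}$ and hence $\lev(e) = \widehat{k}$ as well.

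The main technical hurdle is the borderline case $\zlev(e) = k+1$ for passive $e$, where tight and slack sets coexist at the same level: the maximizer of $\max_{s \ni e} \lev(s)$ may well be a slack set in $S$ that is about to be dropped to level $0$, so one cannot use that maximizer to certify $\zlev(e) \le \lev(e)$. The crux of the argument is to bypass this maximizer and produce a separate tight set at level $\ge k+1$ containing $e$, using \Cref{handle-cover} and \Cref{cover} for processed dirty elements and the preservation statements \Cref{handle-tight,handle-lev-inc} applied to the pre-call configuration for elements that were above level $k$ all along.
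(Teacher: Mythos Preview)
Your proposal is correct and follows essentially the same approach as the paper: establish the element-correctness conditions right after the dirty-element loop, then argue they survive each post-processing step by exhibiting, for every element not in $\widehat{E}$, a tight set in $T_{>k}$ containing it whose level is not touched by the $S$-zeroing, the move to $\widehat{k}$, or $\water$. The paper's case split is by whether $e \in E^\old_{>k}$ or $e \in D \setminus E'$ (invoking \Cref{cover-other} and \Cref{cover} respectively), whereas yours is by whether $\zlev(e) > k+1$ or $\zlev(e) = k+1$; these cover the same ground. One small imprecision: your opening sentence cites \Cref{handle-zlev} to conclude $\zlev(e) \le \lev(e)$ for \emph{every} passive element after the dirty loop, but that claim only treats $e \in D$; for passive $e \in E^\old_{>k}$ you still need the argument you later give in the $\zlev^\old(e) > k$ subcase (the maximizing set for $\lev^\old(e)$ lies above level $k$, is tight by \Cref{inv}(2), and is untouched by implicit zeroing and clean processing via \Cref{clean-lvl-change}, then preserved through dirty processing via \Cref{handle-tight,handle-lev-inc}) --- that argument in fact works for all $\zlev(e) \ge k+1$, not just $\zlev(e) = k+1$, so your proof is complete once you recognize this.
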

\begin{proof}
    Notice that by \Cref{handle-properties}, we have that after processing dirty elements. 
    
    The inequality $\lev(e) < \ilev(e) \le \zlev(e) + \left\lceil\log_{1+\epsilon}\max\{f, \frac{2C}{\epsilon} \}\right\rceil$ holds for every passive element $e$, since the post-processing steps can only decrease $\lev(e)$, and they do not affect $\ilev(e)$ and $\zlev(e)$ by \Cref{obs:post-processing}. 

    Next, we prove the inequality $\zlev(e) \le \lev(e)$ for every passive element $e$. If $e \in E^\old_{>k}$, then $\lev(e)$ and $\zlev(e)$ are the same after the implicit zeroing and processing clean elements by \Cref{obs:clean-passive}. If $e \in D \setminus E'$, then by \Cref{handle-zlev}, $\zlev(e) \le \lev(e)$ after the iteration that processes $e$. By \Cref{handle-lev-inc,handle-local}, $\ilev(e)$ remains the same and $\lev(e)$ could only increase during the subsequent iterations. During the post-processing steps, dropping slack sets from $S$ to level 0 at \cref{ln:zero-slack} might break the inequality. However, by \Cref{cover-other,cover}, $e$ is covered by $T_{> k}$, and by \Cref{handle-slack}, every slack set that was dropped was at level $k + 1$. Therefore, $\lev(e)$ is not affected.
    
    Finally, we prove that $\ilev(e) = \zlev(e) = \lev(e)$ for every active element $e$. By \Cref{hp-elem-ilev,hp-elem-correctness}, it holds after processing dirty elements and $\lev(e) \ge k + 1$. If $e \in \widehat{E}$, then the equality will hold after the call to $\water$. Otherwise, $e$ is covered by $T_{> k}$, and as we have shown before, $\lev(e)$ is not affected by dropping slack sets from $S$ to level 0 at \cref{ln:zero-slack}.
\end{proof}

\section{Update time analysis}\label{sec:runtime-analysis}
\subsection{Potential functions} \label{potfun}

Following \cite{bhattacharya2021dynamic}, we define the \emph{up}, \emph{down} and \emph{lift potentials}. 
In addition, we introduce two new types of potential, which we call the \emph{passive potential} and the \emph{clean potential};
the definitions of these five types of potentials are given below, along with some intuitive explanations behind the definitions.
We define the \emph{total potential} of the set system, denoted by $\Phi$,  as the sum of all types of potentials across all elements and sets.
\begin{itemize}[leftmargin=*]
    \item \textbf{Up potential.} Define  $\alpha_i = 2f\left(\frac{3}{\epsilon^3} + \frac{\log C}{\epsilon^2}\right)(1+\epsilon)^{i+1}$, for any index $0\leq i\leq L$. Then, the \emph{up potential} of $s$ is defined as $\Phi_{\up}(s) = \max\{\wts(s)- c_s, 0\}\cdot \alpha_{\lev(s)}$.
    
    \emph{Intuition.} This type of potential, which is gained by element insertions, is used to cover the costs of raising active elements by the $\fix$ subroutine. Whenever we violate Invariant \ref{inv}(1), the process of raising active elements releases a sufficiently large amount of potential to cover the costs. Notice that potential release due to the loss of one unit of weight can cover the increase of down potential due to increase of $f$ units of dead weight.
    \item \textbf{Down potential.} Define $\beta_i = \frac{2}{\epsilon^2} (1 + \epsilon)^{i + 1}$ for any index $0\leq i\leq L$.
    Then, the \emph{down potential} of $s$ is defined as $\Phi_{\down}(s) = \phi(s)\cdot \beta_{\lev(s)}$.
    
    \emph{Intuition.} This type of potential is used to cover the costs of the $\reset$ subroutine. We gain it whenever we lose weight of some elements due to element deletions or raising levels of elements. Whenever Invariant \ref{inv}(3) is violated, we have gained large enough down potential to cover the costs of the $\reset$ subroutine, which aims at restoring the invariant.
    
    \item \textbf{Lift potential.} Each set $s$ has a \emph{lift potential} of $\Phi_{\lift}(s) = L - \max\{\lev(s), \bs(s)\}$.
    
    \emph{Intuition. } This type of potential is used to cover the costs of raising levels of sets above their base levels due to the $\fix$ subroutine. Initially, we have the maximum amount $L$ of that potential for every set. After that, it is restored due to the $\reset$ subroutine. It is needed to cover the cases where we do not raise any active elements, and hence do not decrease the up potential.
    
    \item \textbf{Passive potential.} Each element $e$ has a \emph{passive potential} $\Phi(e) = f$ if $e$ is passive, and $\Phi(e) = 0$ otherwise.
    
    \emph{Intuition.} This type of potential is used to cover the costs of activating an element. This potential is gained by element insertions.

    \item \textbf{Clean potential.} Each set $s$ such that $\phi(s) \ne 0$ has a \emph{clean potential} 
    $\Phi_\clean(s) = \frac{1}{\epsilon^2} + \frac{\log C}{\epsilon}$, and $\Phi_\clean(s) = 0$ otherwise.

    \emph{Intuition.} This potential is used to cover the costs associated with the $\reset$ subroutine. Sometimes, the decrease of down potential is not enough to cover the $O(\frac{1}{\epsilon^2} + \frac{\log C}{\epsilon})$ term in the runtime cost. But in that case, we argue it can be covered by the decrease of clean potential.
\end{itemize}

Initially, when $\univ = \emptyset$, by definition of our potential functions, the set system has potential at most $mL = O(m\log(Cn))$. This is because the lift potential is at most $L$ for each set, $\wts(s), \phi(s) = 0$ for each set $s \in \set$, and there are no passive elements. 
We note that the amortized update time (associated with potential function $\Phi$) 
of an element update 
is defined as the sum of the potential change $\Delta\Phi$ due to the update and the actual time spent by the update algorithm; for technical convenience, we shall assume that any unit of potential can cover $O(1)$ units of time. 
Consequently, in amortized analysis via the potential function method, we would like cheap operations (such as element deletions in our case) to increase the total potential, but not by too much, whereas costly operations (such as the $\reset$ subroutine) should decrease the total potential in roughly the same amount as their actual running time.

\subsection{Deletion}
By the algorithm's description, the $\del(e)$ subroutine takes $O(f)$ time. Hence, it suffices to bound the potential increase following an execution of the $\del(e)$ subroutine. 

Observe that the up potential may only decrease, and the lift and passive potentials for elements $e^\prime \ne e$ remain unchanged; $\Phi(e)$ may only decrease.
To bound the increase in down potential, note that for each $s\ni e$ that was tight before the deletion of element $e$, its dead weight increases by at most $\wts(e)$. We know that $\lev(e) \ge \lev(s)$; if $e$ is passive, then we also have $\ilev(e) > \lev(e) \ge \lev(s)$. Therefore, $\wts(e) \le (1 + \epsilon)^{-\lev(s)}$, and the increase of $\Phi_\down(s)$ is at most
\[\wts(e) \cdot \beta_{\lev(s)} \le (1+\epsilon)^{-\lev(s)}\cdot \frac{2}{\epsilon^2} \cdot (1+\epsilon)^{\lev(s)+1}\leq \frac{2(1+\epsilon)}{\epsilon^2}.\]
The clean potential may also increase due to the increase of $\phi(s)$; the increase of $\Phi_\clean(s)$ is bounded by $\frac{1}{\epsilon^2} + \frac{\log C}{\epsilon}$.
Hence, the total increase of $\Phi$ after the deletion of $e$ (and the amortized runtime cost) is  $O\left(\frac{f}{\epsilon^2} + \frac{f \log C}{\epsilon}\right)$.

\subsection{Fixing levels}
Recall that we assume \Cref{inv}(1) holds before the execution of $\fix(e, l)$ started. As we did for the properties, use the super-script ``$\old$'' to denote the values of the variables before the execution of $\fix(e, l)$ started. 
We only need to consider the case where $l$ is strictly smaller than the old value of $\ilev^\old(e)$; otherwise, since \Cref{inv}(1) held before, none of the while loops on \cref{ln:fix-while} would be triggered, and the algorithm would make no changes by \Cref{obs:fix-no-changes}.  The total amortized runtime of the procedure in this case would be $O(f)$.

During the call, we may activate passive elements in $P_{\lev(s)}(s)$ for some set $s$ (see \cref{ln:activate-k+1,ln:activate-below} in the pseudocode of the $\fix$ subroutine). The activation of a passive element $e^\prime$ takes time $O(f)$. However, this runtime cost can be charged to the clearance of its passive potential $\Phi(e^\prime)$; indeed, note that the passive potential of any passive element is at least $f$ and that of any active elements is 0.

For the rest, we will only be concerned with other steps in the subroutine. Consider the moment just before an iteration of the while loop (or before entering the branching at \cref{ln:fix-below}). Let $\wts^\nw(s)$, $\Phi^\nw_\up(s)$ and $\lev^\nw(s)$ be the weight, the up potential and the level of $s$ right after this iteration (or the branching at \cref{ln:fix-below}) respectively. Define $\widehat{\wts}(s) = \wts(s) - \wts(e)$ and $\widehat{\wts}^\nw(s) = \wts^\nw(s) - \wts^\nw(e)$. 

First, we bound the potential increase due to raising $s$ to level $\min\{\bs(s),\zlev(e)\}$ (i.e., when we enter the if statement at \cref{ln:fix-below}). It is easy to see that $\Phi_\down + \Phi_\clean$ does not increase, since we zero out $\phi(s)$. However, $\Phi_\up$ may increase.
\begin{claim}\label{fix-raise-base}
     If $s \in F$, then the increase of $\Phi_\up(s)$ due to raising $s$ to level $\min\{\bs(s), \zlev(e)\}$ is at most $\wts^\nw(e) \cdot \alpha_{\lev^\nw(s)} - \wts(e) \cdot \alpha_{\lev(s)}$.
\end{claim}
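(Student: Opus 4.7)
The plan is to show that the if branch at \cref{ln:fix-below} leaves both $\wts(e)$ and $\wts(s)$ unchanged, and then to prove the key inequality $\wts(s) - c_s \le \wts(e)$. The body of the if branch only sets $\phi(s) \gets 0$, assigns $\lev(s) \gets \lev^\nw(s)$, and activates the passive elements in $P_{\lev^\nw(s)}(s)$; it never modifies $\ilev(e)$ or $\zlev(e)$, so $\wts^\nw(e) = \wts(e)$. To see that $\wts^\nw(s) = \wts(s)$, note that by \Cref{fix-empty} every active element $e' \in s$ must have $\lev(e') \ge \bs(s) \ge \lev^\nw(s)$ (otherwise $e'$ would lie in some $A_i(s)$ with $i < \bs(s)$), so raising $s$ does not increase $\lev(e')$; passive elements' weights depend only on $\ilev$, which is unchanged; and any passive $e' \in P_{\lev^\nw(s)}(s)$ that gets activated keeps the weight $(1+\epsilon)^{-\lev^\nw(s)}$, because its intrinsic level equals $\lev^\nw(s)$ both before and after activation.

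With these observations in hand, and using that the if branch's entering condition $\wts(s,\lev(s)+1) \ge c_s$ implies $\wts(s) \ge c_s$, the up potential increase equals $(\wts(s) - c_s)(\alpha_{\lev^\nw(s)} - \alpha_{\lev(s)})$, while the claimed bound simplifies to $\wts(e)(\alpha_{\lev^\nw(s)} - \alpha_{\lev(s)})$. Since $\alpha_{\lev^\nw(s)} \ge \alpha_{\lev(s)}$, it suffices to prove $\wts(s) - c_s \le \wts(e)$.

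For this, I would exploit that entering the if branch forces $\lev(s) < \bs(s)$. Since the main for loop does not change levels of sets other than the one currently being processed, $\lev^\old(s) = \lev(s) < \bs(s)$ at this moment. By \Cref{fix-empty}, $A^\old_{\lev^\old(s)}(s) = \emptyset$, so \Cref{eqlevelup} yields $\wts^\old(s) = \wts^\old(s, \lev^\old(s) + 1)$, which together with \Cref{inv}(1) holding at the start of $\fix$ gives $\wts^\old(s) < c_s$. Hence $\sum_{e' \in s,\, e' \ne e} \wts^\old(e') < c_s - \wts^\old(e)$. By \Cref{obs:fix-monotonicity}, the weights of elements other than $e$ can only decrease during $\fix$ (and the initial step of $\fix$ only modifies $\wts(e)$), so $\sum_{e' \ne e} \wts(e') \le \sum_{e' \ne e} \wts^\old(e') < c_s - \wts^\old(e) \le c_s$. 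Therefore $\wts(s) = \wts(e) + \sum_{e' \ne e} \wts(e') < \wts(e) + c_s$, which gives $\wts(s) - c_s < \wts(e)$ as required.

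The main obstacle is spotting this chain of reasoning: the crucial step is realizing that the entering condition of the if branch pins $\lev^\old(s)$ strictly below $\bs(s)$, so \Cref{fix-empty} lets \Cref{inv}(1) tightly bound the sum of weights of $s$'s elements by $c_s$ (not merely by $(1+\eps)c_s$ as in \Cref{wts-bound}), and then monotonicity transfers this bound to the current moment to pay for the increase in $\Phi_\up(s)$ with only the contribution of $e$.
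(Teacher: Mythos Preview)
Your proof is correct and follows essentially the same route as the paper's. Both arguments hinge on the inequality $\wts(s)-\wts(e)<c_s$ (equivalently $\widehat{\wts}(s)<c_s$) together with the observation that the if branch leaves $\wts(s)$ and $\wts(e)$ unchanged; the paper obtains that inequality in one line by invoking \Cref{fix-lev+1-e}, whereas you re-derive the needed special case directly via \Cref{fix-empty}, \Cref{eqlevelup}, \Cref{inv}(1), and element-wise monotonicity.
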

\begin{proof}
    Let $k = \lev(s)$ and $k' = \lev^\nw(s)$. By \Cref{inv}(1) and \Cref{fix-lev+1-e}, we have $\wts(s, \lev(s) + 1) - \wts(e) \le \wts^\old(s, \lev^\old(s) + 1) < c_s$.
    By \Cref{fix-empty}, there are no elements in $A_k(s)$, so from \Cref{eqlevelup}, we have $\wts(s, \lev(s) + 1) = \wts(s)$. Hence $\widehat{\wts}(s) = \wts(s) - \wts(e) < c_s$. Observe that no weight of an element is changed, so $\wts(s) = \wts^\nw(s)$ and $\widehat{\wts}(s) = \widehat{\wts}^\nw(s)$. By the entering condition, $\wts(s) = \wts^\nw(s) = \wts(s, \lev(s) + 1) \ge c_s$. Therefore,
    \begin{align*}
        \Phi_\up^\nw(s) - \Phi_\up(s) &= (\widehat{\wts}^\nw(s) + \wts^\nw(e) - c_s)\cdot \alpha_{k'} - (\widehat{\wts}(s) + \wts(e) - c_s) \cdot \alpha_k \\
        &=(\widehat{\wts}(s) - c_s) \cdot (\alpha_{k'} - \alpha_k) + \wts^\nw(e) \cdot \alpha_{k'} - \wts(e) \cdot \alpha_k \\
        &\le \wts^\nw(e) \cdot \alpha_{k'} - \wts(e) \cdot \alpha_k.
    \end{align*}
\end{proof}
Next, we analyze the potential increase due to a single iteration of the while loop. Consider the iteration of the for loop that raises a set $s \ni e$. Note that if $s \notin F$, then we do not enter the while loop by \Cref{fix-enter-F}. So the interesting case is when $s \in F$.  We are going to prove some claims about that iteration. Recall that $k$ is the level of $s$ at the beginning of an iteration of the while loop.

\begin{lemma}\label{inv1}
    At the beginning of each iteration of the while loop, we have $\wts(s) - \wts(e) - |A_k(s)|\cdot\epsilon (1+\epsilon)^{-k-1}< c_s$.
\end{lemma}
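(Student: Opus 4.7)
The inequality we want is $\wts(s)-\wts(e)-|A_k(s)|\cdot\epsilon(1+\epsilon)^{-k-1}<c_s$, where $k=\lev(s)$ at the moment the iteration starts. My plan is to rewrite the left-hand side in terms of the ``weight at the next level up'' $\wts(s,k+1)$, and then invoke the monotonicity result already established in \Cref{fix-lev+1-e}.

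First, I would note that by \eqref{eqlevelup}, since $k=\lev(s)$ at the beginning of the iteration,
\[
\wts(s,\lev(s)+1)\;=\;\wts(s)-|A_k(s)|\cdot\epsilon(1+\epsilon)^{-k-1},
\]
so the target inequality is equivalent to showing $\wts(s,\lev(s)+1)-\wts(e)<c_s$.

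Next, I would apply \Cref{fix-lev+1-e}, which guarantees that the quantity $\wts(s,\lev(s)+1)-\wts(e)$ is monotonically non-increasing throughout the execution of $\fix(e,l)$. In particular, its current value is at most its value right at the start of the call, namely $\wts^{\old}(s,\lev^{\old}(s)+1)-\wts^{\old}(e)$. Since $\wts^{\old}(e)\ge 0$, this is bounded by $\wts^{\old}(s,\lev^{\old}(s)+1)$, which is strictly less than $c_s$ by \Cref{inv}(1) (which we assume holds prior to the call to $\fix$). Chaining these inequalities yields $\wts(s,\lev(s)+1)-\wts(e)<c_s$, and combining with the identity above completes the argument.

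There is no real obstacle here: the only subtlety is making sure we invoke \Cref{fix-lev+1-e} at the correct point, i.e.\ after the ``changing the intrinsic level of $e$ to $l$'' step (from which point on the quantity is monotone non-increasing), and that the initial comparison value is controlled by the pre-call invariant rather than by the post-insertion state. Once those are lined up, the proof is a one-line chain of inequalities.
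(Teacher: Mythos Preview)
Your proposal is correct and follows essentially the same argument as the paper: both rewrite the left-hand side via \eqref{eqlevelup} as $\wts(s,\lev(s)+1)-\wts(e)$, invoke \Cref{fix-lev+1-e} to bound it by $\wts^{\old}(s,\lev^{\old}(s)+1)-\wts^{\old}(e)$, and then use \Cref{inv}(1) together with $\wts^{\old}(e)\ge 0$ to conclude. The only minor remark is that \Cref{fix-lev+1-e} in fact holds throughout the entire execution of $\fix$ (the quantity is unchanged by the initial intrinsic-level update of $e$), so there is no need to be careful about when monotonicity ``starts''.
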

\begin{proof}
    Since we assumed that \Cref{inv}(1) has held before the execution of $\fix$, we have $\wts^\old(s, \lev^\old(s) + 1) < c_s$. By \Cref{fix-lev+1-e}, $\wts(s, k + 1) - \wts(e) \le \wts^\old(s, \lev^\old(s) + 1) - \wts^\old(e)$. Thus, $\wts(s, k + 1) - \wts(e) < c_s$. Since $\wts(s, k + 1) = \wts(s) - |A_k(s)| \cdot \epsilon(1 + \epsilon)^{-k-1}$ by \Cref{eqlevelup}, we get the desired inequality.
\end{proof}

\begin{claim}\label{fix-raise-one-up}
    If $s \in F$, then the increase of $\Phi_\up(s)$ after a single iteration of the while loop is at most 
    \begin{equation}\label{eq:fix-iter-up-pot}
        -2f \cdot |A_k(s)|\cdot \left(\frac{3}{\epsilon^2} + \frac{\log C}{\epsilon}\right)  + \wts^\nw(e) \cdot \alpha_{k + 1} - \wts(e) \cdot \alpha_k.
    \end{equation}
    For any other set $s^\prime \ne s$ the up potential $\Phi_\up(s^\prime)$ does not increase.
\end{claim}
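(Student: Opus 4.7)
My plan is to analyze $\Phi_\up(s)$ and $\Phi_\up(s')$ for $s'\ne s$ separately, and in the former case to split on whether $\wts^\nw(s)\ge c_s$ or not.

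For $s'\ne s$, the only modification during one iteration of the while loop that can affect $\Phi_\up(s')$ is the raise of each $e'\in A_k(s)$ to level $k+1$, which decreases $\wts(s')$ by $\epsilon(1+\epsilon)^{-k-1}$ for each $s'\ni e'$. Activating passive elements in $P_{k+1}(s)$ leaves every weight unchanged (such an element already had weight $(1+\epsilon)^{-k-1}$), and the lazy-update convention for $e$ means that when $\ilev(e)$ is incremented, only $\wts(s)$ is adjusted. Hence $\wts(s')$ weakly decreases, $\lev(s')$ is unchanged, and $\Phi_\up(s')$ weakly decreases.

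For $s$ itself, set $W = |A_k(s)|\cdot\epsilon(1+\epsilon)^{-k-1}$ and $\Delta_e = \wts(e)-\wts^\nw(e)\ge 0$ (which is $0$ when $\zlev(e)\ne k$ and $\epsilon(1+\epsilon)^{-k-1-d}$ when $\zlev(e)=k$), so that $\wts^\nw(s)=\wts(s)-W-\Delta_e$ and $\lev^\nw(s)=k+1$. The identity bridging $W\alpha_k$ and the target constant in the claim is
\[W\cdot\alpha_k \;=\; 2f\cdot|A_k(s)|\cdot\bigl(3/\epsilon^2+\log C/\epsilon\bigr).\]
The while-loop entering condition combined with \Cref{eqlevelup} yields $\wts(s)-W\ge c_s$, and setting $\widehat{\wts}(s)=\wts(s)-\wts(e)$, \Cref{inv1} gives $\widehat{\wts}(s)-W<c_s$.

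In the case $\wts^\nw(s)\ge c_s$, I would expand
\[\Phi_\up^\nw(s)-\Phi_\up(s) \;=\; (\wts(s)-c_s)(\alpha_{k+1}-\alpha_k)-(W+\Delta_e)\alpha_{k+1}.\]
Substituting $\alpha_{k+1}-\alpha_k=\epsilon\alpha_k$ and $\Delta_e=\wts(e)-\wts^\nw(e)$, and using $W\alpha_{k+1}-W\alpha_k=W\epsilon\alpha_k$, the required bound collapses to $\widehat{\wts}(s)-W\le c_s$, which is exactly \Cref{inv1}.

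In the case $\wts^\nw(s)<c_s$ we have $\Phi_\up^\nw(s)=0$. The entering condition gives $\widehat{\wts}^\nw(s)=\widehat{\wts}(s)-W\ge c_s-\wts(e)$, so $\wts^\nw(s)=\widehat{\wts}^\nw(s)+\wts^\nw(e)\ge c_s-\Delta_e$; thus $\wts^\nw(s)<c_s$ forces $\Delta_e>0$, i.e.\ $\zlev(e)=k$ and $\wts^\nw(e)=\wts(e)/(1+\epsilon)$. This yields the cancellation $\wts^\nw(e)\alpha_{k+1}=\wts(e)\alpha_k$, which simplifies the claim's right-hand side to $-W\alpha_k$, and the remaining inequality $(\wts(s)-c_s)\alpha_k\ge W\alpha_k$ is exactly the entering condition. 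The main subtlety I expect is precisely isolating this second case and spotting the exact cancellation $\wts^\nw(e)\alpha_{k+1}=\wts(e)\alpha_k$ that makes the two sides of the claim line up; without it the right-hand side of the claim would not match.
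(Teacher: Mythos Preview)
Your proof is correct and follows essentially the same two-case split as the paper, using \Cref{inv1} in the case $\wts^\nw(s)\ge c_s$ and the while-loop entering condition in the case $\wts^\nw(s)<c_s$. One simplification worth noting: in the second case you do not need to force $\Delta_e>0$ and spot the exact cancellation $\wts^\nw(e)\alpha_{k+1}=\wts(e)\alpha_k$; since $\wts^\nw(e)\alpha_{k+1}-\wts(e)\alpha_k\ge 0$ always (it is $0$ when $\zlev(e)=k$ and positive otherwise), the bound $-\Phi_\up(s)\le -W\alpha_k$ from the entering condition already suffices, which is how the paper handles that case.
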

\begin{proof}
    If $\wts^\nw(s) \le c_s$,  then $\Phi_\up^\nw(s) = 0$. So by the while loop condition that $\wts(s, k+1) \ge c_s$ and by \Cref{eqlevelup}, we know that the up potential change is equal to 
    \[-\Phi_\up(s) = -(\wts(s) - c_s)\cdot \alpha_k\leq -\left(\wts(s) - \wts(s, k+1)\right)\cdot \alpha_k\leq -2f \cdot |A_k(s)|\cdot\left(\frac{3}{\epsilon^2} + \frac{\log C}{\epsilon}\right).\]
    Otherwise, we have $\wts^\nw(s) > c_s$. Then,
    \[\Phi_\up^\nw(s) - \Phi_\up(s) = (\widehat{\wts}^\nw(s) + \wts^\nw(e) - c_s) \cdot \alpha_{k^\prime} - (\widehat{\wts}(s) + \wts(e) - c_s) \cdot \alpha_k.\]
    By the algorithm, all elements from $A_k(s)$ are raised to level $k+1$, and so their weights decrease by the factor of $1 + \epsilon$. Hence, we get $\widehat{\wts}^\nw(s) = \widehat{\wts}(s) - |A_k(s)| \cdot \epsilon(1 + \epsilon)^{-k - 1}$. Notice that $\alpha_{k + 1} = (1 + \epsilon)\alpha_k$. Therefore,
    \begin{align*}
        \widehat{\wts}^\nw(s) \cdot \alpha_{k + 1} - \widehat{\wts}(s) \cdot \alpha_k &= \widehat{\wts}(s) \cdot (1 + \epsilon)\alpha_k - |A_k(s)| \cdot \epsilon(1 + \epsilon)^{-k} \cdot \alpha_k - \widehat{\wts}(s) \cdot \alpha_k \\
        &= (\widehat{\wts}(s) - |A_k(s)| \cdot (1 + \epsilon)^{-k}) \cdot \epsilon\alpha_k \\
        &= (\wts(s) - \wts(e) - |A_k(s)| \cdot (1 + \epsilon)^{-k}) \cdot \epsilon\alpha_k.
    \end{align*}
    Using \Cref{inv1}, we can bound it by $(c_s - |A_k(s)| \cdot (1 + \epsilon)^{-k - 1}) \cdot \epsilon\alpha_k$. Then,
    \begin{align*}
        \Phi_\up^\nw(s) - \Phi_\up(s) & \le -|A_k(s)| \cdot (1 + \epsilon)^{-k - 1} \cdot \epsilon\alpha_k + \wts^\nw(e) \cdot \alpha_{k + 1} - \wts(e) \cdot \alpha_k \\
        &= -2f \cdot |A_k(s)|\cdot \left(\frac{3}{\epsilon^2} + \frac{\log C}{\epsilon}\right)  + \wts^\nw(e) \cdot \alpha_{k + 1} - \wts(e) \cdot \alpha_k.
    \end{align*}
\end{proof}

The total down potential $\Phi_{\down}$ and the total clean potential $\Phi_\clean$ may change after an iteration of the while loop, and specifically 
increase, due to changes of dead weights and the increase of $\lev(s)$. 
\begin{claim}\label{fix-raise-one-down-clean}
    If $s \in F$, then the overall increase of $\Phi_\down + \Phi_\clean$ after a single iteration of the while loop is at most $(f - 1) \cdot |A_k(s)| \cdot \left(\frac{3}{\epsilon^2} + \frac{\log C}{\epsilon}\right)$.
\end{claim}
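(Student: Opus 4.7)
The plan is to walk through one iteration of the while loop and classify every place where $\Phi_\down$ or $\Phi_\clean$ can strictly increase. Looking at the pseudocode of $\fix$, the only writes to the $\phi$-values inside a single iteration are the zeroing $\phi(s)\gets 0$, which only decreases $\Phi_\down(s)+\Phi_\clean(s)$ and so is irrelevant for an upper bound, and the inner loop over $e'\in A_k(s)$ which, for each $s'\ni e'$ with $s'\ne s$, increments $\phi(s')$ by $\epsilon(1+\epsilon)^{-k-1}$. The remaining actions of the iteration --- raising $\lev(s)$ by one (which affects $\Phi_\up$ and $\Phi_\lift$ but not $\Phi_\down(s)$ or $\Phi_\clean(s)$, since $\phi(s)$ was just zeroed), possibly incrementing $\zlev(e)$ and $\ilev(e)$ while updating $\wts(s)$ (no $\phi$-bookkeeping outside $s$), and activating the passive elements in $P_{k+1}(s)$ (which touches no $\phi$-value) --- contribute nothing positive to $\Phi_\down+\Phi_\clean$. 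Thus the entire positive contribution comes from the compensation step on raised active elements.

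Next I would bound the contribution of a single pair $(e',s')$ with $e'\in A_k(s)$, $s'\ni e'$, $s'\ne s$. Because $e'$ is active at level $k$, we have $\lev(e')=k$, and from $\lev(e')=\max_{t\ni e'}\lev(t)$ this forces $\lev(s')\le k$. In particular $\beta_{\lev(s')}\le \beta_k=\tfrac{2}{\epsilon^2}(1+\epsilon)^{k+1}$, so the bump to $\Phi_\down(s')$ is at most $\epsilon(1+\epsilon)^{-k-1}\cdot\beta_k=\tfrac{2}{\epsilon}$. The bump to $\Phi_\clean(s')$ is zero unless this is the first moment inside the iteration when $\phi(s')$ becomes nonzero, in which case it equals $\tfrac{1}{\epsilon^2}+\tfrac{\log C}{\epsilon}$; charging this worst case to every pair (an overcount, since it happens at most once per $s'$) gives a per-pair bound of $\tfrac{2}{\epsilon}+\tfrac{1}{\epsilon^2}+\tfrac{\log C}{\epsilon}\le \tfrac{3}{\epsilon^2}+\tfrac{\log C}{\epsilon}$ using $\epsilon\in(0,1)$. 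Summing over the $|A_k(s)|$ choices of $e'$ and the at most $f-1$ sets $s'\ni e'$ distinct from $s$ (by the frequency bound) yields the advertised total of $(f-1)\cdot |A_k(s)|\cdot(\tfrac{3}{\epsilon^2}+\tfrac{\log C}{\epsilon})$.

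The main thing to verify carefully is that no other hidden update to a $\phi(s')$ or to a $\lev(s')$ happens inside the iteration that could either inflate the $\beta_{\lev(s')}$ factor used above or add extra $\phi$-increments that were not counted. I would scan the pseudocode case by case: raising $e'$ writes only to $\wts(s')$, $\phi(s')$ (for $s'\ne s$), and the levels of $e'$, never to $\lev(s')$; the $\zlev(e)=k$ branch modifies $\wts(s)$ and the levels of $e$ only; the activation of passive elements in $P_{k+1}(s)$ only moves entries between $P$- and $A$-lists of sets containing them, performing no $\phi$-updates. Once this bookkeeping check is in place, the quantitative accounting above is immediate and the claim follows.
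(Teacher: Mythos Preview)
Your argument is correct and follows essentially the same approach as the paper: identify that the only positive contribution to $\Phi_\down+\Phi_\clean$ in an iteration comes from the increments $\phi(s')\gets\phi(s')+\epsilon(1+\epsilon)^{-k-1}$ for $s'\ni e'$, $s'\ne s$, use $\lev(s')\le \lev(e')=k$ to bound $\beta_{\lev(s')}\le\beta_k$, and then sum the per-pair bound $\tfrac{2}{\epsilon}+\tfrac{1}{\epsilon^2}+\tfrac{\log C}{\epsilon}\le\tfrac{3}{\epsilon^2}+\tfrac{\log C}{\epsilon}$ over at most $(f-1)\cdot|A_k(s)|$ pairs. Your explicit verification that no other write touches $\phi(\cdot)$ or $\lev(s')$ during the iteration is slightly more thorough than the paper's presentation, but the argument is the same.
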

\begin{proof}
    Observe that $\phi(s)$ is zeroed out at the beginning of the iteration, and later it does not increase, so $\Phi_\down(s) = 0$. By the algorithm, each element $e^\prime\in A_k(s)$ incurs an increase of each $\phi(s^\prime)$ by at most $\epsilon(1+\epsilon)^{-k-1}$. So the overall increase of dead weights for $s' \ni e^\prime, s^\prime \ne s$ is at most $(f - 1)\cdot |A_{k}(s)|\cdot \epsilon(1+\epsilon)^{-k-1}$. As at the beginning of the iteration, we have $\lev(s^\prime) \leq \lev(e^\prime) = k$ for $s' \ne s$, and $\lev(s^\prime)$ is unchanged during it, the total increase of down potential due to these sets is bounded by \[(f - 1)\cdot |A_{k}(s)|\cdot \epsilon(1+\epsilon)^{-k-1}\cdot \beta_k = (f - 1) \cdot |A_k(s)| \cdot \frac{2}{\epsilon}.\]
    For each $s^\prime$, the increase of $\Phi_\clean(s^\prime)$ is bounded by $\frac{1}{\epsilon^2} + \frac{\log C}{\epsilon}$; therefore, the total increase of clean potential due to these sets is at most
    \[(f - 1)\cdot |A_{k}(s)|\cdot \epsilon(1+\epsilon)^{-k-1}\cdot \beta_k = (f - 1) \cdot |A_k(s)| \cdot \brac{\frac{1}{\epsilon^2} + \frac{\log C}{\epsilon}}.\] 
\end{proof}

\begin{claim}\label{fix-total-potential}
    After the call to $\fix(e, l)$, the sum $\Phi_\up + \Phi_\down + \Phi_\clean$ increases by at most \[3|F| \cdot \left(\frac{3f}{\epsilon^3} + \frac{f \log C}{\epsilon^2}\right) \cdot (1 + \epsilon)^{-d  + 1} + |F| \cdot \left(\frac{1}{\epsilon^2} + \frac{\log C}{\epsilon}\right).\]
\end{claim}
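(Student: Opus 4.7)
\textbf{Proof plan for \Cref{fix-total-potential}.}
The strategy is to bound the contribution of each set $s \ni e$ to $\Delta(\Phi_\up + \Phi_\down + \Phi_\clean)$ separately, and then sum. A first observation is that only sets $s \in F$ contribute nontrivially to $\Phi_\up$: by \Cref{fix-enter-F}, sets $s \notin F$ never enter the raising-to-base step nor the while loop, and their weight can only decrease through the refresh updates in the for loops at \cref{ln:fix-main-for} and \cref{ln:fix-final-for}, so $\Phi_\up(s)$ does not grow for such $s$. Their $\phi(s)$ may still grow because of compensation when some other $s^{*} \in F$ raises its active elements, but that growth is already accounted for inside \Cref{fix-raise-one-down-clean} applied to $s^{*}$.

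For each $s \in F$, I plan to decompose the subroutine into four stages and combine their bounds via a single telescoping identity. Stage 1 is the initial assignment $\ilev(e)\gets l$ together with $\wts(s) \gets \wts(s) - \wts^\old(e) + \wts(e)$; stage 2a is the refresh $\wts(s)\gets \wts(s)-(1+\epsilon)^{-l}+\wts(e)_{2a}$ at the top of the iteration for $s$ (where $\wts(e)_{2a}$ denotes the current $\wts(e)$, possibly smaller than $(1+\epsilon)^{-l}$ due to previous iterations); stages 2b and 2c are the raising-to-base step and the while loop; stage 3 is the finalization. Using \Cref{fix-ilev-upper}, $\wts(e)\ge \wts^\old(e)$ throughout, so the combined contribution of stages 1 and 2a to $\Phi_\up(s)$ is at most $(\wts(e)_{2a}-\wts^\old(e))\,\alpha_{\lev^\old(s)}$. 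Telescoping the per-iteration bounds of \Cref{fix-raise-base} and \Cref{fix-raise-one-up} over stages 2b and 2c yields
\[
\Delta\Phi_\up(s)\big|_{\text{2b,2c}} \;\le\; -\wts(e)_{2a}\,\alpha_{\lev^\old(s)} + \wts(e)_{\text{end}}\,\alpha_{\lev_{\text{end}}(s)} - 2f\,N_s\bigl(\tfrac{3}{\epsilon^2}+\tfrac{\log C}{\epsilon}\bigr),
\]
where $N_s=\sum_k |A_k(s)|$ and $\lev_{\text{end}}(s)$, $\wts(e)_{\text{end}}$ are taken at the end of the iteration for $s$. Adding the stage 1+2a bound, the $\pm\wts(e)_{2a}\,\alpha_{\lev^\old(s)}$ terms cancel, leaving $\Delta\Phi_\up(s)\le \wts(e)_{\text{end}}\,\alpha_{\lev_{\text{end}}(s)} - 2fN_s(\tfrac{3}{\epsilon^2}+\tfrac{\log C}{\epsilon})$.

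Now, since $\fix$ maintains $\ilev(e)-\lev(e)=d$ and $\lev_{\text{end}}(s)\le \lev(e)=\ilev(e)-d$, the telescoped term is bounded by $\wts(e)_{\text{end}}\,\alpha_{\lev_{\text{end}}(s)} \le 2f\bigl(\tfrac{3}{\epsilon^3}+\tfrac{\log C}{\epsilon^2}\bigr)(1+\epsilon)^{-d+1}$. The stage 2c contribution to $\Phi_\down+\Phi_\clean$, summed via \Cref{fix-raise-one-down-clean}, is $(f-1)N_s(\tfrac{3}{\epsilon^2}+\tfrac{\log C}{\epsilon})$; combined with the $-2fN_s$ term inherited from $\Phi_\up$, the net $N_s$ coefficient is $-(f+1)$ and hence may be dropped. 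For stage 3, the increase $(1+\epsilon)^{-l_s}-\wts(e)$ of $\phi(s)$ contributes at most $(1+\epsilon)^{-l_s}\beta_{\lev(s)}\le \tfrac{2}{\epsilon^2}(1+\epsilon)^{-d+1}$ to $\Phi_\down(s)$ (using $\lev(s)\le l_s-d$), plus at most $\tfrac{1}{\epsilon^2}+\tfrac{\log C}{\epsilon}$ to $\Phi_\clean(s)$.

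Summing over the at most $|F|$ sets in $F$ gives a total increase of at most $|F|\bigl(\tfrac{6f+2}{\epsilon^3}+\tfrac{2f\log C}{\epsilon^2}\bigr)(1+\epsilon)^{-d+1}+|F|\bigl(\tfrac{1}{\epsilon^2}+\tfrac{\log C}{\epsilon}\bigr)$, and since $f\ge 1$ we have $\tfrac{6f+2}{\epsilon^3}\le \tfrac{9f}{\epsilon^3}$ and $\tfrac{2f\log C}{\epsilon^2}\le \tfrac{3f\log C}{\epsilon^2}$, yielding the claimed bound. The main obstacle is carrying out the telescoping cleanly: one must track precisely which moment of the execution each instance of $\wts(e)$ and $\lev(s)$ refers to, so that the ``initial increase'' of $\Phi_\up(s)$ from stages 1+2a exactly cancels the $-\wts(e)_{2a}\alpha_{\lev^\old(s)}$ starting term from the stage 2b+2c telescoping; without this cancellation the bound would be looser by a factor of two in the dominant $(1+\epsilon)^{-d+1}$ term and would fall short of what the runtime analysis requires downstream.
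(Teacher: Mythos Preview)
Your proposal is correct and follows essentially the same approach as the paper's proof: for each $s\in F$ you telescope the per-iteration bounds of \Cref{fix-raise-base} and \Cref{fix-raise-one-up}, absorb the $\Phi_\down+\Phi_\clean$ increase of \Cref{fix-raise-one-down-clean} into the $-2fN_s$ term, bound the surviving telescoped term $\wts(e)\cdot\alpha_{\lev(s)}$ via the maintained gap $\ilev(e)-\lev(e)=d$, and add the finalization contribution. The only cosmetic difference is that you track the intermediate constants a bit more tightly (getting $\tfrac{6f+2}{\epsilon^3}+\tfrac{2f\log C}{\epsilon^2}$ before relaxing to the stated bound), whereas the paper directly uses the looser estimate $\beta_{\lev(s)}(1+\epsilon)^{-l_s}<f(\tfrac{3}{\epsilon^3}+\tfrac{\log C}{\epsilon^2})(1+\epsilon)^{-d+1}$ for the finalization term to arrive at the factor $3$.
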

\begin{proof}
    Consider a set $s \ni e$. If $s \notin F$, then $\Phi_\up(s)$ does not increase after refreshing $\wts(s)$ according to the up-to-date $\wts(e)$, since it can only decrease $\wts(s)$ by \Cref{fix-ilev-upper}. By \Cref{fix-enter-F}, we do not raise $s$, so $\Phi_\up + \Phi_\down + \Phi_\clean$ does not increase after this iteration. 

    Now consider the case when $s \in F$. 
    When $\wts(s)$ is refreshed at the beginning of an iteration of the for loop, $\Phi_\up(s)$ increases (compared to the beginning of the call) by at most $\wts(e)\cdot \alpha_{\lev(s)} - \wts^\old(e)\cdot \alpha_{\lev(s)}$, and $\Phi_\down(s)$ and $\Phi_\clean(s)$ remain unchanged. 
   
    Consider an iteration of the while loop for $s$. According to \Cref{fix-raise-one-up}, the first term $-2f \cdot |A_k(s)|\cdot \left(\frac{3}{\epsilon^2} + \frac{\log C}{\epsilon}\right)$ in \Cref{eq:fix-iter-up-pot} is enough to cover the increase of $\Phi_\down + \Phi_\clean$ which is at most $(f - 1) \cdot |A_k(s)| \cdot \left(\frac{3}{\epsilon^2} + \frac{\log C}{\epsilon}\right)$ by \Cref{fix-raise-one-down-clean}.

    Summing the terms $\wts^\nw(e) \cdot \alpha_{k+1} - \wts(e) \cdot \alpha_k$ from \Cref{eq:fix-iter-up-pot} over all iterations of the while loop, together with the increase due to raising $s$ to level $\min\{\bs(s),\zlev(e)\}$ from \Cref{fix-raise-base}, and the initial increase in $\Phi_\up(s)$ due to refreshing $\wts(s)$ according to the up-to-date $\wts(s)$, the result is bounded by the value of $\wts(e) \cdot \alpha_{\lev(s)}$ at the moment after the last iteration of the while loop. Since $\ilev(e) \ge \lev(s) + d$, this value is at most $2f\left(\frac{3}{\epsilon^3} + \frac{\log C}{\epsilon^2}\right) \cdot (1 + \epsilon)^{-d  + 1}$.
    Therefore, the total increase of $\Phi_\up + \Phi_\down + \Phi_\clean$ is at most 
    \begin{equation*}
        2f\left(\frac{3}{\epsilon^3} + \frac{\log C}{\epsilon^2}\right) \cdot (1 + \epsilon)^{-d  + 1}.
    \end{equation*}

    During the finalization step, $\wts(s)$ can only decrease, so $\Phi_\up(s)$ does not increase. If $s \in F$, we also update the value of $\phi(s)$, so $\Phi_\down(s)$ and $\Phi_\clean(s)$ can increase. Consider the moment when we assigned $l_s \leftarrow \ilev(e)$. At this moment we had $\ilev(e) \ge \lev(s) + d$. Observe that since that, $\lev(s)$ remains unchanged until the end. Therefore, the increase of $\Phi_\down(s)$ is at most \[\begin{aligned}
        \beta_{\lev(s)}\cdot (1+\epsilon)^{-l_s}&\leq \frac{2}{\epsilon^2} \cdot (1+\epsilon)^{\lev(s)+1}\cdot (1+\epsilon)^{-\lev(s)-d}\\ &< f\brac{\frac{3}{\epsilon^3} + \frac{\log C}{\epsilon^2}}\cdot (1+\epsilon)^{-d+1}.
    \end{aligned}\]
    The increase of $\Phi_\clean(s)$ is at most $\frac{1}{\epsilon^2} + \frac{\log C}{\epsilon}$. 

   To conclude, the overall increase in potential is at most \[3|F| \cdot \left(\frac{3f}{\epsilon^3} + \frac{f \log C}{\epsilon^2}\right) \cdot (1 + \epsilon)^{-d  + 1} + |F| \cdot \left(\frac{1}{\epsilon^2} + \frac{\log C}{\epsilon}\right).\]
\end{proof}
Now we are ready to bound the amortized runtime cost of the $\fix$ subroutine. The steps outside the while loop and activating passive elements take $O(f)$ time in total. We already have shown that the runtime cost of activating passive elements could be charged to the clearance of their passive potentials. The runtime of a single iteration of the while loop is $O(f \cdot |A_k(s)| + 1)$. If $A_k(s) \ne \emptyset$, then we can cover this runtime cost by the decrease in potential, according to \Cref{fix-raise-one-up,fix-raise-one-down-clean}. 

Next, consider the case $A_k(s) = \emptyset$. If $k \ge \bs(s)$, then we can cover the $O(1)$ runtime cost by the decrease of $\Phi_\lift(s)$. 
Notice that if $k < \bs(s)$ at the first iteration of the while loop, then we have $\zlev(e) = \lev(s)$, and during each iteration, both $\zlev(e)$ and $\lev(s)$ will increase by one. Indeed, if we had $\lev(s) < \zlev(e)$ before we entered the while loop, we must have entered the if statement at \cref{ln:fix-below}, after which $\lev(s)$ becomes equal to $\zlev(e)$. Therefore, after each iteration, where $k < \bs(s)$, we have $\zlev(e) = \lev(s)$ and both of them increase by one. Notice that $\bs(s) \le \ceil{\log_{1 + \epsilon} C}$ for each set $s$. Hence, there are $O(\frac{\log C}{\epsilon}) = O(f)$ such iterations, and the total runtime we spend on them is $O(f)$.

Observe that for any element $e^\prime \ne e$, the passive potential $\Phi(e^\prime)$ could only decrease, and $\Phi_\lift$ does not increase. $\Phi(e)$ increase by at most $f$, if $e$ is a freshly inserted element. The increase of $\Phi_\up + \Phi_\down + \Phi_\clean$ is bounded by \Cref{fix-total-potential}.

Therefore, we conclude our analysis by the following theorem.

\begin{theorem}\label{fix-runtime}
    The amortized runtime cost of $\fix(e, l)$ is bounded by \[3|F| \cdot \left(\frac{3f}{\epsilon^3} + \frac{f \log C}{\epsilon^2}\right) \cdot (1 + \epsilon)^{-d  + 1} + |F| \cdot \left(\frac{1}{\epsilon^2} + \frac{\log C}{\epsilon}\right) + O(f).\]
\end{theorem}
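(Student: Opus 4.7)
My plan is to combine the potential-increase bound from Claim~\ref{fix-total-potential} with a direct accounting of the actual work performed, using the potential method to charge expensive steps against potential decreases.

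First, I would partition the work done by $\fix(e,l)$ into four categories: (i) the initial setup (computing $F$, refreshing weights, the finalization for loop at \cref{ln:fix-final-for}), costing $O(f)$; (ii) while-loop iterations with $A_k(s)\neq\emptyset$, each costing $O(f\cdot|A_k(s)|+1)$; (iii) while-loop iterations with $A_k(s)=\emptyset$, each costing $O(1)$; and (iv) activation of passive elements via \cref{ln:activate-below,ln:activate-k+1}, each costing $O(f)$. Category (i) is absorbed into the explicit $O(f)$ term. Category (iv) is charged to the clearance of passive potential: each activated element releases $f$ units of $\Phi(e')$ which exactly pays for the scan of its $f$ sets.

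For category (ii) I would invoke Claim~\ref{fix-raise-one-up}, which produces a negative contribution of $-2f\cdot|A_k(s)|\cdot(3/\epsilon^2+\log C/\epsilon)$ to $\Phi_\up(s)$; Claim~\ref{fix-raise-one-down-clean} bounds the simultaneous increase of $\Phi_\down+\Phi_\clean$ by $(f-1)\cdot|A_k(s)|\cdot(3/\epsilon^2+\log C/\epsilon)$, so enough slack remains to cover the $O(f\cdot|A_k(s)|)$ runtime of the iteration. Category (iii) splits further: when $\lev(s)\ge \bs(s)$, each iteration strictly decreases $\Phi_\lift(s)$ by one, which pays the $O(1)$ cost; when $\lev(s)<\bs(s)$, I would use the algorithm's structure (the one-shot raise at \cref{ln:fix-below} together with the rule that $\zlev(e)$ is bumped in lockstep with $\lev(s)$ whenever $\zlev(e)=\lev(s)$) to argue that such iterations occur only while $\zlev(e)=\lev(s)\le\bs(s)=O(\log C/\epsilon)$. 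Since we assume $f>\log C/\epsilon$, the total cost summed over all $s\ni e$ remains $O(f)$.

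It remains to account for the residual potential increase not yet charged against runtime, namely the telescoping term $\wts^\nw(e)\alpha_{k+1}-\wts(e)\alpha_k$ per iteration, the increase from raising $s$ to $\min\{\bs(s),\zlev(e)\}$ (Claim~\ref{fix-raise-base}), and the dead-weight/clean-weight update during finalization. Claim~\ref{fix-total-potential} bundles all of this into $3|F|\cdot(\tfrac{3f}{\epsilon^3}+\tfrac{f\log C}{\epsilon^2})\cdot(1+\epsilon)^{-d+1}+|F|\cdot(\tfrac{1}{\epsilon^2}+\tfrac{\log C}{\epsilon})$. Finally, if $e$ is a freshly inserted element, its passive potential of $f$ is newly created, but this is absorbed into the $O(f)$ term. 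Summing everything yields the stated bound.

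The main obstacle I expect is the bookkeeping for category (iii) below the base level: one has to confirm that across all sets $s\ni e$ the number of $O(1)$ iterations with $A_k(s)=\emptyset$ and $k<\bs(s)$ is $O(f)$ in total, rather than $O(f)$ per set. The observation is that such sub-base iterations are only triggered while $\zlev(e)$ is catching up, and once $\zlev(e)$ has climbed past a given $\bs(s)$, it cannot drop back (Observation~\ref{obs:fix-monotonicity}), so the sub-base iterations per set cost at most $\bs(s)=O(\log C/\epsilon)=O(f)$, and each set encounters them only on its first visit.
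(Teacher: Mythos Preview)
Your proposal matches the paper's argument almost exactly: the same four-way split of work, the same charging of category (ii) via Claims~\ref{fix-raise-one-up} and~\ref{fix-raise-one-down-clean}, the same use of $\Phi_\lift$ for category (iii) above the base level, the same charging of activations to passive potential, and the same appeal to Claim~\ref{fix-total-potential} for the residual potential increase.

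The one place your write-up does not quite close is the last paragraph. You correctly identify that the goal is an $O(f)$ \emph{total} bound on sub-base iterations, and you have the right ingredients (each such iteration increments $\zlev(e)$; $\zlev(e)$ is monotone). But your conclusion ``the sub-base iterations per set cost at most $\bs(s)=O(\log C/\epsilon)=O(f)$'' is a per-set bound, which summed over $|F|$ sets would give $O(|F|\cdot f)$, not $O(f)$. The missing step is: every sub-base iteration has $\zlev(e)=k<\bs(s)\le\ceil{\log_{1+\epsilon}C}$ and strictly increases $\zlev(e)$ by one; since $\zlev(e)$ is monotone throughout the entire call, the \emph{total} number of sub-base iterations across all $s\ni e$ is at most $\ceil{\log_{1+\epsilon}C}=O(\log C/\epsilon)=O(f)$. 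This is exactly how the paper concludes; with that fix, your proof is complete and identical to the paper's.
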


\subsection{Rebuilding}
As before, we use the superscript ``$\old$'' to denote the values of the variables right before $\reset(k)$ started. Similarly to \cite{bhattacharya2021dynamic}, we will argue that due to any call to $\reset(k)$, we have released a large amount of 
potential to compensate for the update time. There is a significant difference, however: the release of down potential allows us to compensate the costs associated with active elements only. Thus, the majority of our argument is devoted to dealing with passive elements.

\begin{lemma}[\cite{bhattacharya2021dynamic}]\label{down}
    We have the following lower bound on the down potential $\Phi_\down(T_{\le k})$
    \[\sum_{s\in T_{\leq k}}2\max\{\lev(s) - \bs(s)+1, 0 \} + \frac{2f}{\epsilon}|A_{\leq k}|\]
\end{lemma}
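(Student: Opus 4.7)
Plan: The plan is to exploit the fact that $\reset(k)$ is triggered exactly when $\phi_{\le k} > \epsilon\bigl(c(T_{\le k}) + f\cdot \wts(E_{\le k})\bigr)$, combined with the geometric growth of $\beta_i = \tfrac{2(1+\epsilon)^{i+1}}{\epsilon^2}$, to produce the two separate summands in the target lower bound on $\Phi_\down(T_{\le k}) = \sum_{s\in T_{\le k}}\phi(s)\beta_{\lev(s)}$.

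First I would argue that the dead weight below level $k$ is concentrated in tight sets, i.e., $\phi_{\le k} = \phi(T_{\le k})$: by \Cref{inv}(2), slack sets only live at level~$0$, and the implicit-zeroing discipline inside $\fix$, together with the cleanup of slack level-$0$ sets in $\reset$'s post-processing, keeps $\phi(s)=0$ for every slack level-$0$ set. Hence the trigger inequality becomes $\phi(T_{\le k}) > \epsilon c(T_{\le k}) + \epsilon f \wts(E_{\le k})$, which is the aggregate surplus we will charge against.

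Next I would split this surplus between the two target terms. For the active-element term $\tfrac{2f}{\epsilon}|A_{\le k}|$, I pair each $e \in A_i$ with any tight containing set $s$ (which exists because $e$ is active at level $i$, so some $s\ni e$ has $\lev(s)=i$, which is tight if $i\ge 1$; the $i=0$ case is absorbed into the $c(T_{\le k})$ slack separately). The element contributes $(1+\epsilon)^{-\lev(e)}$ to $\wts(E_{\le k})$, and multiplying by $\beta_{\lev(s)} \ge \beta_0 = \tfrac{2(1+\epsilon)}{\epsilon^2}$ yields the required $\tfrac{2f}{\epsilon}$ of down potential per element, since the level-specific $(1+\epsilon)$ factors cancel between $\beta_{\lev(s)}$ and $(1+\epsilon)^{-\lev(e)}$. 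For the per-set term $2\sum_{s}\max\{\lev(s)-\bs(s)+1, 0\}$, I use $c_s > (1+\epsilon)^{-\bs(s)-1}$ so that each set's share of the $\epsilon c(T_{\le k})$ slack, multiplied by $\beta_{\lev(s)} = \tfrac{2(1+\epsilon)^{\lev(s)+1}}{\epsilon^2}$, yields a per-set contribution of order $\tfrac{2(1+\epsilon)^{\lev(s)-\bs(s)}}{\epsilon}$; Bernoulli's inequality $(1+\epsilon)^j \ge 1+j\epsilon$ then gives a bound linear in $\lev(s)-\bs(s)+1$, as required.

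The main obstacle is turning the single aggregate inequality into a per-set/per-element double charge: the invariant violation provides just one scalar surplus, while the target lower bound has both a per-set sum and an element count to pay for. I would bridge this by using the tightness condition $\wts^*(s)\ge c_s/(1+\epsilon)$ to attribute each unit of $\phi(s)$ to the $c_s$ of the same set, and by noting that the exponential factor $(1+\epsilon)^{\lev(s)+1}$ in $\beta_{\lev(s)}$ is large enough that separating the $\epsilon c(T_{\le k})$ component and the $\epsilon f \wts(E_{\le k})$ component does not double-count any unit of $\phi$. Since the lemma is cited verbatim from \cite{bhattacharya2021dynamic}, the detailed accounting follows the analogous double-charging argument in that paper.
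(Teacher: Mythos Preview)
Your proposal has a genuine gap: you try to derive the bound from the \emph{single} trigger inequality $\phi_{\le k} > \epsilon\bigl(c(T_{\le k}) + f\,\wts(E_{\le k})\bigr)$ alone, but this cannot work. The down potential $\Phi_\down(T_{\le k}) = \sum_{s\in T_{\le k}} \phi(s)\,\beta_{\lev(s)}$ is a level-weighted sum, and the target bound has level-dependent terms $\lev(s)-\bs(s)+1$; one aggregate inequality gives no information about how the dead weight is distributed across levels $0,\dots,k$. Concretely, if all dead weight sat at level~$0$ while many tight sets at high levels carried $\phi(s)=0$, the trigger inequality could still hold, yet $\Phi_\down(T_{\le k}) = \phi_{\le k}\beta_0$ would be far too small to cover $\sum_s 2(\lev(s)-\bs(s)+1)$. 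Your step ``pair each element with a tight set and multiply by $\beta_{\lev(s)}$'' is not a valid manipulation of the aggregate inequality --- nothing ties a unit of $\phi$ or of $\wts(E_{\le k})$ to a specific level.

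The missing ingredient, which the paper's proof uses essentially, is the \emph{minimality} of $k$: for every $i<k$ the reverse inequality $\phi_{\le i} \le \epsilon\bigl(c(T_{\le i}) + f\,\wts(E_{\le i})\bigr)$ holds. The paper combines the lower bound at $k$ and these upper bounds at $i<k$ via Abel summation,
\[
\Phi_\down(T_{\le k}) \;=\; \sum_{i=0}^k \phi_i\,\beta_i \;=\; \beta_k\,\phi_{\le k} - \sum_{i=0}^{k-1}(\beta_{i+1}-\beta_i)\,\phi_{\le i},
\]
and since $\beta_{i+1}>\beta_i$, substituting the two families of inequalities and reversing Abel yields the per-level expression $\epsilon\sum_{i=0}^k \beta_i\bigl(c(T_i)+f\,\wts(E_i)\bigr)$. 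It is precisely this per-level form that legitimizes multiplying $c_s$ and $\wts(e)$ by $\beta_{\lev(s)}$; from there your intended endgame ($c_s\ge(1+\epsilon)^{-\bs(s)-1}$, Bernoulli, and $\wts(e)=(1+\epsilon)^{-i}$ for $e\in A_i$) does go through. So the plan is salvageable, but only once you bring in the minimality of $k$ and the summation-by-parts step.
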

\begin{proof}
    By definition, $k$ is the minimum index such that: 
    \begin{equation}\label{first}
        \phi_{\leq k} ~>~ \epsilon\cdot \left(c(T_{\leq k}) + f\cdot \wts(E_{\leq k})\right),
    \end{equation}
    \begin{equation}\label{second}
        \phi_{\leq i} ~\leq~ \epsilon\cdot \left(c(T_{\leq i}) + f\cdot \wts(E_{\leq i})\right), \forall 0\leq i<k.
    \end{equation}
    Therefore, we have:
    \[\begin{aligned}
        &\Phi_{\down}(T_{\leq k}) = \sum_{i = 0}^k\phi_{i}\cdot \beta_i = \beta_k\cdot \phi_{\leq k} - \sum_{i =0}^{k-1}(\beta_{i+1} - \beta_i)\cdot \phi_{\leq i}\\
        &>\beta_k\cdot \epsilon\cdot \left(c(T_{\leq k}) + f\cdot \wts(E_{\leq k})\right) - \sum_{i=0}^{k-1}(\beta_{i+1} - \beta_i)\cdot \epsilon\cdot \left(c(T_{\leq i}) + f\cdot \wts(E_{\leq i})\right)\\
        &=\epsilon\sum_{i=0}^k\beta_i\cdot c(T_i) + \epsilon f\cdot\sum_{i =0}^k \beta_i \cdot \wts(E_i) 
        \mbox{~~(by Abel transformation)}
        \\
        &\geq \epsilon\sum_{i = 0}^k\sum_{s\in T_i}c_s\cdot \frac{2(1+\epsilon)^{i+1}}{\epsilon^2} + \epsilon f\cdot\sum_{i =0}^k \sum_{e\in A_i} \wts(e)\cdot (1 + \epsilon)^{i + 1} \cdot \frac{2}{\epsilon^2}\\
        &\geq \sum_{i =0}^k \sum_{s\in T_i}\frac{2(1+\epsilon)^{i - \bs(s)}}{\epsilon} + f\cdot\sum_{i = 0}^k\sum_{e\in A_i} (1+\epsilon)^{-i} \cdot (1 + \epsilon)^{i + 1}\cdot\frac{2}{\epsilon} \\
        &\geq \sum_{s\in T_{\leq k}}2\max\{\lev(s) - \bs(s)+1, 0 \} + \frac{2f}{\epsilon}|A_{\leq k}|,
    \end{aligned}\]
    where the first inequality follows from \Cref{first} and \Cref{second}; the last two inequalities hold, since $c_s \ge (1 + \epsilon)^{-\bs(s) - 1}$ and $(1 + \epsilon)^x \ge 1 + \epsilon x \ge \epsilon (1 + x)$. 
\end{proof}

First, let us analyze the runtime cost of the steps inside $\reset(k)$, together with finding $k$, excluding the steps inside the calls to $\procdet(e)$ and $\procrand(e)$. By \Cref{linkedlist}, the runtime cost of finding $k$, implicitly zeroing the sets up to level $k$, and initializing set $E$ is 
\[O\brac{\frac{\log C}{\epsilon} + |T_{\leq k}\setminus T_{\leq \ceil{\log_{1+\epsilon}C}+1}| + |E_{\le k}|}.\]

Processing clean elements takes $O\left(f |E|\right)$ time; then we spend $O\left(|D|\right) = O\left(|E_{\le k}|\right)$ time on processing dirty elements. 
The time the algorithm spends on the post-processing steps before invoking $\water$ is $O\left(f(|E| + |E'|) + |S| + f|\widehat{E}| + |\widehat{E}| + |\widehat{S}|\right) = O\left(f(|E| + |E'|)\right)$, since $\widehat{E} \subseteq E \cup E'$, $\widehat{S} \subseteq S$ and $|S| \le f(|E| + |E'|)$. 
Finally, invoking $\water(\widehat{k}, \widehat{S}, \widehat{E})$ takes time $O\left(f|\widehat{E}| + \widehat{k}\right) = O\left(f|\widehat{E}| + \frac{|\widehat{E}|}{\epsilon} + \frac{1}{\epsilon^2} + \frac{\log C}{\epsilon}\right) = O\left(f|\widehat{E}| + \frac{1}{\epsilon^2} + \frac{\log C}{\epsilon}\right)$, according to \Cref{waterfill} and since $f > \frac{\log C}{\epsilon}$ and $\widehat{k} \le \ceil{\log_{1 + \epsilon} \frac{2C \cdot |\widehat{E}|}{\epsilon}}$.

Therefore, the total runtime cost is \[O\brac{\frac{1}{\epsilon^2} + \frac{\log C}{\epsilon} + \left|T_{\leq k}\setminus T_{\leq \ceil{\log_{1+\epsilon}C}+1}\right| + |E_{\le k}| + f(|E| + |E'|)}.\]

According to \Cref{down}, the runtime cost $O\brac{|T_{\leq k}\setminus T_{\leq \ceil{\log_{1+\epsilon}C}+1}|}$ can be charged to the elimination of down potential $\Phi_{\down}(T_{\leq k})$, since the first term is at least:
\[\begin{aligned}
	\sum_{s\in T_{\leq k}}\max\{\lev(s) - \bs(s)+1, 0 \}&\geq \sum_{s\in T_{\leq k}\setminus T_{\ceil{\log_{1+\epsilon}C}+1}}\max\{\lev(s) - \bs(s)+1, 0 \}\\
	&\geq \sum_{s\in T_{\leq k}\setminus T_{\ceil{\log_{1+\epsilon}C}+1}}1 = \left|T_{\leq k}\setminus T_{\leq \ceil{\log_{1+\epsilon}C}+1}\right|
\end{aligned}\]

Observe that every element in $E \cup E'$ either was active before the call to $\reset(k)$, or was passive but became active during it. 
The runtime cost $O(f)$ induced by each active element can be charged to the elimination of $\Phi_{\down}(T_{\leq k})$, while the cost induced by passive elements can be charged to the clearance of their passive potentials. Using the same argument, we can cover the runtime cost $O(1)$ spent on processing each dirty element, for which we do not call $\procdet(e)$ or $\procrand(e)$ (because it became active during a previous iteration). For other dirty elements, the $O(1)$ runtime cost can be transferred to the respective call to $\procdet(e)$  or $\procrand(e)$.

The runtime cost $O(\frac{1}{\epsilon^2} + \frac{\log C}{\epsilon})$ can be charged to the elimination of $\Phi_\clean(T_{\le k})$. Since $\phi_{\le k} > 0$, then $\phi(s) > 0$ for at least one set $s$ with $\lev(s) \le k$. Therefore $\Phi_\clean(T_{\le k})$ has decreased by at least $\frac{1}{\epsilon^2} + \frac{\log C}{\epsilon}$.

In the meantime, since $\water$ might decrease the levels of the sets in $\widehat{S}$, we might have increased the lift potentials of $s\in \widehat{S}$ by at most $\max\{\lev^\old(s)-\bs(s)+1, 0\}$. Fortunately, such potential increases can also be paid for by the first term of $\Phi_{\down}(T_{\leq k})$ from \Cref{down}.
As for $\Phi_\up$, it does not increase during that steps, since for every set $s$ for which $\lev(s)$ or $\wts(s)$ were changed, either $s \in \widehat{S}$, or we have $\lev(s) = 0$ as a result. In the former case, we have $\wts(s) < c_s$ after the call to $\water(\widehat{k}, \widehat{S}, \widehat{E})$ by \Cref{waterfill}. In the latter case, it was either because $s \in S$ and slack, or because it was affected by the implicit zeroing, in which case we have $\wts(s) < c_s$, since \Cref{hp-below} holds after processing dirty elements.

It remains to analyze the runtime cost of the calls to $\procdet(e)$ and $\procrand(e)$. We will charge that costs to the insertion of $e$.

\paragraph{Deterministic rebuilding.} 
The runtime cost of $\procdet(e)$ is $O(f)$. As for the potential increase, notice that only sets below level $k + 1$ can increase their level. Any such set $s$ has $\phi(s) = 0$ by \Cref{hp-below}. Therefore, $\Phi_\down$ and $\Phi_\clean$ do not increase. 
$\Phi_\up$ does not increase as well. Observe that if there is a tight set $s \ni e$, then no weight of a set has changed, and if $\lev(s)$ is changed, then $\wts(s) < c_s$ by \Cref{hp-below}. Otherwise, if all sets are slack, then the call to $\decrease(e)$ could increase the weights of the sets; however, it makes sure that $\wts(s) < c_s$ for all $s \ni e$ as a result. $\Phi_\lift$ does not increase, since sets can only increase their levels.

Next, we will show that the gap $\ilev(e) - \zlev(e)$ decreases by at least one each time we call $\procrand(e)$.

\begin{claim}\label{det-gap-decr}
    The call to $\procdet(e)$ decreases the gap $\ilev(e) - \zlev(e)$ by at least one, and the size of the gap  becomes at most $\ilev^\old(e) - k - 1$.
\end{claim}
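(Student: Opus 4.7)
The plan is to track the values of $\ilev(e)$ and $\zlev(e)$ before and after the call, using the fact that $e$ is dirty and passive at the start of the iteration. First, I would invoke \Cref{iter-zlev<=k}: since $e$ is still passive and dirty at the beginning of the iteration, we have $\zlev(e) = \zlev^\old(e) \le k$ and $\ilev(e) = \ilev^\old(e)$, where the latter satisfies $\ilev^\old(e) > k+1$ by the definition of a dirty element.

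Next I would case-split according to the branches of $\procdet(e)$. In the first branch, a tight set $s \ni e$ exists; the subroutine sets $\lev(s) \gets \max\{k+1,\lev(s)\}$ and $\zlev(e) \gets \lev(s) \ge k+1$, while $\ilev(e)$ is untouched, so $\ilev(e) = \ilev^\old(e)$. In the second branch, all sets $s \ni e$ are slack and we call $\decrease(e)$. Inspecting $\decrease(e)$, the new value $h$ is chosen in the interval $[l, \ilev(e)]$ with $l \ge k+1$, so after the assignment $\ilev(e) \gets h$ we get $k+1 \le \ilev(e) \le \ilev^\old(e)$. In both sub-branches of $\decrease(e)$, $\zlev(e)$ is updated to either $k+1$ or $\max_{s\ni e}\{\lev(s)\} \ge k+1$, hence $\zlev(e) \ge k+1$ in either sub-branch.

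Combining, after the call we have $\ilev(e) \le \ilev^\old(e)$ and $\zlev(e) \ge k+1$, so
\[
\ilev(e) - \zlev(e) \;\le\; \ilev^\old(e) - (k+1) \;=\; \ilev^\old(e) - k - 1,
\]
which is the second part of the claim. For the first part, comparing to the old gap $\ilev^\old(e) - \zlev^\old(e)$ and using $\zlev^\old(e) \le k$,
\[
\bigl(\ilev^\old(e)-\zlev^\old(e)\bigr) - \bigl(\ilev(e)-\zlev(e)\bigr) \;\ge\; (k+1) - \zlev^\old(e) \;\ge\; 1.
\]
I don't expect any real obstacle here; the only subtlety is confirming that $\ilev(e)$ never \emph{increases} during $\procdet(e)$, which in the tight-set branch is immediate and in the $\decrease(e)$ branch follows from the search range being $[l,\ilev(e)]$. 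Everything else is a direct bookkeeping argument on levels.
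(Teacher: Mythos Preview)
Your proposal is correct and follows essentially the same approach as the paper's proof: invoke \Cref{iter-zlev<=k} to pin down the starting values, then verify that $\zlev(e)$ ends up at least $k+1$ while $\ilev(e)$ does not increase. The paper compresses your case analysis into a single sentence (``during the call, $\zlev(e)$ becomes at least $k+1$, and $\ilev(e)$ does not increase''), but the underlying reasoning is identical.
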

\begin{proof}
    By \Cref{iter-zlev<=k}, we have $\zlev(e) = \zlev^\old(e) \le k$, and $\ilev(e) = \ilev^\old(e)$ before the call. Observe that during the call, $\zlev(e)$ becomes at least $k + 1$, and $\ilev(e)$ does not increase. Therefore, the gap $\zlev(e) - \ilev(e)$ decreases by at least one and becomes at most  $\ilev^\old(e) - k - 1$.
\end{proof}

Notice that the gap $\ilev(e) - \zlev(e)$ is affected only by calls to $\procdet(e)$. This follows from \Cref{obs:clean-passive,obs:post-processing} and \Cref{handle-local}. Observe that calls to $\del$ do not affect the gap either. Calls to $\ins$ do not affect the gap by \Cref{obs:fix-main}(2). By \Cref{det-gap-decr}, the gap decreases by at least one after each call to $\procdet(e)$. Since after $e$ is inserted, we have $\ilev(e) \le \zlev(e) + \left\lceil\log_{1+\epsilon}\max\{f, \frac{2C}{\epsilon} \}\right\rceil$, there are at most $\left\lceil\log_{1+\epsilon}\max\{f, \frac{2C}{\epsilon} \}\right\rceil$ instances of $\procdet(e)$. As we have shown, the amortized runtime cost of one instance of $\procdet(e)$ is $O(f)$; therefore, the total runtime spent on the calls to $\procdet(e)$ is \[O\brac{\frac{f \log f}{\epsilon} + \frac{f \log C}{\epsilon} + \frac{f}{\epsilon^2}}.\]

\paragraph{Randomized rebuilding.} 
Consider the call to $\procrand(e)$. If the random sampling step succeeds, then the runtime cost is $O(f / \clog_{1 + \epsilon}^{(\eta)} f)$; otherwise, it is $O(f)$ (excluding the call to $\fix$). Notice that if the random sampling step fails, and we enter the cases for $|\widehat{F}| = 0$ or $|\widehat{F}| > \left(\clog^{(\eta)}_{1+\epsilon}f\right)^2$, the potential does not increase, by the same argument as for the deterministic version. The same holds for the case, when the random sampling succeeds. Next, consider the case when the random sampling fails, and we enter the case where we call $\fix$. By the same argument, the steps before the call do not increase the potential. We bound the amortized runtime cost of calling $\fix(e, l)$ by the following lemma.
\begin{lemma}\label{lm:fix-pot-increase}
    The amortized runtime cost of the call to $\fix(e, l)$ during $\procrand(e)$ is $O\left(\frac{f}{\epsilon^2} + \frac{f \log C}{\epsilon}\right)$ if $\eta < 3$, and $O\left(\frac{f}{\epsilon^2} + \frac{\log C}{\epsilon}\log^2 f\right)$ otherwise.
\end{lemma}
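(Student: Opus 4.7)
The plan is to apply Theorem~\ref{fix-runtime} with the particular $|F|$ and $d$ induced by the branch of $\procrand(e)$ in which $\fix(e, l)$ is invoked. The key structural step is to show $F \subseteq \widehat{F}$, whence $|F| \le \left(\clog_{1+\epsilon}^{(\eta)} f\right)^2$ by the branch's entering condition. For the containment it suffices to establish $(1+\epsilon)^{-l} \le \delta$: when $l = \zlev(e) + \widehat{d}$ this follows from $\zlev(e) \ge k+1$ (the algorithm first raises all tight sets containing $e$) together with $(1+\epsilon)^{-\widehat{d}} \le \min\left\{\left(\clog_{1+\epsilon}^{(\eta)} f\right)^{-4}, (\epsilon/(2C))^2\right\}$ by definition of $\widehat{d}$; when $l = \ilev(e)$ it follows from Lemma~\ref{delta-increase}. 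Any $s \in F$ then satisfies $\wts^\old(s) - \wts^\old(e) + \delta \ge c_s$, so $s \in \widehat{F}$. Moreover, when $l = \ilev(e)$ Observation~\ref{obs:fix-no-changes} guarantees $\fix(e,l)$ makes no changes, contributing only the $O(f)$ actual runtime; otherwise $d = \widehat{d}$, so $(1+\epsilon)^{-d+1} \le (1+\epsilon)/\left(\clog_{1+\epsilon}^{(\eta)} f\right)^4$.

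Substituting into Theorem~\ref{fix-runtime}, the amortized cost is at most
\[3 \left(\clog_{1+\epsilon}^{(\eta)} f\right)^2 \left(\tfrac{3f}{\epsilon^3} + \tfrac{f \log C}{\epsilon^2}\right)(1+\epsilon)^{-\widehat{d}+1} + \left(\clog_{1+\epsilon}^{(\eta)} f\right)^2 \left(\tfrac{1}{\epsilon^2} + \tfrac{\log C}{\epsilon}\right) + O(f).\]
For the first summand, the exponential factor cancels two of the $\clog_{1+\epsilon}^{(\eta)} f$ factors, leaving a bound of $O\!\left((f/\epsilon^3 + f \log C/\epsilon^2)/\left(\clog_{1+\epsilon}^{(\eta)} f\right)^2\right)$. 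The entering condition of the randomized branch supplies the two lower bounds $\clog_{1+\epsilon}^{(\eta)} f > 200/\epsilon^2$ and $\clog_{1+\epsilon}^{(\eta)} f > 2\log_{1+\epsilon}(2C/\epsilon) = \Omega(\log C/\epsilon)$; their product gives $\left(\clog_{1+\epsilon}^{(\eta)} f\right)^2 = \Omega(\log C / \epsilon^3)$, which is enough to drive the first summand to $O(f)$.

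For the second summand the argument splits on $\eta$. When $\eta \ge 3$, the comment after Lemma~\ref{lm:iterated-log} yields $\clog_{1+\epsilon}^{(\eta)} f \le \clog_{1+\epsilon}^{(3)} f = O(\log f)$, so the summand is $O(\log^2 f/\epsilon^2 + \log^2 f \log C/\epsilon)$; since $\log^2 f = O(f)$ this matches the target $O(f/\epsilon^2 + (\log C/\epsilon)\log^2 f)$. When $\eta < 3$, we only have $\clog_{1+\epsilon}^{(\eta)} f \le \clog_{1+\epsilon}^{(1)} f = O(\log f/\epsilon)$, giving a summand of $O(\log^2 f/\epsilon^4 + \log^2 f \log C/\epsilon^3)$; but the entering condition with $\eta < 3$ already forces $\log_{1+\epsilon} f > 40/\epsilon^2$, hence $f$ is at least exponentially large in $1/\epsilon$, which easily implies $\log^2 f/\epsilon^2 \le f$ and collapses the bound to the target $O(f/\epsilon^2 + f\log C/\epsilon)$. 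The main technical challenge is juggling these various bounds on $\clog_{1+\epsilon}^{(\eta)} f$ simultaneously: the lower bounds drive the first summand, the upper bound when $\eta \ge 3$ directly controls the second, and for small $\eta$ one must leverage that the entering condition forces $f$ to already be enormous.
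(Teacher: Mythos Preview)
Your argument is correct and follows the same overall scaffold as the paper's proof (show $F\subseteq\widehat F$, then plug the $|\widehat F|$ and $d$ bounds into Theorem~\ref{fix-runtime}), but you bound the two summands by a genuinely different route.

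For the first summand, the paper exploits the other half of $\widehat d$: since $(1+\epsilon)^{-\widehat d}\le(\epsilon/2C)^2$ as well as $(1+\epsilon)^{-\widehat d}\le\bigl(\clog_{1+\epsilon}^{(\eta)}f\bigr)^{-4}$, a two-case argument gives $|\widehat F|\,(1+\epsilon)^{-\widehat d+1}\le 2\epsilon/(2C)$ directly, so the first summand is at most $6\bigl(\tfrac{3f}{\epsilon^3}+\tfrac{f\log C}{\epsilon^2}\bigr)\cdot\tfrac{\epsilon}{2C}<12f/\epsilon^2$. You instead keep only the $\bigl(\clog_{1+\epsilon}^{(\eta)}f\bigr)^{-4}$ factor and then invoke the entering condition to lower-bound $\clog_{1+\epsilon}^{(\eta)}f$ by both $200/\epsilon^2$ and $2\log_{1+\epsilon}(2C/\epsilon)$. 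This works, but note a small sloppiness: the single product bound $\bigl(\clog_{1+\epsilon}^{(\eta)}f\bigr)^2=\Omega(\log C/\epsilon^3)$ handles the $f\log C/\epsilon^2$ part but not the $f/\epsilon^3$ part when $\log C$ is $O(1)$; there you should instead square the first lower bound to get $\bigl(\clog_{1+\epsilon}^{(\eta)}f\bigr)^2\ge (200/\epsilon^2)^2$. For the second summand with $\eta<3$, the paper simply uses the trivial $|\widehat F|\le f$ (there are only $O(1)$ such values of $\eta$, so this does not hurt the overall $\log^* f$ bound), whereas you upper-bound $\clog_{1+\epsilon}^{(\eta)}f$ by $O(\log f/\epsilon)$ and then argue that the entering condition forces $f$ to be exponential in $1/\epsilon$ so that $\log^2 f/\epsilon^2\le f$. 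Both detours are valid, but the paper's choices are shorter and avoid the need to track lower bounds on $\clog_{1+\epsilon}^{(\eta)}f$ or the size of $f$.
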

\begin{proof}
    Consider $F$ defined in the call to $\fix(e, l)$. By  definition, $F$ is a collection of all sets $s\ni e$ such that $\wts(s) -\wts(e) + (1+\epsilon)^{-\zlev(e) -d}> c_{s}$. Since $\zlev(e)$ becomes at least $k + 1$ right before the call to $\fix(e, l)$, we have $(1 + \epsilon)^{-\zlev(e) - \widehat{d}} \le \delta$, and hence $F\subseteq \widehat{F}$. Therefore, according to \Cref{fix-runtime}, the amortized runtime cost of the call to $\fix$ can be bounded by
    \begin{equation}\label{eq:fix-pot-bound}
        3|\widehat{F}|\cdot \left(\frac{3f}{\epsilon^3} + \frac{f\log C}{\epsilon^2}\right)\cdot (1+\epsilon)^{-\widehat{d} + 1} + |\widehat{F}| \cdot \left(\frac{1}{\epsilon^2} + \frac{\log C}{\epsilon}\right) + O(f)
    \end{equation}
    Since $|\widehat{F}| \leq \left(\clog^{(\eta)}_{1+\epsilon}f\right)^2$ and $\widehat{d} = \ceil{\log_{1+\epsilon}\max\left\{\brac{\clog_{1+\epsilon}^{(\eta)}f}^{4}, \brac{\frac{2C}{\epsilon}}^{2} \right\}}$, we can bound the first term of \Cref{eq:fix-pot-bound}:
    \[3|\widehat{F}|\cdot \left(\frac{3f}{\epsilon^3} + \frac{f\log C}{\epsilon^2}\right)\cdot (1+\epsilon)^{-\widehat{d} + 1} \leq 6\left(\frac{3f}{\epsilon^3} + \frac{f\log C}{\epsilon^2}\right) / \frac{2C}{\epsilon} < 12f/\epsilon^2\]
    By \Cref{lm:iterated-log}, $|\widehat{F}| \leq \left(\epsilon/5 \cdot5\log_{1+\epsilon}f\right)^2 = O(\log^2 f)$ when $\eta \ge 3$. Therefore, we can bound the second term of \Cref{eq:fix-pot-bound}:
    \[|\widehat{F}| \cdot \left(\frac{1}{\epsilon^2} + \frac{\log C}{\epsilon}\right) = O(\log^2 f) \cdot \left(\frac{1}{\epsilon^2} + \frac{\log C}{\epsilon}\right) = O\left(\frac{f}{\epsilon^2}\right) + O\left(\frac{\log C}{\epsilon}\log^2 f\right)\]
    When $\eta < 3$, we can use a trivial bound $|\widehat{F}| \le f$, and bound the second term:
    \[|\widehat{F}| \cdot \left(\frac{1}{\epsilon^2} + \frac{\log C}{\epsilon}\right) \le \frac{f}{\epsilon^2} + \frac{f \log C}{\epsilon}\]
\end{proof}

For an element $e$, consider all the instances of $\procrand(e)$. We will again argue that the gap $\ilev(e) - \zlev(e)$ decreases after each such call. By the same argument as for the deterministic version, the gap is affected by these calls only.

First, we bound the total runtime of the instances of $\procrand(e)$ where we fall back to $\procdet(e)$. Consider the first such an instance of $\procrand(e)$. It was called from within $\reset(k)$ for some $k$. Observe that we call $\procdet(e)$ whenever $\ilev(e) - k - 1 \le \frac{200}{\epsilon^2}$ or $\ilev(e) - \zlev(e) \le 1 + 2\log_{1 + \epsilon} \frac{2C}{\epsilon}$. If the first inequality holds, then, by \Cref{det-gap-decr}, the gap $\ilev(e) - \zlev(e)$ becomes at most $\ilev(e) - k - 1 \le \frac{200}{\epsilon^2}$ as a result. The gap $\ilev(e) - \zlev(e)$ strictly decreases after each call to $\procdet(e)$, and steps outside it do not increase the gap. Therefore, there are at most $O(\max \{1 + 2\log_{1 + \epsilon} \frac{2C}{\epsilon}, 1 + \frac{200}{\epsilon^2}\}) = O(\frac{\log C}{\epsilon} + \frac{1}{\epsilon^2})$ such instances of $\procrand(e)$, and each such an instance takes time $O(f)$. Therefore, the total time spent on such instances is $O(\frac{f}{\epsilon^2} + \frac{f\log C}{\epsilon})$.

Next, consider all the instances of $\procrand(e)$ where we do not fall back to $\procdet(e)$ and group them by the value of $\eta$ during them.

\begin{claim}\label{procrand-decr-weak}
    Each call to $\procrand(e)$ decreases the gap $\ilev(e) - \zlev(e)$ by at least one, and the size of the gap becomes at most $\ilev^\old(e) - k - 1$.
\end{claim}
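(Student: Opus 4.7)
The plan is to reduce both parts of the claim to a single sufficient condition and then verify it in each execution path of $\procrand(e)$.

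First, I would apply \Cref{iter-zlev<=k} to conclude that at the moment $\procrand(e)$ is entered we have $\zlev(e) = \zlev^\old(e) \le k$ and $\ilev(e) = \ilev^\old(e)$, so the initial gap is at least $\ilev^\old(e) - k$. Hence if I can show that at the end of the call $\ilev(e) \le \ilev^\old(e)$ and $\zlev(e) \ge k+1$, the new gap is at most $\ilev^\old(e) - k - 1$, which is strictly smaller than the old gap. Both halves of the claim thus reduce to these two inequalities, and the ``gap $\le \ilev^\old(e) - k - 1$'' bound dominates the ``decreases by at least one'' bound.

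Next, I would case-split on the execution paths of $\procrand(e)$. The fallback to $\procdet(e)$ is immediate from \Cref{det-gap-decr}. In the two branches where a witness set $s$ is selected (either by successful random sampling or because $|\widehat{F}| > (\clog^{(\eta)}_{1+\epsilon}f)^2$), \Cref{witness-tight} guarantees tightness of $s$; the code raises $\lev(s)$ to at least $k+1$ and sets $\zlev(e) \gets \lev(s)$, while leaving $\ilev(e)$ untouched, so both target inequalities hold. In the branch $|\widehat{F}| = 0$, the subroutine $\decrease(e)$ is invoked; by inspection, $\ilev(e)$ is only lowered to some $h \le \ilev^\old(e)$, and in either of its two sub-branches $\zlev(e)$ is explicitly set to a value $\ge k+1$.

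The main obstacle is the remaining branch $0 < |\widehat{F}| \le (\clog^{(\eta)}_{1+\epsilon}f)^2$, where $\fix(e, l)$ is invoked and can modify both $\ilev(e)$ and $\zlev(e)$ nontrivially. Here I plan to proceed in two steps. First, the preparation stage before $\fix$ raises $\lev(s)$ to at least $k+1$ for every tight $s \ni e$; since \Cref{witness-tight} ensures that $\widehat{F}$ contains at least one tight set, the subsequent assignment $\zlev(e) \gets \max_{s \ni e}\{\lev(s)\}$ yields $\zlev(e) \ge k+1$ just before $\fix$ is called. Second, the definition $l = \min\{\ilev(e), \zlev(e) + \widehat{d}\}$ immediately gives $l \le \ilev(e) = \ilev^\old(e)$, and combining this with \Cref{obs:fix-main}(1), which states that after $\fix$ the gap equals $l$ minus the pre-$\fix$ value of $\lev(e)$ (which equals the pre-$\fix$ value of $\zlev(e)$ and is $\ge k+1$), the post-$\fix$ gap is at most $\ilev^\old(e) - (k+1)$. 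The trailing implicit zeroing at level $0$ affects only $\lev(s)$ and $\phi(s)$ for sets, not $\ilev$ or $\zlev$ of any element, so the bound persists through the end of $\procrand(e)$.
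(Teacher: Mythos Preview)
Your proposal is correct and follows essentially the same approach as the paper's proof: both invoke \Cref{iter-zlev<=k} for the starting values, defer the fallback case to \Cref{det-gap-decr}, and handle each branch of $\procrand(e)$ by checking that $\zlev(e)$ rises to at least $k+1$ while $\ilev(e)$ does not exceed $\ilev^\old(e)$, with the $\fix$ branch settled via \Cref{obs:fix-main}(1). Your write-up is somewhat more explicit (e.g., the paper compresses the $\fix$ branch to ``$l\le\ilev(e)$, which preserves the gap''), but the logical structure is the same.
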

\begin{proof}
    If we fall back to $\procdet(e)$, then this holds by \Cref{det-gap-decr}. Otherwise, by \Cref{iter-zlev<=k}, we have $\zlev(e) = \zlev^\old(e) \le k$, and $\ilev(e) = \ilev^\old(e)$ before the call. If the random sampling step succeeds, then $\zlev(e)$ becomes at least $k + 1$, and $\ilev(e)$ remains unchanged. If it fails, then we branch into three cases. If the first and the last case, $\zlev(e)$ becomes at least $k + 1$ and $\ilev(e)$ does not increase. In the second case, $\zlev(e)$ becomes at least $k + 1$ as well, and then we call to $\fix(e, l)$, where $l \le \ilev(e)$, which preserves the gap by \Cref{obs:fix-main}(1).
\end{proof}

\begin{claim}\label{reset-runtime}
    For any element $e$, the total amortized runtime spent on the instances of $\procrand(e)$ where we do not fall back to $\procdet(e)$ with the same value of $\eta$ is bounded in expectation by $O\left(\frac{f}{\epsilon^2} + \frac{f \log C}{\epsilon}\right)$ if $\eta < 3$, and $O\left(\frac{f}{\epsilon^2} + \frac{\log C}{\epsilon}\log^2 f\right)$ otherwise.
\end{claim}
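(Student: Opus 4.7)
The plan is to classify the instances of $\procrand(e)$ with value $\eta$ by the outcome of the random sampling step and the subsequent branch, then count the calls and bound the expected cost in each class. I would first observe that by \Cref{procrand-decr-weak}, each such call strictly decreases the gap $\ilev(e) - \zlev(e)$, and the gap after the call is at most $\ilev(e) - k - 1 \le \clog_{1+\epsilon}^{(\eta)} f$ (since $\zlev(e) \ge k+1$ afterwards). Between procrand calls, $\ilev(e)$ only drops via $\decrease$ or $\fix$ on $e$, and $\fix$ preserves the gap by \Cref{obs:fix-main}(1); hence the gap is non-increasing across these calls, and a subsequent call with value $\eta$ requires the gap to be at least $\clog_{1+\epsilon}^{(\eta+1)} f + 1$. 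Therefore the total number of procrand(e) calls with value $\eta$ is $O(\clog_{1+\epsilon}^{(\eta)} f)$.

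Next I would split the calls into four categories: (a) sampling succeeds; (b) sampling fails with $|\widehat{F}| = 0$ (so $\decrease(e)$ is invoked); (c) sampling fails with $0 < |\widehat{F}| \le (\clog_{1+\epsilon}^{(\eta)} f)^2$ (so $\fix(e,l)$ is invoked); (d) sampling fails with $|\widehat{F}| > (\clog_{1+\epsilon}^{(\eta)} f)^2$. The crux, which I expect to be the main obstacle, is to verify that (b) and (c) each occur at most once per value $\eta$. In (b), $\decrease$ chooses the largest power of $(1+\epsilon)^{-1}$ strictly below $\min_{s\ni e}(c_s - \wts(s) + \wts(e))$, a quantity that exceeds $\delta$ because $|\widehat{F}| = 0$, leaving new $\ilev(e) - k - 1 \le 1 + \log_{1+\epsilon}(1/\delta)$. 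In (c), $\fix(e,l)$ with $l = \min\{\ilev(e), \zlev(e) + \widehat{d}\}$ forces the new gap to be at most $\widehat{d}$. Plugging in the definitions of $\delta$ and $\widehat{d}$, both bounds evaluate to at most $\max\{\tfrac{4}{5}\clog_{1+\epsilon}^{(\eta+1)} f, 2\log_{1+\epsilon}(2C/\epsilon)\} + O(1)$. In the first sub-case this is strictly below $\clog_{1+\epsilon}^{(\eta+1)} f$ (using that $\clog_{1+\epsilon}^{(\eta+1)} f$ is sufficiently large, which holds since $\eta + 1 \le \clog^*$), forcing the next call's value to exceed $\eta$; in the second sub-case it is at most the $\procdet$ fallback threshold $\max\{200/\epsilon^2, 1 + 2\log_{1+\epsilon}(2C/\epsilon)\}$, forcing the next call to fall back to $\procdet$.

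Finally I would bound expected costs. Sampling alone costs $O(f/\clog_{1+\epsilon}^{(\eta)} f)$, and failure adds $O(f)$ for enumerating $\widehat{F}$. In (d), each sample hits a witness with probability at least $(\clog_{1+\epsilon}^{(\eta)} f)^2/f$, so all $50\lceil f/\clog_{1+\epsilon}^{(\eta)} f\rceil$ samples miss with probability at most $e^{-50 \clog_{1+\epsilon}^{(\eta)} f}$; the expected cost of a call in (a) or (d) is thus $O(f/\clog_{1+\epsilon}^{(\eta)} f)$, and summed over the $O(\clog_{1+\epsilon}^{(\eta)} f)$ such calls it contributes $O(f)$. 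Adding the single call in (b) of cost $O(f)$, and the single call in (c) whose $\fix$ cost, by \Cref{lm:fix-pot-increase}, is $O(f/\epsilon^2 + f\log C/\epsilon)$ for $\eta < 3$ or $O(f/\epsilon^2 + (\log C/\epsilon)\log^2 f)$ otherwise, yields the two stated bounds.
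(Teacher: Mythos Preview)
Your proposal is correct and follows essentially the same route as the paper: both bound the number of value-$\eta$ instances by $O(\clog_{1+\epsilon}^{(\eta)} f)$ via \Cref{procrand-decr-weak}, split into the same four outcome cases, argue that the $|\widehat{F}|=0$ and small-$\widehat{F}$ cases together happen at most once per $\eta$ (since the resulting gap drops below $\clog_{1+\epsilon}^{(\eta+1)}f$ or below the $\procdet$ fallback threshold), bound the $\fix$ cost by \Cref{lm:fix-pot-increase}, and handle the large-$\widehat{F}$ case by the exponential tail on the sampling failure probability. One minor slip: the intermediate expression ``$\ilev(e)-k-1\le 1+\log_{1+\epsilon}(1/\delta)$'' in case (b) is off by $k+1$ (since $\delta$ already contains the factor $(1+\epsilon)^{-k-1}$), but your final bound $\max\{\tfrac{4}{5}\clog_{1+\epsilon}^{(\eta+1)}f,\,2\log_{1+\epsilon}(2C/\epsilon)\}+O(1)$ is correct.
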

\begin{proof}
    First, notice that for a fixed value of $\eta$, there are at most $\clog_{1 + \epsilon}^{(\eta)} f$ instances of $\procrand(e)$. Indeed, consider the first among them. By the definition of $\eta$, at the beginning of it, we have $\ilev(e) - k - 1 \le \clog_{1 + \epsilon}^{(\eta)} f$. By \Cref{procrand-decr-weak}, the gap $\ilev(e) - \zlev(e)$ becomes at most $\ilev(e) - k - 1$ after it.

    The runtime cost of the random sampling step is $O(f / \clog_{1 + \epsilon}^{(\eta)} f)$, so the total runtime spent on it for all such instances is $O(f)$. For the rest, assume that the random sampling step fails. In that case, we spend time $O(f)$ on computing $\widehat{F}$. Next, the analysis splits into cases, depending on the size of $\widehat{F}$.

    If $0 < |\widehat{F}| \le \brac{\clog_{1+\epsilon}^{(\eta)}f}^2$, then after the call to $\fix(e, l)$, the gap $\ilev(e) - \zlev(e)$ is at most 
    \begin{equation}\label{eq:gap-decr}
        \ceil{\log_{1+\epsilon}\max\left\{\brac{\clog_{1+\epsilon}^{(\eta)}f}^4, \brac{\frac{2C}{\epsilon}}^2 \right\}}\leq \max\left\{\clog_{1+\epsilon}^{(\eta+1)}f, 1+2\log_{1+\epsilon}\frac{2C}{\epsilon}\right\}
    \end{equation}
    In that case, we use \Cref{lm:fix-pot-increase} to bound the runtime cost of $\fix(e, l)$.

    If $\widehat{F} = \emptyset$, then we call $\decrease(e)$. Due to that call, either $e$ becomes active, or $\ilev(e)$ becomes at most $\ceil{-\log_{1 + \epsilon} \delta}$. 
    Recall that 
    \[\delta = \min\left\{\left(\frac{1}{\clog^{(\eta)}_{1+\epsilon}f}\right)^4, \brac{\frac{\epsilon}{2C}}^2\right\}\cdot (1+\epsilon)^{-k-1}.\]
    Observe that after the call, we have $\zlev(e) \ge k + 1$. Thus, we get the same bound on the gap as in \Cref{eq:gap-decr}.

    Therefore, if $|\widehat{F}| \le \brac{\clog_{1+\epsilon}^{(\eta)}f}^2$, then either 
    the gap $\ilev(e) - \zlev(e)$ becomes at most $\clog_{1 + \epsilon}^{(\eta + 1)} f$, and hence in the next instance the value of $\eta$ will be strictly larger, or the gap becomes at most $1+2\log_{1+\epsilon}\frac{2C}{\epsilon}$, in which case it is the last instance of $\procrand(e)$ where we do not fall back to $\procdet(e)$.

    Otherwise, if $|\widehat{F}| > \brac{\clog_{1+\epsilon}^{(\eta)}f}^2$, the algorithm spends 
    time $O(f)$. However, the probability that the random sampling step fails would be at most
    \[\left(1 - |\widehat{F}|/f\right)^{50\cdot\ceil{f / \clog_{1+\epsilon}^{(\eta)}f}} ~\leq~ \brac{\frac{1}{e}}^{10\clog_{1+\epsilon}^{(\eta)}f} ~\leq~  \frac{1}{\clog_{1+\epsilon}^{(\eta-1)}f} ~\le~ \frac{1}{\clog_{1+\epsilon}^{(\eta)}f}.\]
    Therefore, the expected time cost of such a call is $O(f / \clog_{1+\epsilon}^{(\eta)}f)$. 

    As we have shown in the beginning of the proof, there are at most $\clog_{1 + \epsilon}^{(\eta)} f$ instances with such a value of $\eta$. Hence the total expected runtime is
    \[O\left(\frac{f}{\clog^{(\eta)}_{1+\epsilon}f}\cdot \clog^{(\eta)}_{1+\epsilon}f\right) ~=~ O(f).\]
\end{proof}

For any passive element $e$, we have $\ilev(e) \le \zlev(e) + \ceil{\log_{1 + \epsilon} \max \{f, \frac{2C}{\epsilon} \}}$. Therefore, the gap $\ilev(e) - \zlev(e)$ is bounded by $\ceil{\log_{1 + \epsilon} f}$, since we assume $f > \frac{2C}{\epsilon}$ for the randomized version. Therefore, the maximum possible value of $\eta$ is bounded by $\clog_{1 + \epsilon}^* (f)$.
By \Cref{lm:iterated-log}, $\clog_{1 + \epsilon}^{(3)} (y) \le \epsilon / 5\cdot 5 \log_{1 + \epsilon} y \le 2 \ln y$ for large enough $y$; otherwise, if $y$ is small, then $\clog_{1 + \epsilon}^* (y) \le 3$. Thus $\clog_{1 + \epsilon}^* (f) = O(\log^* f)$, and hence the total runtime the algorithm spends on instances of $\procrand(e)$ is \[O\left(\frac{f}{\epsilon^2} \log^* f + \frac{\log C}{\epsilon} \log^2 f \log^* f + \frac{f}{\epsilon^2} + \frac{f \log C}{\epsilon}\right) = O\left(\frac{f}{\epsilon^2} \log^* f + \frac{f \log C}{\epsilon}\right).\]

We conclude our analysis with the following theorem.
\begin{theorem}
    In the case of the randomized algorithm, the amortized expected total time spent for each passive element $e$ on the calls to $\procrand(e)$ is $O\left(\frac{f}{\epsilon^2}\log^*f + \frac{f \log C}{\epsilon}\right)$.
\end{theorem}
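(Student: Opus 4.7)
The plan is to combine two previously established bounds --- one for the instances of $\procrand(e)$ that fall back to $\procdet(e)$ and one for those that do not --- by bucketing the latter by the value of the index $\eta$ and summing. First I would recall that for any passive element $e$ the gap $\ilev(e) - \zlev(e)$ is always at most $\ceil{\log_{1+\epsilon} f}$ (using the randomized-setting assumption $f > 2C/\epsilon$), which together with \Cref{lm:iterated-log} implies that $\eta$, whenever it is computed inside a call to $\procrand(e)$, can take at most $\clog_{1+\epsilon}^*(f) = O(\log^* f)$ distinct values.

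For the fallback calls, I would reuse the counting argument already spelled out in the preceding discussion: each such call strictly decreases the gap $\ilev(e) - \zlev(e)$ by at least one, and after the first fallback the gap is bounded by $\max\{\tfrac{200}{\epsilon^2},\,1 + 2\log_{1+\epsilon}\tfrac{2C}{\epsilon}\}$. Hence there are $O(\tfrac{\log C}{\epsilon} + \tfrac{1}{\epsilon^2})$ such calls in total, each of cost $O(f)$, contributing $O(\tfrac{f}{\epsilon^2} + \tfrac{f \log C}{\epsilon})$.

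For the non-fallback calls, I would group them by the value of $\eta$ and apply \Cref{reset-runtime} to each group: its expected cost is $O(\tfrac{f}{\epsilon^2} + \tfrac{f\log C}{\epsilon})$ when $\eta < 3$ and $O(\tfrac{f}{\epsilon^2} + \tfrac{\log C}{\epsilon}\log^2 f)$ when $\eta \ge 3$. There are $O(1)$ groups of the former kind and $O(\log^* f)$ of the latter, so summing produces
\[O\!\left(\frac{f \log^* f}{\epsilon^2} + \frac{f \log C}{\epsilon} + \frac{\log C \cdot \log^2 f \cdot \log^* f}{\epsilon}\right).\]

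The remaining (minor) obstacle is to absorb the third term into the claimed bound. I would argue by cases: when $\log^2 f \cdot \log^* f \le f$, the term is dominated by $\tfrac{f \log C}{\epsilon}$; otherwise $f$ is a small constant, $\log^2 f \cdot \log^* f$ is a constant as well, and since the analysis assumes $f > \log C/\epsilon$, we get $\tfrac{\log C}{\epsilon} = O(f) = O(\tfrac{f\log^* f}{\epsilon^2})$, giving the same absorption. Either way, the total simplifies to $O(\tfrac{f \log^* f}{\epsilon^2} + \tfrac{f\log C}{\epsilon})$, matching the theorem statement.
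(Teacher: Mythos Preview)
Your proposal is correct and follows essentially the same route as the paper: split the calls into those that fall back to $\procdet(e)$ and those that do not, bucket the latter by $\eta$, apply \Cref{reset-runtime} (with the $\eta<3$ and $\eta\ge 3$ cases), and sum over the at most $\clog_{1+\epsilon}^*(f)=O(\log^* f)$ values of $\eta$. The only difference is cosmetic: the paper absorbs the cross term $\tfrac{\log C}{\epsilon}\log^2 f\,\log^* f$ directly via the asymptotic fact $\log^2 f\cdot\log^* f=O(f)$, whereas you spell out a small/large-$f$ case distinction that amounts to the same thing.
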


\subsection{Insertion}

The algorithm spends $O(f)$ runtime before it enters the branching. Next, let us consider the branches separately.

\begin{itemize}[leftmargin=*]
    \item 
    In the case $F = \emptyset$, the algorithm spends $O(f)$ time on finding $h$ and $O(f)$ time on the remaining steps. For the potential increase, notice that we have $\wts(s) < c_s$ for all $s \ni e$ after the insertion of $e$, and hence $\Phi_{\up}(s) = 0$. 
    $\Phi_{\lift}(s)$ remains unchanged, since the levels of the sets remain the same. $\Phi_{\down}(s)$ and $\Phi_\clean(s)$ remain unchanged as well, since $\phi(s)$ does not change.
    \item In the case $F \ne \emptyset$, we can apply \Cref{fix-runtime} with the trivial bound $|F| \le f$ and $d = \ilev(e) - \zlev(e) \ge \log_{1 + \epsilon} f$. 
\end{itemize}

In both cases, since we have added a new element $e$ to the system, $\Phi(e)$ increases by at most $f$. Since we charge the runtime costs associated to the calls to $\procdet(e)$ and $\procrand(e)$ to $\ins(e)$, we obtain the following theorem.

\begin{theorem}
    The amortized runtime cost of $\ins(e)$ is $O\left(\frac{f \log f}{\epsilon} + \frac{f}{\epsilon^3} + \frac{f \log C}{\epsilon^2}\right)$ for the deterministic algorithm, and $O\left(\frac{f}{\epsilon^2}\log^*f + \frac{f}{\epsilon^3} + \frac{f \log C}{\epsilon^2}\right)$ for the randomized algorithm.
\end{theorem}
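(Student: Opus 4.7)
The plan is to case-split on whether the set $F$ computed at the beginning of $\ins(e)$ is empty or not, bound the actual running time by $O(f)$ in both cases, carefully bound the potential increase in each case, and finally add the already-accounted amortized costs of $\procdet(e)/\procrand(e)$ that the post-processing of the enclosing while loop charges back to this insertion.

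First I would handle the raw running time of $\ins(e)$ itself. Computing $\max_{s\ni e}\{\lev(s)\}$ costs $O(f)$; checking whether $F=\emptyset$ costs $O(f)$; and the binary search over the window $[\zlev(e), l]$ (of length $O(\frac{\log f}{\epsilon} + \frac{\log C}{\epsilon} + \frac{1}{\epsilon^2})$) uses $O(f)$ time as argued in the description of $\ins$. The ``finalizing'' loop that writes $\wts(s)$ for each $s\ni e$ is also $O(f)$. So regardless of the branch taken, the actual work done by $\ins(e)$ itself is $O(f)$.

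Next I would separately bound the potential change in each case. In the case $F=\emptyset$, after the update every $s \ni e$ still satisfies $\wts(s) < c_s$, so $\Phi_\up(s)$ remains zero; no $\phi(s)$, no $\lev(s)$, and no other element's level change, so $\Phi_\down, \Phi_\clean, \Phi_\lift$, and the passive potentials of other elements are all unchanged. The only increase is that $\Phi(e)$ may rise by $f$ if $e$ becomes passive. In the case $F\neq\emptyset$, the insertion's potential impact is governed by $\fix(e,l)$, and by construction we have $d = \ilev(e)-\zlev(e) = \lceil\log_{1+\epsilon}\max\{f, \tfrac{2C}{\epsilon}\}\rceil$, so in particular $(1+\epsilon)^{-d+1} \leq (1+\epsilon)/f$. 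Plugging the trivial bound $|F|\leq f$ into \Cref{fix-runtime} then gives
\[
3f\cdot\Bigl(\tfrac{3f}{\epsilon^3} + \tfrac{f\log C}{\epsilon^2}\Bigr)\cdot \tfrac{1+\epsilon}{f} + f\cdot\Bigl(\tfrac{1}{\epsilon^2}+\tfrac{\log C}{\epsilon}\Bigr) + O(f) \;=\; O\!\Bigl(\tfrac{f}{\epsilon^3} + \tfrac{f\log C}{\epsilon^2}\Bigr),
\]
which is exactly where the savings of a factor $f$ (as compared to \cite{bhattacharya2021dynamic}) materialize. Again, $\Phi(e)$ may rise by at most $f$.

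Finally, I would add the amortized cost that the $\reset$ analysis in the previous subsection charges to the enclosing $\ins(e)$. For the deterministic algorithm, each passive element $e$ sees at most $\lceil\log_{1+\epsilon}\max\{f,\tfrac{2C}{\epsilon}\}\rceil$ calls to $\procdet(e)$ (since each strictly shrinks $\ilev(e)-\zlev(e)$), each costing amortized $O(f)$, for a total of $O\bigl(\tfrac{f\log f}{\epsilon}+\tfrac{f\log C}{\epsilon}+\tfrac{f}{\epsilon^2}\bigr)$. For the randomized algorithm, the per-element bound $O\bigl(\tfrac{f}{\epsilon^2}\log^* f + \tfrac{f\log C}{\epsilon}\bigr)$ was derived from \Cref{reset-runtime} combined with $\clog^*_{1+\epsilon}(f)=O(\log^* f)$. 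Summing the $O(f)$ subroutine work, the $\fix$ potential bound, and the charged rebuilding costs yields the two claimed totals. The whole argument is essentially an accounting exercise, so the only mild obstacle is keeping straight that the bound $d \geq \log_{1+\epsilon} f$ from the $\ins$ subroutine is precisely what cancels the factor $|F| \leq f$ inside the first term of \Cref{fix-runtime}; this is the single substantive observation on which the proof pivots.
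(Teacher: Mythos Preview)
Your proposal is correct and follows essentially the same approach as the paper: case-split on $F=\emptyset$ versus $F\neq\emptyset$, observe that the actual work is $O(f)$ in both cases, bound the potential change in the second case via \Cref{fix-runtime} with $|F|\le f$ and $d\ge \log_{1+\epsilon} f$, and then add the per-element rebuilding charges established in the preceding subsection. Your explicit computation showing how $(1+\epsilon)^{-d+1}\le (1+\epsilon)/f$ cancels the $|F|\le f$ factor is exactly the point the paper leaves implicit.
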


\subsection{Total runtime}
To conclude the proof of \Cref{main-result}, let $\lambda\in \{\randtime, \dettime\}$ be the upper bound on the amortized update time, where $\randtime = \Theta\left(\frac{f}{\epsilon^2}\log^*f  + \frac{f}{\epsilon^3} + \frac{f}{\epsilon^2}\log C\right)$, and $\dettime =  \Theta\left(\frac{1}{\epsilon}f\log f + \frac{f}{\epsilon^3} + \frac{f\log C}{\epsilon^2}\right)$ for the deterministic algorithm. Let $\Gamma$ be the total number of element updates. Without loss of generality, we can assume that at the end of the update sequence, we have $\univ=\emptyset$; otherwise, we can add $\Gamma$ artificial deletions which does not change the asymptotic runtime bound.
\begin{itemize}[leftmargin=*]
    \item \textbf{Preprocessing.} Initially $\univ = \emptyset$, and all sets are slack and on level $0$.
    
    If the algorithm is deterministic, then for each $0\leq i\leq L$, initialize pointers to (currently empty) sets $S_i, T_i, E_i, A_i(s), P_i(s)$ and store them in a random-accessible array. Add each set to $S_0$. This takes $O(\frac{1}{\epsilon}m\log(Cn))$ time and space.

    If the algorithm is randomized, then initialize randomized dynamic hash tables to store pointers to (currently empty) sets $S_i, T_i, E_i, A_i(s), P_i(s)$ \cite{dietzfelbinger1994dynamic}, and add each set to $S_0$. This takes time $O(m)$.

    \item \textbf{Updates.} As we have proved in previous subsections, for each update, the amortized update time is bounded as $\Delta\Phi + \text{runtime}\leq \lambda$.
\end{itemize}

Let $\Phi^{\text{init}}$ be the total potential at the beginning, and let $\Phi^{\text{end}}$ be the total potential at the end. Taking the summation of the preprocessing procedure and all updates, the total update time is bounded asymptotically by $\Gamma\cdot \lambda + \Phi^{\text{init}} - \Phi^{\text{end}}$. Since $\univ$ is empty both at the beginning and at the end, we have $\Phi^{\text{init}} =\Phi^{\text{end}}$, which finalizes the proof.

\clearpage
\bibliographystyle{alpha}
\bibliography{ref}

\appendix

\begin{table}
  \centering
   \begin{tabular}{| l | p{13.5cm} |}
     \hline \\[-1em]
     \textbf{Notation} & \textbf{Definition} \\ \hline \\[-1em]
     $\wts(e)$ &  The weight of element $e \in \univ$. $\wts(e) = (1 + \epsilon)^{-\ilev(e)}$.\\ \hline \\[-1em]
     $\wts(s)$ &  The total weight of set $s \in \set$. $\wts(s) = \sum_{e \in s} \wts(e)$.\\ \hline \\[-1em]
     $\phi(s)$ &  The dead weight of set $s \in \set$.\\ \hline \\[-1em]
     $\phi$ &  The total dead weight. $\phi = \sum_{s\in \set}\phi(s)$.\\ \hline \\[-1em]
     $\phi_{i}$ &  The total dead weight of sets on level $i$. $\phi_i = \sum_{s, \lev(s) = i} \phi(s)$.\\ \hline \\[-1em]
     $\phi_{\leq i}$ &  The total dead weight of sets on level $i$ and below. $\phi_{\leq i} = \sum_{s, \lev(s) \le i} \phi(s)$.\\ \hline \\[-1em]
     $\wts^{*}(s)$ &  The composite weight of $s \in \set$. $\wts^{*}(s) = \wts(s) + \phi(s)$.\\ \hline \\[-1em]
     $\wts(s, i)$ & The weight of set $s \in \set$ at level $i$. It is the weight of $s$ if it were raised to level $i$. $\wts(s, i) = \sum_{e\in s}\min\left\{\wts(e), (1+\epsilon)^{-\max\{i, \max_{t\mid e\in t\neq s} \lev(t) \}}\right\}$.\\ \hline \\[-1em]
     $L$ &  The maximum level of a set, i.e. each set is assigned a level $\lev(s) \in [L]$. $L = \ceil{\log_{1+\epsilon}(Cn)}+1$.\\ \hline \\[-1em]
     Tight set &  A set $s \in \set$ is tight if $\wts^*(s) \geq \frac{c_s}{1+\epsilon}$.\\ \hline \\[-1em]
     Slack set &  A set $s \in \set$ which is not tight, i.e. $\wts^*(s) < \frac{c_s}{1+\epsilon}$.\\ \hline \\[-1em]
     $T$ &  The collection of all tight sets.\\ \hline \\[-1em]
     $T_{i}$ &  The collection of all tight sets at level $i$, i.e. a collection of $s \in T$ such that $\lev(s) = i$\\ \hline \\[-1em]
     $S_{i}$ &  The collection of all sets at level $i$, i.e. a collection of $s \in \set$ such that $\lev(s) = i$\\ \hline \\[-1em]
     $\lev(s)$ &  The level of set $s \in \set$. $\lev(s) \in [L]$.\\ \hline \\[-1em]
     $\lev(e)$ &  The level of element $e \in \univ$. $\lev(e) = \max_{s\ni e}\{\lev(s)\}$.\\ \hline \\[-1em]
     $\zlev(e)$ &  The lazy level of element $e \in \univ$. $\zlev(e) \leq \lev(e)$.\\ \hline \\[-1em]
     $\ilev(e)$ &  The intrinsic level of element $e \in \univ$. $\wts(e) = (1+\epsilon)^{-\ilev(e)}$ and $\lev(e)< \ilev(e)\leq \zlev(e) + \left\lceil\log_{1+\epsilon}\max\{f, \frac{2C}{\epsilon} \}\right\rceil$.\\ \hline \\[-1em]
     $\bs(s)$ &  The base level of set $s \in \set$. $\bs(s) = \floor{\log_{1+\epsilon}1/c_s}$.\\ \hline \\[-1em]
     Active element & If an element $e$ is active, then the value $\lev(e)$ is correctly maintained (i.e. $\ilev(e) = \zlev(e) = \lev(e)$), and $\wts(e) = (1 + \epsilon)^{-\lev(e)}$.\\ \hline \\[-1em]
     Passive element &  If an element $e$ is passive, then we have $\zlev(e) \leq \lev(e)$ and $\ilev(e) > \lev(e)$.\\ \hline \\[-1em]
     $A_{i}$ &  The set of all active element of level $i$ (i.e. $\lev(e) = i$)\\ \hline \\[-1em]
     $P_{i}$ &  The set of passive elements of intrinsic level $i$, (i.e. $\ilev(e) = i$).\\ \hline \\[-1em]
     $A_{i}(s)$, $P_{i}(s)$ &  $A_{i} \cap s$ and $P_{i} \cap s$ respectively.\\ \hline \\[-1em]
     $E_{i}$ & The set of elements $e$ such that $\zlev(e) = i$.\\ \hline \\[-1em]
     $A_{\leq i}, P_{\leq i}, E_{\leq i}, S_{\leq i}, T_{\leq i}$ & $A_{\leq i} = \bigcup_{k = 0}^{i} A_{k}$. The rest are defined similarly.\\ \hline \\[-1em]

     Dirty element & During a rebuild on a level $k$, an element $e \in E_{\leq k}$ is called dirty, if $e$ is passive and $\ilev(e) > k + 1$.\\ \hline \\[-1em]
     Clean element & During a rebuild on a level $k$, an element $e \in E_{\leq k}$ is clean if it is not dirty.\\
     \hline
  \end{tabular}
  \caption{Some of the notations used in this paper.}
  \label{table:notation-summary}
\end{table}

\end{document}